\documentclass[11pt,letterpaper]{article}
\usepackage[utf8]{inputenc}			
\usepackage{makeidx}
\usepackage[pdftex]{graphicx}
\usepackage{mathtools,amsfonts}
\usepackage{array}
\usepackage{multirow, hhline}
\usepackage{paralist}
\usepackage{bm,xfrac,xspace,setspace, xr-hyper}
\usepackage{enumitem}
\usepackage[textfont={small,it}, labelfont={small,bf}]{caption}
\usepackage[usenames,dvipsnames,table]{xcolor}
\usepackage{wrapfig}
\usepackage[english]{babel}		
\usepackage[margin=1in]{geometry}	
\usepackage{lipsum}			
\usepackage{comment}
\usepackage{chngpage}			
\usepackage{amsfonts,amssymb,amsmath, amsthm}
\usepackage{amsopn}
\usepackage{color}
\usepackage{nicefrac}
%

\usepackage{tikz}
\usetikzlibrary{arrows}
\usetikzlibrary{calc}

\usepackage[noend]{algpseudocode}
\usepackage[Algorithm]{algorithm}
\usepackage{subcaption}

\usepackage{tabto}
\usepackage{csquotes} 
\usepackage{multirow}
\usepackage{amsthm}
\usepackage[colorlinks,urlcolor=black,citecolor=black,linkcolor=black,menucolor=black]{hyperref}
%
%

\newtheorem{theorem}{Theorem}
\newtheorem*{theorem*}{Theorem}

\newtheorem{corollary}[theorem]{Corollary}
\newtheorem{definition}[]{Definition}
\newtheorem{lemma}[theorem]{Lemma}

\newtheorem{claim}[theorem]{Claim}

\theoremstyle{definition}

\newtheorem{remark}[theorem]{Remark}

\newtheorem*{problem*}{Problem} 

\newtheorem*{fact*}{Fact}
\newtheorem{observation}[theorem]{Observation}

\newtheorem{assumption}[theorem]{Assumption}

\newcommand{\ip}[1]{\langle #1\rangle}


\newcommand{\R}{\mathbb R}

\newcommand{\pp}{{\bm p}}

\DeclareMathOperator{\argmin}{arg\,min }
\DeclareMathOperator{\NSW}{NSW}
\DeclareMathOperator{\logd}{RelDist}




\makeatletter
\newcommand\footnoteref[1]{\protected@xdef\@thefnmark{\ref{#1}}\@footnotemark}
\makeatother
\title{Polynomial Time Algorithms to Find an Approximate Competitive Equilibrium for Chores} 
\usepackage{authblk}
\author[1]{Shant Boodaghians\footnote{Supported by NSF CAREER award 1750436}}
\author[2]{Bhaskar Ray Chaudhury}
\author[1]{Ruta Mehta\footnote{Supported by NSF CAREER award 1750436}}
\affil[1]{\textit{University of Illinois Urbana-Champaign,} \texttt{\{boodagh2,rutamehta\}@illinois.edu}}
\affil[2]{\textit{Max Planck Institute for Informatics, Saarland Informatics Campus,} \texttt{braycha@mpi-inf.mpg.de}}

\date{} 
\renewcommand{\d}{D}
\newcommand{\vd}{{\overrightarrow{\d}}}
\newcommand{\OPT}{\mathit{OPT}}
\newcommand{\eCEEI}{{$\varepsilon$-CEEI}\xspace}

\renewcommand{\epsilon}{\varepsilon}

\begin{document}
\pagestyle{empty}

\maketitle
\thispagestyle{empty}

\begin{abstract}
{\em Competitive equilibrium with equal income (CEEI)} is considered one of the best mechanisms to allocate a set of items among agents fairly and efficiently. In this paper, we study the computation of CEEI when items are chores that are disliked (negatively valued) by agents, under 1-homogeneous and concave utility functions which includes linear functions as a subcase. It is well-known that, even with linear utilities, the set of CEEI may be non-convex and disconnected, and the problem is PPAD-hard in the more general exchange model. In contrast to these negative results, we design FPTAS: A polynomial-time algorithm to compute $\varepsilon$-approximate CEEI where the running-time depends polynomially on $\frac{1}{\varepsilon}$.

Our algorithm relies on the recent characterization due to Bogomolnaia et al.~(2017) of the CEEI set as exactly the KKT points of a non-convex minimization problem that have all coordinates non-zero. Due to this {\em non-zero} constraint, na\"ive gradient-based methods fail to find the desired local minima as they are attracted towards zero. We develop an {\em exterior-point method} that alternates between guessing {\em non-zero} KKT points and maximizing the objective along supporting hyperplanes at these points. We show that this procedure must converge quickly to an approximate KKT point which then can be mapped to an approximate CEEI; this exterior point method may be of independent interest. 

When utility functions are linear, we give explicit procedures for finding the exact iterates, and as a result show that a stronger form of approximate CEEI can be found in polynomial time. Finally, we note that our algorithm extends to the setting of un-equal incomes (CE), and to mixed manna with linear utilities where each agent may like (positively value) some items and dislike (negatively value) others. 

\end{abstract}
\newpage

\clearpage
\setcounter{page}{1}
\pagestyle{plain}

\section{Introduction}
Allocating a set of items among agents in a non-wasteful {\em (efficient)} and agreeable {\em (fair)} manner is an age old problem extensively explored within economics, social choice, and computer science. 
An allocation based on {\em competitive equilibria (CE)} has emerged as one of the best mechanisms for this problem due its remarkable fairness and efficiency guarantees~\cite{AD,Varian74,BogomolnaiaMSY17}. 
The existence and computation of competitive equilibria has seen much work when all the items are goods, 
\textit{i.e.}\ liked (positively valued) by agents. 
However, when items are chores, \textit{i.e.}\ disliked (negatively valued) by agents, 
the problem is relatively less explored even though it is as relevant in every day life; 
for example dividing teaching load among faculty, job shifts among workers, and daily household chores among tenants. 

In this paper, we study the problem of computing competitive equilibria with {\em equal income} (CEEI) \cite{Varian74,BogomolnaiaMSY17} for chore division, where a set of $m$ {\em divisible} chores has to be allocated among a set of agents. 
Agents receive {\em payments} for doing chores, and are required to earn a minimum amount, and under {\em equal income}, these amounts are the same.\footnote{The earning requirement of an agent can also be thought of as her importance/weight compared to others, and thereby under {\em equal income} all agents have the same weight.}
A competitive equilibrium (CE) for chores consists of a payment 
per-unit for each chore, and an allocation of chores to agents such that every agent gets her {\em optimal bundle}, {\em i.e.,} the disutility-minimizing bundle subject to fulfilling her earning requirement. 
Typically, agent preferences are represented by a monotone and concave {\em utility function} \cite{AD,BogomolnaiaMSY17}, that is negative and decreasing in case of chores. 
Equivalently, we consider {\em disutility functions}, namely $\d_i:\R^m_+ \rightarrow \R_+$ for agent $i$, 
that is monotone increasing and convex. 
We assume disutility functions to be 1-homogeneous as otherwise the problem is known to be intractable \cite{CThard,CGMMhard}. 
We note that 1-homogeneous functions form a rich class that includes the well-studied linear and CES functions as special cases.

The computational complexity of CE is well-understood when items are goods, {\em e.g.,} ~\cite{DevanurPSV08,ChenDDT09,chen2017complexity,VaziraniY11,ColeDGJMVY17,Rubinstein18} (see Section \ref{sec:relWork} for a detailed discussion): 
for 1-homoge\-neous utilities, the famous Eisenberg-Gale~\cite{EG} convex programming formulation and its dual are known to give equilibrium allocation and prices respectively. As a consequence the set of CE is convex, and the ellipsoid and/or interior point methods would find an approximate CE in polynomial-time, assuming utility functions are {\em well-behaved}. 
When utility functions are further restricted to be linear, there are many (strongly) polynomial time combinatorial algorithms known~\cite{DevanurPSV08, Orlin10}, even for the more general {\em exchange model} where agents want to {\em exchange} items they own to optimize their utilities~\cite{DuanM15, DuanGM16, GargV19}.

Although goods and chores problems seem similar, results for chores are surprisingly contrasting: 
Even in the restricted case of linear disutilities, the set of CEEI can be non-convex and disconnected~\cite{BogomolnaiaMSY17, BogomolnaiaMSY19}, 
and in the exchange model computing a CE is PPAD-hard~\cite{CGMMhard}.
No polynomial time algorithms are known to find CEEI with chores, except for when number of agents or number of chores is a constant~\cite{BranzeiS19, GargM20}.\footnote{These algorithms are based on enumeration from a cleverly designed set of candidates.
Similar approaches are known for goods manna when the number of items or agents is a constant \cite{DevanurK08, GargMSV15}, while the general case is PPAD-hard even to approximate \cite{CThard, Rubinstein18}} 
We note that the combinatorial approaches known for the goods case \cite{DevanurPSV08, Orlin10, Vegh12} seem to fail due to disconnectedness of the CEEI set (see Remark \ref{rem:comb} for further explanation). 
In light of these results, computing exact CEEI may turn out to be hard even with linear disutilities, 
but what about an approximate CEEI?

We resolve the above question by designing an FPTAS for the more general class of 1-homogeneous disutilities. Specifically, we design 
an algorithm to find $\epsilon$-approximate CEEI in time polynomial in $\frac{1}{\epsilon}$ and bit-size of the input instance parameters. 
We remark that many of the above bottlenecks exist even when we focus on approximate CEEI. In particular, the set of approximate CEEI can be non-convex and disconnected. And the fundamental bottleneck in generalizing the combinatorial algorithm explained in Remark~\ref{rem:comb} still persists. 
Despite these challenges, we are able to design an FPTAS to find an approximate CEEI, and extend it to more general valuations than linear which includes CES valuation functions.

Our algorithm crucially builds on the characterization of Bogomolnaia et al.~\cite{BogomolnaiaMSY17}, which states that the set of CEEI is exactly the {\em strictly positive} local-minima (KKT points) of a non-convex formulation, namely minimize
the {\em product of disutilities (equivalently $\sum_i \log{d_i}$)} over the space of {\em feasible} disutility vectors. 
The set of feasible disutility vectors may not be convex, but they can be made convex by allowing overallocation.
Unfortunately, standard interior-point methods for finding local optimum, such as gradient descent, will fail at ensuring the strict positivity constraint, since the gradient of the objective is attracted towards the minimum disutility coordinate. This difficulty is not alleviated by barrier function methods either. A possible fix is to introduce additional constraints to avoid zeros, but then we loose the CEEI characterization.

The above issues would not arise if we {\em maximize} $\sum_i \log{d_i}$ instead of minimizing it. Motivated from this observation, 
we design an {\em exterior-point method} that tries to maximize the objective {\em outside} of the feasible region, starting from an outside point that is below the lower-hull. 
However, we are faced with two crucial difficulties: $(i)$ now the {\em outside region} is truly non-convex, 
and $(ii)$ we must ensure that we do find a desired local minimum from the inside. 

Our exterior-point method handles the above issues 
by repeatedly guessing candidate solutions, and checking if they are local minima for the problem inside the feasible region by verifying if the gradient is parallel to some supporting hyperplane.
If not, it goes on to try another such candidate, while ensuring it is always increasing along the objective function. 
Thus, the objective acts as a potential function, and we can bound convergence rates by the size of objective improvement at each step. 
This method may be of independent interest. 
We terminate search when the supporting hyperplane direction is {\em approximately} equal to the gradient, in a multiplicative sense, and argue that such an approximate KKT point suffices to guarantee an approximate CEEI.

The crucial step in each iteration of this procedure is to find the nearest feasible point in the disutility space, which allows us to find a boundary point along with a supporting hyperplane at it. 
When disutility functions are linear, we argue that both distance minimization and supporting hyperplane computation can be solved exactly, leading to a stronger form of approximate CEEI.

For the case of general 1-homogeneous and convex disutility functions,
the nearest feasible point must be found by interior point methods. 
We assume black-box access to the disutility functions' value and partial derivatives.
This approximate nearest-point computation introduces errors in the local optimum and the supporting hyperplane both, that are tricky to handle. 
We show how to handle these extra errors by modifying the algorithm, 
and argue that a slight weakening of approximately competitive equilibria can still be guaranteed. 
As expected, these guarantees, including successful application of the interior point method, rely on the disutility functions being ``well-behaved'', and the running time of the algorithm depends logarithmically on continuity parameters of the disutility functions, 
namely, the Lipschitz constants for lower-bounding and upper-bounding the partial derivatives.

\noindent{\em Extensions.} Finally, we argue how our algorithm easily extends to the setting of un-equal income (CE) when max to min income/weight ratios is polynomially bounded.  
Another natural extension we consider is to mixed manna, where each agent may like some items and dislike others. Again, using the characterization of~\cite{BogomolnaiaMSY17}, every instance can be put into one of the three categories, namely positive, negative, and null. 
We argue that the instance in the positive category can be solved using the Eisenberg-Gale convex program~\cite{EG}, and those in null have a trivial solution. For instances in the negative category, we discuss how our algorithm can be extended with simple modifications. 
\medskip

\noindent{\em Linear Disutilities with Infinities.}
We note that, \cite{CGMMhard} that shows PPAD-hardness for the linear exchange model allows an agent to have infinite disutility for some chores indicating they do not have skills to do the chore in a reasonable amount of time. 
Our algorithm extends to this model as well, since their sufficiency conditions to ensure existence of equilibrium dictates that every component of the bipartite graph between agents and chores with finite disutility edges should be a complete bipartite graph. 
They show that even CEEI may not exist without this condition, and checking if it exists is NP-hard. 
Under this condition, it suffices to find CEEI for each of the connected component separately where there are no agent-chore pairs with infinite disutility. 
\medskip

In order to convey the main ideas cleanly we mainly focus on CEEI with chores in what follows, and discuss the extensions to CE and mixed manna at the end of the paper.

\subsection{Model and Our Results}\label{sec:model}
In the chore division problem, a set of $m$ {\em divisible} chores $[m]:=\{1,\,\dotsc,\,m\}$ is to be allocated to a set of $n$ agents $[n]:=\{1,\,\dotsc,\,n\}$. It is without loss of generality to assume that exactly one unit of each chore needs to be allocated. 
Agent $i$'s preferences (over chores) is represented by a non-negative, non-decreasing, and convex disutility function $\d_i \colon \mathbb{R}^m_{\geq 0} \rightarrow \mathbb{R}_{\ge 0}$.\footnote{Typically, agents' preferences for chores are represented by non-positive, non-increasing, and concave utility functions since agents dislike chores \cite{BogomolnaiaMSY17}. 
By taking the negation of these utility functions we get non-negative, non-decreasing, convex disutility functions that agents want to minimize.} 
We denote by $x_{ij}$ the fraction of item $j$ that is allocated to agent $i$, and we denote $\bm x_i := (x_{i1},\,\dotsc,\, x_{im})$. 
We assume that $\d_i$'s are 1-homogeneous, \textit{i.e.}
\begin{equation}\label{eq:intro-hom}
\d_i(a\cdot \bm x_i) = a\cdot \d_i(\bm x_i)\quad \text{ for all $\bm x_i$, ~ and all $a\ge 0$.}
\end{equation}

If $\d_i(\cdot)$ is linear, then it is represented by $\d_i(\bm x_i) = \sum_{j \in [m]} \d_{ij} \cdot x_{ij}$ where $\d_{ij}\in (0, \infty)$ is the disutility of agent $i$ per unit of chore $j$.\footnote{If for some $(i,j)$ pair $D_{ij}=0$ then chore $j$ can be freely allocated to agent $i$, and can be removed. Infinite disutilities can be handled as discussed in the introduction.}
Equivalently, we write $\d_i(\bm x_i) = \ip{ \bm \d_i, \bm x_i }$ where 
$\bm \d_i = (\d_{i1}, \d_{i2}, \dots , \d_{im})$. We also use $\vd (\bm x)$ to denote the disutility vector $(D_1(\bm x_1), D_2(\bm x_2), \dots, D_n(\bm x_n))$.

\paragraph{Competitive equilibrium with equal income (CEEI)}
At a CE with chores, payments are linear, and the $j$-th chore pays $p_j$ per unit of the chore assigned. 
Let $\pp =(p_1,\,\dots,\,p_m)$ denote the vector of payments, and then the payment to agent $i$ is $\ip{\pp,\bm x_i}$.
Each agent seeks to minimize their disutility subject to being paid at least 1 unit. 
We note that, under {\em equal income}, the exact value being paid is immaterial so long as all agents get paid the same amount.
Prices $\pp$ and allocation $\bm x = (\bm x_1, \bm x_2, \dots , \bm x_n)$ are said to be at CEEI if all the chores are fully allocated when every agent consumes her least-disliked bundle with payment at least 1, {\em i.e.,} an optimal bundle. 
Formally~\cite{Varian74,BogomolnaiaMSY17}

	\begin{itemize}[label={(\arabic*)}]
		\item[$(E1)$]\label{EP1} {\em (equal payments)} for all agents $i$ and $i'$ we have $\langle \bm x_i, \bm p \rangle =  \langle \bm x_{i'}, \bm p \rangle$, and
		\item[$(E2)$]\label{EP2} {\em (optimal bundle)} for all $i \in [n]$, we have  $\d_i(\bm x_i) \leq \d_i(\bm y_i)$ for all $\bm y$ s.t. $\langle \bm y_i, \bm p \rangle \geq \langle \bm x_i, \bm p \rangle$, and 
		\item[$(E3)$]\label{EP3} {\em (feasible allocation)} for all $j \in [m]$, we have $\sum_{i \in [n]} x_{ij} =1$.
	\end{itemize} 

It is known that the set of CEEI may be nonconvex, or even disconnected \cite{BogomolnaiaMSY17}.
In light of this fact, and the PPAD-hardness of CE in the linear-exchange model \cite{CGMMhard}, 
we turn our attention to approximately competitive equilibria. 
We formalize the notion of \eCEEI as follows:

\begin{definition}\label{def:CE}
	Prices $\pp$ and allocation $\bm x$ are termed a \eCEEI for an $\varepsilon\ge 0$, if and only if
	\begin{enumerate}[label={(\arabic*) }]
		\item \label{P1} for all agents $i$ and $i'$, we have $(1-\varepsilon) \cdot \langle \bm x_i, \bm p \rangle \leq \langle \bm x_{i'}, \bm p \rangle$, and  
		\item \label{P2} for all $i \in [n]$, we have and $(1-\varepsilon) \cdot d_i(\bm x_i) \leq d_i(\bm y_i)$ for all $\bm y$ such that $\langle \bm y_i, \bm p \rangle \geq \langle \bm x_i, \bm p \rangle$, and  
		\item \label{P3} for all $j \in [m]$, we have $1 - \varepsilon \leq \sum_{i \in [n]} x_{ij} \leq 1+\varepsilon$. 
	\end{enumerate} 
\end{definition} 

It is well known that CEEI satisfy well-sought-after fairness and efficiency notions of {\em envy-freeness} and {\em Pareto-optimality} respectively. 
An allocation $\bm x$ is said to be envy-free (EF) if every agent prefers their own bundle over that of any other agent. 
And it is said to be Pareto-optimal (PO) if no other allocation Pareto-dominates it, {\em i.e.,} 
there is no feasible allocation $\bm y$ such that $\d_i(\bm y_i)\leq \d_i(\bm x_i)$ for all $i$, and for some agent $k$, $\d_i(\bm y_i)<\d_i(\bm x_i)$.
In Appendix \ref{app:ef-po} we show that an \eCEEI allocation approximately guarantees these properties.

Our main contribution in this paper is an FPTAS -- a polynomial time algorithm to find an \eCEEI where the running time depends polynomially on $\frac{1}{\epsilon}$; proved formally in Section \ref{gen:CE}. Informally, lets say that function $\d_i$ is {\em $L$-well-behaved} if it satisfies Assumption \ref{as:lip} regarding it's derivatives.  
\begin{theorem*}
	Given black-box access to 1-homogeneous and convex disutilities $\d_{1},\,\dotsc,\,\d_{n}$ 
	that are $L$-well-behaved, and also to their partial derivatives, 
	Algorithm~\ref{alg2:KKT}, finds an $\varepsilon$-CEEI in time polynomial in $n$, $m$, $1/\varepsilon$, and $\log(L)$.
\end{theorem*}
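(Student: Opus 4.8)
The plan is to interpret Algorithm~\ref{alg2:KKT} as an exterior-point method for the non-convex program
\[
\min\ \sum_{i\in[n]}\log d_i \qquad\text{subject to}\qquad \bm d\in\mathcal{F},
\]
where $\mathcal{F}$ is the convexification, obtained by allowing overallocation of chores, of the disutility region $\{\vd(\bm x): \bm x\text{ a feasible allocation}\}$; this set is convex because each $\d_i$ is convex and $1$-homogeneous, and it is upward closed in $\bm d$. By the characterization of Bogomolnaia et al.~\cite{BogomolnaiaMSY17}, the CEEI of the instance are in bijection with the \emph{strictly positive} KKT points of this program: at such a point $\bm b$ on the boundary of $\mathcal{F}$ with supporting-hyperplane (inner) normal $\bm q\ge 0$, the witnessing allocation $\bm x$ satisfies $\d_i(\bm x_i)=b_i$, and by $1$-homogeneity together with complementarity over the chore-feasibility constraints the per-unit payments $p_j := q_i\,\partial_j\d_i(\bm x_i)$ (well-defined independently of which agent receives chore $j$) make the budget of agent $i$ equal to $\langle\pp,\bm x_i\rangle = q_i b_i$; the stationarity condition $1/b_i=\mu q_i$ then forces these budgets to be equal, and the hyperplane/complementarity structure makes $\bm x_i$ disutility-minimizing at those prices. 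So it suffices to (i) produce an \emph{approximate} KKT point $\bm b$ whose coordinates are bounded below by a polynomially small quantity, and (ii) verify that the $(\pp,\bm x)$ read off from it satisfies Definition~\ref{def:CE} with parameter $\varepsilon$.

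For (i) I would maintain a current guess $\bm g$ lying below the lower hull of $\mathcal{F}$ and iterate: (a) compute the nearest point $\bm b\in\mathcal{F}$ to $\bm g$ and a supporting hyperplane of $\mathcal{F}$ at $\bm b$, recording its inner normal $\bm q\ge 0$; (b) test whether the gradient $(1/b_1,\dots,1/b_n)$ is parallel to $\bm q$ \emph{in the multiplicative sense}, i.e.\ whether $b_iq_i\in[(1-\Theta(\varepsilon))\lambda,\lambda]$ for a common $\lambda$ and all $i$; (c) if the test passes, halt and output $\bm b$ together with the corresponding prices and allocation; otherwise use the direction of failure to form a new guess $\bm g'$, again below the lower hull, with $\sum_i\log b_i' > \sum_i\log b_i + \delta$ for a gain $\delta$ I can lower bound in terms of $\varepsilon$ and the gap in the stationarity test. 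The two structural facts that make this work are: (1) because we are moving \emph{up} the objective $\sum_i\log d_i$ along supporting hyperplanes, the iterates are pushed away from the coordinate axes, so every boundary point encountered has all coordinates at least $\mathrm{poly}(1/\text{bitsize})$ — this is exactly where minimizing would fail but the exterior ascent does not; and (2) $\sum_i\log d_i$ is bounded above on $\mathcal{F}$ by a quantity polynomial in the input bit-size. Since $\sum_i\log d_i$ acts as a monotone potential increasing by $\delta=\mathrm{poly}(\varepsilon,1/\text{bitsize})$ at each non-terminal step, the number of iterations is polynomial in $n$, $m$, $1/\varepsilon$, and the input size.

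For general $1$-homogeneous convex $\d_i$ the projection in step (a) cannot be solved in closed form, so I would solve it by an interior-point method using the assumed black-box access to $\d_i$ and its partial derivatives; Assumption~\ref{as:lip} ($L$-well-behavedness) guarantees convergence at cost $\mathrm{poly}(\log L)$ to a point $\tilde{\bm b}$ and a hyperplane accurate enough that the induced errors in $\bm b$, $\bm q$, and the stationarity test can be absorbed into the $\Theta(\varepsilon)$ slack, at the cost of the mild weakening of condition \ref{P2} flagged in the introduction. Translating the output as in the first paragraph: the approximate stationarity $b_iq_i=(1\pm\Theta(\varepsilon))\lambda$ yields \ref{P1}; the supporting-hyperplane/complementarity structure (degraded by the interior-point error) yields \ref{P2}; and the near-tightness of the chore constraints at a point close to the true lower hull, together with the overallocation bookkeeping, yields \ref{P3}. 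Summing the per-iteration cost (one approximate projection plus the stationarity test) over the polynomially many iterations gives the claimed running time, and the linear case is recovered by replacing the interior-point subroutine with the exact distance/supporting-hyperplane computations, removing the $\log L$ dependence and the weakening of \ref{P2}.

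I expect the main obstacle to be the error control in steps (a)--(c): the region \emph{outside} $\mathcal{F}$ is genuinely non-convex, so the update rule in (c) and the per-step gain $\delta$ must be argued directly from the geometry of the lower hull of $\mathcal{F}$ rather than from any global convexity, and the inexactness of the projection and supporting-hyperplane subroutine must be propagated through the multiplicative stationarity test without destroying either the monotone-potential argument or the strict positivity (bounded-away-from-zero property) of the limiting point. A secondary subtlety is reconciling the multiplicative approximate-parallelism test with the multiplicative tolerances in Definition~\ref{def:CE} so that no constant factor is lost in the reduction.
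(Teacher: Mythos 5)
Your overall architecture matches the paper's: the same characterization of CEEI as strictly positive KKT points of $\min\sum_i\log d_i$, the same exterior-point iteration (project to the nearest feasible disutility profile, take the supporting hyperplane, jump to the $\mathcal L$-maximizer on it), the same multiplicative stationarity test, and the same potential-function bound on the number of iterations. However, the two places you defer to "error control I expect to be the main obstacle" are precisely where the paper's actual work lies, and as written your plan has a genuine gap at each. First, the projection step: for general convex $1$-homogeneous $\d_i$ the natural program $\min_{\bm x}\Vert\vd(\bm x)-\bm g\Vert_2^2$ is \emph{not} convex (the outer square is not monotone in the coordinates of $\vd(\bm x)$, so the composition rule fails), and adding constraints $\d_i(\bm x_i)\ge g_i$ to fix this makes the feasible region non-convex. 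You cannot simply "solve it by an interior-point method"; the paper must first reformulate the projection as a genuinely convex program, introducing slack variables $\beta_i$ with constraints $\d_i(\bm z_i)-g_i-\beta_i\le 0$ and objective $\sum_i\beta_i^2$ (program~\eqref{convex-pgm-nearest-point}), and then prove that its optimum really is the nearest point of $\mathcal D+\mathbb R^n_{\ge 0}$.

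Second, and more seriously, your claim that "near-tightness of the chore constraints \dots together with the overallocation bookkeeping yields \ref{P3}" skips the central difficulty of the general case. When the hyperplane is only $\delta$-approximately supporting, the next iterate $\bm d^{k+1}=\bm 1/\bm a^k$ can land strictly \emph{inside} $\mathcal D+\mathbb R^n_{\ge 0}$, in which case its preimage can over-allocate some chore by a large amount even though $\Vert\bm a'-\bm a^*\Vert_2$ is tiny (because the map $\bm a\mapsto\bm 1/\bm a$ blows up small errors in coordinates where $a_i$ is small). The paper needs a second stopping condition ($\Vert\bm d^k_+-\bm d^k\Vert_2\le\varepsilon_2$), the \textsc{adjust-coordinates} rescaling, and the exchange argument of Lemmas~\ref{appx-KKT-point-2-lambda-under} and~\ref{appx-KKT-point-2-lambda-over} — constructing a modified allocation that would violate the approximate supporting-hyperplane inequality if any chore were over-allocated by more than $\beta$ — to rule this out. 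Without an argument of this kind your iterates may terminate at a point whose allocation violates condition~\ref{P3} by a constant, so the reduction from approximate KKT to \eCEEI does not go through as stated. The rest of your outline (potential increase of $\mathrm{poly}(\varepsilon/n)$ per step, upper bound on $\mathcal L$, $\log L$ dependence from the interior-point subroutine) is consistent with the paper's Lemmas~\ref{lem:inc2}, \ref{polynomialmanyiterations}, and~\ref{polynomialtimeiteration}.
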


We note that our result holds under a weaker assumption than of Assumption~\ref{as:lip};  discussed briefly in Remark~\ref{extendingtoCES}.
For linear disutilities, we show the following stronger guarantee in Section \ref{sec:CE}.

\begin{theorem*}
Given an instance $I$ with linear disutility functions represented by  $\d_{11},\dots,\d_{nm}>0$, and an $\epsilon>0$, a stronger $\epsilon$-CEEI can be computed in time $\textup{poly}\left(n,m,\log\left(\frac{\max_{ij} \d_{ij}}{\min_{ij} \d_{ij}}\right),\frac{1}{\epsilon}\right)$ where no error is incurred in the last two conditions, {\em i.e.,} $(\bm x,\bm p)$ that satisfies \ref{P1}, $(E2)$, and $(E3)$.
\end{theorem*}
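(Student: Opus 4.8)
The plan is to run the exterior-point method of Algorithm~\ref{alg2:KKT}, but with its two geometric subroutines---projecting onto the feasible disutility region, and computing a supporting hyperplane at that projection---implemented \emph{exactly} rather than through an interior-point approximation. Recall from~\cite{BogomolnaiaMSY17} that, writing $\mathcal F$ for the convexified set of disutility vectors $\vd(\bm y)$ attainable by allocations $\bm y$ that cover every chore with possible overallocation, the CEEI are exactly the strictly positive KKT points of $\min_{\bm d\in\mathcal F}\sum_i\log d_i$, and an approximate-KKT point bounded away from the axes maps to an \eCEEI. For linear disutilities $\d_i(\bm y_i)=\sum_j \d_{ij}y_{ij}$, the set $\mathcal F$ is the projection onto the $\bm d$-coordinates of the rational polyhedron $\{(\bm d,\bm y):\bm y\ge 0,\ \sum_i y_{ij}\ge 1\ \forall j,\ d_i\ge \sum_j\d_{ij}y_{ij}\ \forall i\}$, hence itself a rational polyhedron whose facet/vertex complexity is polynomial in $n$, $m$, and $\log(\max_{ij}\d_{ij}/\min_{ij}\d_{ij})$. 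Therefore the nearest-point step is a convex quadratic program over a polyhedron, solvable exactly in polynomial time (via convex QP, or via a direct combinatorial procedure exploiting the structure of $\mathcal F$), and the supporting hyperplane at the projection $\bm d^\ast$ of an exterior point $\bm v$ is simply the one with normal $\bm n:=\bm d^\ast-\bm v$, computed exactly; since $\mathcal F$ is upward closed, $\bm n\ge 0$.

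From this exact hyperplane I would read off prices and an exactly feasible, exactly optimal allocation. The supporting-hyperplane property states that $\bm d^\ast=\vd(\bm x)$ for some $\bm x$ minimizing $\sum_j\sum_i n_i\d_{ij}y_{ij}$ over chore-covering $\bm y$; take $\bm x$ to be an optimal vertex of this LP. Because every coefficient $n_i\d_{ij}$ is positive, the covering constraints are tight at the optimum, so $\sum_i x_{ij}=1$ for all $j$, which is $(E3)$ with no slack. Setting $p_j:=\min_i n_i\d_{ij}$, complementary slackness gives $x_{ij}>0\Rightarrow p_j=n_i\d_{ij}$, and $\d_{ij}/p_j\ge 1/n_i$ for every $j$ with equality exactly on the chores assigned to agent $i$; hence $\bm x_i$ is supported on the chores minimizing $\d_{ij}/p_j$, making it a disutility-minimizing bundle subject to earning $\ip{\bm x_i,\bm p}$, which is $(E2)$ with no slack. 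The key point is that $(E2)$ and $(E3)$ are properties of the \emph{constructed} pair $(\bm x,\bm p)$ and hold regardless of how far $\bm d^\ast$ is from a true KKT point.

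The single $\varepsilon$ enters only through the termination test. The payment of agent $i$ is $\ip{\bm x_i,\bm p}=\sum_j x_{ij}n_i\d_{ij}=n_i\d_i(\bm x_i)=n_id^\ast_i$, while $\nabla\big(\sum_i\log d_i\big)(\bm d^\ast)=(1/d^\ast_1,\dots,1/d^\ast_n)$; so exact equal payments is precisely the statement that $(1/d^\ast_i)_i$ is parallel to $\bm n$, i.e.\ $\bm d^\ast$ is a genuine KKT point. Algorithm~\ref{alg2:KKT} stops when these two vectors are \emph{multiplicatively} $\varepsilon$-close, which is exactly condition~\ref{P1}. For the running time I would invoke the potential-function argument from the general analysis---each iteration strictly increases $\sum_i\log d^\ast_i$---and, for linear disutilities, lower-bound the per-iteration increase explicitly in terms of the polynomial bit-complexity of $\mathcal F$, giving a number of iterations polynomial in $n$, $m$, $1/\varepsilon$, and $\log(\max_{ij}\d_{ij}/\min_{ij}\d_{ij})$, each iteration being a polynomial-time exact LP/QP solve.

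I expect two main obstacles. First, making the projection oracle genuinely exact and efficient: one must argue the QP has a polynomial-size rational optimum, handle the several supporting directions available when $\bm v$ projects onto a low-dimensional face, and guarantee that the chosen normal $\bm n$ is \emph{strictly} positive (hence $p_j>0$ and $d^\ast_i>0$), which is where one must use that the exterior iterates stay below the lower hull and the relevant faces of $\mathcal F$ are bounded away from the coordinate axes. Second, the bookkeeping that exactness of both oracles confines \emph{all} error to the one multiplicative parallelism test, so the output meets $(E2)$ and $(E3)$ exactly and only~\ref{P1} approximately---in particular that no exterior iterate ever requires a rounding step that would perturb the exactly feasible allocation recovered at termination.
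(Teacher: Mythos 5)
Your proposal follows essentially the same route as the paper: an exterior-point iteration whose nearest-point projection and supporting hyperplane are computed exactly in the linear case (the projection being an exact-solvable convex QP), with prices $p_j=\min_i a_i\d_{ij}$ and a bang-per-buck/complementary-slackness argument yielding $(E2)$ and $(E3)$ with no error, while the multiplicative termination test confines all error to condition~\ref{P1}, and the $\log$-Nash-welfare potential bounds the iteration count. The only small imprecision is that the allocation should be recovered by solving the linear feasibility system $\vd(\bm x)=\bm d^*$, $\bm x\in\mathcal F$ (as the paper does), rather than taking an arbitrary optimal vertex of the covering LP: such a vertex is only guaranteed to agree with $\bm d^*$ in the objective $\ip{\bm a,\cdot}$, whereas the equal-payments argument needs the realized disutility profile to be $\bm d^*$ itself.
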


More importantly, our algorithm is an {\em exterior point method} that builds on tools from continuous optimization to find an {\em approximate KKT point}, which may be of independent interest. 
Next we give an overview of this method and it's analysis.

\subsection{Overview of the Algorithm and Analysis}\label{sec:overview}
Our algorithm builds on the following characterization of CEEI due to Bogomolnaia et al.~\cite{BogomolnaiaMSY17}:  
Analogous to the convex program of Eisenberg and Gale~\cite{EG}, the CEEI in the case of bads are characterized as local minima to the product of disutilities. However, this optimization program is over {\em disutility space}, rather than {\em allocation space}. 

Formally, we let $\mathcal F$ denote the set of feasible allocations, namely 
\begin{subequations}\label{def:F}
\begin{equation}	
	\mathcal F:=\left\{\bm x\in \mathbb R^{nm}\,\middle|
		\, \textstyle\sum_{i}x_{ij}=1\ \ \forall\,j,\ x_{ij}\geq 0\ \ \forall\,i,j\right\}\ .
\end{equation}
The disutility space $\mathcal D$ will be the set of all disutility profiles which can be attained over $\mathcal F$, or 
\begin{equation}	
	\mathcal D:=\left\{\bm d=(d_1,\,\dotsc,\, d_n)\in \mathbb R^{n}\,\middle|
		\, \exists\,\bm x=(\bm x_1,\,\dotsc,\,\bm x_n)\in \mathcal F:\ \d_i(\bm x_i) = d_i\ \forall\,i\right\}\ .
\end{equation}
\end{subequations}
In all that follows, we will distinguish between disutilities as functions and as variables by the upper- and lower-case symbols respectively.
When disutility functions are linear, $\mathcal D$ is a polytope. 
However, for more general convex disutility functions, $\mathcal D$ may not be a convex set. 
We will remedy this by working instead with the extended feasible region $\mathcal D+\mathbb R_{\geq 0}^n$, the Minkowski sum, which we show is convex (Claim \ref{claim:convex}). 
This is the set of all disutility profiles which are {\em at least as bad} as some feasible profile, {\em i.e.,} disutility profiles attainable at over-allocations of the chores.

The characterization of Bogomolnaia et al.~\cite{BogomolnaiaMSY17} states that any disutility profile $\bm d$ which is a local minimum (KKT point) to the following non-convex minimization program is the disutility profile of {\em some} CEEI, and the prices and allocation of this CEEI can be found by understanding $\bm d$ in allocation-space.
\begin{equation}
	\nonumber
	\min_{\bm d\in \mathcal D}\ \textstyle\prod_{i=1}^n d_i\quad \text{s.t.}\ d_i>0\ \forall\, i\ .
\end{equation}
Note that minimizing over the set $\mathcal D$ is equivalent to minimizing over the extended set $\mathcal D+\mathbb R_{\geq 0}^n$. 
And the KKT points to this program are equivalent to the KKT points for the minimization of the logarithm of objective, $\mathcal L(\bm d):=\sum_{i=0}^n\log(d_i)$. Hence the above program can equivalently stated as,
\begin{equation}\label{eq:min-log}
	\min_{\bm d\in \mathcal D+\mathbb R_{\geq 0}^n}\ \textstyle \mathcal L(\bm d) = \sum_{i=0}^n\log(d_i) \quad \text{s.t.}\ d_i>0\ \forall\, i\ .
\end{equation}

\paragraph{Primary difficulty.}
The open constraints $d_i>0$ are both fundamental to the above characterization, and the source of the main difficulty of the problem. 
Any disutility profile $\bm d$ with a zero coordinate are trivial optima to these minimization problems, 
but are economically meaningless since no fairness or efficiency properties can be guaranteed. 
Furthermore, any na\"ive interior-point attempt at finding local minima are attracted by these constraints: 
the gradient of the objective $\mathcal L$ at $\bm d$ is inversely proportional to $\bm d$ componentwise, since $\frac{\partial}{\partial d_i}\mathcal L = 1/d_i$. 
This has the effect of accelerating gradient descent towards the $d_i\ge 0$ constraint for the smallest $d_i$ value. 
See Figure~\ref{fig:gd-fails} (red) for an illustration.
This effect is robust to barrier methods at the boundaries, and thus gradient-following methods are not helpful in this task. 
The same problem afflicts attempts at strengthening the constraint to $d_i>\eta$ for some small $\eta$, as the dual variables for these constraint break the CEEI characterization. 
We circumvent this issue by designing an iterative {\em exterior-point method} that always increases the log-sum $\mathcal L$. 

\begin{figure}[t]
\begin{minipage}[c]{0.3\textwidth}
    \includegraphics[scale=1.1]{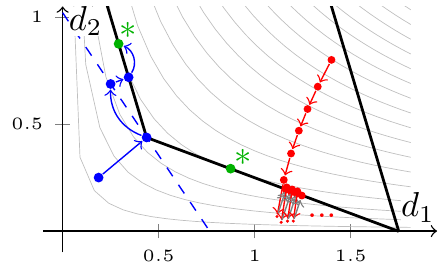}
  \end{minipage}\hfill
  \begin{minipage}[c]{0.67\textwidth}
\caption{A representation of both the exterior point method (in blue), and the pitfalls of gradient descent (in red) in minimizing the objective inside the feasible region. The light gray lines denote the level sets of the objective, orthogonal to the gradient.
\textcolor{blue}{\textbf{Blue}}: For the exterior point method, we start outside the region, find a nearest point and supporting hyperplane, jump to a new exterior point, and repeat until we find one of the two on-face local optima (green). 
\textcolor{red}{\textbf{Red}}: For the gradient descent, we start inside the region and quickly accelerate towards the $d_2 \ge 0$ boundary, which we wish to avoid. }\label{fig:gd-fails}
  \end{minipage}
\end{figure}

\begin{remark}\label{rem:comb}
A natural question is if the combinatorial methods known for computing CE in linear Fisher (exchange) model with goods, \textit{e.g.}, \cite{DevanurPSV08, Orlin10}, extend to chores with linear disutilities? Unfortunately, they do not.  In particular, the non-convexity and disconnectedness of the CEEI set is a primary difficulty in extending any algorithm from the goods setting to the chores setting. For example, these methods rely on the fact that CE allocation and prices changes continuously with the the (money) endowments of the agents \cite{MegiddoV07}, which is not true with chores. A chore division instance may have multiple disconnected equilibria some of which may disappear as we change these parameters, and as a result the said methods may get stuck. This fundamental bottleneck persists even if we restrict ourselves to approximate-CEEI.
\end{remark}

In the rest of this section, we will outline our approach first for linear disutility functions, and then afterwards in the general case. 
In the linear setting, many sub-routines can be solved exactly. Thus, it requires less technical detail to present, and serves as a good intuition for the more involved general case that we address later. 

\subsubsection{Linear Disutilities: Relating Approximate KKT to Approximate CEEI}
The main insights of our result are that (1) approximate KKT points allow for approximately competitive equilibria to be constructed, and (2) approximate KKT points can be found, despite the difficulty described above about ensuring strict positivity constraints.  
We begin here by formally defining the approximate KKT conditions.

Recall, KKT points are local optima where the cone of  normal vectors of the tight constraints contains the function's gradient. 
Equivalently, there exists a supporting hyperplane at the local optimum whose normal vector is parallel to the function's gradient.
Formally, $\bm d$ is a KKT point for the problem $\min_{\bm d\in \mathcal D} \mathcal L(\bm d)$ if there exists a normal vector $\bm a$ such that: $\{\bm y\in \mathbb R^n | \ip{\bm a, \bm y}\geq \ip{\bm a,\bm d}\}$ is a supporting hyperplane for $\mathcal D+\mathbb R^n_{\geq 0}$, and there exists some $c>0$ such that $\bm a = c\cdot \nabla\mathcal L(\bm d)$, \textit{i.e.}\ $a_i=c/d_i$ for all $i$.

Since our procedure is iterative, it will converge in the limit to a KKT point, but only approximately after finitely many iterations.
We show that after a polynomial number of iterations, it finds an {\em approximate KKT point}, defined below, with inverse-polynomial error. 

\begin{definition}[$\gamma$-Approximate KKT]\label{def:apx-KKT-lin}
For $\gamma \ge 1$, we say a point $\bm d$ along with the normal direction $\bm a$ is a $\gamma$-KKT point for problem \eqref{eq:min-log} if 
		\begin{itemize}
		\item[(1)] $\bm d\in \mathcal D+\mathbb R_{\geq 0}^n$,\ \  (2) $\gamma_i^{-1} \leq a_i \cdot d_i \leq \gamma_i$ for all $i$, and 
		(3) $\mathcal D +\mathbb R^n_{\geq 0} \subseteq \{\bm y\in \mathbb R^n | \ip{\bm a, \bm y}\geq \ip{\bm a,\bm d}\}$.
	\end{itemize}
				
	  Informally, each entry of $\bm a$ is a $\gamma$-approximation of $\bm 1/\bm d$, the gradient of $\mathcal L$, and $\bm a$ is normal to a supporting hyperplane for $\mathcal D +\mathbb R^n_{\geq 0}$ at $\bm d$.
Furthermore, we say $\bm d$ is a $\gamma$-KKT point if there exists a vector $\bm a$ such that $(\bm d,\bm a)$ satisfy the above conditions. 
\end{definition}

Recall, when disutilities are linear, $\mathcal D$ is a linear polytope and is therefore convex. Hence, the above definition need not be defined over $\mathcal D+\mathbb R_{\geq 0}^n$, but we introduce it as it will be necessary later.

We outline here the first insight of our result, that approximate local minima give approximate equilibria. 
In the original analysis of the Eisenberg-Gale program~\cite{EG} for goods, and more notably in the proof of ~\cite{BogomolnaiaMSY17} for chores, 
the relationship between competitive equilibria and local maxima hinges on the gradient being inversely proportional to the marginal (dis)utility-per-dollar incurred. Intuitively, the KKT conditions enforce that the payment to each agent (in the chores setting) is perfectly balanced by their disutility incurred, and their payment is equal to that of any other player. 
It can be shown that if some player is paid more, then the KKT conditions are violated. 
Thus, we can conclude condition~\ref{P1} of Definition~\ref{def:CE}, with $\varepsilon=0$. 
Condition~\ref{P2} is argued using the fact that an agent is only allocated her minimum disutility-per-dollar chores, and \ref{P3} is true by definition since the allocation lies in $\mathcal F$.

To extend this argument to the approximate setting, it suffices to observe that when multiplicative error is introduced in the gradient direction, then this argument suffers only multiplicatively. 
A $\gamma$-sized error bound in the gradient direction allows for some player to be paid $\gamma$ less than the unit, and another $\gamma$ more, which allows us to show that $\gamma$-KKT points satisfy condition~\ref{P1} of Definition~\ref{def:CE} with $\varepsilon = (1-\gamma^2)$. 
As above, condition \ref{P2} is argued similarly with the same $\varepsilon$, 
and condition \ref{P3} holds with $\varepsilon=0$, again by feasibility.
Formally, by extending the argument of Bogomolnaia et al.~\cite{BogomolnaiaMSY17}, we show the following.

\def\linearTheoremKKTGivesCE{Let $(\bm d, \bm a)$ be a $(1+\varepsilon)$-KKT point for the problem of minimizing $\mathcal L(\bm d)$ subject to $\bm d\in \mathcal D$, and $\mathcal L(\bm d) > -\infty$. 
	Let $\bm x\in \mathcal F$ be any allocation that realizes $\bm d$, \textit{i.e.}\ $\d_i(\bm x_i) = d_i$ for all $i$. 
	Then there exists payments $\bm p=(p_1,\,\dotsc,\,p_m)$ such that $(\bm x,\bm p)$ form a stronger $2\varepsilon$-CEEI, where no error is incurred in the last two conditions, {\em i.e.,} $(\bm x,\bm p)$ satisfies \ref{P1}, $(E2)$, and $(E3)$. 
	
	Furthermore, when disutilities are linear, the allocation $\bm x$ and payments $\bm p$ can be computed exactly in polynomial time from the disutility profile $\bm d$ and normal vector $\bm a$.}

\begin{theorem*}
	\linearTheoremKKTGivesCE
\end{theorem*}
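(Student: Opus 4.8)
The plan is to start from the $(1+\varepsilon)$-KKT conditions and translate them, coordinate by coordinate, into the three CEEI conditions. Recall we are given $\bm d\in\mathcal D$, a normal vector $\bm a$ with $(1+\varepsilon)^{-1}\le a_id_i\le (1+\varepsilon)$, and the supporting hyperplane inclusion $\mathcal D+\mathbb R^n_{\ge 0}\subseteq\{\bm y:\ip{\bm a,\bm y}\ge\ip{\bm a,\bm d}\}$. Fix $\bm x\in\mathcal F$ realizing $\bm d$. The natural candidate for prices is to set, for each chore $j$, the per-unit payment $p_j$ equal to $\min_i a_i D_{ij}$ in the linear case, or more abstractly to read the prices off the supporting hyperplane of $\mathcal D$ in allocation-space; then $\langle\bm p,\bm x_i\rangle$ should come out proportional to $a_iD_i(\bm x_i)=a_id_i$. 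I would first carry out this computation to establish that with this choice, $\langle\bm p,\bm x_i\rangle = a_id_i$ (up to the global scaling that is immaterial under equal income), which already gives condition \ref{P1} with ratio exactly $a_id_i/(a_{i'}d_{i'})\in[(1+\varepsilon)^{-2},(1+\varepsilon)^2]$, hence a $2\varepsilon$-type bound on the payment discrepancy (being careful with the $(1+\varepsilon)^2 = 1 + 2\varepsilon + \varepsilon^2$ vs $2\varepsilon$ accounting — I'd either absorb the $\varepsilon^2$ term or restate the bound as $(1-2\varepsilon)$ after a standard manipulation).

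Next I would handle \ref{P2}, optimality of the bundle, which is where the supporting-hyperplane condition does the work. The key point is that the hyperplane inclusion $\mathcal D+\mathbb R^n_{\ge 0}\subseteq\{\ip{\bm a,\bm y}\ge\ip{\bm a,\bm d}\}$ says exactly that $\bm d$ minimizes $\sum_i a_i D_i(\bm x'_i)$ over all feasible over-allocations $\bm x'$, i.e. each agent is being allocated only her cheapest chores measured in the units where chore $j$ "costs" $a_iD_{ij}$ per unit. I want to convert this into the statement that $\bm x_i$ is an exact disutility-minimizing bundle at prices $\bm p$ subject to earning $\langle\bm p,\bm x_i\rangle$ — this is the classical duality argument from Eisenberg–Gale / Bogomolnaia et al., and the claim is that it goes through \emph{without} any loss, so \ref{P2} and \ref{P3} (the latter being immediate from $\bm x\in\mathcal F$) hold with $\varepsilon = 0$. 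I would verify that the multiplicative slack in $a_id_i$ does not enter this argument: the prices $p_j$ are defined purely from the tight supporting hyperplane, and the optimality of $\bm x_i$ given those prices is exact; the slack only affects how $\langle\bm p,\bm x_i\rangle$ compares across agents, i.e. only \ref{P1}. The reason the error is confined to \ref{P1} is the structural separation in the KKT conditions — condition (3) is exact (it's a genuine supporting hyperplane), only condition (2) carries the $(1+\varepsilon)$ slack, and (2) only controls the cross-agent payment balance.

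For the "furthermore" part on linear disutilities, I would give the explicit recipe: given $\bm d$ and $\bm a$, set $p_j:=\min_i a_i D_{ij}$, and note $j$ can only be assigned to agents $i$ attaining this min (those with $a_iD_{ij}=p_j$) if $\bm x$ is to be consistent with the supporting hyperplane; if the given $\bm x$ violates this one can re-route chore $j$ among minimizers without changing any $D_i(\bm x_i)$ beyond the allowed slack — actually I expect one can show the supporting-hyperplane condition already forces $x_{ij}>0\Rightarrow a_iD_{ij}=p_j$ up to negligible error, so no re-routing is needed, or a single water-filling pass suffices. All of this is linear-algebraic and runs in $\mathrm{poly}(n,m)$ time with rational arithmetic on the input bit-sizes.

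The main obstacle I anticipate is the precise bookkeeping in showing the optimality conditions \ref{P2}/$(E2)$ and $(E3)$ incur \emph{zero} error — one must be careful that the allocation $\bm x$ realizing $\bm d$ is genuinely in $\mathcal F$ (not an over-allocation) and that the prices extracted from the supporting hyperplane of the extended region $\mathcal D+\mathbb R^n_{\ge 0}$ coincide with valid chore prices; the risk is that the supporting hyperplane, because it supports the \emph{extended} region, might a priori have a zero or "wrong-sign" coordinate, and I'd need the $a_i = c/d_i > 0$ positivity (which follows from $\mathcal L(\bm d)>-\infty$, i.e. $d_i>0$, together with condition (2)) to rule this out and to guarantee $p_j>0$. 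A secondary, purely cosmetic obstacle is converting the clean multiplicative bounds $(1+\varepsilon)^{\pm 2}$ into the additive "$2\varepsilon$" form stated in the theorem, which just needs $\varepsilon$ small enough or a one-line inequality $(1+\varepsilon)^{-2}\ge 1-2\varepsilon$.
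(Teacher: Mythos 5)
Your proposal is correct and follows the paper's proof in all essentials: the prices are the same ($p_j=\min_i a_i\d_{ij}$, i.e.\ the minimum over agents of the coordinates of the hyperplane separating $\mathcal F$ from $S^*=\{\bm x:\ip{\bm a,\vd(\bm x)}\le\ip{\bm a,\bm d}\}$), conditions \ref{P2} and \ref{P3} are shown to hold exactly, the $(1+\varepsilon)^{\pm2}$ loss is confined to \ref{P1}, and the allocation is recovered by linear feasibility. The one place you deviate is \ref{P1}: you derive it from the identity $\ip{\bm p,\bm x_i}=a_id_i$, which does hold --- the supporting-hyperplane condition forces $\bm x$ to minimize the linear functional $\sum_{i,j}a_i\d_{ij}x_{ij}$ over $\mathcal F$, hence $x_{ij}>0\Rightarrow a_i\d_{ij}=p_j$ \emph{exactly} (not merely up to negligible error, resolving your own hesitation), and summing gives the identity; the paper instead argues by contradiction, perturbing the allocation between two agents $i,i'$ and showing the perturbed point would violate the supporting hyperplane. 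Your identity-based route is cleaner in the linear case but leans on the combinatorial structure of minimizers of a linear functional over a product of simplices; the paper's exchange argument is the one that survives in the general 1-homogeneous convex setting (where only Euler's identity and a subgradient argument recover the analogous facts), which is why the paper notes the first part of the theorem does not require linearity. Your handling of positivity of $\bm a$ and $\bm p$, and of the $(1+\varepsilon)^{-2}\ge 1-2\varepsilon$ bookkeeping, matches the paper.
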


This theorem is proven in Section \ref{definition-of-vector-division}, Theorem~\ref{corr:approx}, and the first part of it does not require that the disutilities be linear.
However, as we will see below, $\gamma$-KKT points can only be guaranteed when disutilities are linear, and the definitions will need to be modified for the general case. The allocation and prices can be efficiently computed when disutilities are linear because they are the solutions to linear feasibility problems. 
With this theorem in hand, it remains therefore to compute $(1+\varepsilon)$-KKT points, discussed next.

\subsubsection{Linear Disutilities: Exterior Point Methods for Approximate KKT Points.}
Here we discuss our approach to find approximate-KKT point in polynomial time; formal details are presented in Section \ref{sec:lin-kkt}.
As discussed above, it is tempting to hope that interior-point methods will find local minima efficiently, but they will not work in this setting. 
Instead, we will rely on the geometry of the feasible space and objective function to allow us to repeatedly make guesses at KKT points, all the while increasing along the objective $\mathcal L(\bm d) = \sum_i\log(d_i)$, which we treat as a potential function.
This potential will ensure that if we do not find approximate KKT points, then we make significant progress, dependent on the degree of precision $\gamma$ needed.
By bounding the values that the potential can take, this will suffice to show that the procedure is an FPTAS. Refer to Figure \ref{fig:gd-fails} (blue) for a pictorial representation of the algorithm.

Our ``guesses'' at KKT points are made by starting with an exterior, infeasible point $\bm d$, and finding the nearest feasible point $\bm d_*$ to it. 
Formally, $\bm d_*$ is the solution to $\min_{\bm y\in \mathcal D} \Vert\bm y-\bm d\Vert^2_2$. 
Using the fact that it is the nearest point to $\bm d$ in the $\ell_2$ sense, we show that $\bm a = \bm d_*-\bm d$ is normal to a supporting hyperplane for $\mathcal D$ at $\bm d_*$ (Lemma \ref{lem:algo-steps-linear}).
Notice that, so long as $\bm d_*\geq \bm d$ componentwise, then this all still holds when replacing $\mathcal D$ with $\mathcal D+\mathbb R_{\geq 0}^n$.
Furthermore, this will ensure that we are increasing along the potential $\mathcal L$. 

It remains to find the start of the next iterate, while ensuring that we are increasing in the $\mathcal L$ direction. 
Note that we have that the hyperplane $\{\bm y\in \mathbb R^n|\ip{\bm a,\bm y} = \ip{\bm a,\bm d_*} \}$ is supporting for $\mathcal D$, and therefore none of the points on this hyperplane are in the interior of $\mathcal D$. 
Thus, we can choose our next starting point to be the $\mathcal L$-maximizing point on this hyperplane. 
Since $\bm d_*$ is also feasible, this will ensure that we are increasing in the $\mathcal L$ direction, 
and that we are starting from a new exterior, infeasible point.
This $\mathcal L$-maximizer on the hyperplane can be found efficiently, since we have a closed form for it: 
the maximizer on the hyperplane will be the point at which $\nabla \mathcal L$ is proportional to $\bm a$, and we show that it is exactly a rescaling of $(1/a_1,\,\dotsc,\,1/a_n)$ (Claim \ref{claim:inc1}).

Thus, the algorithm is iterative, and each round $k\ge 0$ proceeds as follows:
\begin{enumerate}\setcounter{enumi}{-1}
	\item\label{st0} $\bm d^k$ is the infeasible point ``lying below'' $\mathcal D$ starting the round. 
	$\bm d^0$ is any infeasible point.
	\item\label{st1} Set $\bm d^k_*$ to be the nearest feasible point to $\bm d^k$, \textit{i.e.}\ the solution to $\min_{\bm d\in \mathcal D+\mathbb R^n_{\geq 0}}\Vert \bm d-\bm d^k\Vert_2^2$.
	\item\label{st2} Set $\bm a^k\propto\bm d^k_*-\bm d_k$, rescaled so that $\ip{\bm a^k,\bm d^k_*}=n$.
	\item\label{st3} Define $\bm d^{k+1}$ to be $(1/a_1,\,\dotsc,\,1/a_n)$, the maximizer of $\mathcal L(\bm d)$ subject to $\ip{\bm a^k,\bm d}=n$.
	\item\label{st4} Stop if $\bm d^{k+1}$ is 
	``close enough'' to $\bm d^k_*$, otherwise repeat.
\end{enumerate}

We initialize the procedure at any infeasible point ``lying below'' the feasible region. 
When disutilities are linear, this can be found by noticing that we can lower-bound the disutilities over the feasible region, and picking an allocation which assigns half of the lower bound to each agent (Claim~\ref{claim:d0}). 
The normalization in Step~\ref{st2} ensures that if $\bm a^k$ is approximately parallel to the gradient $\nabla\mathcal L(\bm d^k_*)$, then it is also of the right magnitude. 
The notion of ``close enough'' in Step~\ref{st4} is multiplicative, as it measures increase in the potential function $\sum_{i=1}^n\log(d_i)$.

\paragraph{The Potential Function, and Convergence Rates.}
As discussed above, we wish to use $\mathcal L(\bm d) = \sum_{i=1}^n\log(d_i)$ as a potential function to measure the progress of the algorithm. 
For each iteration $k\geq 0$, we will have $\mathcal L(\bm d^k)\leq \mathcal L(\bm d^k_*)\leq \mathcal L(\bm d^{k+1})$ (Claim \ref{claim:inc1}). 
This first inequality is due to the observation that the nearest feasible point to $\bm d^k$ Pareto dominates it, 
and $\mathcal L$ is monotone increasing in each coordinate.
The second inequality is by construction, as we show that $\bm d^{k+1}$ maximizes $\mathcal L$ on a hyperplane that contains $\bm d^k_*$.

It remains then to argue that progress along $\mathcal L$ is rapid, relative to its range.
We noted above that the stopping condition in Step~\ref{st4} is multiplicative. 
Formally, we stop when the $\ell_1$ norm of the logarithmic difference, \textit{i.e.}\ $\sum_{i=1}^n\left|\log\left((\bm d^k_*)_i/(\bm d^{k+1})_i\right)\right|$ is at most $\varepsilon$. 
When this log-distance is more than $\varepsilon$, we show that the objective $\mathcal L$ increases by at least $\Omega(\varepsilon^2/n^2)$ (Lemma \ref{lem:inc2}).

Conversely, when this log-distance is upper-bounded by $\varepsilon$, we will show that $(\bm d^k_*,\bm a^k)$ form a $(1+\varepsilon)$-KKT point (Lemma \ref{lem:kkt-lin}).
Thus, since it is reasonable to bound $\log(\d_i(\bm x_i))$ over the feasible region, we will be able to bound the maximum number of iterations as a polynomial in $1/\varepsilon$, $n$, and $-\mathcal L(\bm d^0)$.
This allows us to argue that an approximate equilibrium may be found in polynomially many iterations.

\paragraph{Implementing Iterations in Polynomial Time.}
We have argued above that an approximate KKT point, and therefore an approximate equilibrium, can be found in polynomially many iterates of the exterior point method. 
However, it remains to show that each step can be solved efficiently. 

With the exception of Step~\ref{st1} above, the rest of the algorithm is arithmetic, which can be easily performed. 
The minimization problem in Step~\ref{st1} may pose a problem in general, if we expect an exact minimum. 
This is the source of the extra care needed in the general case. 
However, in the case of linear disutilities, we show that the minimization problem is actually a quadratic program with a semidefinite bi-linear form over $\mathcal F$ space (Lemma \ref{lem:iter-poly}), 
and methods for finding exact solutions to such programs have long been known~\cite{quadprog}.

Finally, we note that although each step in our algorithm generates polynomial sized rational numbers wrt it's parameters, 
one needs to be careful about how their bit-sizes grow. This can be taken care of by rounding down the $\bm d^k$ to a nearest rational vector with polynomial bit-size at the end of each iteration. Note that this step will ensure that $\bm d^k$ lies below $\mathcal D$ and we also argue why the bound on the iterations still hold: Since at every iteration $k$ of  our algorithm, the value of each $d^k_i$ can be lower bounded using the value of the potential at $\bm d^k$  and the upper bound on the maximum disutility values in $\bm d^k$,\footnote{Note that we start with a $\bm d^0$ where each agent has a non-negligible disutility,  and at any point in time, the disutilities of the agents in $\bm d^k$ are upper-bounded (as the disutility vector lies below $\mathcal D$), implying that there cannot be a significant increase in the disutility of any agent throughout the algorithm. Also, since the sum of logs of the disutilities $\mathcal L (\cdot )$ is increasing throughout the algorithm, we can conclude that there cannot be a significant decrease in the disutility of any agent throughout the algorithm, implying that the disutilities in $\bm d^k$ are also lower bounded. } such a rounding is possible without hitting the $d_i\ge 0$ boundary, and while ensuring at least $\Omega(\varepsilon^2/n^2)$ increase in the potential $\mathcal L(\cdot )$. 
However, to convey the main important technical ideas, in Section \ref{sec:CE} we focus on bounding number of arithmetic operations. And we note that the analysis of Section \ref{gen:CE} for the general case is robust to such a rounding. 

Putting all the above together, we get an FPTAS to compute stronger approximate CEEI where the last two conditions of \eCEEI are satisfied with $\varepsilon=0$ (Theorem \ref{MAINTHMLINEAR}).

\subsubsection{General 1-Homogeneous Disutilities}
\label{gen:overview}
In general, the disutility functions are 1-homogeneous and convex, and are given as a {\em value oracle} black-box, 
along with a value oracle for their partial derivatives. 
In this section we outline the new issues that arise in extending our algorithm, and their resolutions; see Section \ref{gen:CE} for formal details.

At a high-level the issues are as follows: First, to find the nearest points we need to employ {\em interior point} methods which returns approximate solutions, and in turn we incur error in the hyperplane as well as the gradient. Secondly, 
in order to use the interior point method, we will have to work in the allocation space and can not work with the disutility space directly. 
This causes problems as convex constraints in disutility space need not be convex in allocation space. We elaborate these two issues and also highlight how we overcome them. Finally, we give an overview of the entire algorithm by putting everything together.

\paragraph{Finding Approximate Nearest Point.} Recall, we have defined $\overrightarrow{\d}(\bm x):=\left(\d_1(\bm x_1),\,\dotsc,\,\d_n(\bm x_n)\right)$. The natural program to find the nearest point in $\mathcal D + \mathbb{R}^n_{\geq 0}$ to a point $\bm d $ below $\mathcal D$ (or equivalently outside $\mathcal D + \mathbb{R}^n_{\geq 0}$) requires finding a  $\bm x \in \mathcal F'$ that minimizes $|| \vd (\bm x) - \bm d||_2^2$, where $\mathcal F' := \{ \bm y \in \mathbb{R}^{nm}_{\geq 0} \mid \sum_{i \in [n]} (\bm y)_{ij} \geq 1 \text{ for all } j \in [m] \}$. Unfortunately the objective function is not necessarily convex\footnote{The natural sufficient condition for composition of two convex functions to be convex is if the outer function is monotone in the variables. We do not have this with our current objective function.}. One way to ensure it's convexity is to put additional constraints of the form $D_i(\bm x_i) \geq  \bm d_i$ for all $i \in [n]$. But again, since the disutility functions $D_i(\cdot)$ are convex, these constraints create non-convex feasible region. 

We come up with an alternative formulation for finding the approximate nearest point which is convex. The crucial observation is the fact that given any point $\bm d $ outside $\mathcal D + \mathbb{R}^n_{\geq 0}$ there exists no point $\bm d' \in \mathcal D + \mathbb{R}^n_{\geq 0}$, such that $\bm d$ \emph{Pareto-dominates}(coordinate-wise larger or equal) $\bm d'$, i.e., $\bm d'$ does not belong in the negative orthant centered at $\bm d$. Therefore, a point $\bm d \in \mathbb{R}^n_{\geq 0}$ can Pareto-dominate any point $\bm d' \in \mathcal D + \mathbb{R}^n_{\geq0}$ if and only if $\bm d \in \mathcal D + \mathbb{R}^n_{\geq 0}$ .  We now show how to use this fact to come up with a convex program to find the nearest point in $\mathcal D + \mathbb{R}^n_{\geq 0}$. Our goal is to find a vector $\bm \beta \in \mathbb{R}^n$ of smallest magnitude and a point $\bm d' \in \mathcal D + \mathbb{R}^n_{\geq 0}$ such that the point $\bm d + \bm \beta$ Pareto-dominates $\bm d'$: Note that this is only possible when $\bm d + \bm \beta \in \mathcal D + \mathbb{R}^n_{\geq 0}$. Since $||\bm \beta||^2_2$ is minimum, $\bm d + \bm \beta$ is the nearest point in $\mathcal D + \mathbb{R}^n_{\geq 0}$ to $\bm d$. Formally, 

\begin{equation*}
\begin{array}{ll@{}ll}
\text{minimize}  & \displaystyle \sum\limits_{i \in [n]}^{ } ((\bm \beta)_i)^2 \\
\text{subject to}& \displaystyle  \sum_{i \in [n]}  z_{ij} \geq 1, & &\forall j \in [m]\\ 
&     z_{ij} \geq  0,  & &\forall i \in [n], \forall j \in [m]\\
& D_i(\bm z_i) - (\bm d)_i - (\bm \beta)_i \leq 0, & &\forall i \in [n],
\end{array}
\end{equation*}

It is easy to verify that the above program minimizes a convex function over  a convex domain. The above convex program returns point $\bm z \in \mathcal F'$ such that $\vd (\bm z)$ is the nearest point in $\mathcal D + \mathbb{R}^n_{\geq 0}$ to $\bm d$. Unfortunately, this program cannot be solved exactly in polynomial time and therefore we need to argue about how to extract an approximate-CEEI given an approximate nearest neighbour.

\paragraph{Approximate Supporting Hyperplane and \texorpdfstring{$(\lambda, \gamma, \delta)$}{(l,g,d)}-KKT Points.}  In polynomial time, we can only find an approximate nearest neighbour of a point $\bm d$ in $\mathcal D + \mathbb{R}^n_{\geq 0}$. Therefore, our supporting hyperplanes will also be approximate, and therefore we need to redefine the approximate KKT points that we can compute. 
Let  $\vd (\bm z^*)$ be the nearest point in $\mathcal D + \mathbb{R}^n_{\geq 0}$ to $\bm d$. Then, $\vd (\bm z^*) - \bm d$ is normal to a supporting hyperplane of $\mathcal D + \mathbb{R}^n_{\geq 0}$ at $\vd (\bm z^*)$, i.e.,  $\ip{\vd (\bm z^*) - \bm d, \bm y} = \ip{\vd (\bm z^*) - \bm d, \vd (\bm z^*)}$ is a supporting hyperplane of $\mathcal D + \mathbb{R}^n_{\geq 0}$ at $\vd (\bm z^*)$. Since we have access only to an approximate nearest neighbour of $\bm d$, say $\vd (\bm z')$, we wish to have $\ip{\vd (\bm z') - \bm d, \bm y} = \ip{\vd (\bm z') - \bm d, \vd (\bm z')}$ as an approximate supporting hyperplane, i.e. $\ip{\vd (\bm z') - \bm d, \bm y} \geq \ip{\vd (\bm z') - \bm d, \vd (\bm z')} - \delta$ for all $\bm y \in \mathcal D + \mathbb{R}^n_{\geq 0}$ for a sufficiently small $\delta$. 

With this, we introduce the notion of $(\lambda, \gamma, \delta)$-KKT points. 

\begin{definition}[$(\lambda, \gamma, \delta)$-Approximate KKT]\label{def:apx-KKT}
 We say $(\bm a, \bm d, \bm x)$, i.e., a point $\bm d $ along with the normal direction $\bm a$ and a pre-image $\bm x$  is a $(\lambda, \gamma,\delta)$-KKT point with $\lambda \geq 1$, $\gamma \geq 1$, and $\delta >0$, for the minimization problem on $\mathcal D+\mathbb R^n_{\geq 0}$ if 
		\begin{enumerate}
		\item $x_{ij} \geq 0$ for all $i \in [n]$ and $j \in [m]$, and $\lambda^{-1} \leq \sum_{i \in [n]}  x_{ij} \leq \lambda$ for all $j \in [m]$,
		\item $\bm d = \vd(\bm x)$ and $\gamma_i^{-1} \leq a_i \cdot d_i \leq \gamma_i$ for all $i \in [n]$, and  
		\item and $\mathcal D + \mathbb{R}^n_{\geq 0} \subseteq \{\bm y\in \mathbb R^n | \ip{\bm a, \bm y}\geq \ip{\bm a,\bm d}-\delta = n -\delta \}$.
	\end{enumerate}
 				
 Informally, all chores are almost fully allocated, each entry of $\bm a$ is a $\gamma$-approximation of $\bm 1/\bm d$, the gradient of $\mathcal L$, and $\bm a$ is a $\delta$-approximately-supporting hyperplane for $\mathcal D +\mathbb R^n_{\geq 0}$.
\end{definition}

In Section~\ref{gen:apprx-KKT-apprx-CEEI}, we show that a $(\lambda, \gamma, \delta)$-KKT point with where $\lambda = 1 + \varepsilon/2^{\textup{poly}(n,m)}$, $\gamma = 1 + \varepsilon$ and $\delta = \varepsilon/2^{\textup{poly}(n,m)}$ can be mapped to a $\varepsilon^{1/ 6}$-CEEI. The proof emulates the proof in~\cite{BogomolnaiaMSY17}, and consequently matches the proof in the linear case.

However, some subtle problems 
arise when generalizing the algorithm from the linear case to determine a $(\lambda, \gamma, \delta)$-KKT point. 
Firstly, the convergence of the entire algorithm relies crucially on the fact that the potential $\mathcal L(\bm d)$ never decreases at any point. For this, we require that we have $\vd (\bm z')$ Pareto-dominate $\bm d$. We can ensure this by first computing an arbitrary approximate nearest point $\bm z''$ and then increase the consumption of certain chores in $\bm z''$ to get $\bm z'$ such that $\vd (\bm z')$ Pareto-dominates $\bm d$. Since we know that $\vd (\bm z^*)$ Pareto-dominates $\bm d$, and $|| \vd (\bm z'') - \vd (\bm z^*)||_2$ is small, the increase in consumption of the chores will also be small (Observations~\ref{pareto-domination} and~\ref{appx-nearest-point-dist}).

Secondly, the hyperplane $\ip{\vd (\bm z') - \bm d, \bm y} = \ip{\vd (\bm z') - \bm d, \vd (\bm z')}$ can be a good approximation of the hyperplane $\ip{\vd (\bm z^*) - \bm d, \bm y} = \ip{\vd (\bm z^*) - \bm d, \vd (\bm z^*)}$ (or equivalently $\delta$ is inverse-exponentially small) only if $||\bm d - \vd(\bm z^*)||_2$ is significantly larger than $|| \vd (\bm z^*) - \vd (\bm z')||_2$. Therefore, if at any point in our algorithm, we have $|| \bm d - \vd(\bm z')||_2 \leq M\varepsilon$ for a sufficiently large $M$, where $\varepsilon \geq ||\vd (\bm z') - \vd (\bm z^*) ||_2$,  then we stop and return a pre-image of $\bm d$ (note that as the disutility functions are $1$-homogeneous, this can be done by appropriately scaling the consumption of chores for each agent).

Finally,
and most importantly,
we need to ensure that the approximate supporting hyperplanes do not introduce point with excessive over-allocation.
Let $\ip {\bm a^*, \bm y} = n$ and $\ip{\bm a', \bm y} = n$ represent the hyperplanes $\ip{\vd (\bm z^*) - \bm d, \bm y} = \ip{\vd (\bm z^*) - \bm d, \vd (\bm z^*)}$ and $\ip{\vd (\bm z') - \bm d, \bm y} = \ip{\vd (\bm z') - \bm d, \vd (\bm z')}$ respectively after appropriate scaling, i.e., $\bm a^{\ell} = \big(n/ \ip{\vd (\bm z^{\ell}) - \bm d, \vd (\bm z^{\ell})} \big) \cdot (\vd (\bm z^{\ell}) - \bm d)$ for $\ell \in \{*,'\}$. Since we are dealing with approximate supporting hyperplane\footnote{$\ip{\bm a', \bm y} \geq n - \delta'$ for all $\bm y \in \mathcal D + \mathbb{R}^n_{\geq 0}$, where $\delta' = \frac{\delta \ip{\vd (\bm z^{'}) - \bm d, \vd (\bm z^{'})}}{n}$.}, the point maximizing $\mathcal L$, say $\bm d'$ on $\ip{\bm a', \bm y} = n$, maybe contained in the strict interior of $\mathcal D + \mathbb{R}^n_{\geq 0}$. Also note that in this case, the nearest point in $\mathcal D + \mathbb{R}^n_{\geq 0}$ to $\bm d'$  is $\bm d'$ itself, and therefore the distance between $\bm d'$ and its approximate nearest point in $\mathcal D + \mathbb{R}^n_{\geq 0}$ is significantly smaller than $M \varepsilon$  and our algorithm will return the point $\bm d'$, the normal to the hyperplane  $\bm a'$ and its pre-image, say $\bm x'$ in the very next iteration. Now note that while conditions (2) and (3) in Definition~\ref{def:apx-KKT} are satisfied, condition (1) may not be satisfied. In particular, there could be chores that are significantly over-allocated! 
At first this may seem to be counter-intuitive as the hyperplane $\ip{\bm a', \bm y} = n$ is a good approximation of the exact supporting hyperplane $\ip{\bm a^*, \bm y} = n$, and, the point $\bm d^*$ that maximizes $\mathcal L$ on  $\ip{\bm a^*, \bm y} = n$ lies outside $\mathcal D + \mathbb{R}^n_{\geq 0}$ and as a result no chores are over allocated in a pre-image of $\bm d^*$. However, we show that the disutility profiles of the point $\bm d^*$ maximizing $\mathcal L$ on the hyperplane $\ip{\bm a^*, \bm y} = n$ and the point $\bm d'$ maximizing $\mathcal L$ on the hyperplane $\ip{\bm a', \bm y} = n$ can be very far apart even if $||\bm a' - \bm a^*||_2$ is small\footnote{In fact $||\bm a' - \bm a^*||_2$ will be small as $|| \bm d' - \bm d^*||_2$ is significantly small.}. This is primarily due to the fact that $\bm d' = \big( \tfrac{1}{a'_1}, \tfrac{1}{a'_2}, \dots , \tfrac{1}{a'_n} \big)$ and $\bm d^* = \big(\tfrac{1}{a^*_1}, \tfrac{1}{a^*_2}, \dots , \tfrac{1}{a^*_n} \big)$,  and even though $|a'_i - a^*_i| \leq \varepsilon$ for all $i \in [n]$, $1/a'_i$ and $1/a^*_i$ can be very far apart. 
We circumvent this issue by showing that if there are some chores that are significantly over-allocated in $\bm x'$, then we can find an allocation $\bm x''$ from $\bm x'$ by reducing consumption of the over-allocated chores and re-allocating some of the not-over-allocated chores such that $\vd (\bm x'') \in \mathcal D + \mathbb{R}^n_{\geq 0}$ and $\ip{\bm a, \vd(\bm x'')} < n -\delta'$, which is a contradiction to the fact that $\ip{\bm a, \bm y} = n$ is an approximate supporting hyperplane to $\mathcal D + \mathbb{R}^n_{\geq 0}$. This is where the bulk of the error analysis is required (summarized in Lemmas~\ref{appx-KKT-point-2-lambda-under} and~\ref{appx-KKT-point-2-lambda-over}). 

We now outline the entire procedure.

\paragraph{Putting it Together.}
Similar to the case with linear disutilities, the algorithm is iterative. In each iteration $k \geq 0$, 

\begin{enumerate}\setcounter{enumi}{-1}
	\item\label{gen:st1} $\bm d^k$ is the infeasible point ``lying below'' $\mathcal D$ at the start of round $k$.
	$\bm d^0$ is any infeasible point.
	\item\label{gen:st2} Find $\bm x^k_+$ such that $\vd (\bm x^k_+)$ is an $\varepsilon$-approximate nearest feasible point to $\bm d^k$ in $\mathcal D  +\mathbb{R}^n_{\geq 0}$, s.t. $(\bm d^k_+)_i \geq (\bm d^k)_i$ for all $i \in [n]$ and then \emph{round up} $\bm d^k_+$ to the nearest rational point with polynomial bit size.
	\item \label{gen:st3} If $||\bm d^k_+ -\bm d^k||_2 \leq M \cdot \varepsilon$, then return $(\bm a^{k-1}, \bm d^k, \bm x^k)$ where $\bm x^k$ is a pre-image of $\bm d^k$ obtained by rescaling $\bm x^k_+$ appropriately, i.e., $(\bm x^k)_i \gets (\bm x^k_+)_i \cdot  \frac{(\bm d^k)_i}{ (\bm d^k_+)_i}$ for all $i \in [n]$.
	\item\label{gen:st6} Set $\bm a^k\propto\bm d^k_+-\bm d_k$, rescaled so that $\ip{\bm a^k,\bm d^k_+}=n$.
	\item\label{gen:st7} Define $\bm d^{k+1}$ to be $(1/a_1,\,\dotsc,\,1/a_n)$, the maximizer of $\mathcal L(\bm d)$ subject to $\ip{\bm a^k,\bm d}=n$.
	\item\label{gen:st8} Return $(\bm a^k, \bm d^k_+, \bm x^k_+)$ if $\bm d^{k+1}$ is ``close enough'' to $\bm d^k_*$, otherwise repeat.
\end{enumerate}

The algorithm has polynomially many iterations, since similar to the case when agents have linear disutilities, if it does not terminate in iteration $k$, then the potential $\mathcal L$ increases by at least $\Omega(\varepsilon^2/n^2)$. And $\mathcal L$ is upper bounded. By arguing that every iteration can be done in polynomial time in Section \ref{sec:gen-alg-bounds}, we get an FPTAS in Theorem \ref{gen:mainthm}. 

\subsection{Organization}
We give a brief road map of the rest of the paper. In what follows, we first discuss some related work on CE in Section~\ref{sec:relWork} and state some fundamental results from~\cite{BogomolnaiaMSY17} that we use crucially for our algorithm design in Section~\ref{sec:prelim} . Thereafter, we present the FPTAS when agents have linear disutilities in Section~\ref{sec:CE} so that the reader gets a good idea of the meta-level algorithm. Finally, in Section~\ref{gen:CE} we discuss the FPTAS when agents have general 1-homogeneous disutilities. In section~\ref{sec:extensions}, we discuss the extensions of our results to the setting 
when the items to be divided contain both goods and bads (\emph{mixed manna}) with linear valuations, and when
agents have unequal income needs (\emph{CE in Fisher model}).

\section{Related Work}\label{sec:relWork}
Competitive equilibrium (CE) has been a fundamental concept in several economic models since the time of L{\'e}on Walras~\cite{Walras74} in the 19th century. In this paper, we primarily focus on CEEI, which is a special case of CE in Fisher markets, which again is a special case of CE in exchange markets (also referred to as Arrow-Debreu markets). The existence of CE under some mild assumption was proved in the exchange setting by Arrow and Debreu~\cite{AD} and independently by Mackenzie~\cite{Mckenzie54, Mckenzie59}. However, the proofs of existence used fixed point theorems and were non-constructive. In the last few decades, there has been substantial contribution from the computer science community in coming up with constructive algorithms to determine a CE. As mentioned in the introduction, there has been a long line of convex programs, interior point and  combinatorial polynomial time algorithms for determining CE with goods in both Fisher and the exchange setting~\cite{ColeDGJMVY17, DevanurGV16, nenakov83,  DevanurPSV08, Orlin10, Vegh12, DuanM15, DuanGM16, GargV19, CheungCD13}. There are also hardness results known when agents have more general utility functions~\cite{chen2017complexity, ChenDDT09, CThard, Rubinstein18}. The existence and computational complexity of CE and its relaxations have been studied in discrete settings (with indivisible objects) as well~\cite{FeldmanGL16}. 

The study of CE with chores/ bads has not received similar extensive investigation. One plausible reason could be that this does not capture a natural market and  such a setting is interesting only from a fair division perspective. Nevertheless, the CE with bads exhibits far less structure than the CE with goods as explained in the introduction. There are polynomial time enumerative algorithms known only when there are constant number of agents or chores~\cite{BranzeiS19, GargM20}. Quite recently, ~\cite{ChaudhuryGMM21} gave an LCP formulation for determining CEEI with \emph{mixed manna} (goods and bads) when the utility functions are separable piecewise-linear and concave (SPLC) which includes linear.

	\section{Preliminaries}
\label{sec:prelim}
Recall the chore division problem formalized in Section~\ref{sec:model} above:
We seek to divide $m$ divisible chores among $n$ agents with convex, 1-homogeneous disutility functions $\d_1,\,\dotsc,\,\d_n$, through the mechanism of {\em competitive equilibrium with equal income (CEEI)}. In this section we state a characterization of CEEI and certain properties of the disutility space that are crucial for our results.

In the case of dividing goods, the seminal work of Eisenberg and Gale~\cite{EG} shows that any allocation that maximizes the Nash welfare --- or equivalently the geometric mean of the utilities --- is at a CEEI. 
Since the Nash welfare maximization is a convex program, an approximate CEEI can be determined by an ellipsoid algorithm. 
Unfortunately, in the case of dividing bads, the set of equilibria could be non-convex and therefore one cannot hope for convex program formulation that captures equilibria~\cite{BogomolnaiaMSY17}. 
However, a recent result by  Bogomolnaia et al.~\cite{BogomolnaiaMSY17} show a similar, but non-convex formulation for an exact CEEI (Definition~\ref{def:CE}, with $\varepsilon=0$) with chores. 
In particular,~\cite{BogomolnaiaMSY17} show that the conditions of an exact CE hold if and only if the disutility profile is a critical point for the Nash welfare on the boundary of the feasible region. Formally:

\begin{theorem}[\cite{BogomolnaiaMSY17}]
	\label{bogomolnaiathm}
	Let $\mathcal F$ and $\mathcal D$ be the feasible space of allocations and disutility profiles as defined in~\eqref{def:F}. 
	For some $\bm d\in \mathbb R^n$, denote the Nash social welfare as $\NSW(\bm d):=\prod_{i=1}^n d_i$.
	Then $\bm d$ can be achieved by a CEEI if and only if the following conditions all hold: a) $\bm d\in \mathcal D$, b) $\NSW(\bm d)>0$, and c) $\bm d$ satisfies the KKT conditions for the problem of minimizing $\NSW$ on $\mathcal D$.
	Equivalently, $\bm d$ is on the lower-boundary of $\mathcal D$, but not on the boundary of $\mathbb R^n_{\geq 0}$, and the gradient $\nabla \NSW(\bm  d)$ is parallel to some supporting hyperplane normal for $\mathcal D$ at the point $\bm d$.
\end{theorem}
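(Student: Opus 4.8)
The statement to prove is exactly the characterization of Bogomolnaia et al.; the plan is to reprove it directly from KKT/convex-duality for each agent's individual optimization problem, splitting into the two implications. As a preliminary normalization, note that under \emph{equal income} we may rescale $\bm p$ so that $\langle \bm x_i,\bm p\rangle=1$ for every $i$, and write $d_i:=\d_i(\bm x_i)$ throughout. A recurring tool will be Euler's identity: for a $1$-homogeneous convex $\d_i$ and any subgradient $g^i\in\partial \d_i(\bm x_i)$ one has $\langle g^i,\bm x_i\rangle=\d_i(\bm x_i)=d_i$.

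\textbf{($\Rightarrow$) A CEEI yields a KKT point.} Condition (a) is immediate: $\bm x\in\mathcal F$ by (E3), so $\bm d=\vd(\bm x)\in\mathcal D$. For the core of the argument I would fix agent $i$ and read (E2) as saying $\bm x_i$ solves the convex program $\min_{\bm y\ge 0}\{\d_i(\bm y):\langle \bm y,\bm p\rangle\ge 1\}$. Writing its first-order conditions with a subgradient $g^i\in\partial \d_i(\bm x_i)$, a multiplier $\mu_i\ge 0$ for the budget constraint, and nonnegative multipliers for $\bm y\ge 0$, and then using Euler's identity together with $\langle\bm x_i,\bm p\rangle=1$, I can pin $\mu_i=d_i$ and conclude $g^i\ge d_i\,\bm p$ coordinatewise, with equality on $\mathrm{supp}(\bm x_i)$. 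Summing $\langle\bm x_i,\bm p\rangle=1$ over $i$ and comparing with $\sum_i\langle\bm x_i,\bm p\rangle=\langle\bm p,\sum_i\bm x_i\rangle=\sum_j p_j$ (by (E3)) gives $\sum_j p_j=n$. Now set $\bm a:=(1/d_1,\dotsc,1/d_n)$, which is parallel to $\nabla\NSW(\bm d)=\NSW(\bm d)\cdot\bm a$. For any $\bm d'\in\mathcal D$ realized by $\bm x'\in\mathcal F$, convexity of $\d_i$ and $\bm x'_i\ge 0$ give $\d_i(\bm x'_i)\ge\langle g^i,\bm x'_i\rangle\ge d_i\langle\bm p,\bm x'_i\rangle$, hence $\sum_i d'_i/d_i\ge\langle\bm p,\sum_i\bm x'_i\rangle=\sum_j p_j=n$, with equality at $\bm d$. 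So $\{\bm y:\langle\bm a,\bm y\rangle=n\}$ supports $\mathcal D$ at $\bm d$ with normal parallel to $\nabla\NSW(\bm d)$ — this is (c), and it places $\bm d$ on the lower boundary of $\mathcal D$. The remaining point (b), positivity of $\bm d$, is the delicate one: with linear $\d_{ij}>0$ it is immediate since $\langle\bm x_i,\bm p\rangle=1$ forces $\bm x_i\ne\bm 0$ and hence $d_i>0$, and in general it follows from the instance being a genuine chores (``negative'') instance in the sense of \cite{BogomolnaiaMSY17}.

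\textbf{($\Leftarrow$) A KKT point yields a CEEI.} Assuming (a)--(c): by (b) all $d_i>0$, and by (c) there is $\bm a=(1/d_1,\dotsc,1/d_n)$ with $\langle\bm a,\bm y\rangle\ge n=\langle\bm a,\bm d\rangle$ for all $\bm y\in\mathcal D$. Fix $\bm x\in\mathcal F$ realizing $\bm d$. Then the convex function $\phi(\bm x'):=\langle\bm a,\vd(\bm x')\rangle=\sum_i \d_i(\bm x'_i)/d_i$ is minimized over the polytope $\mathcal F$ at $\bm x$ (note $\mathcal D$ need not be convex, but this argument lives in allocation space where everything is convex). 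The first-order optimality condition, unpacked through the normal cone of $\mathcal F$ at $\bm x$, yields subgradients $g^i\in\partial\d_i(\bm x_i)$ and a vector $\bm\theta$ with $g^i_j/d_i\ge\theta_j$ for all $i,j$ and equality whenever $x_{ij}>0$, so $\theta_j=\min_i g^i_j/d_i$. Define $\bm p:=\bm\theta$. Monotonicity of $\d_i$ makes $g^i\ge \bm 0$, hence $\bm p\ge\bm 0$; Euler's identity gives $\langle\bm x_i,\bm p\rangle=\frac1{d_i}\langle\bm x_i,g^i\rangle=\frac1{d_i}\d_i(\bm x_i)=1$, so (E1) holds; and for any $\bm y\ge\bm 0$ with $\langle\bm y,\bm p\rangle\ge 1$, convexity plus $g^i\ge d_i\bm p$ and $\bm y\ge\bm 0$ give $\d_i(\bm y)\ge\langle g^i,\bm y\rangle\ge d_i\langle\bm p,\bm y\rangle\ge d_i=\d_i(\bm x_i)$, which is (E2); (E3) holds since $\bm x\in\mathcal F$. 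Hence $(\bm x,\bm p)$ is a CEEI realizing $\bm d$.

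\textbf{Main obstacle.} The geometry is clean, so the work is in the convex-analysis bookkeeping: making the first-order conditions rigorous when the $\d_i$ are merely convex (working with subdifferentials, the normal cone of $\mathcal F$, and the $1$-homogeneous Euler identity for subgradients), and separately handling positivity (b), which is exactly where the chores case genuinely departs from goods and where one must invoke non-degeneracy of the instance. By contrast, the possible non-convexity of $\mathcal D$ itself is not an obstacle for this proof, since both directions either produce or consume a supporting hyperplane of $\mathcal D$ directly and otherwise argue in allocation space (equivalently, with $\mathcal D+\mathbb R^n_{\ge 0}$, which is convex by Claim~\ref{claim:convex}).
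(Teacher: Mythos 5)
Your proof is essentially correct, but note first that the paper never proves this statement: Theorem~\ref{bogomolnaiathm} is imported wholesale from \cite{BogomolnaiaMSY17}, and the only proofs in the paper that touch this material are the approximate analogues (Theorem~\ref{corr:approx} and Theorem~\ref{thm:gen-approx}), which establish only the KKT$\Rightarrow$CEEI direction. Your route differs from theirs in a substantive way. You argue via exact first-order conditions: agentwise Lagrangian stationarity plus the Euler identity $\ip{g^i,\bm x_i}=\d_i(\bm x_i)$ in the forward direction, and stationarity of $\phi(\bm x')=\sum_i \d_i(\bm x'_i)/d_i$ over $\mathcal F$ (normal cone of the polytope, prices $p_j=\min_i g^i_j/d_i$) in the reverse direction. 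The paper instead works geometrically: it forms the convex set $S^*=\{\bm x\mid\ip{\bm a,\vd(\bm x)}\le\ip{\bm a,\vd(\bm z)}\}$, separates it from $\mathcal F$ by a hyperplane $H_{\bm c}$, sets $p_j=\min_i c_{ij}$, and verifies (E1)--(E2) by exchange arguments (perturbing the allocation to $\hat{\bm z}$ or $\bm z'$ and contradicting the supporting-hyperplane inequality) rather than by subgradient inequalities. The two are close relatives --- for linear disutilities $c_{ij}=a_i\d_{ij}$ is exactly your $g^i_j/d_i$ --- but the paper's separation-plus-exchange formulation is the one that degrades gracefully when the hyperplane is only $\gamma$- or $\delta$-approximately aligned with the gradient, which is the whole point of the paper; your exact-stationarity argument is cleaner for the exact theorem but does not directly survive that relaxation. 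Also, you prove the CEEI$\Rightarrow$KKT direction, which the paper never uses. The two places where your write-up leans on unstated facts are minor and correctly flagged: nonnegativity of the subgradient coordinates (which you only actually need on $\operatorname{supp}(\bm x_i)$, where monotonicity does give $g^i_j\ge 0$), and positivity of $\bm d$, which in this paper's model follows from $\ip{\bm x_i,\bm p}=1$ forcing $\bm x_i\neq\bm 0$ together with $\d_{ij}>0$ (linear) or the lower-Lipschitz bound of Assumption~\ref{as:lip} (general).
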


	Note that when dis-utilities are linear functions, $\mathcal D$ is a linear polytope, though it need not have an efficient representation. 
	When dis-utilities are general, 1-homogeneous, convex functions, the set $\mathcal D$ need not be convex. 
	However, we next show that $\mathcal D+\mathbb R_{\geq 0}^n$ is convex, and we will therefore use it as our feasible region in the analysis; see Appendix~\ref{app:tech-results} for the proof.

\def\ClaimDPlusIsConvex{$\mathcal D+\mathbb R_{\geq 0}^n$ is convex, when the disutility functions $\d_1,\,\dotsc,\,\d_n$ are convex.}

\begin{claim}\ClaimDPlusIsConvex
 \label{claim:convex}
 \end{claim}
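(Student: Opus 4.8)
The plan is to show directly that $\mathcal{D} + \mathbb{R}_{\geq 0}^n$ is closed under taking convex combinations. Let $\bm{d}, \bm{d}' \in \mathcal{D} + \mathbb{R}_{\geq 0}^n$, and fix $t \in [0,1]$; I want to show $t\bm{d} + (1-t)\bm{d}' \in \mathcal{D} + \mathbb{R}_{\geq 0}^n$. By definition of the Minkowski sum, there exist allocations $\bm{x}, \bm{x}' \in \mathcal{F}$ and nonnegative vectors $\bm{r}, \bm{r}' \in \mathbb{R}_{\geq 0}^n$ such that $d_i = \d_i(\bm{x}_i) + r_i$ and $d_i' = \d_i(\bm{x}'_i) + r_i'$ for every $i$. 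The natural candidate witness is the convex-combination allocation $\bm{y} := t\bm{x} + (1-t)\bm{x}'$, which lies in $\mathcal{F}$ since $\mathcal{F}$ is a polytope (the constraints $\sum_i x_{ij} = 1$ and $x_{ij}\geq 0$ are preserved under convex combination).

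The key step is to compare $\d_i(\bm{y}_i)$ with the convex combination of $\d_i(\bm{x}_i)$ and $\d_i(\bm{x}'_i)$. Since each $\d_i$ is convex,
\begin{equation}
\nonumber
\d_i(\bm{y}_i) = \d_i\big(t\bm{x}_i + (1-t)\bm{x}'_i\big) \leq t\,\d_i(\bm{x}_i) + (1-t)\,\d_i(\bm{x}'_i).
\end{equation}
Therefore the slack vector $\bm{s}$ defined by $s_i := \big(t d_i + (1-t)d_i'\big) - \d_i(\bm{y}_i)$ satisfies
\begin{equation}
\nonumber
s_i \geq t d_i + (1-t)d_i' - t\,\d_i(\bm{x}_i) - (1-t)\,\d_i(\bm{x}'_i) = t r_i + (1-t) r_i' \geq 0,
\end{equation}
using $d_i = \d_i(\bm{x}_i)+r_i$ and $d_i' = \d_i(\bm{x}'_i)+r_i'$ together with $r_i, r_i' \geq 0$. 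Hence $\bm{s} \in \mathbb{R}_{\geq 0}^n$, and $t\bm{d} + (1-t)\bm{d}' = \vd(\bm{y}) + \bm{s} \in \mathcal{D} + \mathbb{R}_{\geq 0}^n$, as desired.

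I do not anticipate a genuine obstacle here: the whole point of passing to the Minkowski sum with the nonnegative orthant is exactly to absorb the convexity defect $t\,\d_i(\bm{x}_i) + (1-t)\,\d_i(\bm{x}'_i) - \d_i(t\bm{x}_i + (1-t)\bm{x}'_i) \geq 0$ as extra slack, so the argument is essentially forced. The only minor points to be careful about are (i) confirming $\mathcal{F}$ itself is convex (immediate from its description as an intersection of a hyperplane system and the nonnegative orthant), and (ii) noting that $\mathcal{D}$ need not itself be convex precisely because, without the orthant, there is no room to park the slack $\bm{s}$ — which is why the claim is stated for $\mathcal{D} + \mathbb{R}_{\geq 0}^n$ rather than $\mathcal{D}$.
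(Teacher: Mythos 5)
Your proof is correct and follows essentially the same route as the paper's: decompose each point as a feasible disutility profile plus nonnegative slack, take the convex combination of the witnessing allocations in $\mathcal F$, and use convexity of the $\d_i$'s to absorb the convexity defect into the nonnegative slack vector. No gaps.
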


In the following sections, we extend Theorem~\ref{bogomolnaiathm} to map approximate KKT points to approximate CEEI (Definition \ref{def:CE}), and then design an algorithm to find an approximate KKT point.

	\section{Polynomial-Time Algorithm for \texorpdfstring{\eCEEI}{e-CEEI} under Linear Disutilities}\label{sec:CE}
In this section we present an algorithm to find an \eCEEI in time polynomial in $\frac{1}{\varepsilon}$ and the size of the input instance, when agents have linear disutility functions. Recall that, the linear function of agent $i$ is represented by $\d_i(\bm x_i)=\sum_{j=1}^m D_{ij} x_{ij}$, or equivalently $\d_i(\bm x_i)=\ip{ \bm \d_i, \bm x_i }$ where $\bm \d_i = (\d_{i1}, \d_{i2}, \dots , \d_{im})$.

Our algorithm will ensure a stronger notion of approximation where all the chores are exactly allocated, {\em i.e.,} condition $3$ in Definition \ref{def:CE} is satisfied exactly. For this, the algorithm finds a $\gamma$-KKT point as defined in Definition \ref{def:apx-KKT-lin}. 
Let us first discuss how such a KKT point gives a stronger approximate CEEI in the next section, thereby extending Theorem~\ref{bogomolnaiathm}.

 \subsection{Approximate KKT Suffices to get Approximate CEEI}\label{definition-of-vector-division}
	We begin with some notation: as we often use element-wise inverse of a vector, 
	for any two $n$-dimensional vectors $\bm x=(x_1,\dotsc,x_n)$ and $\bm y=(y_1,\dotsc, y_n)$,
	we denote 
	\[\bm x/\bm y := (x_1/y_1,\dotsc,x_n/y_n)\ .\]	
	
	Recall that we are interested in finding local minima for the logarithm of the Nash social welfare	
	\begin{equation}
		\mathcal L(\bm d):=\log(\NSW(\bm d)) = \sum_{i=1}^n \log(d_i)\ .
	\end{equation}
	Observe that $\nabla \mathcal L (\bm d) = \bm 1/\bm d$. 
	From Definition \ref{def:apx-KKT-lin}, recall the $\gamma$-KKT point, $\gamma \ge 1$, for minimizing $\mathcal L$ on $\mathcal D$: point $\bm d$ on the boundary of $(\mathcal D+\mathbb R^n_{\geq 0})$, such that it has $\{\bm y\ |\ \bm a^\top \cdot \bm y\geq n\}$ as a supporting hyperplane for $\mathcal D+\mathbb R^n_{\geq 0}$, where $\bm a\in \mathbb R^n$ approximates $\nabla \mathcal L (\bm d)$ coordinate-wise, i.e., $\forall i, \gamma^{-1}\le \frac{a_i}{1/d_i} \le \gamma$. 
	
	We emulate here the proof of Bogomolnaia et al.~\cite{BogomolnaiaMSY17} to show that approximate KKT points give approximate CEEI.

	As stated in the overview, we wish to show the following.
	\begin{theorem}\label{corr:approx}
	\linearTheoremKKTGivesCE
	\end{theorem}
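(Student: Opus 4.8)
The plan is to extend the Bogomolnaia et al.\ characterization (Theorem~\ref{bogomolnaiathm}) from exact KKT points to $(1+\varepsilon)$-KKT points. Let $(\bm d,\bm a)$ be a $(1+\varepsilon)$-KKT point with $\mathcal L(\bm d)>-\infty$, so in particular $d_i>0$ for all $i$, and let $\bm x\in\mathcal F$ realize $\bm d$. After rescaling, we may assume $\ip{\bm a,\bm d}=n$. The natural guess for the payments is to \emph{use $\bm a$ itself to induce prices on the chores}: since $\ip{\bm a,\bm y}\ge n$ is a supporting hyperplane for $\mathcal D+\mathbb R^n_{\ge 0}$ at $\bm d$, and since for any allocation $\bm y\in\mathcal F$ the disutility profile $\vd(\bm y)$ lies in $\mathcal D$, one has $\sum_i a_i \d_i(\bm y_i)\ge n = \sum_i a_i d_i$. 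First I would unpack what this supporting-hyperplane inequality says when $\d_i$ is linear: writing $\d_i(\bm x_i)=\ip{\bm\d_i,\bm x_i}$, the quantity $\sum_i a_i\ip{\bm\d_i,\bm x_i}$ is minimized over $\bm x\in\mathcal F$ exactly when each chore $j$ is fully assigned to the agent(s) minimizing $a_i\d_{ij}$; so define $p_j := \min_i a_i \d_{ij}$, and the support condition together with full allocation forces that $\bm x$ assigns chore $j$ only to agents attaining this min, i.e.\ $x_{ij}>0 \Rightarrow a_i\d_{ij}=p_j$. This is the step that gives condition $(E2)$, the optimal-bundle property, \emph{exactly}: an agent consuming only her minimum-$a_i\d_{ij}$ chores is, up to the scaling $a_i$, buying cheapest disutility-per-dollar, which is precisely the optimality condition for her chore-demand problem at prices $\bm p$. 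Condition $(E3)$ holds because $\bm x\in\mathcal F$, also with no error.

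Next I would establish the earning/payment identity. For each agent $i$, her payment is $\ip{\bm p,\bm x_i}=\sum_j p_j x_{ij} = \sum_j a_i\d_{ij}x_{ij} = a_i\,\d_i(\bm x_i) = a_i d_i$, using that $x_{ij}>0$ only when $p_j=a_i\d_{ij}$ and that $\d_i$ is linear and $1$-homogeneous. So the payment to agent $i$ is exactly $a_i d_i$. By the approximate-KKT condition $\gamma^{-1}\le a_id_i\le\gamma$ with $\gamma=1+\varepsilon$, every agent's payment lies in $[(1+\varepsilon)^{-1},\,1+\varepsilon]$. Hence for any two agents $i,i'$, $\ip{\bm p,\bm x_i}/\ip{\bm p,\bm x_{i'}} \le (1+\varepsilon)^2 \le 1+2\varepsilon+\varepsilon^2$, which I would convert to the form $(1-2\varepsilon)\ip{\bm x_i,\bm p}\le\ip{\bm x_{i'},\bm p}$ required by condition~\ref{P1}; a short computation shows $(1+\varepsilon)^{-2}\ge 1-2\varepsilon$, so this is clean. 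Thus $(\bm x,\bm p)$ satisfies \ref{P1} with parameter $2\varepsilon$, and $(E2)$, $(E3)$ exactly — the ``stronger $2\varepsilon$-CEEI'' claimed.

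For the general (non-linear) first part of the statement, the same argument goes through with $p_j$ defined via a supporting-hyperplane / subgradient argument instead of the explicit $\min_i a_i\d_{ij}$: because $\d_i$ is $1$-homogeneous and convex, Euler's identity gives $\d_i(\bm x_i)=\ip{\nabla\d_i(\bm x_i),\bm x_i}$, and I would take $p_j$ to be a suitable convex combination of the $a_i\,\partial_j\d_i(\bm x_i)$ over agents $i$ with $x_{ij}>0$, chosen so that the support inequality $\sum_i a_i\d_i(\bm y_i)\ge n$ descends to $\ip{\bm p,\bm y_i}$-based optimality for each agent; the equal-payment bound is then identical. I expect the main obstacle to be precisely this bookkeeping in the general case: verifying that the supporting hyperplane for $\mathcal D+\mathbb R^n_{\ge0}$ in disutility space can be pulled back to a consistent price vector in allocation space that simultaneously certifies each agent's individual optimality — i.e.\ that the ``dual decomposition'' of $\bm a$ across chores exists. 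In the linear case this is immediate (it's just $p_j=\min_i a_i\d_{ij}$), which is why the theorem also asserts that $\bm x$ and $\bm p$ are then computable in polynomial time: the constraints ``$x_{ij}>0\Rightarrow a_i\d_{ij}=p_j$, $\bm x\in\mathcal F$'' form a linear feasibility system (a transportation-type LP) once $\bm p$ is read off from $\bm a$ and the given $\d_{ij}$, and one must also check feasibility of this system is guaranteed by the support condition — another place where care is needed but which follows from LP duality / the fact that $\bm d$ lies on the relevant face. Finally I would double-check the boundary hypothesis $\mathcal L(\bm d)>-\infty$ is used exactly where needed, namely to ensure $d_i>0$ so that $a_i=\Theta(1/d_i)$ is well-defined and the prices $p_j$ are positive.
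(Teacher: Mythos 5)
Your treatment of the linear case is correct, and it takes a genuinely different route from the paper. You note that the supporting-hyperplane condition makes $\bm x$ a minimizer of $\sum_i a_i \d_i(\bm y_i)$ over $\mathcal F$; since this objective decomposes chore by chore, minimality forces the complementary-slackness relation $x_{ij}>0 \Rightarrow a_i\d_{ij}=p_j:=\min_{i'}a_{i'}\d_{i'j}$. This yields the exact payment identity $\ip{\bm p,\bm x_i}=a_i d_i$, so condition~\ref{P1} with parameter $2\varepsilon$ falls out of $\gamma^{-1}\le a_id_i\le\gamma$ immediately, and $(E2)$ holds exactly because every consumed chore attains agent $i$'s minimum disutility-per-dollar $1/a_i$. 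The paper instead constructs the half-spaces $H_{\bm c}$ (separating $\mathcal F$ from the sublevel set $S^*$) and $H_{\bm p}$, and proves conditions~\ref{P1} and $(E2)$ by contradiction, perturbing the allocation and showing the perturbed profile violates the supporting hyperplane. Your argument is more direct and makes the per-agent payment explicit; the paper's buys robustness, since it never relies on the objective separating across chores. (A minor caveat: your normalization $\ip{\bm a,\bm d}=n$ is not part of Definition~\ref{def:apx-KKT-lin} and, if imposed by rescaling $\bm a$, would degrade the gradient approximation to $\gamma^{2}$; nothing in your argument actually uses it, so simply drop it.)

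The genuine gap is in the general $1$-homogeneous case, which the first claim of the theorem does cover, and which you flag yourself: ``a suitable convex combination of the $a_i\,\partial_j \d_i(\bm x_i)$'' is not a construction, and the per-chore decomposition that drives your linear argument is unavailable once $\sum_i a_i\d_i(\bm y_i)$ no longer separates over chores. The paper's resolution is to work in allocation space: the set $S^*=\{\bm x : \ip{\bm a,\vd(\bm x)}\le \ip{\bm a,\bm d}\}$ is convex (a sublevel set of a convex function) and meets $\mathcal F$ only tangentially at $\bm z$, so the separating hyperplane theorem supplies a vector $\bm c$ with $\mathcal F\subseteq H_{\bm c}=\{\ip{\bm c,\bm x}\ge b\}$ and $S^*$ on the other side; one then sets $p_j:=\min_i c_{ij}$ and runs the contradiction arguments through $\mathcal F\subseteq H_{\bm p}\subseteq H_{\bm c}$, never needing to decompose $\bm c$ agent-by-agent. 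That separating-hyperplane step is the missing ingredient if you want to complete your route beyond the linear case.
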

	\begin{proof}
Let $\gamma=(1+\epsilon)$, then it suffices to show that $\gamma$-KKT gives $(1-\gamma^{-2})$-CEEI since $2\epsilon > (1-\gamma^{-2})$ for $\epsilon>0$. 	Recall we have defined $\overrightarrow{\d}(\bm x):=\left(\d_1(\bm x_1),\,\dotsc,\,\d_n(\bm x_n)\right)$, and sets $\mathcal F$ and $\mathcal D$ are as in~\eqref{def:F}, 	namely, the set of feasible allocations and the set of feasible disutility profiles, in general.

	\paragraph{Defining and Computing the Allocation and Prices.}
	Let $\bm d$ be the disutility profile of the approximate KKT point.
	Since $\mathcal D+\mathbb R^n_{\geq 0}\subseteq \{\bm y\ |\ \bm a^\top \bm y\geq \ip{\bm a,\bm d}\}$ and the entries of $\bm a$ are positive, then $\bm d\in \mathcal D$, by minimality.
	Now, consider any allocation $\bm z$ in $\mathcal F$, such that $\overrightarrow{\d}(\bm z)=\bm d$.
	
	For the second part of the statement of the theorem, we must show that $\bm z$ can be computed, 
	as this will be the allocation of the approximate CEEI.
	In fact, it suffices to find an allocation vector $\bm x$ which simultaneously satisfies the non-negativity constraints of $\mathcal F$, and the linear equality constraints of $\mathcal F$ along with $\overrightarrow{\d}(\bm x) = \bm d$. 
	This can be solved by linear programming techniques in polynomial time. 
		
We wish now to compute the prices at the allocation, for which we will need separating hyperplanes.
To this end, define the set $S_\lambda:=\{\bm x\in \mathbb R^{nm}\ |\ \langle{\bm a,\overrightarrow{\d}(\bm x)}\rangle\leq \lambda\}$. As the disutility functions are convex and continuous, we can conclude that the set  $S_{\lambda}$ is closed, convex, and non-empty for all $\lambda>0$, since $S_\lambda\ni \bm 0$.
When disutilities are linear, $S_{\lambda}$ is in fact a closed half-space, since 
\begin{align*}
	\ip{\bm a,\overrightarrow{\d}(\bm x)}\leq n &\iff \textstyle\sum_{i=1}^n a_i\sum_{j=1}^m \d_{ij} x_{ij} \leq n\ .
\end{align*}

Now, because $\ip{\bm a, \bm y} \geq \langle{ \bm a, \overrightarrow{\d}( \bm z)}\rangle$ for all  $y \in \mathcal{D}$, we can conclude that the $S_{\lambda}$ does not intersect $\mathcal F$ for any $\lambda<\langle{ \bm a,\overrightarrow{\d}(\bm z)}\rangle$. 
Denote $S^* := S_{\langle{\bm a,\overrightarrow{\d}(\bm z)}\rangle}$. The set $S^*$ must be only tangent to $\mathcal F$, since the $\d_i$'s are continuous, but $\bm z\in \mathcal F\cap S^*$. See Figure~\ref{DandFspace} for an illustration.
	Thus, there exists a half-space $H_{\bm c}:=\{\bm x\ |\ \ip{\bm c, \bm x} \geq b\}$ which separates the two sets, \textit{i.e.}\ $\mathcal F\subseteq H_{\bm c}$, and $S^*\subseteq \operatorname{cl}(H_{\bm c}^\complement)$. Also, note that we must have $\ip{ \bm c,\bm z}=b$. 
	Note that when disutilities are linear, we have $c_{ij} = a_i\d_{ij}$, and $b=n$ as the hyperplane separating $\mathcal F$ and $S^*$ is $\ip{a, \vd (\bm x)} = n$.
	
	Finally, we can define the prices at the allocation. 
	Let $p_j:=\min_i c_{ij}$, and let $\bm p:=(p_1,\dotsc,p_m)$. See  Figure~\ref{DandFspace} for an illustration of the supporting hyperplanes $\ip{\bm a, \bm y} = \langle{\bm a, \overrightarrow{\d}( \bm z)}\rangle$ in $\mathcal{D}$ and $\ip{\bm c, \bm x} = \ip{\bm c, \bm z}$ in $\mathcal{F}$.
	
	\begin{figure}[t]
		\centering 
		\begin{tikzpicture}
\draw (0,5)--(0,-2);
\draw  (-2,0)--(5,0);

\draw[fill=blue!5] (0,4)--(1,2)--(4,0)--(2,3)--(0,4);
\node at (1.5,2.5) {$\mathcal{D}$};

\draw[red, thick] (-2,5)--(5,-2) node[pos=0.5,sloped,below] {$\scriptstyle{\ip{\bm a,\bm y} = \ip{\bm a,\vd( \bm z)}}$};

\draw [white, thick, fill=red!5] plot [smooth] coordinates { (-1.3+8,0) (0+8,0.97) (1+8,1.5) (2+8,1.5) (3+8,0) (3.3+8,-1.5) (-1.3+8,-1.5) (-1.3+8,0)};
\draw (0+8,7)--(0+8,-2);
\draw  (-2+8,0)--(6+8,0);

\draw[fill=green!5] (0+8,6)--(0.5+8,3)--(2+8,1.5)--(6+8,0)--(4+8,5.5)--(0+8,6);
\node at (2.5+8,3.5) {$\mathcal{F}$};
\node at (1.3+8,0.5) {$S^* = S_{\ip{\bm a, \vd( \bm z)}}$};

\draw [red, thick] plot [smooth] coordinates { (-1.3+8,0) (0+8,0.97) (1+8,1.5) (2+8,1.5) (3+8,0) (3.6+8,-1.5)};
\draw[red!5] (2.9+8,0)--(3.6+8,-1.5) node[pos=0.75,sloped,below] {\textcolor{red}{$\scriptstyle{\ip{\bm a, \bm d( \bm x)} = \ip{\bm a, \vd( \bm z)}}$}};
\draw[blue, thick] (-1+8,3.75)--(5.333+8,-1) node[pos=0.85,sloped,below] {$\scriptstyle{\ip{\bm c, \bm x} = \ip{\bm c, \bm z}}$};

\filldraw (1,2) circle (2pt);
\filldraw (2+8,1.5) circle (2pt);

\node at (1.25,2.15) {$\scriptstyle{\vd( \bm z)}$};
\node at (2.15 + 8,1.65) {$\scriptstyle{\bm z}$};

\end{tikzpicture}
		\caption{
		\small \em Illustration of the supporting hyperplanes: 
		$\bm z \in \mathcal F$ is a point such that $(\protect \vd (\bm z), \bm a)$ satisfies the approximate KKT conditions in Definition~\ref{def:apx-KKT}. 
		Thus, we have a supporting hyperplane  $\ip{\bm a, \bm y} = \ip{\bm a, \protect \vd ( \bm z)} = n$ of $\mathcal{D}$
		such that $\gamma^{-1} \leq a_i \cdot \d_i(z_i) \leq \gamma$ (left). 
		The figure on the right describes the set $S^* = S_{\ip{\bm a, \protect \vd( \bm z)}}$ 
		and the hyperplane $\ip{\bm c, \bm x} = \ip{\bm c, \bm z}$ that separates $\mathcal{F}$ from $S^*$. 
		Note that $\bm z \in \mathcal F \cap S^*$ and the curve $\ip{ \bm a, \protect \vd (\bm x)} = \ip{\bm a, \protect \vd( \bm z) }$ coincides with the hyperplane $\ip{\bm c, \bm x} = \ip{\bm c , \bm z}$ when the disutility functions are linear.} 
		\label{DandFspace}
	\end{figure}
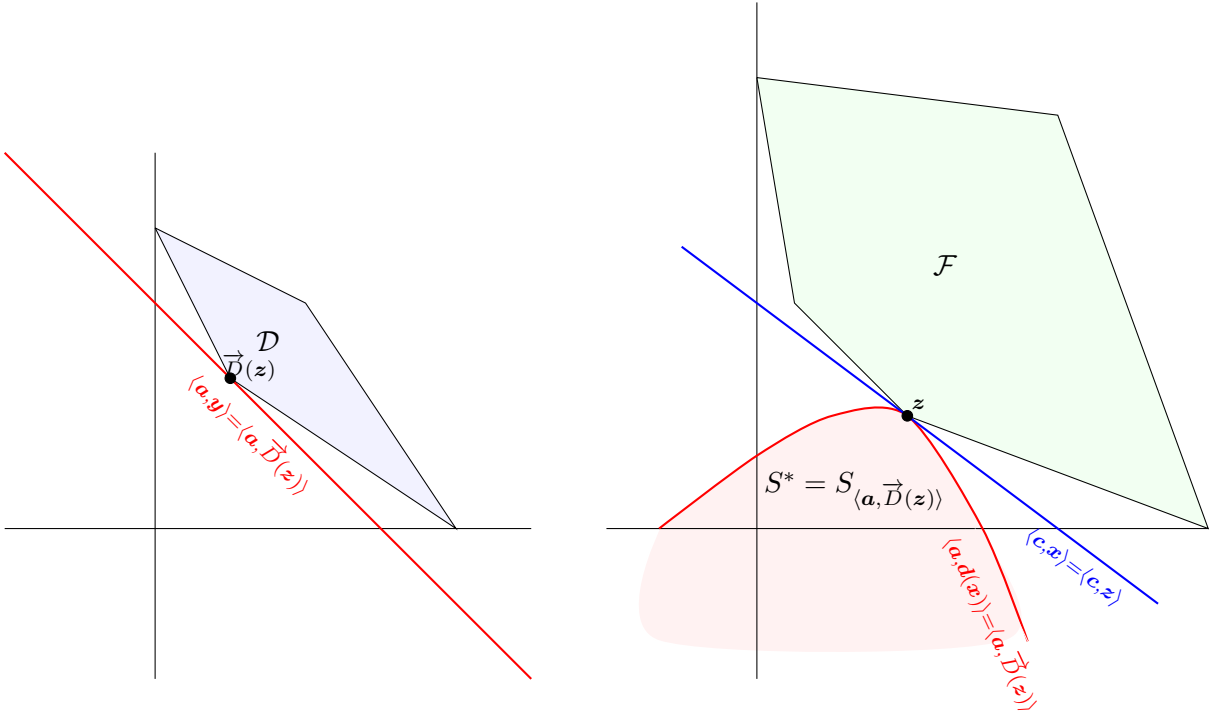

	It remains then to show that the allocation $\bm z$ and the price vector $\bm p$ satisfy the conditions in Definition~\ref{def:CE} where the last two are satisfied without any error, since we have argued already that they can be computed efficiently.

	\paragraph{Satisfying Condition~\ref{P1} in Definition~\ref{def:CE}.}
	  We want to show that for all agents $i$ and $i'$, we have $\gamma^{-2} \cdot  \langle \bm z_i, \bm p \rangle \leq \langle \bm z_{i'}, \bm p \rangle$. But first we make some simple but crucial observations about the price vector $\bm p$.
	  
	  \begin{claim}
	  	\label{technical1}
	  	We have $\sum_{j \in [m]} p_j = \ip{\bm c, \bm z} = b$.
	  \end{claim}
	   
	   \begin{proof}
	   	$\ip{\bm c, \bm x} \geq \ip{\bm c, \bm z} = b$ for all $\bm x \in \mathcal F$ by definition. Also, since $\bm z \in \mathcal F$, we can claim that $b = \min_{\bm x \in \mathcal F} \ip{\bm c, \bm x}$. Observe that $\min_{x \in \mathcal F} \ip{\bm c, \bm x}$ is obtained by assigning each chore fully to the agent that has the smallest $c_{ij}$ value for it. Therefore, we have that $\min_{\bm x \in \mathcal F} \ip{\bm c, \bm x} = \sum_{j \in [m]} \min_{i \in [n]} c_{ij} = \sum_{j \in [m]} p_j$ (by the definition of $p_j$).
	   \end{proof}
   
      Now, consider the half-space $H_{\bm p} = \{\bm x \in \mathbb{R}^{nm}_{\geq 0} \mid \sum_{i\in [n], j \in [m]} p_j \cdot x_{ij} \geq b  \}$. We first observe that this half-space is entirely contained in $H_{\bm c}$.
      
      \begin{claim}
      	\label{technical2}
      	We have $H_{\bm p} \subseteq H_{\bm c}$.
      \end{claim}
  
       \begin{proof}
       	Consider any point $\bm x \in H_{\bm p}$. We have $b \leq \sum_{i \in [n]} \sum_{j \in [m]} x_{ij} \cdot p_j$. Since $p_j \leq c_{ij}$ for all $i \in [n]$, we have that $\sum_{i \in [n]} \sum_{j \in [m]} x_{ij} p_j \leq \sum_{i \in [n]} \sum_{j \in [m]} x_{ij} \cdot c_{ij}$, implying that $\sum_{i \in [n]} \sum_{j \in [m]} x_{ij} \cdot c_{ij} \geq b$, i.e., $\ip{ \bm c, \bm x } \geq b$. Therefore $\bm x \in H_{\bm c}$. 
       \end{proof}
   
       Finally, note that every point $\bm x \in \mathcal F$ is also contained in $H_{\bm p}$.
       
       \begin{claim}
       	\label{technical3}
       	Consider any $\bm x \in \mathcal F$. Then $\bm x \in H_{\bm p}$.
       \end{claim}
   
        \begin{proof}
        	Consider any $\bm x \in \mathcal F$. We have 
        	\begin{align*}
        	 \sum_{i \in [n],j \in [m]} x_{ij} \cdot p_j &= \sum_{j \in [m]} p_j \cdot \sum_{i \in [n]}x_{ij}\\
        	                                             &= \sum_{j \in [m]} p_j &(\text{$\sum_{i \in [n]} x_{ij} = 1$ as $\bm x \in \mathcal F$})\\
        	                                             &= b &(\text{by Claim~\ref{technical1}})     	                                              
        	\end{align*}
        	Therefore $\bm x \in H_{\bm p}$.
        \end{proof}
	  Now, we are ready to show that $\gamma^{-2} \cdot  \langle \bm z_i, \bm p \rangle \leq \langle \bm z_{i'}, \bm p \rangle$. 
	  Assume otherwise and say we have  $\gamma^{-2} \cdot  \langle \bm z_i, \bm p \rangle > \langle \bm z_{i'}, \bm p \rangle$. 
	  Then we could replace the allocation as follows: Construct $\hat{\bm z}$ by setting $\hat{\bm z}_{i'}=\tfrac12\bm z_{i'}$, and $\hat{\bm z}_{i}=\left(1+\tfrac{\ip{\bm z_{i'}, \bm p}}{2\ip{\bm z_{i}, \bm p}}\right)\bm z_{i}$. 
	  Since $\bm z \in \mathcal F$, we have $\sum_{i \in [n],j \in [m]} p_j z_{ij} = b$. Also note that 
	  \[
	  	b = \textstyle\sum_{i \in [n],j \in [m]}p_jz_{ij}=\sum_{i \in [n],j \in [m]}p_j\hat z_{ij}\ ,\]
	  since the payment subtracted from agent $i'$ is equal to the payment added to agent $i$ and so $\hat{\bm z}\in H_p$.

	  By Claim~\ref{technical2}, we have that  $ \hat{\bm z} \in H_{\bm c}$. 
	  Recall that $H_{\bm c}$ is a separating half-space between $S^*$ and $\mathcal{F}$, \textit{i.e.}, $\mathcal{F} \subseteq H_{\bm c}$ and $S^* \subseteq \operatorname{cl}(H_{\bm c}^{\complement})$, 
	  implying that for every point $\bm x \in H_{\bm c}$ we have 
	  $\langle{\bm a, \overrightarrow{\d}(\bm x)}\rangle \geq \langle{\bm a, \overrightarrow{\d}( \bm z)}\rangle$.  Since $\hat{\bm {z}} \in H_{\bm c}$, we have  $\langle{\bm a,\overrightarrow{\d}(\hat{\bm z})}\rangle\geq \langle{\bm a,\overrightarrow{\d}(\bm z)}\rangle$. However,
		\begin{align*}
			\ip{\bm a,\overrightarrow{\d}(\hat{\bm z})} - \ip{\bm a,\overrightarrow{\d}(\bm z)}
			& =-\frac12 a_{i'} \d_{i'}(\bm z_{i'}) + \frac{\ip{\bm z_{i'}, \bm p}}{2\ip{\bm z_{i}, \bm p}} a_{i}\d_{i}(\bm z_{i}) \\
			& \leq -\frac 12 \gamma^{-1} + \frac{\ip{\bm z_{i'}, \bm p}}{2\ip{\bm z_{i}, \bm p}}\cdot \gamma\\
			& <-\tfrac 12 \gamma^{-1}+\tfrac 12 \gamma^{-1}=0\ , 
			&\text{(as ${\ip{\bm z_{i'}, \bm p}}/{\ip{\bm z_{i}, \bm p}} < \gamma^{-2}$)}
		\end{align*}
		which is a contradiction.
	  The first inequality is due to the definition of $\gamma$-approximate KKT, which dictates that $\gamma^{-1} \leq a_{\ell} \cdot \d_{\ell}( \bm z_{\ell}) \leq \gamma$ for all $\ell \in [n]$.

	\paragraph{Satisfying Condition~\ref{P2} in Definition~\ref{def:CE} Exactly (i.e., Condition~\ref{EP2}).} 
	We want to show that for all $i \in [n]$, we have $\d_i(\bm z_i) \leq \d_i(\bm y)$ for all $y$ such that $\ip{\bm y, \bm p} \geq \ip{\bm z_i, \bm p}$. 
	Let us assume that there exists a $\bm y$ such that $\d_i(\bm z_i) > \d_i(\bm y)$ and $\ip{\bm y, \bm p} \geq \ip{\bm z_i, \bm p}$. 
	We define a new allocation $\bm z' = (\bm z_1, \bm z_2, \dots, \bm z_{i-1}, \bm y, \bm z_{i+1}, \dots ,\bm z_n)$. 
	First note that $\sum_{i \in [n], j\in [m]} z'_{ij} \cdot p_j \geq \sum_{i \in [n], j\in [m]} z_{ij} \cdot p_j = b$ as $\ip{\bm y, \bm p} \geq \ip{\bm z_i, \bm p}$. 
	Therefore $\bm z' \in H_{\bm p}$. 
	By Claim~\ref{technical2}, we have that $\bm z' \in H_{\bm c}$. 
	Recall that $H_{\bm c}$ is a separating half-space between $S^*$ and $\mathcal{F}$, \textit{i.e.}, $\mathcal{F} \subseteq H_{\bm c}$ and $S^* \subseteq \operatorname{cl}(H_{\bm c}^{\complement})$, 
	implying that for every point $\bm x \in H_{\bm c}$ we have $\langle{\bm a, \overrightarrow{\d}(\bm x)}\rangle \geq \langle{\bm a, \overrightarrow{\d}( \bm z)}\rangle$.  
	Since $\bm z' \in H_{\bm c}$, we have  
	$\langle{\bm a,\overrightarrow{\d}(\bm z')}\rangle\geq \langle{\bm a,\overrightarrow{\d}(\bm z)}\rangle$.
	However, since $\d_i( \bm y) < \d_i(\bm z_i)$ and $a_i \geq \gamma^{-1}/ \d_i(\bm z_i) > 0$ (by the definition of approximate KKT point),  we have that  $\langle{\bm a,\overrightarrow{\d}(\bm z')}\rangle <  \langle{\bm a,\overrightarrow{\d} (\bm z)}\rangle$, which is a contradiction.

	\paragraph{Satisfying Condition~\ref{P3} in Definition \ref{def:CE} Exactly (i.e., Condition~\ref{EP3}).}
	 Since $\bm z \in \mathcal F$, we have that $\sum_{i \in [n]}z_{ij}=1$ for all $i \in [n]$. \qedhere
  \end{proof}

	This concludes the proof that an  approximate-CEEI can be determined from approximate-KKT points in polynomial time. In the next subsection, we outline a polynomial time algorithm that determines an approximate-KKT point.

\subsection{Algorithm, and Convergence Guarantees}\label{sec:lin-kkt}
   We show that approximate-KKT points can be found in polynomial time. 
   We begin with an overview of the procedure, and later show how the steps are implemented.
   The idea is to perform an exterior-point procedure outside of the feasible region, which produces a sequence of guesses for approximate KKT points, while increasing along the objective.   
    Due to the nature of the objective function, we alternate between finding supporting hyperplanes, and finding $\NSW$-maximizing points on these hyperplanes, until we find a point whose gradient is approximately in line with the supporting hyperplane. 
    
    To be precise, our algorithm starts from a point $\bm d^{0}$ very close to $\bm 0$. Note that this point lies below  $\mathcal{D}$. Then, we find the nearest point $\bm d^{0}_{*}$ in $\mathcal{D}+\mathbb R_{\geq 0}^n$ to $\bm d^0$.    
    We will address how to find this nearest point, and explain how to robustly handle approximation errors in finding this nearest point. In doing so, it will be helpful to find nearest points in the convex region $\mathcal D+\mathbb R_{\geq 0}^n$, but keeping in mind that the true optimum $\bm d^0_*$ lies in $\mathcal D$: 
    to see this, note that $\bm d^{0}_{*}$ has to lie on the lower envelope of $\mathcal{D}$, and since it is the closest point in $\mathcal{D}$ to $\bm d^{0}$, it follows that $(\bm d^{0}_{*} - \bm d^{0})$ is normal to a supporting hyperplane of $\mathcal{D}$ at $\bm d^{0}_{*}$. 
Furthermore, we show that $\bm d^{0}_{*}$ Pareto-dominates $\bm d^{0}$, thereby implying that the Nash welfare at $\bm d^{0}_{*}$ is larger than the Nash welfare at $\bm d^{0}_{*}$. 

    Let $\ip{\bm a, \bm y} = n$ be the supporting hyperplane of $\mathcal{D}$ at $\bm d^{0}_{*}$, where $\bm a \propto (\bm d^{0}_{*} - \bm d^{0})$.
	Let $\bm d^{1}$ be a point on this hyperplane with maximum Nash welfare. Observe that at $\bm d^{1}$, we should have $\nabla \mathcal L$ proportional to $\bm a$, \textit{i.e.}, $\bm a = \bm 1 / \bm d^{1}$, implying that $\bm d^{1} = \bm 1 / \bm a$. 
	Since $\ip{\bm a, \bm y} = n$ is a supporting hyperplane of $\mathcal D$ at $\bm d^{0}_{*}$ (a point on the lower envelope of $\mathcal D$), we have that $\bm d^{1}$ also lies below the lower envelop of $\mathcal{D}$. 
	We prove that if the distance between $\bm d^{0}_{*}$ and $\bm d^{1}$ is small, then $\bm d^{0}_{*}$ is our approximate KKT-point, otherwise we have a new point $\bm d^1$ below $\mathcal{D}$, which has significantly higher Nash welfare than $\bm d^0$. 
	We run the exact same steps from $\bm d^{1}$. We argue that such a procedure should eventually give us an approximate KKT point as there is significant increase in Nash welfare with every iteration of the algorithm whenever no approximate KKT point is found.  
	The full description of the algorithm  is given in Algorithm~\ref{alg:KKT}.
	
   In what follows, define $\logd(\bm x,\bm y):= \sum_{i} \left|\log( x_i/y_i)\right|$.	Notice that if $\logd(\bm x,\bm y)\leq \varepsilon$, then $(1+\varepsilon)^{-1}\leq x_i/y_i\leq (1+\varepsilon)$ for all $i$, since $\log(1+a)\leq a$ for all $a>-1$. We will find a point which is a $(1+\varepsilon)$-approximate KKT point following Algorithm~\ref{alg:KKT}. 

\begin{algorithm}[h]
\caption{Finding Approximate KKT}
 \begin{algorithmic}[1]
 \State \label{alg:init} Let $\bm d^0$ be any infeasible, strictly positive, disutility profile, near $\bm 0$\;
 \While{\textbf{true}}
	\State Set $\bm d^k_*$ to be the nearest dominating point in $\mathcal D$ to $\bm d^k$, \textit{i.e.} \label{alg:gradflow}
		\[
			\argmin\left\{\Vert \bm y - \bm d^k\Vert_2^2 \ \middle|\  \bm y\in {\mathcal D + \mathbb{R}^n_{\geq 0}},\, \bm y\geq \bm d^k \right\}\;
		\]

	\State \textbf{Set} $\bm a^k \gets (\bm d^k_*-\bm d^k)$, the direction from $\bm d^k$ to $\mathcal D$\;\label{alg:ak1}
	\State \textbf{Rescale} $\bm a^k$ so that $\ip{\bm a^k,\bm d^k_*}= n$\; \label{alg:ak2}
	
	\State  \textbf{Set} $\bm d^{k+1} \gets \bm 1/\bm a^k$\; \label{alg:hyperplane-move}
	
	\If{$\logd(\bm d^{k+1},\bm d^k_*)<\varepsilon$}\label{alg:stopping-hyp-cond}
	  \State \textbf{Return} $(\bm d^{k}_*,\bm a^k)$ \label{alg:stopping-hyp}
	\EndIf
  \EndWhile
\end{algorithmic}
\label{alg:KKT}
\end{algorithm}
	
	\paragraph{Correctness.}
	We begin by proving here that the algorithm truly returns an approximate KKT point and we will later show that $(i)$ it will terminate in polynomially many iterations, $(ii)$ each iteration can be implemented in polynomial time. To this end, we will need the following technical results, about the steps of the algorithm. 
	
	\def\lemmaAboutAlgoStepsLinear{Regardless of the geometry of $\mathcal D$, so long as $\mathcal D+\mathbb R_{\geq 0}^n$ is convex, we have that for each iteration $k\geq 0$ of Algorithm~\ref{alg:KKT}:\begin{enumerate}\itemsep0pt
			\item The hyperplane defined as $\{\bm y\in \mathbb R^n|\ip{\bm a^k,\bm y}\geq \ip{\bm a^k,\bm d^k_*}\}$ is supporting for $\mathcal D+\mathbb R_{\geq 0}^n$, at $\bm d^k_*$. 
			\item If $\bm d^{k}$ has strictly positive entries and does not lie in $\mathcal D+\mathbb R^n_{\geq 0}$, then $\bm d^k_*$, $\bm a^k$, and $\bm d^{k+1}$ have strictly positive entries, and $\bm d^k_*\in \mathcal D$.
		\end{enumerate}}

	\begin{lemma}\label{lem:algo-steps-linear}\lemmaAboutAlgoStepsLinear \end{lemma}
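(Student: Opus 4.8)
The plan is to prove the two claims in order, using only the defining property of $\bm d^k_*$ as the $\ell_2$-projection of $\bm d^k$ onto the convex set $\mathcal D+\mathbb R_{\geq 0}^n$ (subject to the Pareto-dominance constraint $\bm y \geq \bm d^k$), together with the fact that $\mathcal D+\mathbb R_{\geq 0}^n$ is upward-closed in each coordinate.

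\emph{Part 1 (supporting hyperplane).} First I would invoke the standard variational characterization of the Euclidean projection onto a closed convex set: if $\bm d^k_*$ is the nearest point of the convex set $C := \mathcal D+\mathbb R^n_{\geq 0}$ to $\bm d^k$, then for every $\bm y \in C$ we have $\ip{\bm d^k - \bm d^k_*,\, \bm y - \bm d^k_*} \leq 0$. I should first check that imposing the extra constraint $\bm y\geq \bm d^k$ in Step~\ref{alg:gradflow} does not change the minimizer: since $\bm d^k\notin C$ and $C$ is upward-closed, the unconstrained projection already satisfies $\bm d^k_*\geq \bm d^k$ (otherwise one could decrease some coordinate of $\bm d^k_*$ toward $\bm d^k$, stay in $C$ by upward-closure, and strictly decrease the distance — contradiction); hence the constrained and unconstrained projections coincide. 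Now set $\bm a^k := \bm d^k_* - \bm d^k$ (before rescaling); then $\bm d^k - \bm d^k_* = -\bm a^k$, and the variational inequality reads $\ip{\bm a^k, \bm y - \bm d^k_*}\geq 0$ for all $\bm y\in C$, i.e.\ $\ip{\bm a^k,\bm y}\geq \ip{\bm a^k,\bm d^k_*}$ for all $\bm y\in C$. This is exactly the assertion that $\{\bm y\mid \ip{\bm a^k,\bm y}\geq\ip{\bm a^k,\bm d^k_*}\}$ is a supporting halfspace for $C=\mathcal D+\mathbb R^n_{\geq0}$ at $\bm d^k_*$; rescaling $\bm a^k$ by a positive constant in Step~\ref{alg:ak2} preserves this.

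\emph{Part 2 (strict positivity and membership in $\mathcal D$).} Assume $\bm d^k>\bm 0$ coordinatewise and $\bm d^k\notin C$. From Part~1's argument $\bm d^k_*\geq \bm d^k>\bm 0$, so $\bm d^k_*$ has strictly positive entries. For $\bm a^k = \bm d^k_*-\bm d^k$: each coordinate is $\geq 0$; I must rule out zero coordinates. If $a^k_i = 0$ for some $i$, then the $i$-th coordinate of $\bm d^k_*$ equals that of $\bm d^k$, and I would argue that $\bm a^k\neq\bm 0$ (since $\bm d^k\notin C$ but $\bm d^k_*\in C$), then use the supporting-hyperplane property: every point of $C$ lies in $\{\ip{\bm a^k,\bm y}\geq \ip{\bm a^k,\bm d^k_*}\}$, but $C$ is upward-closed, so for any $\bm y\in C$ the points $\bm y + t\bm e_i$ ($t\geq 0$) are also in $C$, which forces $a^k_i = \ip{\bm a^k,\bm e_i}\geq 0$ (already known) — this alone doesn't give strictness, so instead I would show $\bm d^k_*\in\mathcal D$ first and argue via the lower envelope. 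Concretely: since the minimization is over $C=\mathcal D+\mathbb R^n_{\geq0}$ and $\bm d^k$ lies strictly below $\mathcal D$ (every coordinate can only increase to reach $C$), $\bm d^k_*$ must lie on the lower boundary of $C$, which coincides with (part of) $\mathcal D$ itself — formally, if $\bm d^k_* = \bm d' + \bm r$ with $\bm d'\in\mathcal D$ and $\bm r\in\mathbb R^n_{\geq0}$, $\bm r\neq\bm 0$, then $\bm d'$ is a strictly closer point of $C$ to $\bm d^k$ provided $\bm d'\geq \bm d^k$; and $\bm d'\geq \bm d^k$ holds because $\bm d'\geq \bm d^k_* - \bm r$ component-wise while each coordinate where $r_i>0$ still has $d'_i = d^k_{*,i}-r_i$, which I need to be $\geq d^k_i$ — this requires knowing $d^k$ lies sufficiently far below, which is where the hypothesis "$\bm d^k$ does not lie in $\mathcal D+\mathbb R^n_{\geq0}$" combined with $\bm d^k>\bm 0$ must be leveraged carefully. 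Having placed $\bm d^k_*\in\mathcal D$, strict positivity of $\bm a^k$ follows because $\mathcal D$ is bounded away from each coordinate hyperplane in the relevant region (disutilities of a full allocation are strictly positive since each $D_{ij}>0$ and all chores are allocated), so moving from the strictly-positive-but-tiny $\bm d^k$ to $\bm d^k_*\in\mathcal D$ strictly increases every coordinate. Finally $\bm d^{k+1} = \bm 1/\bm a^k$ has strictly positive entries because $\bm a^k$ does.

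\emph{Main obstacle.} The delicate point is Part~2's claim that $\bm d^k_*\in\mathcal D$ rather than merely in $\mathcal D+\mathbb R^n_{\geq0}$, and the associated strict positivity of $\bm a^k$ — one must carefully use that $\bm d^k$ lies strictly below the lower envelope of $\mathcal D$ in \emph{every} coordinate (which should be maintained as an invariant of the algorithm, traceable to the initialization in Step~\ref{alg:init} and the construction $\bm d^{k+1}=\bm 1/\bm a^k$ on a supporting hyperplane of $\mathcal D$), so that the projection cannot "land" on a face of $\mathcal D$ that touches a coordinate hyperplane, and cannot have slack in the $\mathbb R^n_{\geq0}$ direction. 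I expect this to require a short separate invariant-maintenance argument (or a forward reference to Claim~\ref{claim:d0} and Claim~\ref{claim:inc1}) rather than following purely from convexity of $\mathcal D+\mathbb R^n_{\geq0}$; everything else is a direct application of the projection inequality.
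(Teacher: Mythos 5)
Part~1 of your proposal is correct and matches the paper: the paper proves the variational inequality $\ip{\bm a^k,\bm y-\bm d^k_*}\geq 0$ for all $\bm y$ in the convex set from scratch rather than citing it, but the content is identical, and your observation that the extra constraint $\bm y\geq\bm d^k$ is inactive is also how the paper handles it (via the coordinate-decrease argument).

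Part~2, however, has a genuine gap, and you have correctly sensed where it is but not how to close it. Your plan is to establish $\bm d^k_*\in\mathcal D$ first and then deduce strict positivity of $\bm a^k$ from it. Both halves of that plan fail as written. For the first half you admit the decomposition argument ($\bm d^k_*=\bm d'+\bm r$ with $\bm d'$ closer) does not go through, since $\bm d'$ need not Pareto-dominate $\bm d^k$ and subtracting the full $\bm r$ can overshoot and increase the distance. For the second half, your claim that ``$\mathcal D$ is bounded away from each coordinate hyperplane'' is simply false: the profile $\d_i(\bm 1)\bm e_i$, obtained by assigning every chore to agent $i$, lies in $\mathcal D$ and has $n-1$ zero coordinates. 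Individual agents can receive nothing; only the \emph{sum} of disutilities is bounded below. (Indeed, the presence of these zero-coordinate feasible profiles is exactly the ``primary difficulty'' the paper is built around.) So membership of $\bm d^k_*$ in $\mathcal D$ does not by itself force $\bm d^k_*>\bm d^k$ strictly.

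The missing idea is to run the argument in the opposite order and to use those very extreme points against the supporting hyperplane from Part~1. Suppose $a^k_i=0$ for some $i$. Since $\bm a^k\geq 0$ with at least one positive entry (else $\bm d^k$ would be feasible) and $\bm d^k_*>0$, we have $\ip{\bm a^k,\bm d^k_*}>0$; but $\d_i(\bm 1)\bm e_i\in\mathcal D$ satisfies $\ip{\bm a^k,\d_i(\bm 1)\bm e_i}=a^k_i\,\d_i(\bm 1)=0<\ip{\bm a^k,\bm d^k_*}$, contradicting Part~1. Hence $\bm a^k>0$ strictly, so $\bm d^k_*>\bm d^k$ in \emph{every} coordinate. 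Now if $\bm d^k_*\notin\mathcal D$, some coordinate has positive slack in the $\mathbb R^n_{\geq 0}$ direction, and a \emph{small} decrement of that coordinate (small enough to stay above $(\bm d^k)_i$) keeps the point in $\mathcal D+\mathbb R^n_{\geq 0}$ while strictly decreasing the distance to $\bm d^k$ --- contradiction. No invariant-maintenance argument or forward reference to Claims~\ref{claim:d0} and~\ref{claim:inc1} is needed; the proof is self-contained under the hypotheses of Part~2.
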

	
	We show these results in Appendix~\ref{app:tech-results}, as the proofs are mostly technical. 
	Informally, these hold due to the geometry of the feasible region, and ensure that each iterate is well-defined, and economically meaningful.	
	To complete the proof of correctness, we
	show that we can efficiently find a starting point $\bm d^0$ which is strictly positive in every entry, and is infeasible. 
	Thus, Lemma~\ref{lem:algo-steps-linear} will inductively show that every point is positive and well-defined.
	
	\begin{claim}\label{claim:d0}
The point $\bm d^0 = \frac{m\delta}{2n} \bm 1$ where $\delta = \min_{ij} \d_{ij}$ is a strictly positive infeasible disutility profile. 
\end{claim}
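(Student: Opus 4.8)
The claim has two parts: that $\bm d^0 = \frac{m\delta}{2n}\bm 1$ is strictly positive, and that it is infeasible (lies below $\mathcal D$, i.e. $\bm d^0 \notin \mathcal D + \mathbb R^n_{\geq 0}$). The first part is immediate since $\delta = \min_{ij}\d_{ij} > 0$ by the standing assumption that all $\d_{ij} > 0$, and $m,n \geq 1$. The substance is the second part, so my plan is to show that every attainable disutility profile has at least one coordinate that is $\geq \frac{m\delta}{n}$, hence no feasible (or over-allocated) profile can be coordinatewise dominated by $\bm d^0$, whose every coordinate equals $\frac{m\delta}{2n} < \frac{m\delta}{n}$.

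The key step is a simple averaging/pigeonhole argument over allocations. Take any feasible allocation $\bm x \in \mathcal F$. Since $\sum_i x_{ij} = 1$ for each chore $j$, summing over all chores gives $\sum_{i}\sum_{j} x_{ij} = m$, so by averaging there exists some agent $i^*$ with $\sum_{j} x_{i^*j} \geq m/n$. For that agent, using $\d_{i^*j} \geq \delta$ for all $j$ and linearity,
\[
\d_{i^*}(\bm x_{i^*}) = \sum_{j} \d_{i^*j} x_{i^*j} \geq \delta \sum_{j} x_{i^*j} \geq \frac{m\delta}{n}.
\]
Hence for every $\bm d \in \mathcal D$ there is a coordinate $i^*$ with $d_{i^*} \geq \frac{m\delta}{n}$. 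The same conclusion holds for every $\bm d' \in \mathcal D + \mathbb R^n_{\geq 0}$, since adding a nonnegative vector only increases coordinates: if $\bm d' = \bm d + \bm w$ with $\bm d \in \mathcal D$, $\bm w \geq \bm 0$, then $d'_{i^*} \geq d_{i^*} \geq \frac{m\delta}{n}$.

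Now suppose for contradiction that $\bm d^0 \in \mathcal D + \mathbb R^n_{\geq 0}$. Then by the above there is a coordinate $i^*$ with $(\bm d^0)_{i^*} \geq \frac{m\delta}{n}$, contradicting $(\bm d^0)_{i^*} = \frac{m\delta}{2n} < \frac{m\delta}{n}$. Therefore $\bm d^0 \notin \mathcal D + \mathbb R^n_{\geq 0}$, i.e. $\bm d^0$ is infeasible and ``lies below'' $\mathcal D$, completing the proof. There is no real obstacle here; the only thing to be careful about is phrasing the infeasibility correctly in terms of $\mathcal D + \mathbb R^n_{\geq 0}$ rather than just $\mathcal D$, since that is the convex region the algorithm actually works with, and to note that the factor $\frac12$ is exactly what gives the strict inequality needed for the contradiction.
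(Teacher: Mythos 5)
Your proof is correct and follows essentially the same route as the paper: the paper's (terser) argument also observes that since $\d_{ij}\ge\delta$, any feasible profile must give some agent disutility at least $m\delta/n$, so the uniform profile at $\tfrac12 m\delta/n$ cannot be feasible. You simply fill in the pigeonhole step and the extension to $\mathcal D+\mathbb R^n_{\geq 0}$ explicitly, which the paper leaves implicit.
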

\begin{proof}
Since $\d_{ij}\ge \delta$ for all $i$ and $j$,
any feasible dis-utility profile must assign disutility at least $m\delta/n$ to some agent. 
Therefore, it is impossible for every agent to have disutility $\tfrac12m\delta/n$ at a feasible point.
\end{proof}
	
	We now show that in the stopping condition, Algorithm~\ref{alg:KKT} returns an approximate KKT point. 
	Intuitively, this holds because the $\logd$ function in the stopping condition is designed to correctly captures the multiplicative error needed in the definition of approximate KKT.

	\begin{lemma}\label{lem:kkt-lin}
		Algorithm~\ref{alg:KKT} returns a $(1+\varepsilon)$-KKT point for minimizing $\mathcal L$ on $\mathcal D$.
	\end{lemma}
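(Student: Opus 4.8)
The plan is to verify, for the pair $(\bm d^k_*,\bm a^k)$ returned in Line~\ref{alg:stopping-hyp}, each of the three defining conditions of a $(1+\varepsilon)$-KKT point in Definition~\ref{def:apx-KKT-lin}, where $k$ denotes the first iteration at which the test in Line~\ref{alg:stopping-hyp-cond} is met. The only preparatory work is an invariant maintained across iterations; everything after that is a direct unwinding of the definitions of $\bm a^k$, $\bm d^{k+1}$, and the $\logd$ stopping rule.

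First I would establish, by induction on $k$, that every iterate $\bm d^k$ has strictly positive entries and does not lie in $\mathcal D+\mathbb R^n_{\geq 0}$. The base case $k=0$ is exactly Claim~\ref{claim:d0}. For the inductive step, assuming the invariant for $\bm d^k$, Lemma~\ref{lem:algo-steps-linear} applies and gives that $\bm d^k_*\in\mathcal D$, that $\bm a^k$ and $\bm d^{k+1}$ have strictly positive entries, and that $\{\bm y\in\mathbb R^n\mid\ip{\bm a^k,\bm y}\geq\ip{\bm a^k,\bm d^k_*}\}$ is supporting for $\mathcal D+\mathbb R^n_{\geq 0}$ at $\bm d^k_*$; after the rescaling in Line~\ref{alg:ak2} this says $\mathcal D+\mathbb R^n_{\geq 0}\subseteq\{\bm y\mid\ip{\bm a^k,\bm y}\geq n\}$. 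Since $\bm d^{k+1}=\bm 1/\bm a^k$ satisfies $\ip{\bm a^k,\bm d^{k+1}}=\sum_i a^k_i\cdot(1/a^k_i)=n$, the point $\bm d^{k+1}$ lies on the bounding hyperplane of this half-space and hence not in the interior of $\mathcal D+\mathbb R^n_{\geq 0}$; together with strict positivity this closes the induction. (I would remark in passing that in the degenerate event that $\bm d^{k+1}$ lands exactly on the boundary of $\mathcal D+\mathbb R^n_{\geq 0}$, then $\bm d^{k+1}$ together with $\bm a^k=\nabla\mathcal L(\bm d^{k+1})$ is already an exact KKT point, so nothing is lost; moreover the convergence analysis of this subsection shows $\mathcal L$ strictly increases, so this case does not arise.)

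Next I would read off the three conditions at termination. Condition (1) is immediate: $\bm d^k_*\in\mathcal D\subseteq\mathcal D+\mathbb R^n_{\geq 0}$ by the invariant. Condition (3) is precisely the supporting-hyperplane conclusion of Lemma~\ref{lem:algo-steps-linear} combined with $\ip{\bm a^k,\bm d^k_*}=n$: it states $\mathcal D+\mathbb R^n_{\geq 0}\subseteq\{\bm y\mid\ip{\bm a^k,\bm y}\geq\ip{\bm a^k,\bm d^k_*}\}$. Condition (2) is where the stopping test enters. Since $\bm a^k=\bm 1/\bm d^{k+1}$, we have $a^k_i\cdot(d^k_*)_i=(d^k_*)_i/(d^{k+1})_i$ for every $i$, so $\logd(\bm d^{k+1},\bm d^k_*)=\sum_i\bigl|\log\bigl(a^k_i\,(d^k_*)_i\bigr)\bigr|$; the stopping condition $\logd(\bm d^{k+1},\bm d^k_*)<\varepsilon$ therefore forces $\bigl|\log\bigl(a^k_i\,(d^k_*)_i\bigr)\bigr|<\varepsilon$ coordinatewise, and by the elementary observation recorded just before Algorithm~\ref{alg:KKT} (using $\log(1+a)\leq a$) this yields $(1+\varepsilon)^{-1}\leq a^k_i\,(d^k_*)_i\leq 1+\varepsilon$ for all $i$, which is condition (2) with $\gamma=1+\varepsilon$. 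Combining the three shows $(\bm d^k_*,\bm a^k)$ is a $(1+\varepsilon)$-KKT point for minimizing $\mathcal L$ on $\mathcal D$ (equivalently on $\mathcal D+\mathbb R^n_{\geq 0}$).

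The one place I expect to spend real care — and the main obstacle — is maintaining the invariant ``strictly positive and outside $\mathcal D+\mathbb R^n_{\geq 0}$'' through the iterations, since this is exactly the hypothesis needed to invoke Lemma~\ref{lem:algo-steps-linear} at each step, and it is what rules out a degenerate $\bm a^k=\bm 0$ and guarantees $\bm d^k_*\in\mathcal D$ so that condition (1) holds. Everything else is bookkeeping on the definitions. Note also that, as in the first part of Theorem~\ref{corr:approx}, this correctness argument never uses linearity of the disutilities: it relies only on convexity of $\mathcal D+\mathbb R^n_{\geq 0}$ (Claim~\ref{claim:convex}) and on Lemma~\ref{lem:algo-steps-linear}, which is stated for arbitrary such $\mathcal D$.
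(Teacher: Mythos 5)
Your proposal is correct and follows essentially the same route as the paper: the paper's proof also reads condition (2) off the $\logd$ stopping test via $\bm a^k=\bm 1/\bm d^{k+1}$ and gets condition (3) from Lemma~\ref{lem:algo-steps-linear}, with the positivity/infeasibility invariant handled inductively in the surrounding text (via Claim~\ref{claim:d0}) exactly as you do. Your treatment of the degenerate boundary case for $\bm d^{k+1}$ is a minor extra care the paper glosses over, but it does not change the argument.
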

	\begin{proof}
		Suppose the algorithm terminates and returns $(\bm d^k_{*}, \bm a^k)$ on line~\ref{alg:stopping-hyp}. Note that we have $\bm d^{k+1} = \bm 1 / \bm a^k$ and $\logd(\bm d^{k+1},\bm d^{k}_{*})<\varepsilon$, implying that $\logd(\bm 1/ \bm a^k, \bm d^{k}_{*}) < \varepsilon$. Then, we have $(1+\varepsilon)^{-1}\leq a^k_i \cdot (d^k_{*})_i \leq 1+\varepsilon$ for all $i$. Also by Lemma~\ref{lem:algo-steps-linear}, we have that $\ip{ \bm a^k, \bm y} = n$ is a supporting hyperplane of $\mathcal{D}$ passing through $\bm d^k_*$. 
		Therefore, the point $\bm d^{k}_{*}$ is a $(1+\varepsilon)$-KKT point as in Definition~\ref{def:apx-KKT-lin}.	
	\end{proof}
	
		In the rest of this section, we will argue that the number of iterations must be polynomial, and that each iteration can be solved in polynomial time, which will allow us to conclude the correctness and efficiency of the algorithm.

 \paragraph{Polynomially Many Iterations.}
	We show that in polynomially many iterations the algorithm finds an approximate KKT point. In particular, we show that (a) the log-NSW $\mathcal L$ is always increasing throughout Algorithm~\ref{alg:KKT}, and (b) it increases additively by $\textup{poly}(n, 1/ \varepsilon)$ every time $\logd (\bm d^{k+1}, \bm d^{k}_{*}) \geq \varepsilon$. 
	Bounding the range of $\mathcal L$ over the course of the iteration will then give our desired bound.

	\begin{claim}\label{claim:inc1}
		Steps~\ref{alg:gradflow}.\ and~\ref{alg:hyperplane-move}.\ always increase $\mathcal L$, 
		the log-product of disutilities.
		Formally, $\mathcal L(\bm d^{k+1})\geq \mathcal L(\bm d^k_*)\geq \mathcal L(\bm d^k)$ for all $k\geq 0$.
	\end{claim}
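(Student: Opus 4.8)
The plan is to prove the two inequalities $\mathcal L(\bm d^k_*)\geq \mathcal L(\bm d^k)$ and $\mathcal L(\bm d^{k+1})\geq \mathcal L(\bm d^k_*)$ separately, since they come from different mechanisms. For the first, I would invoke the fact that the nearest dominating point $\bm d^k_*$ computed in Step~\ref{alg:gradflow} satisfies $\bm d^k_*\geq \bm d^k$ coordinate-wise: this is forced by the extra constraint $\bm y\geq \bm d^k$ in the minimization. (One should also remark, or cite Lemma~\ref{lem:algo-steps-linear}, that this constraint is harmless — the true nearest point in $\mathcal D+\mathbb R^n_{\geq 0}$ already dominates $\bm d^k$ when $\bm d^k$ lies below $\mathcal D$, since $\bm d^k\notin \mathcal D+\mathbb R^n_{\geq 0}$ means no feasible point is dominated by $\bm d^k$.) Because $\mathcal L(\bm d)=\sum_i\log d_i$ is strictly increasing in each coordinate, $\bm d^k_*\geq \bm d^k$ immediately gives $\mathcal L(\bm d^k_*)\geq \mathcal L(\bm d^k)$; we also need $\bm d^k$ strictly positive so the logs are finite, which holds inductively by Lemma~\ref{lem:algo-steps-linear} and Claim~\ref{claim:d0}.

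For the second inequality, the key point is that $\bm d^{k+1}=\bm 1/\bm a^k$ is by construction the maximizer of $\mathcal L$ over the hyperplane $\{\bm y : \ip{\bm a^k,\bm y}=n\}$, and that $\bm d^k_*$ lies on this very hyperplane (by Step~\ref{alg:ak2}, $\ip{\bm a^k,\bm d^k_*}=n$). So it suffices to verify the two claimed facts: (i) $\bm d^{k+1}$ maximizes $\sum_i\log y_i$ subject to $\ip{\bm a^k,\bm y}=n$, and (ii) $\bm d^k_*$ is feasible for that program. Fact (ii) is just the rescaling in Step~\ref{alg:ak2}. For Fact (i), since $\bm a^k$ has strictly positive entries (Lemma~\ref{lem:algo-steps-linear}), the program $\max\sum_i\log y_i$ s.t. $\ip{\bm a^k,\bm y}=n$, $\bm y>0$ is a concave maximization with a linear constraint; its unique critical point is where $\nabla(\sum_i\log y_i)=\mu\,\bm a^k$ for a Lagrange multiplier $\mu$, i.e. $1/y_i=\mu a^k_i$ for all $i$, so $y_i=1/(\mu a^k_i)$; substituting into $\ip{\bm a^k,\bm y}=n$ gives $n/\mu=n$, hence $\mu=1$ and $\bm y=\bm 1/\bm a^k=\bm d^{k+1}$. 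Since the objective is concave and the feasible set is a relatively open half-of-the-positive-orthant slice, this critical point is the global maximum. Combining, $\mathcal L(\bm d^{k+1})\geq \mathcal L(\bm d^k_*)$.

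The argument is essentially bookkeeping once the two structural facts are in place — the inductive positivity and domination of $\bm d^k_*$ (from Lemma~\ref{lem:algo-steps-linear}), and the closed form for the hyperplane-constrained maximizer of the log-sum. The mildest subtlety, and the one I would be most careful about, is making sure $\bm d^{k+1}$ really is a maximum and not merely a stationary point: this needs the strict positivity of $\bm a^k$ (so the feasible slice actually meets the positive orthant and $\sum_i\log y_i$ is well-defined and strictly concave there) plus the standard fact that a concave function on a convex set has its unique stationary point as global max. No genuinely hard step; the proof is short.
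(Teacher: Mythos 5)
Your proposal is correct and follows essentially the same route as the paper: the first inequality via coordinate-wise domination $\bm d^k_*\geq \bm d^k$ (Lemma~\ref{lem:algo-steps-linear}) plus monotonicity of $\mathcal L$, and the second by identifying $\bm d^{k+1}=\bm 1/\bm a^k$ as the concave maximizer of $\mathcal L$ on the hyperplane $\ip{\bm a^k,\bm y}=n$ containing $\bm d^k_*$. Your extra care about strict positivity of $\bm a^k$ and the Lagrange-multiplier verification is a slightly more explicit write-up of the same argument.
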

	\begin{proof}
		By Lemma~\ref{lem:algo-steps-linear}, $\bm d^k_*\geq \bm d^k$, coordinate-wise.
		Thus, since $\mathcal L$ is monotone increasing in each coordinate direction, 
		$\mathcal L(\bm d^k_*)\geq \mathcal L(\bm d^k)$. 
		
		We prove that Step~\ref{alg:hyperplane-move} is an improvement by showing that $\bm d^{k+1}$ is the maximizing point on the hyperplane $\ip{\bm a^k,\bm y} = n$, and therefore $\mathcal L(\bm d^{k+1})\geq \mathcal L(\bm d^k_*)$.

		Since $\mathcal L$ is a concave function, it is maximized on this hyperplane when $\nabla\mathcal L$ is proportional to $\bm a^k$, \textit{i.e.}\ when $a^k_i = c/d_i$ for some $c>0$, for all $i$. Since we need $\ip{\bm a^{k}, \bm d }= n$, it suffices to set $c=1$. Thus, $\bm d^{k+1}$ is the $\mathcal L$-maximizing point on the supporting hyperplane which contains $\bm d^k_*$, and so this move is an $\mathcal L$-improvement.
	\end{proof}
	
	Using the above claims, next we show that $\mathcal L$ increases significantly in each iteration of our algorithm.

	\begin{lemma}\label{lem:inc2}
		If Algorithm~\ref{alg:KKT} does not return at step~\ref{alg:stopping-hyp}, then the logarithm of the Nash social welfare increases by at least $\tfrac 1{16}(\varepsilon/n)^2$, i.e., $\mathcal L(\bm d^{k+1})-\mathcal L(\bm d^k)\ge \frac 1{16}(\varepsilon/n)^2$.
	\end{lemma}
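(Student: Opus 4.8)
The plan is to show that a single iteration increases $\mathcal{L}$ by a fixed polynomial amount whenever the stopping condition fails, i.e., whenever $\logd(\bm d^{k+1},\bm d^k_*)\ge \varepsilon$. By Claim~\ref{claim:inc1} we already know $\mathcal L(\bm d^{k+1})\ge \mathcal L(\bm d^k_*)\ge \mathcal L(\bm d^k)$, so it suffices to lower-bound $\mathcal L(\bm d^{k+1})-\mathcal L(\bm d^k_*)$. First I would recall that $\bm d^{k+1}=\bm 1/\bm a^k$ is the unique maximizer of $\mathcal L$ on the hyperplane $H=\{\bm y:\ip{\bm a^k,\bm y}=n\}$, and that $\bm d^k_*$ also lies on $H$ (since $\ip{\bm a^k,\bm d^k_*}=n$ by the rescaling in step~\ref{alg:ak2}). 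So the gain $\mathcal L(\bm d^{k+1})-\mathcal L(\bm d^k_*)$ measures how far $\bm d^k_*$ is from the $\mathcal L$-maximizer \emph{on the same hyperplane}, and the stopping condition tells us the two points are far apart in the $\logd$ (i.e. $\sum_i|\log((\bm d^{k+1})_i/(\bm d^k_*)_i)|$) sense.

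The key step is a quantitative "strong-concavity-on-the-simplex"-type estimate: if $\bm u$ maximizes $\sum_i \log y_i$ over the hyperplane $\ip{\bm a,\bm y}=n$ (so $\bm u=\bm 1/\bm a$ and $\sum_i a_i u_i=n$), and $\bm v$ is any other point on that hyperplane with $\bm v>\bm 0$, then writing $t_i:=a_i v_i$ (so $\sum_i t_i=n$, $t_i>0$, and $\log((\bm u)_i/(\bm v)_i)=-\log t_i$) we get
\begin{equation*}
\mathcal L(\bm u)-\mathcal L(\bm v)=\sum_i \log\frac{u_i}{v_i}=-\sum_i\log t_i=\sum_i\bigl(t_i-1-\log t_i\bigr),
\end{equation*}
using $\sum_i(t_i-1)=0$. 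Each summand $t-1-\log t$ is nonnegative, confirming the gain is nonnegative, and it remains to show it is bounded below by $\Omega(\varepsilon^2/n^2)$ given that $\sum_i|\log t_i|\ge\varepsilon$. I would split into cases: if some $t_i$ is large or small (say $|\log t_i|\ge \varepsilon/n$ for a single coordinate with $t_i\ge 2$ or $t_i\le 1/2$), then that one term $t_i-1-\log t_i$ is already $\Omega(\varepsilon/n)$ or a constant; otherwise all $t_i$ lie in a bounded range around $1$, where $t-1-\log t\ge c\,(t-1)^2$ for an absolute constant $c$ (e.g. $c=1/4$ on $[1/2,2]$ works since $t-1-\log t = \tfrac{(t-1)^2}{2}-\tfrac{(t-1)^3}{3}+\cdots$), so $\mathcal L(\bm u)-\mathcal L(\bm v)\ge \tfrac14\sum_i(t_i-1)^2$. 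Then I'd relate $\sum_i(t_i-1)^2$ to $\bigl(\sum_i|\log t_i|\bigr)^2$: on the bounded range $|t-1|\ge c'|\log t|$ for an absolute $c'$, and by Cauchy–Schwarz $\sum_i(t_i-1)^2\ge \tfrac1n(\sum_i|t_i-1|)^2\ge \tfrac{(c')^2}{n}(\sum_i|\log t_i|)^2\ge \tfrac{(c')^2\varepsilon^2}{n}$, which even beats the claimed $\tfrac1{16}(\varepsilon/n)^2$. I would just track the constants carefully so that $\tfrac1{16}(\varepsilon/n)^2$ comes out cleanly in all cases (noting $\varepsilon\le 1$ and $\varepsilon/n\le 1$ can be assumed).

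The main obstacle is handling the "large deviation" case cleanly: when a coordinate ratio $t_i$ blows up (or vanishes), the quadratic lower bound $t-1-\log t\gtrsim(t-1)^2$ fails, but in compensation the $\log$-term itself is large, so $t-1-\log t$ is large for a different reason — I need to argue this uniformly and merge it with the quadratic regime without losing the $\varepsilon^2/n^2$ scaling. A clean way is to note $t-1-\log t\ge \tfrac14\min\{(t-1)^2,\,|t-1|\}\ge \tfrac{1}{8}\min\{(\log t)^2,|\log t|\}$ for all $t>0$ (checking the two regimes separately), so that $\mathcal L(\bm u)-\mathcal L(\bm v)\ge \tfrac18\sum_i\min\{(\log t_i)^2,|\log t_i|\}$; since $\sum_i|\log t_i|\ge\varepsilon$, either some single term has $|\log t_i|\ge 1$ (giving a constant lower bound $\ge\tfrac18$) or all $|\log t_i|\le 1$ so $\min\{(\log t_i)^2,|\log t_i|\}=(\log t_i)^2$ and Cauchy–Schwarz gives $\sum_i(\log t_i)^2\ge \varepsilon^2/n$. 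In either case $\mathcal L(\bm d^{k+1})-\mathcal L(\bm d^k_*)\ge \tfrac1{16}(\varepsilon/n)^2$, hence $\mathcal L(\bm d^{k+1})-\mathcal L(\bm d^k)\ge\tfrac1{16}(\varepsilon/n)^2$ as claimed.
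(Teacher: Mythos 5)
Your proposal is correct in substance and follows essentially the same route as the paper: both reduce the gain to $\mathcal L(\bm d^{k+1})-\mathcal L(\bm d^k_*)=-\sum_i\log t_i=\sum_i(t_i-1-\log t_i)$ with $t_i=a^k_i(\bm d^k_*)_i$ summing to $n$, and then lower-bound each summand by a function that is quadratic near $t=1$ and linear far from it. The paper does this with the piecewise function $F(\Delta_i)$ (where $t_i=1+\Delta_i$) and then keeps only the single worst coordinate, which already yields $\tfrac1{16}(\varepsilon/n)^2$; you keep the whole sum and apply Cauchy--Schwarz, which in the bounded regime gives the stronger $\Omega(\varepsilon^2/n)$. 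That is a genuine (if minor) improvement, though it does not change the overall iteration count asymptotically.

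One concrete flaw to fix: the intermediate chain $\tfrac14\min\{(t-1)^2,|t-1|\}\geq\tfrac18\min\{(\log t)^2,|\log t|\}$ is false for small $t$. For $t\leq 1/e$ the left-hand side equals $\tfrac14(t-1)^2\leq\tfrac14$, while the right-hand side is $\tfrac18|\log t|$, which is unbounded as $t\to 0^+$. The end-to-end inequality you actually need, $t-1-\log t\geq\tfrac18\min\{(\log t)^2,|\log t|\}$ for all $t>0$, is nevertheless true and should be verified directly (for $t=e^{-w}$ with $w\geq 1$ one has $t-1-\log t=w-1+e^{-w}\geq w/8$; for $|\log t|\leq 1$ the Taylor expansion of $e^{v}-1-v$ in $v=\log t$ gives the quadratic bound). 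With that repair, the case split and the Cauchy--Schwarz step go through and deliver the claimed $\tfrac1{16}(\varepsilon/n)^2$.
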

\begin{proof}
		Since $\mathcal L(\bm d^k_*) > \mathcal L(\bm d^k)$ by Claim \ref{claim:inc1}, it suffices to show that if $\logd(\bm d^{k+1},\bm d^k_*)>\varepsilon$, then $\mathcal L(\bm d^{k+1})-\mathcal L(\bm d^k_*)$ is large.
		Let $A=\operatorname{diag}(\bm a^k)$, and note that $\ip{\bm 1, A\bm d} = \ip{\bm a^k,\bm d}$, and furthermore, $A\bm d^{k+1} = \bm 1$. Let $\bm \Delta = A\bm d^k_*-\bm 1$, and notice that \[
		\ip{\bm 1,\bm \Delta} = \ip{\bm 1, A(\bm d^k_* - \bm d^{k+1})} = 0
	\]
	Note that $\bm d^k_* = (\bm 1+\bm \Delta)/\bm a^k$, where we take the quotient componentwise as is defined at the start of Section~\ref{definition-of-vector-division}. With $\bm d^{k+1} = \bm 1/\bm a^k$, this gives $\logd(\bm d^k_*,\bm d^{k+1}) = \logd((\bm 1+\bm \Delta),\bm 1)$.
	Therefore, we know that $\sum_{i=1}^n \left|\log(1+\Delta_i)\right| > \varepsilon$. We also get
	\begin{align*}
		\mathcal L(\bm d^{k+1})-\mathcal L(\bm d^k_*) &= \sum_{i=1}^n \log(1/a^k_i)-\log((1+\Delta_i)/a^k_i)
		= -\sum_{i=1}^n \log(1+\Delta_i)
	\end{align*}
	Define:	
	\[
		F(z) := \begin{cases}
			\tfrac14z^2&\text{ if } -1< z\leq 1\\
			\tfrac12z-\tfrac 14&\text{ if }z \geq 1\\
			+\infty&\text{ otherwise}
		\end{cases}
	\]
	At $z=0$, we have that $-z+F(z) = 0 = \log(1+z)$ and $\tfrac{\mathrm d}{\mathrm dz}(-z+F(z))=-1=\tfrac{\mathrm d}{\mathrm dz}(-\log(1+z))$.
	By comparing derivatives for the other values of $z> -1$, we can show that $-\log(1+z)\geq -z+F(z)$ for all $z$.
	Thus,
	\begin{align*}
		\mathcal L(\bm d^{k+1})-\mathcal L(\bm d^k_*)\ &=\ -\sum_{i=1}^n \log(1+\Delta_i)\ \geq\ \sum_{i=1}^n -\Delta_i + \sum_{i=1}^n F(\Delta_i)\ =\ \sum_{i=1}^n F(\Delta_i)
	\end{align*}
		
	Now, since we have $\sum_{i=1}^n \left|\log(1+\Delta_i)\right| > \varepsilon$, there must be some $i$ such that $|\log(1+\Delta_i)|>\varepsilon/n$. If $\Delta_i>0$, then $\Delta_i\geq \log(1+\Delta_i)\geq \varepsilon/n$.
Conversely, if $\Delta_i<0$, we being by noting that for $|z|<0.5$, we have $-\log(1+z)\leq -z + z^2$ for reasons similar to the above. Thus, we get
\[
	\varepsilon/n < -\log(1+\Delta_i) \leq -\Delta_i+\Delta_i^2
\]
We must have $\Delta_i>-1$, since the argument can't be negative, so we have
$2 |\Delta_i|> \Delta_i^2 - \Delta_i>\varepsilon/n$, or $\Delta_i < -\tfrac{1}{2}\varepsilon/n$.
Noting that $F(z)\geq 0$ for all $z$, we can then conclude
\[
	\mathcal L(\bm d^{k+1})-\mathcal L(\bm d^k_*) \geq \textstyle\sum_i F(\Delta_i) \geq \max_i F(\Delta_i) \geq 
	\tfrac 1{16} \varepsilon^2/n^2
\]
as desired.
	\end{proof}

Finally, to bound the number of iterations Algorithm \ref{alg:KKT} would take we need to bound the log-NSW value at the starting point, namely $\mathcal L(\bm d^0)$, where $\bm d^0:=\bm 1\cdot \tfrac m{2n}\min_{i,j}\d_{ij}$, as in Claim~\ref{claim:d0}.
We show the following.

\begin{lemma}\label{lem:noiter-lin}
Starting at $\bm d^0:=\bm 1\cdot \tfrac m{2n}\min_{i,j}\d_{ij}$,
	Algorithm \ref{alg:KKT} finds a $(1+\varepsilon)$-KKT point in \[
		O\left( \frac{n^3}{\varepsilon^2}\cdot \log\left (\frac{n\cdot \max_{i,j}\d_{ij}}{\min_{i,j}\d_{ij}}\right) \right)
		\] many iterations.
\end{lemma}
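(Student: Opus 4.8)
The plan is to combine the per-iteration progress bound (Lemma~\ref{lem:inc2}) with a two-sided bound on the potential $\mathcal L$ over the relevant region, turning the ``$\mathcal L$ as a potential function'' intuition into a concrete iteration count. Recall that whenever the algorithm does \emph{not} terminate, Claim~\ref{claim:inc1} and Lemma~\ref{lem:inc2} give $\mathcal L(\bm d^{k+1}) - \mathcal L(\bm d^k) \ge \tfrac{1}{16}(\varepsilon/n)^2$, and when it does terminate, Lemma~\ref{lem:kkt-lin} guarantees the returned pair is a $(1+\varepsilon)$-KKT point. So the number of iterations is at most $\big(\mathcal L_{\max} - \mathcal L(\bm d^0)\big)\big/\tfrac{1}{16}(\varepsilon/n)^2$, where $\mathcal L_{\max}$ is any valid upper bound on the value $\mathcal L$ can attain during the run. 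The two quantities to pin down are therefore (i) the starting value $\mathcal L(\bm d^0)$, and (ii) an a priori upper bound $\mathcal L_{\max}$.

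For the starting value: by Claim~\ref{claim:d0}, $\bm d^0 = \tfrac{m\delta}{2n}\bm 1$ with $\delta = \min_{i,j}\d_{ij}$, so $\mathcal L(\bm d^0) = n\log\!\big(\tfrac{m\delta}{2n}\big)$, giving $-\mathcal L(\bm d^0) = n\log\!\big(\tfrac{2n}{m\delta}\big) = O\!\big(n\log(n/\delta)\big)$ (the $\log m$ term only helps). For the upper bound: every iterate $\bm d^k$ (and $\bm d^k_*$) either lies below $\mathcal D$ or on its lower envelope, and the supporting-hyperplane construction together with Lemma~\ref{lem:algo-steps-linear} keeps $\bm d^{k+1} = \bm 1/\bm a^k$ below $\mathcal D$ as well. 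Hence each coordinate of every iterate is bounded by the maximum disutility attainable over $\mathcal F$, which for linear disutilities is at most $\sum_j \max_i \d_{ij} \le m\cdot\max_{i,j}\d_{ij} =: m\Delta$. Therefore $\mathcal L(\bm d^k) \le n\log(m\Delta)$ throughout, so we may take $\mathcal L_{\max} = n\log(m\Delta)$. (One should also check, using Claim~\ref{claim:inc1}, that $\mathcal L$ never drops below $\mathcal L(\bm d^0)$, so the total swing is genuinely $\mathcal L_{\max} - \mathcal L(\bm d^0)$; this is immediate since $\mathcal L$ is nondecreasing across iterations.)

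Putting these together, the number of iterations is at most
\[
\frac{\mathcal L_{\max} - \mathcal L(\bm d^0)}{\tfrac1{16}(\varepsilon/n)^2}
= 16\,\frac{n^2}{\varepsilon^2}\Big(n\log(m\Delta) - n\log\!\tfrac{m\delta}{2n}\Big)
= O\!\left(\frac{n^3}{\varepsilon^2}\log\!\Big(\frac{n\,\max_{i,j}\d_{ij}}{\min_{i,j}\d_{ij}}\Big)\right),
\]
where in the last step the $\log m$ contributions cancel and the $\log(2n)$ term is absorbed into $\log\!\big(n\max_{ij}\d_{ij}/\min_{ij}\d_{ij}\big)$. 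Combined with Lemma~\ref{lem:kkt-lin}, this shows Algorithm~\ref{alg:KKT} returns a $(1+\varepsilon)$-KKT point within the claimed number of iterations.

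I expect the main obstacle to be the a priori upper bound $\mathcal L_{\max}$ — specifically, arguing cleanly that \emph{every} iterate stays below $\mathcal D$ (so that its coordinates are genuinely bounded by the max feasible disutility) rather than drifting off to large values. This is exactly what part~2 of Lemma~\ref{lem:algo-steps-linear} is for: $\bm d^k_* \in \mathcal D$ lies on the lower envelope, and $\bm d^{k+1} = \bm 1/\bm a^k$ lies on a supporting hyperplane of $\mathcal D$ at $\bm d^k_*$, hence weakly below $\mathcal D$; one must confirm this forces each coordinate to be at most $m\max_{i,j}\d_{ij}$. The remaining arithmetic (plugging in $\mathcal L(\bm d^0)$, simplifying the logarithm) is routine.
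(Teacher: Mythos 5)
Your proof is correct and follows essentially the same route as the paper: telescope the per-iteration gain of $\tfrac1{16}(\varepsilon/n)^2$ from Lemma~\ref{lem:inc2} against the range $n\log(m\cdot\max_{ij}\d_{ij})-\mathcal L(\bm d^0)$ of the potential. The only cosmetic difference is that the paper gets the upper bound by noting $\mathcal L(\bm d^k)\le \mathcal L(\bm d^k_*)\le n\log M$ with $\bm d^k_*\in\mathcal D$ feasible (via Claim~\ref{claim:inc1}), which is slightly cleaner than your coordinate-wise bound on the infeasible iterates, but both yield the same estimate.
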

\begin{proof}
	If we can bound the range of the log-NSW objective, then the proof follows using Lemmas \ref{lem:kkt-lin} and \ref{lem:inc2}. 
	Let $M$ be such that $\d_i(\bm x_i)\leq M$ for every agent $i$, at every feasible $\bm x\in \mathcal F$. 
	Note that $M\leq m\cdot \max_{i,j}\d_{ij}$. 
	
	Then we have that for any feasible $\bm x$, $\mathcal L(\bm d(\bm x))\leq n\log M$.
	Since each round of the above algorithm that doesn't terminate increases the log-NSW by at least $\tfrac 1{16} (\varepsilon/n)^2$, then the total number of rounds possible is at most
	\begin{align*}
		16 \cdot \frac{n^2}{\varepsilon^2} \cdot (n\log(M)-\mathcal L(\bm d^0))&\leq 
		\frac{16 n^3}{\varepsilon^2}\cdot\left(
			\log(m\cdot \max_{i,j}\d_{ij}) - \log(\tfrac {m}{2n}\min_{i,j}\d_{ij})
		\right)\ ,
	\end{align*}
which gives the desired bound.
\end{proof}

Now that we have shown there are polynomially many iterations in our algorithm, it suffices to show that each iteration can be implemented in polynomial time to establish that Algorithm~\ref{alg:KKT} is indeed polynomial time.	

\paragraph{Implementing Each Iteration in Polynomial Time.}
To show that each iteration can be implemented in polynomial time, it suffices to show that the \emph{nearest neighbour search} (step~\ref{alg:gradflow} in Algoritm~\ref{alg:KKT}) can be implemented in polynomial time.

\begin{lemma}\label{lem:iter-poly}
Each iteration of Algorithm~\ref{alg:KKT} can be computed exactly in time polynomial in $n$, $m$, and the description complexity of the $\d_{ij}$'s.
\end{lemma}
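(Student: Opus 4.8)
Here is how I would prove Lemma~\ref{lem:iter-poly}.

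The plan is to observe that, apart from the nearest‑point computation on Line~\ref{alg:gradflow}, every line of the loop is elementary rational arithmetic, and then to implement Line~\ref{alg:gradflow} as an exactly‑solvable convex quadratic program of polynomial size. Assume $\bm d^k$ enters iteration $k$ as a rational vector of polynomial bit‑size (enforced by rounding at the end of each iteration, as discussed earlier). Then Line~\ref{alg:ak1} forms $\bm a^k=\bm d^k_*-\bm d^k$ by a vector subtraction; Line~\ref{alg:ak2} rescales $\bm a^k$ by the scalar $n/\ip{\bm a^k,\bm d^k_*}$, which is a positive rational since $\bm a^k$ and $\bm d^k_*$ are strictly positive by Lemma~\ref{lem:algo-steps-linear}; Line~\ref{alg:hyperplane-move} forms $\bm d^{k+1}=\bm 1/\bm a^k$ by taking $n$ reciprocals of strictly positive rationals; and the stopping test on Line~\ref{alg:stopping-hyp-cond} compares $\sum_i\bigl|\log((d^{k+1})_i/(d^k_*)_i)\bigr|$ against $\varepsilon$, which needs only $O(\log(n/\varepsilon))$ bits of precision in the logarithms and is thus a polynomial‑time numerical check. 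Each of these produces a polynomial‑bit‑size rational in polynomial time, so it remains only to handle Line~\ref{alg:gradflow}.

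The difficulty with Line~\ref{alg:gradflow} is that $\mathcal D+\mathbb R^n_{\geq 0}$ has no a‑priori polynomial description in $\bm d$‑space: it is a projection of $\mathcal F$ and may have exponentially many facets. I would therefore lift the minimization into allocation space. Because disutilities are linear, $\vd$ is a linear surjection $\mathcal F\to\mathcal D$, so $\min_{\bm x\in\mathcal F}\Vert\vd(\bm x)-\bm d^k\Vert_2^2=\min_{\bm y\in\mathcal D}\Vert\bm y-\bm d^k\Vert_2^2$, and if $\bm x^*$ minimizes the left‑hand side then $\vd(\bm x^*)$ is the minimizer of the right‑hand side, which is unique since $\Vert\cdot-\bm d^k\Vert_2^2$ is strictly convex on the convex set $\mathcal D$. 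By Lemma~\ref{lem:algo-steps-linear}\,(2) the point $\bm d^k_*$ that Line~\ref{alg:gradflow} requires lies in $\mathcal D$; and since it attains $\min_{\bm y\in\mathcal D+\mathbb R^n_{\geq 0},\,\bm y\geq\bm d^k}\Vert\bm y-\bm d^k\Vert_2^2$, which is at most $\min_{\bm y\in\mathcal D}\Vert\bm y-\bm d^k\Vert_2^2$, while itself being a point of $\mathcal D$, it must equal the unique $\mathcal D$‑nearest point to $\bm d^k$. Hence $\bm d^k_*=\vd(\bm x^*)$, and Line~\ref{alg:gradflow} reduces to solving the lifted program and applying $\vd$.

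It remains to observe that the lifted program is a polynomially‑sized convex quadratic program. Its objective $\Vert\vd(\bm x)-\bm d^k\Vert_2^2=\sum_{i=1}^n(\ip{\bm\d_i,\bm x_i}-d^k_i)^2$ is a sum of squares of affine functions of $\bm x$, hence convex: writing it as $\bm x^\top Q\bm x+\bm b^\top\bm x+c$, the matrix $Q$ is block‑diagonal with the rank‑one positive‑semidefinite blocks $\bm\d_i\bm\d_i^\top$, so $Q\succeq 0$ (the ``semidefinite bi‑linear form'' referred to in the overview). The feasible set $\mathcal F$ is a polytope in $\mathbb R^{nm}$ cut out by $m$ equalities and $nm$ sign constraints with $0/1$ coefficients. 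Minimizing a convex quadratic with rational data over such a polytope can be done exactly in polynomial time and returns a rational optimum of polynomial bit‑size, by the classical convex quadratic programming algorithm~\cite{quadprog}. Combining the three parts, each iteration of Algorithm~\ref{alg:KKT} runs in time polynomial in $n$, $m$, and the description complexity of the $\d_{ij}$'s.

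The step I expect to be the main obstacle is the equivalence in the third paragraph: one must certify that the $\mathbb R^n_{\geq 0}$‑extension and the $\bm y\geq\bm d^k$ requirement on Line~\ref{alg:gradflow} are both inert, so that the lifted program truly returns the point the algorithm needs. This rests on Lemma~\ref{lem:algo-steps-linear} together with the fact — true for $\bm d^0$ by Claim~\ref{claim:d0} and propagated along the run — that $\bm d^k$ always lies strictly below the lower envelope of $\mathcal D$; everything else is routine.
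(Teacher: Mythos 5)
Your proof is correct and follows essentially the same route as the paper: reduce the nearest-point step to the convex quadratic program $\min_{\bm x\in\mathcal F}\Vert\vd(\bm x)-\bm d^k\Vert_2^2$, note that the Hessian is block-diagonal with positive-semidefinite rank-one blocks $\bm\d_i\bm\d_i^\top$, and invoke the exact polynomial-time QP algorithm of Khachiyan et al.~\cite{quadprog}, with the remaining steps being elementary rational arithmetic. Your third paragraph, verifying that the lifted program over $\mathcal F$ actually returns the point demanded on Line~\ref{alg:gradflow} (via Lemma~\ref{lem:algo-steps-linear}), makes explicit an equivalence the paper's proof leaves implicit.
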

\begin{proof}
Let $\overrightarrow{\d}(\bm x):=\left(\d_1(\bm x_1),\,\dotsc,\,\d_n(\bm x_n)\right)$ as defined previously.
Recall that disutility functions are linear, with $\d_i(\bm x_i) := \sum_{j=1}^m \d_{ij}x_{ij}$.

Let $\widehat{\d}$ be the $n\times nm$ block-diagonal matrix such that $\widehat{\d}\bm x= \overrightarrow{\d}(\bm x)$.
To find the nearest-feasible disutility profiles,
we will find the allocation $\bm x$ which minimizes the following convex quadratic program:
\[
	\min_{\bm x\in \mathcal F}\ \left\Vert \overrightarrow{\d}(\bm x) - \bm d^k\right\Vert_2^2 = 
	\min_{\bm x\in \mathcal F}\  \bm x^\top\left(\widehat{\d}^\top\widehat{\d}\right)\bm x - 2(\bm d^k)^\top\widehat{\d}\bm x + (\bm d^k)^\top \bm d^k\ .
\]
It was shown by Khachiyan et al.~\cite{quadprog} that this program can be solved exactly, with running time polynomial in the description complexity of the system.
Thus, so long as $\widehat{\d}$ and $\bm d^k$ have rational entries with polynomial description complexity (polynomial-sized numerators and denominators), the problem can be solved exactly in polynomial time, and the solution will have small description complexity. 

The matrix $\widehat\d$ consists of the $\d_{ij}$'s and our running time is assumed to depend on their description complexity. 
\end{proof}

\paragraph{Final Result.}
We now have all the ingredients to conclude that an approximate CEEI (Definition \ref{def:CE}) can be computed in polynomial time. 
Lemma~\ref{lem:noiter-lin} bounds the number of iterations as a polynomial in $n$, $1/\epsilon$, and the description complexity of the instance,
Claim~\ref{claim:d0} shows how to find a good starting point,
Lemma~\ref{lem:iter-poly} shows that each iteration can be computed in polynomial time, with the same arguments,
and Theorem~\ref{corr:approx} shows how to compute a \eCEEI in polynomial time given the output of Algorithm~\ref{alg:KKT}.
Thus, we conclude that Algorithm~\ref{alg:KKT} is an FPTAS for finding $3\varepsilon$-CEEI.

\begin{theorem}\label{MAINTHMLINEAR}
	Given linear disutility values $\d_{11},\,\dotsc,\,\d_{nm}$, 
	Algorithm~\ref{alg:KKT}, along with~\ref{corr:approx}, finds an \eCEEI in time polynomial in $n$, $m$, $1/\varepsilon$, $\log(\tfrac{\max \d_{ij}}{\min \d_{ij}})$, and the description complexity of the $\d_{ij}$'s.
\end{theorem}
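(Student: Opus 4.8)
The plan is to chain together the components already established, being careful only about the precision parameter and the bit-complexity bookkeeping. Run Algorithm~\ref{alg:KKT} with precision parameter $\varepsilon' := \varepsilon/2$, initialized at $\bm d^0 = \tfrac{m\delta}{2n}\bm 1$ with $\delta = \min_{ij}\d_{ij}$. By Claim~\ref{claim:d0} this $\bm d^0$ is strictly positive and lies strictly below $\mathcal D$, so Lemma~\ref{lem:algo-steps-linear} applies inductively: for every iteration $k$, the nearest dominating point $\bm d^k_*$, the direction $\bm a^k$, and the next iterate $\bm d^{k+1} = \bm 1/\bm a^k$ are all strictly positive and well-defined, $\bm d^k_*\in\mathcal D$, and $\{\bm y:\ip{\bm a^k,\bm y}\ge n\}$ supports $\mathcal D$ at $\bm d^k_*$.

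First I would bound the iteration count. By Claim~\ref{claim:inc1}, $\mathcal L$ is nondecreasing along the run, and by Lemma~\ref{lem:inc2} it increases by at least $\tfrac1{16}(\varepsilon'/n)^2$ in every iteration that does not trigger the stopping condition on line~\ref{alg:stopping-hyp}. Since $\mathcal L(\bm d^0) = n\log(\tfrac{m\delta}{2n})$ while $\mathcal L(\bm d^k_*)\le n\log(m\max_{ij}\d_{ij})$ for any feasible disutility profile, Lemma~\ref{lem:noiter-lin} gives termination within $O\!\left(\tfrac{n^3}{\varepsilon^2}\log\!\big(\tfrac{n\max_{ij}\d_{ij}}{\min_{ij}\d_{ij}}\big)\right)$ iterations, which is polynomial in $n$, $1/\varepsilon$, and $\log(\max_{ij}\d_{ij}/\min_{ij}\d_{ij})$. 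When the algorithm does stop, Lemma~\ref{lem:kkt-lin} certifies that it returns a $(1+\varepsilon')$-KKT point $(\bm d^k_*,\bm a^k)$ for minimizing $\mathcal L$ on $\mathcal D$.

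Next I would account for the per-iteration cost. The only nontrivial step is the nearest-point search on line~\ref{alg:gradflow}; Lemma~\ref{lem:iter-poly} recasts it as the convex quadratic program $\min_{\bm x\in\mathcal F}\|\widehat{\d}\bm x-\bm d^k\|_2^2$ with positive-semidefinite form $\widehat{\d}^\top\widehat{\d}$, which by Khachiyan et al.~\cite{quadprog} is solved exactly in time polynomial in $n$, $m$, and the description complexity of the $\d_{ij}$'s; the remaining operations (forming $\bm a^k$, rescaling so $\ip{\bm a^k,\bm d^k_*}=n$, setting $\bm d^{k+1}=\bm 1/\bm a^k$, and testing $\logd$) are arithmetic. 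The one point needing care is that, naively, the rationals $\bm d^k$ could grow in bit-size across iterations; this is controlled by the rounding-down step described in Section~\ref{sec:overview}: at the end of each iteration $\bm d^k$ is rounded to a nearby rational of polynomial bit-size, which keeps it below $\mathcal D$ and still preserves the $\Omega(\varepsilon^2/n^2)$ increase in $\mathcal L$ — using that each coordinate of $\bm d^k$ is bounded below in terms of the current potential value and above by feasibility — so both the iteration bound and the per-iteration cost stay polynomial.

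Finally I would convert the KKT point into an equilibrium. Since $\mathcal L(\bm d^k_*)\ge\mathcal L(\bm d^0) > -\infty$, the hypothesis of Theorem~\ref{corr:approx} holds, so applying it to $(\bm d^k_*,\bm a^k)$: take any $\bm x\in\mathcal F$ with $\vd(\bm x)=\bm d^k_*$ — computable in polynomial time by linear programming, since simultaneously satisfying the constraints of $\mathcal F$ and $\vd(\bm x)=\bm d^k_*$ is a linear feasibility problem — together with the prices $\bm p$ constructed in that proof; then $(\bm x,\bm p)$ is a stronger $2\varepsilon' = \varepsilon$-CEEI, meeting \ref{P1} and satisfying $(E2)$ and $(E3)$ with no error. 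The total running time is the (polynomial) iteration count times the (polynomial) per-iteration cost, plus the (polynomial) final extraction, all in $n$, $m$, $1/\varepsilon$, $\log(\max_{ij}\d_{ij}/\min_{ij}\d_{ij})$, and the description complexity of the $\d_{ij}$'s, which is the claim. The genuinely technical content is entirely contained in the already-proved Lemmas~\ref{lem:inc2} and~\ref{lem:iter-poly} and Theorem~\ref{corr:approx}; the only remaining subtlety, and the main thing to get right in this proof, is the bit-size bookkeeping handled by the rounding step.
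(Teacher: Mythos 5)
Your proposal is correct and follows essentially the same route as the paper: it assembles Claim~\ref{claim:d0}, Lemma~\ref{lem:algo-steps-linear}, Claim~\ref{claim:inc1}, Lemmas~\ref{lem:inc2}, \ref{lem:noiter-lin}, \ref{lem:kkt-lin}, and~\ref{lem:iter-poly}, and finishes with Theorem~\ref{corr:approx}. Your handling of the precision parameter ($\varepsilon'=\varepsilon/2$ so that the final guarantee is exactly $\varepsilon$) and your explicit mention of the bit-size rounding are, if anything, slightly more careful than the paper's own one-paragraph assembly.
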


	\section{1-Homogeneous Disutilities: Computing \texorpdfstring{\eCEEI}{e-CEEI} in Polynomial-Time }
\label{gen:CE}
In this section, we show how to extend the results of the previous section when agents' disutility functions are general 1-homogeneous and convex. Access to the disutility functions are through value oracle. For ease of notation, throughout this section, we refer to the $i^{\mathit{th}}$ coordinate of a disutility vector $\bm d$ as $\bm d_i$ (or equivalently $(\bm d)_i$). Similarly, given an allocation $\bm x$, we refer to agent $i$'s bundle as $\bm x_i$ (or equivalently $(\bm x)_i$)and the amount of chore $j$ allocated to agent $i$ as $\bm x_{ij}$ (or equivalently $(\bm x)_{ij}$).  
We first discuss the two main roadblocks in generalizing the approach in Section~\ref{sec:CE}. 
The convex program $\min_{\bm x\in \mathcal F}\ \Vert \overrightarrow{\d}(\bm x) - \bm d^k\Vert_2^2$ for finding the nearest neighbour is not necessarily convex when agents have general $1$-homogeneous and convex disutilities. 
We design an alternative formulation that returns the nearest feasible point, and is convex. 
However, the domain of the new convex program is not defined by a set of linear inequalities and as such one can only find approximate nearest neighbours, \textit{e.g.}, via interior point methods \cite{conv-opt}. 
In turn, the supporting hyperplanes $\ip{\bm a, \bm y} = n$ in Algorithm~\ref{alg:KKT} are now approximate supporting hyperplanes. 
To allow this extra error, we extend the notion of approximate KKT to that given in Definition~\ref{def:apx-KKT}, namely
\begin{enumerate}
	\item $(\bm z)_{ij} \geq 0$ for all $i \in [n]$, and $j \in [m]$, and $\lambda^{-1} \leq \sum_{i \in [n]} (\bm z)_{ij} \leq \lambda$ for all $j \in [m]$,
	\item $\bm d  = \vd(\bm z)$ and $\ip{\bm a, \bm y} \geq \ip{\bm a, \bm d} - \delta = n -\delta $ for  all $\bm y \in \mathcal{D} + \mathbb{R}^n_{\geq 0}$, and 
	\item for each $i \in [n]$, we have $\gamma^{-1} \leq (\bm a)_i \cdot (\bm d)_i \leq \gamma$. 
\end{enumerate}

In the previous section, with linear disutilities we have $\lambda=1$ and $\delta=0$ in the above definition, and this was crucially used to map approximate KKT to stronger approximate CEEI. We show in Section~\ref{gen:apprx-KKT-apprx-CEEI}  that the claim follows even with $\lambda>1, \delta>0$.

For all of these to work, the disutility functions have to be {\em well-behaved}. To this end, we make the following assumptions about the rate of growth of the disutility functions.
\begin{assumption}\label{as:lip}
We assume that the disutility functions have Lipschitz-style lower- and upper-bounds.
Formally, for some constant $L>0$, 
we assume that for all $i\in [n]$ and for all $j\in[m]$, we have $| D_i( \bm x + \delta \cdot \bm e_j) - D_i (\bm x)| \geq \delta / L $;
furthermore, for all $i \in [n]$, and all $\bm x, \bm y \in \mathcal{D}+\mathbb{R}^n_{\geq0}$, we assume $\lvert D_i(\bm x) - D_i(\bm y) \rvert \leq L \cdot \lvert \lvert \bm x - \bm y \rvert \rvert_2$. 
\end{assumption}

The running time of our algorithm will be polynomial in $\log (L)$. However, we believe that we can also handle cases with a weaker lower-Lipschitz condition: for all $i \in [n]$, $|D_i(\bm x+ \delta \bm e_j) - D_i( \bm x)| \geq \frac{1}{L} \cdot \mathit{min}\big( \delta, \delta^k \big)$ for $k \in \textup{poly}(n,m)$. Towards the end of this section, we briefly mention what changes would be required to Algorithm~\ref{alg2:KKT} to make it work with the weaker assumption.  For simplicity, we stick to Assumption~\ref{as:lip} for the rest of this section.

Analogously to the linear case, we begin by showing in Section~\ref{gen:apprx-KKT-apprx-CEEI} that approximate KKT points will constructively yield approximate CEEI, 
and show in Section~\ref{sec:gen-KKT-alg} a refinement of Algorithm~\ref{alg:KKT} to find these approximate KKT points in the general setting.
Finally, in Section~\ref{sec:gen-alg-bounds}, we bound the number of iterations of this new algorithm, and show how to compute each iteration efficiently.

\subsection{\texorpdfstring{$(\lambda, \gamma, \delta)$}{(l,g,d)}-KKT Gives Approximate CEEI.} \label{gen:apprx-KKT-apprx-CEEI}
In this Subsection, we show the following strengthening of Theorem~\ref{corr:approx}.

\begin{theorem}\label{thm:gen-approx}
	Let $(\bm a, \bm d, \bm x)$ be a $(\lambda,\gamma,\delta)$-KKT point for the problem of minimizing $\mathcal L(\bm y)$ subject to $\bm y \in \mathcal D$, and $\mathcal L(\bm y) > -\infty$. 
	Then there exists payments $\bm p=(p_1,\,\dotsc,\,p_m)$ such that $(\bm x,\bm p)$ form a \eCEEI, as in Definition~\ref{def:CE}, where $\varepsilon = \max\{3(\gamma-1)+5\delta,\ \ \lambda-1\}$.
	
\end{theorem}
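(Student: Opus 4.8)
The plan is to mirror the proof of Theorem~\ref{corr:approx}, carefully tracking how the three sources of slack ($\gamma$ in the gradient, $\delta$ in the supporting hyperplane, and $\lambda$ in the allocation) propagate into each of the three \eCEEI conditions. First I would set up the geometry exactly as in the linear case: given the $(\lambda,\gamma,\delta)$-KKT point $(\bm a,\bm d,\bm x)$ with $\bm d = \vd(\bm x)$, consider the sublevel set $S^* := \{\bm y \in \mathbb R^{nm}_{\ge 0} \mid \ip{\bm a, \vd(\bm y)} \le \ip{\bm a, \bm d}\} = \{\bm y \mid \ip{\bm a, \vd(\bm y)} \le n\}$. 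Because the $\d_i$ are convex and $\bm a \ge \bm 0$, $S^*$ is closed and convex, and by the approximate-supporting-hyperplane condition (3) it "almost" misses $\mathcal F$: more precisely $\ip{\bm a, \vd(\bm y)} \ge n - \delta$ for all $\bm y \in \mathcal F$ (since $\vd(\bm y) \in \mathcal D$). Unlike the linear case $S^*$ need not be a half-space, so I would separate $S^*_{-\delta} := \{\bm y \mid \ip{\bm a,\vd(\bm y)} \le n-\delta\}$ (or its convex hull) from $\mathcal F$ by a hyperplane $\{\bm x \mid \ip{\bm c, \bm x} \ge b\}$ with $\mathcal F \subseteq H_{\bm c}$, and define prices $p_j := \min_i c_{ij}$. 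I expect to obtain, as in Claims~\ref{technical1}–\ref{technical3}, that $\sum_j p_j = b$, that $H_{\bm p} \subseteq H_{\bm c}$, and that $\mathcal F \subseteq H_{\bm p}$ — though now $b$ may differ from $n$ by an $O(\delta)$ amount because the separating hyperplane is for the shrunken set; keeping track of this $b$ vs.\ $n$ discrepancy is one of the bookkeeping steps.

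Next I would verify condition~\ref{P1}. Suppose some agent $i'$ were paid much less than agent $i$, say $\langle \bm x_{i'}, \bm p\rangle < (1-\varepsilon)\langle \bm x_i,\bm p\rangle$. Reallocate a fraction of $i'$'s bundle to $i$ as in the linear proof (halve $\bm x_{i'}$, scale $\bm x_i$ up by the corresponding payment) to get $\hat{\bm x} \in H_{\bm p} \subseteq H_{\bm c}$, hence $\ip{\bm a, \vd(\hat{\bm x})} \ge$ (value on $S^*$-boundary side). Then estimate $\ip{\bm a, \vd(\hat{\bm x})} - \ip{\bm a, \vd(\bm x)} = -\tfrac12 a_{i'}\d_{i'}(\bm x_{i'}) + (\text{payment ratio})\cdot a_i \d_i(\bm x_i)$; using condition~(2), $\gamma^{-1} \le a_\ell \d_\ell(\bm x_\ell) \le \gamma$, this is at most $-\tfrac12\gamma^{-1} + \tfrac12(\text{payment ratio})\gamma$, which must be $\ge -\delta'$ (approximate support), forcing the payment ratio to be at least roughly $\gamma^{-2}(1 - 2\delta'\gamma)$, i.e.\ $1 - O(\gamma-1) - O(\delta)$. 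Condition~\ref{P2} is argued similarly: if agent $i$ had a cheaper-or-equal bundle $\bm y$ with $\d_i(\bm y) < (1-\varepsilon)\d_i(\bm x_i)$, swapping it in gives a point of $H_{\bm c}$ whose $\ip{\bm a,\vd(\cdot)}$ drops by $a_i(\d_i(\bm x_i) - \d_i(\bm y)) \ge \gamma^{-1}\varepsilon$, contradicting the $\delta'$-approximate support once $\varepsilon$ exceeds a multiple of $\gamma\delta$. Condition~\ref{P3} is immediate from condition~(1) of the KKT definition, giving $\lambda^{-1} \le \sum_i x_{ij} \le \lambda$, which translates to $1-(\lambda-1) \le \sum_i x_{ij} \le 1 + (\lambda-1)$ (using $\lambda^{-1}\ge 1-(\lambda-1)$). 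Collecting the constants yields the claimed $\varepsilon = \max\{3(\gamma-1)+5\delta,\ \lambda-1\}$; I would not fight over the exact coefficients $3$ and $5$ in a plan, but they come from combining the two-sided $\gamma$ loss in~\ref{P1} with the additional $\delta$ terms from the fact that the hyperplane only approximately supports.

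The main obstacle, and where the general case genuinely departs from Theorem~\ref{corr:approx}, is that $S^*$ is not a half-space and $\bm a$ only gives an \emph{approximate} supporting hyperplane for $\mathcal D + \mathbb R^n_{\ge 0}$, so the clean identifications "$c_{ij} = a_i \d_{ij}$, $b = n$" are no longer available; instead one must invoke a separating hyperplane between the convex set (conv of) $S^*_{-\delta}$ and $\mathcal F$ and argue the resulting $\bm c$ and $b$ are close to what one wants, with all estimates carrying an additive $\delta$ (or $\delta' = \delta \cdot \ip{\bm a,\bm d}/n$-scale) error. The delicate point is to ensure that this separating hyperplane actually exists and that $b$ is positive and within $O(\delta)$ of $n$ so that the payment-balancing arguments in \ref{P1} and \ref{P2} still close; this requires the disutility functions to be well-behaved (Assumption~\ref{as:lip}) so that moving mass between agents changes $\ip{\bm a, \vd(\cdot)}$ by a controlled amount, and it is where one uses that $\mathcal L(\bm d) > -\infty$ (so no $d_i$ is zero and the multiplicative bounds in condition~(2) are meaningful). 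Once the separation and the $b \approx n$ estimate are in hand, the rest is the same contradiction-by-reallocation scheme as the linear proof, with every inequality loosened by a multiplicative $\gamma$ and an additive $O(\delta)$.
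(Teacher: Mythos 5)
Your plan is correct and follows essentially the same route as the paper's proof: separate a sublevel set of $\ip{\bm a,\vd(\cdot)}$ from the feasible allocations, set $p_j=\min_i c_{ij}$, and run the same reallocation contradictions, with the $\gamma$-error entering multiplicatively and the $\delta$-error additively, and condition~\ref{P3} read off directly from the $\lambda$-bound. The only (immaterial) differences are that the paper separates at the exact tangency level $\ip{\bm a,\bm d^*}\ge n-\delta$ rather than your shrunken level $n-\delta$, and that the value of $b$ cancels in the payment arguments (so no ``$b\approx n$'' estimate is needed) and Assumption~\ref{as:lip} is not used here --- only $1$-homogeneity and the bounds $\gamma^{-1}\le a_i\d_i(\bm x_i)\le\gamma$.
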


The whole of this subsection constitutes the proof of the above theorem.
Let $(\bm a, \bm d, \bm z)$ be a $(\lambda, \gamma, \delta)$-KKT point, as in Definition~\ref{def:apx-KKT}.
Since $\bm a_i \geq 0$ for all $i \in [n]$, let $\bm d^*$ be any point in $\mathcal D + \mathbb{R}^n_{\geq 0}$ such that 
$\{\bm y\in \mathbb R^n|\ip{\bm a, \bm y} \geq \ip{\bm a, \bm d^*}\}$ is a supporting hyperplane for 
$\mathcal D + \mathbb{R}^n_{\geq 0}$.
Since $\bm d^* \in \mathcal D + \mathbb{R}^n_{\geq 0}$ and $\ip{\bm a, \bm y} \geq \ip{\bm a , \bm d} - \delta $ for all $\bm y \in \mathcal D + \mathbb{R}^n_{\geq 0}$, we have $\ip{\bm a, \bm d^*} \geq \ip{\bm a, \bm d} - \delta$, or equivalently, $\ip{\bm a,\bm d} - \ip{\bm a,\bm d^*}\leq \delta$.

From here on, our proof emulates the proof of Theorem~\ref{corr:approx}, and consequently the proof of Bogomolnaia et al.~\cite{BogomolnaiaMSY17}.  
Recall that we have defined 
\[
\mathcal F' := \left\{ \bm y \in \mathbb{R}^{nm}_{\geq 0} \, \middle| \, \textstyle\sum_{i \in [n]} \bm y_{ij} \geq 1 \text{ for all } j \in [m] \right\}\ ,
\]
the pre-image of $\mathcal D+\mathbb R^n_{\geq 0}$ under $\vd$.
As before, let $S_{\lambda} = \{ \bm x \in R^{mn} \mid \ip{\bm a, \vd (\bm x)} \leq \lambda\}$. 
$S_{\lambda}$ is non-empty, closed and convex for all $\lambda>0$ as the disutilities are convex, and $\bm 0$ is feasible. 
Since $\ip{\bm a, \bm y} \geq \ip{\bm a, \bm d^*}$ for all $\bm y \in \mathcal D + \mathbb{R}^n_{\geq 0}$, 
we have $S_{\lambda} \cap \mathcal{F'} = \emptyset$ for all $\lambda < \ip{\bm a, \bm d^* }$. 
Thus, $S^* = S_{\ip{\bm a, \bm d^*}}$ is tangent to $\mathcal{F'}$ at the point $\bm z^*$ and $\vd( \bm  z^*) = \bm d^*$. 
Therefore, there exists a supporting hyperplane $\ip{\bm c, \bm x} = b$ of $\mathcal{F'}$ at $\bm z^*$, separating $\mathcal{F'}$ from $S^*$. 
We define the vector $\bm p \in \mathbb{R}^m_{\geq 0}$ such that  $\bm p_j = \mathit{min}_{i \in [n]}(\bm c)_{ij}$. 
Define $H_{\bm c} = \{ \bm x \in \mathbb{R}^{nm} \mid \ip{\bm c, \bm x} \geq b \}$ and 
$H_{\bm p} = \{ \bm x \in \mathbb{R}^{nm} \mid \sum_{j \in [m]} \sum_{i \in [n]}\bm p_j \bm x_{ij} \geq b \}$. 
As before, we have $\mathcal{F} \subseteq H_{\bm c}$.

\paragraph{Satisfying Condition~\ref{P1} in Definition~\ref{def:CE}.}
We want to show that for all agents $i$ and $i'$, we have $(\gamma + 2\delta)^{-2} \cdot  \langle \bm z_i, \bm p \rangle \leq \langle \bm z_{i'}, \bm p \rangle$. 
Note that for $\gamma$ sufficiently close to 1, and $\delta$ sufficiently small, $(\gamma+2\delta)^2 \leq 1+3(\gamma-1)+5\delta$.
We begin with the following observations.

\begin{claim}
	\label{gen:technical1}
	We have $\sum_{j \in [m]} \bm p_j = \ip{\bm c, \bm z^*} = b$.
\end{claim}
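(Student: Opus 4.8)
The plan is to mirror the proof of Claim~\ref{technical1} from the linear case, with extra care because we now separate over the over-allocation set $\mathcal F'$ rather than $\mathcal F$. First I would record what we already have: $\ip{\bm c, \bm x} = b$ is a supporting hyperplane of $\mathcal F'$ at $\bm z^*$ that separates $\mathcal F'$ from $S^*$, so $\mathcal F' \subseteq H_{\bm c}$, i.e. $\ip{\bm c, \bm x} \geq b$ for every $\bm x \in \mathcal F'$, with equality at $\bm z^* \in \mathcal F' \cap S^*$ (this last membership is exactly the tangency of $S^*$ to $\mathcal F'$ established just above). Consequently $b = \min_{\bm x \in \mathcal F'} \ip{\bm c, \bm x}$ and in particular $\ip{\bm c, \bm z^*} = b$, which gives the second equality in the claim.

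Next I would show $\bm c \geq \bm 0$ coordinatewise; this is the one new point relative to the linear argument. The set $\mathcal F'$ is upward-closed: if $\bm x \in \mathcal F'$ and $\bm x' \geq \bm x$ coordinatewise, then $\sum_i (\bm x')_{ij} \geq \sum_i \bm x_{ij} \geq 1$ for all $j$, so $\bm x' \in \mathcal F'$. Hence if some entry $(\bm c)_{ij}$ were negative, starting from any $\bm x \in \mathcal F'$ and increasing $\bm x_{ij}$ without bound would keep the point in $\mathcal F'$ while driving $\ip{\bm c, \bm x} \to -\infty$, contradicting $\ip{\bm c, \bm x} \geq b$. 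Therefore every entry of $\bm c$ is nonnegative.

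With $\bm c \geq \bm 0$ in hand, the minimum of the linear functional $\ip{\bm c, \bm x}$ over $\mathcal F'$ is attained at an allocation that does not over-allocate: it suffices to set $\sum_i \bm x_{ij} = 1$ for each chore $j$ and assign all of chore $j$ to an agent achieving $\min_{i \in [n]} (\bm c)_{ij}$. This yields
\[
\min_{\bm x \in \mathcal F'} \ip{\bm c, \bm x} \;=\; \sum_{j \in [m]} \min_{i \in [n]} (\bm c)_{ij} \;=\; \sum_{j \in [m]} \bm p_j ,
\]
by the definition $\bm p_j = \min_{i \in [n]}(\bm c)_{ij}$. Combining with the first paragraph gives $\sum_{j \in [m]} \bm p_j = b = \ip{\bm c, \bm z^*}$, as claimed.

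The only genuine obstacle is the nonnegativity step $\bm c \geq \bm 0$; the rest is the same bookkeeping as in the linear proof. The main thing to be vigilant about is consistently working with $\mathcal F'$ (over-allocation allowed), not $\mathcal F$ — indeed it is precisely the upward-closedness of $\mathcal F'$ that forces $\bm c \geq \bm 0$, whereas $\mathcal F$ alone would not give this.
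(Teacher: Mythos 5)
Your proposal is correct and follows essentially the same route as the paper: the supporting hyperplane gives $b=\min_{\bm x\in\mathcal F'}\ip{\bm c,\bm x}=\ip{\bm c,\bm z^*}$, and the minimum is attained by assigning each chore entirely to an agent minimizing $(\bm c)_{ij}$, which equals $\sum_j \bm p_j$. Your extra step establishing $\bm c\geq\bm 0$ via the upward-closedness of $\mathcal F'$ is a welcome addition — the paper asserts the form of the minimizer without it, even though nonnegativity of $\bm c$ is genuinely needed for the minimum over the unbounded set $\mathcal F'$ to exist and to be attained at an exact allocation.
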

\begin{proof}
	Since $H_{\bm c}$ is a supporting hyperplane of $\mathcal F'$ at $\bm z^*$, we have 
	 $b = \min_{\bm x \in \mathcal F'} \ip{\bm c, \bm x}$. 
	 Observe that this minimum is obtained by assigning each chore fully to the agent that has the smallest $\bm c_{ij}$ value for it. Therefore, we have that
	 \[ 
	 	\min_{\bm x \in \mathcal F} \ip{\bm c, \bm x} = \textstyle\sum_{j \in [m]} \min_{i \in [n]} \bm c_{ij} = \sum_{j \in [m]} \bm p_j\ .\qedhere
	 	\]
\end{proof}

\begin{claim}
	\label{gen:technical2}
	$H_{\bm p} \subseteq H_{\bm c}$.
\end{claim}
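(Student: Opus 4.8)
The plan is to repeat, essentially verbatim, the argument used for Claim~\ref{technical2} in the linear case, since that argument never used linearity of the disutilities: it relied only on the coordinatewise inequality $\bm p_j \le \bm c_{ij}$ and on the non-negativity of the allocation vectors in the domain of $H_{\bm p}$. Both ingredients are still available here: by construction $\bm p_j = \min_{i \in [n]} \bm c_{ij}$, so $\bm p_j \le \bm c_{ij}$ for every $i \in [n]$ and $j \in [m]$; and $H_{\bm p}$ is to be read over non-negative allocations (as it was in the linear statement), so $\bm x_{ij} \ge 0$ for $\bm x \in H_{\bm p}$.

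Concretely, I would take an arbitrary $\bm x \in H_{\bm p}$, multiply the scalar inequality $\bm p_j \le \bm c_{ij}$ by $\bm x_{ij} \ge 0$ to get $\bm p_j \bm x_{ij} \le \bm c_{ij} \bm x_{ij}$ for every pair $(i,j)$, and then sum over all $i \in [n]$, $j \in [m]$ to obtain
\[
b \;\le\; \sum_{i \in [n]} \sum_{j \in [m]} \bm p_j \bm x_{ij} \;\le\; \sum_{i \in [n]} \sum_{j \in [m]} \bm c_{ij} \bm x_{ij} \;=\; \ip{\bm c, \bm x},
\]
where the leftmost inequality is precisely the defining condition of $H_{\bm p}$. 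Hence $\ip{\bm c, \bm x} \ge b$, i.e.\ $\bm x \in H_{\bm c}$, which gives $H_{\bm p} \subseteq H_{\bm c}$.

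I do not expect any real obstacle: nothing about the general $1$-homogeneous, convex structure of the disutilities enters the argument, since the disutility functions have already been ``used up'' in producing the separating hyperplane $\ip{\bm c, \bm x} = b$ of $\mathcal F'$ and the constant $b$, which are now fixed data. The only point requiring a moment's care is the sign bookkeeping: the step $\bm p_j \bm x_{ij} \le \bm c_{ij}\bm x_{ij}$ reverses if some coordinate $\bm x_{ij}$ is negative, so one must ensure $H_{\bm p}$ is taken over $\mathbb R^{nm}_{\ge 0}$ (as in the linear case); with that restriction the inclusion is immediate, and without it the inclusion is in fact false.
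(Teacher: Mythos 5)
Your proof is correct and is essentially identical to the paper's: both take $\bm x \in H_{\bm p}$, multiply $\bm p_j \le \bm c_{ij}$ by $\bm x_{ij} \ge 0$, and sum to get $b \le \ip{\bm p,\bm x} \le \ip{\bm c,\bm x}$. Your side remark about needing $H_{\bm p}$ to live in $\mathbb R^{nm}_{\ge 0}$ is well taken — the paper's general-case definition writes $H_{\bm p} \subseteq \mathbb R^{nm}$ but its proof implicitly uses non-negativity just as you do.
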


\begin{proof}
	Consider any point $\bm x \in H_{\bm p}$. 
	We have $b \leq \sum_{i \in [n]} \sum_{j \in [m]} \bm p_j \bm x_{ij}$. 
	Since $p_j \leq c_{ij}$ for all $i \in [n]$ and $j\in [m]$,
	\[\textstyle
		b \leq \sum_{i \in [n]} \sum_{j \in [m]} \bm  p_j \bm x_{ij} \leq \sum_{i \in [n]} \sum_{j \in [m]} \bm c_{ij} \bm x_{ij} = \ip{\bm c,\bm x}\ ,
	\] 
	as desired.
\end{proof}

\begin{claim}
	\label{gen:technical3}
	$\mathcal F' \subseteq H_{\bm p}$, and $\mathcal F\subseteq \partial H_{\bm p}$, the boundary.
\end{claim}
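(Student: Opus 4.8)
The plan is to unwind the definitions and apply Claim~\ref{gen:technical1}. Recall that $\mathcal F'=\{\bm y\in\mathbb R^{nm}_{\geq 0}\mid \sum_{i\in[n]}\bm y_{ij}\geq 1\ \forall j\in[m]\}$, that $\mathcal F$ is the same set with each inequality replaced by equality, and that $H_{\bm p}=\{\bm x\mid \sum_{j\in[m]}\sum_{i\in[n]}\bm p_j\bm x_{ij}\geq b\}$ with boundary $\partial H_{\bm p}=\{\bm x\mid \sum_{j\in[m]}\sum_{i\in[n]}\bm p_j\bm x_{ij}= b\}$. Besides the identity $\sum_{j\in[m]}\bm p_j=b$ furnished by Claim~\ref{gen:technical1}, the only fact I need is that $\bm p\geq\bm 0$; this is built into the definition $\bm p_j=\min_{i\in[n]}\bm c_{ij}$, since $\bm c\geq\bm 0$ --- a negative entry $\bm c_{ij}<0$ would let $\ip{\bm c,\bm y}$ tend to $-\infty$ along the ray $\bm y+t\bm e_{ij}$, which stays in $\mathcal F'\subseteq H_{\bm c}$, contradicting $\ip{\bm c,\cdot}\geq b$ on $H_{\bm c}$.

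For the first inclusion I would take an arbitrary $\bm x\in\mathcal F'$ and write
\[
\sum_{j\in[m]}\sum_{i\in[n]}\bm p_j\bm x_{ij}
=\sum_{j\in[m]}\bm p_j\Bigl(\sum_{i\in[n]}\bm x_{ij}\Bigr)
\;\geq\; \sum_{j\in[m]}\bm p_j
\;=\; b,
\]
where the inequality uses $\bm p_j\geq 0$ together with $\sum_{i\in[n]}\bm x_{ij}\geq 1$, and the final equality is Claim~\ref{gen:technical1}. Hence $\bm x\in H_{\bm p}$, giving $\mathcal F'\subseteq H_{\bm p}$.

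For the second assertion I would instead take $\bm x\in\mathcal F$, so $\sum_{i\in[n]}\bm x_{ij}=1$ for every $j$, and the same computation becomes a chain of equalities:
\[
\sum_{j\in[m]}\sum_{i\in[n]}\bm p_j\bm x_{ij}
=\sum_{j\in[m]}\bm p_j\Bigl(\sum_{i\in[n]}\bm x_{ij}\Bigr)
=\sum_{j\in[m]}\bm p_j
= b.
\]
Thus $\bm x$ lies on the bounding hyperplane of $H_{\bm p}$, i.e.\ $\bm x\in\partial H_{\bm p}$, which proves $\mathcal F\subseteq\partial H_{\bm p}$.

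I do not expect any obstacle: every line is an elementary manipulation of finite sums. The one point worth stating explicitly is the non-negativity of $\bm p$, needed so that replacing ``$\geq 1$'' with ``$=1$'' in the constraint $\sum_{i}\bm x_{ij}\geq 1$ pushes the weighted sum in the intended direction; as noted, this is immediate from how $\bm p$ was constructed.
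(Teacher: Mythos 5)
Your proof is correct and follows essentially the same route as the paper's: rewrite the weighted sum as $\sum_j \bm p_j(\sum_i \bm x_{ij})$ and invoke Claim~\ref{gen:technical1}, with equality for $\mathcal F$ and inequality for $\mathcal F'$. The only addition is your explicit verification that $\bm p\geq \bm 0$ (which the paper simply asserts by declaring $\bm p\in\mathbb R^m_{\geq 0}$); your justification of it is valid.
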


\begin{proof}
	Consider any $\bm x \in \mathcal F$.
	Note that this is the original feasible region with equality.
	 We have 
	 \[\textstyle
	 	\sum_{i\in [n]}\sum_{j\in [m]} \bm p_j \bm x_{ij} = \sum_{j\in [m]}\bm  p_j = b\ .
	 \]
	The first equality holds since we assume $\sum_i \bm x_{ij}=1$ in $\mathcal F$,
	and the second holds by Claim~\ref{gen:technical1}.
	 If instead $\bm x\in \mathcal F'$, then the first equality becomes an inequality, concluding the proof.
\end{proof}

With these three claims, we can show the first condition for CEEI.
Assume for a contradiction that $(\gamma + 2\delta)^{-2} \cdot  \langle \bm z_i, \bm p \rangle > \langle \bm z_{i'}, \bm p \rangle$ for some $i$, $i'$.
Then we could replace the allocation as follows: Construct $\hat{\bm z}$ by setting $\hat{\bm z}_{i'}=\tfrac12\bm z_{i'}$, and $\hat{\bm z}_{i}=\left(1+\tfrac{\ip{\bm z_{i'}, \bm p}}{2\ip{\bm z_{i}, \bm p}}\right)\bm z_{i}$. 
Since $\bm z \in \mathcal F$, we have $\sum_{i \in [n],j \in [m]} \bm p_j \bm z_{ij} = b$. Also note that 
\[
b = \textstyle\sum_{i \in [n],j \in [m]} \bm p_j \hat{\bm z}_{ij}=\sum_{i \in [n],j \in [m]} \bm p_j \bm z_{ij}\ ,\]
since the payment subtracted from agent $i'$ is equal to the payment added to agent $i$ and so $\hat{\bm z}\in H_{\bm p}$.

By Claim~\ref{gen:technical2}, we have that  $ \hat{\bm z} \in H_{\bm c}$. 
Recall that $H_{\bm c}$ is a separating half-space between $S^*$ and $\mathcal{F'}$, \textit{i.e.}, $\mathcal{F'} \subseteq H_{\bm c}$ and $S^* \subseteq \operatorname{cl}(H_{\bm c}^{\complement})$, 
implying that for every point $\bm x \in H_{\bm c}$ we have 
$\langle{\bm a, \overrightarrow{\d}(\bm x)}\rangle \geq \langle{\bm a, \bm d^*}\rangle \geq \ip{\bm a, \vd (\bm z)} - \delta$.  Since $\hat{\bm {z}} \in H_{\bm c}$, we have  $\langle{\bm a,\overrightarrow{\d}(\hat{\bm z})}\rangle \geq \ip{\bm a, \vd (\bm z)} - \delta$. However,

\begin{align*}
	\langle \bm a, \vd(\hat{ \bm z}) \rangle - \langle \bm a, \vd (\bm z) \rangle &= -\frac{\bm a_{i'} \cdot D_{i'}(\bm z_{i'})}{2} +  \bm a_i \cdot D_i(\bm z_i) \cdot \frac{ \langle \bm p , \bm z_{i'} \rangle}{2\langle \bm p, \bm z_{i} \rangle}\\ 
	&\leq -\frac{\gamma^{-1}}{2} +  \gamma \cdot \frac{ \langle \bm p , \bm z_{i'} \rangle}{2\langle \bm p, \bm z_{i} \rangle} & (\gamma^{-1} \leq a_i \d_i(\bm z_i)\leq \gamma)\\
	&\leq -\frac{\gamma^{-1}}{2} + \frac{\gamma}{2(\gamma + 2\delta)^2}= 
-\frac{\gamma^{-1}}{2} \left( 1 - \frac{1}{(1 + \frac{2\delta}{\gamma})^2}\right) 
	\end{align*}
	
	Now, we have $1-\frac{1}{(1+p)^2} > 2p-3p^2$ for $p\geq 0$, and therefore
	\[
		-\frac{\gamma^{-1}}{2} \left( 1 - \frac{1}{(1 + \frac{2\delta}{\gamma})^2}\right)
		\ < \ 
		-\tfrac12 \gamma^{-1} ( 4\delta/\gamma - 8\delta^2/\gamma^2)
		\ \leq \ 
		-\tfrac12 \gamma^{-1} (3 \delta/\gamma)
		\ \leq \ -\delta\ ,
	\]
for $\delta$ sufficiently close to 0 and $\gamma$ sufficiently close to 1.
This implies that we have $\ip{\bm a, \vd(\hat{\bm z})} < \ip{\bm a, \vd(\bm z)} - \delta = \ip{\bm a, \bm d} - \delta$, which is a contradiction.

\paragraph{Satisfying Condition~\ref{P2} in Definition~\ref{def:CE}.} 
We want to show that for all $i \in [n]$, we have $(1-2\delta) \cdot D_i(\bm z_i) \leq D_i(\bm y)$ for all $\bm y$ such that $\ip{\bm y, \bm p} \geq \ip{\bm z_i, \bm p}$. 
Let us assume that there exists a $\bm y$ such that $(1-2\delta) \cdot  D_i(\bm z_i) > D_i(\bm y)$ and $\ip{\bm y, \bm p} \geq \ip{\bm z_i, \bm p}$. 
We define a new allocation $\bm z' = (\bm z_1, \bm z_2, \dots, \bm z_{i-1}, \bm y, \bm z_{i+1}, \dots ,\bm z_n)$. 
First note that $\sum_{i \in [n], j\in [m]} z'_{ij} \cdot p_j \geq \sum_{i \in [n], j\in [m]} z_{ij} \cdot p_j = b$ as $\ip{\bm y, \bm p} \geq \ip{\bm z_i, \bm p}$. 
Therefore $\bm z' \in H_{\bm p}$. 
By Claim~\ref{gen:technical2}, we have that $\bm z' \in H_{\bm c}$. 
Recall that $H_{\bm c}$ is a separating half-space between $S^*$ and $\mathcal{F}'$, \textit{i.e.}\  $\mathcal{F'} \subseteq H_{\bm c}$ and $S^* \subseteq \operatorname{cl}(H_{\bm c}^{\complement})$.
Therefore 
$\langle{\bm a,\overrightarrow{\d}(\bm z')}\rangle\geq \ip{\bm a, \bm d^*} \geq \langle{\bm a,\overrightarrow{\d}(\bm z)}\rangle - \delta$.
However, 
\begin{align*}
\langle \bm a, \vd( \bm z') \rangle - \langle \bm a, \vd( \bm z) \rangle &= \bm a_i \cdot (\d_i(\bm z'_i) - \d_i(\bm z_i))\\
									 &< -2\delta \bm a_i \d_i(\bm z_i) 
									 \leq - 2 \gamma ^{-1} \delta \leq -\delta\ .
\end{align*}
This implies that $\langle{\bm a,\overrightarrow{\d}(\bm z')}\rangle< \langle{\bm a,\overrightarrow{\d}(\bm z)}\rangle - \delta $, which is a contradiction.

\paragraph{Satisfying Condition ~\ref{P3} in Definition \ref{def:CE}.}
By definition of $(\lambda, \gamma, \delta)$-KKT point, we have $(\bm z)_{ij} \geq 0$ for all $i \in [n]$, and $j \in [m]$ and also $\lambda^{-1} \leq \sum_{i \in [n]} (\bm z)_{ij} \leq \lambda $ for all $j \in [m]$.

\bigskip

\subsection{Exterior-Point Methods for \texorpdfstring{$(\lambda, \gamma, \delta)$}{(l,g,d)}-KKT Points}\label{sec:gen-KKT-alg}
We introduce here a refinement of Algorithm~\ref{alg:KKT} which allows us to handle the extra errors, and finds 
$(\lambda, \gamma, \delta)$-KKT points in polynomial time. 
The algorithm takes as input three error terms, $\varepsilon_1$, $\varepsilon_2$, and $\varepsilon_3$, 
and we will later see how to set these to attain polynomial running time. We encourage the reader to go through Subsection~\ref{gen:overview} to get an overview of the entire algorithm and the challenges it handles compared to the setting where all agents have linear disutilities.

Recall that  $\mathcal F' := \mathcal F+\mathbb R_{\geq 0}^{nm} = \{ \bm x \in \mathbb{R}^{nm}_{\geq 0} \mid \sum_{i \in [n]} x_{ij} \geq 1 ~~\forall i \in [n] \}$, as above, and note that $\mathcal D+ \mathbb{R}^n_{\geq 0} = \{ \bm d \in \mathbb{R}^n \mid \bm d = \vd (\bm y) \text{ for some } \bm y \in \mathcal F'  \}$. Our new  algorithm is as follows.
\begin{algorithm}[H]
	\caption{Finding Approximate KKT for 1-Homogeneous Disutilities}
	\begin{algorithmic}[1]
		\State \label{alg2:init} Let $(\bm d^0) \gets \textsc{initialize}()$, be any infeasible, strictly positive, disutility profile near $\bm 0$, and $k=1$
		\While{\textbf{true}}
		\State  \textbf{Set} $(\bm x^k_{+}, \bm d^k_{+}) \gets \textsc{nearest-point}(\bm d^k, \varepsilon_1)$ \label{alg2:nearestpoint}
		\State \textbf{Set} $(\bm x^k_{+})_{ij} \gets (\bm x^k_+)_{ij} + L \cdot (2 \cdot \varepsilon_1 )$ for all $i \in [n]$ and $j \in [m]$ and $\bm d^k_+ \gets \overrightarrow{\d}(\bm x^k_+) $\label{alg2:pd}
	
		\If{$||\bm d^k_{+} - \bm d^{k}||_2 \leq \varepsilon_2$} 
		    \State \textbf{Set} $(\bm x^k_{+}, \bm d^k_{+}) \gets \textsc{adjust-coordinates} ({\bm x^k_+}, {\bm d^k_+}, \bm d^k)$. \label{alg2:ac}
			\State \textbf{Return} $(\bm a^{k-1}, \bm d^{k}_{+}, \bm x^k_{+})$ \label{alg2:returnclosepoint}
	    \EndIf 
		\State \textbf{Set} $\bm a^k = \bm d^k_{+} - \bm d^{k}$.
		\State \textbf{Rescale} $\bm a^k$ so that $\ip{\bm a^k,\bm d^k_{+}}= n$ \label{alg2:definehyperplane}.
		
		\State  \textbf{Set} $\bm d^{k+1} \gets \bm 1/\bm a^k$ \label{alg2:hyperplane-move}
		
		\If{$\logd(\bm d^{k+1},\bm d^k_{+})<\varepsilon_3$}
		\State \textbf{Return} $(\bm a^k, \bm d^{k}_{+},\bm x^k_{+})$ \label{alg2:stopping-hyp}
		\EndIf
		\State  \textbf{Set} $k\gets k+1$
		\EndWhile
	\end{algorithmic}
	\label{alg2:KKT}
\end{algorithm}

\begin{algorithm}[H]
	\caption{$\textsc{initialize} ()$}
	\begin{algorithmic}[1]
	\State \textbf{Fix} an arbitrary $a \in [n]$ and $c \in [m]$.
	\State \textbf{Set} $(\bm x^0)_{ac} \gets 1 / 2nL^2$ and $(\bm x)_{ij} = 0$ for all other $i \in [n]$ and $j \in [m]$.
	\State \textbf{Set} $\bm d^0 \gets \vd (\bm x^0)$.
	\State \textbf{Return} $\bm d^0$
	\end{algorithmic}
	\label{alg:initialize}
\end{algorithm}

\begin{algorithm}[H]
	\caption{$\textsc{nearest-point}({\bm d^k}, \varepsilon_1)$}
	Returns the point $(\bm x^k_+, \bm d^k_+)$ such that $\bm d^k_+ = \vd (\bm x^k_+)$ and $|| \bm x^k_+ -\bm x^k_* ||_2 \leq \varepsilon_1$ and $|| \bm d^k_+ - \bm d^k_*||_2 \leq \varepsilon_1$ where $\bm x^k_* \in \mathcal F' $ such that $\bm d^k_* = \vd (\bm x^k_*)$ is the nearest point to $\bm d^k$ in $\mathcal D +\mathbb{R}^n_{\geq 0}$. Also every coordinate of $\bm x^k_+$ and $\bm d^k_+$ is an integral multiple of $\varepsilon_1 / 2^{\textup{poly}(n,m)}$ for some $\textup{poly}(n,m)$.
	The details of this algorithm will be presented in Section~\ref{sec:gen-alg-bounds}.
	\label{alg:adjust-cordinates}	
\end{algorithm}

\begin{algorithm}[H]
	\caption{$\textsc{adjust-coordinates}({\bm x^k_+}, {\bm d^k_+}, \bm d^k)$}
	\begin{algorithmic}[1]	
		\State \textbf{Set} $y_{ij} = (\bm x^k)_{ij} \cdot \frac{(\bm d^k)_i}{(\bm d^{k}_{+})_i}$ for all $i \in [n]$, $j \in [m]$.
		\State \textbf{Return} $(\bm y, \overrightarrow{\d}(\bm y))$.   
	\end{algorithmic}
	\label{alg:adjust-cordinates}	
\end{algorithm}

We begin by showing that the above algorithm will correctly return a $(\lambda, \gamma, \delta)$-KKT point for the appropriate values of $\varepsilon_1,\,\varepsilon_2,\,\varepsilon_3$, which will be chosen later. 
For now, the reader should think of the $\varepsilon$'s as being related as follows: $\varepsilon_3 \gg \varepsilon_2 \gg \varepsilon_1$. 
In particular, we have $\varepsilon_3 \geq 2^{\textup{poly}(n,m)} \cdot \varepsilon_2$ and $\varepsilon_2 \gg n^3m^3L^3 \varepsilon_1$.

The proof of correctness for Algorithm~\ref{alg2:KKT} will be significantly more involved than that of Algorithm~\ref{alg:KKT}. 
Notably, the error in the computation of the nearest point $\bm d^k$ on line~\ref{alg2:nearestpoint} will introduce  many sources of additive error, which will need to be handled along with Assumption~\ref{as:lip} above, to ensure that we can recover multiplicative error guarantees.

We will need to argue that an approximate equilibrium can be found whether the algorithm returns in either of Steps~\ref{alg2:returnclosepoint} or~\ref{alg2:stopping-hyp}. 
The latter was the stopping condition for the original Algorithm~\ref{alg:KKT}, and its proof will be simpler. 

Let $\bm d^k_*$ be the true nearest point to $\bm d^k$ in $\mathcal D  + \mathbb{R}^n_{\geq 0}$ 
and let $\bm x^k_{*} \in \mathcal F'$ be such that $\vd(\bm x^k_{*}) = \bm d^k_*$.

\subsubsection{Stopping on Line~\ref{alg2:stopping-hyp} of Algorithm~\ref{alg2:KKT}}
We first show that the $\bm a^k$ vectors are indeed normals to approximately supporting hyperplanes.
\begin{lemma}
	\label{supporting-hyperplane}
	In each iteration $k$ of Algorithm~\ref{alg2:KKT}, after step~\ref{alg2:definehyperplane}, we have
	$\ip{\bm a^k, \bm y} \geq  \ip{\bm a^k, \bm d^k_+} - \delta$ for all $y \in \mathcal D + \mathbb{R}^n_{\geq 0}$ with $\delta = 9n^5mL^2  \frac{\varepsilon_1}{\varepsilon_2^2}$.
\end{lemma}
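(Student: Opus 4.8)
The plan is to track, coordinate by coordinate, how far the computed direction $\bm a^k$ can deviate from the exact supporting-hyperplane normal $\bm a^k_* \propto \bm d^k_* - \bm d^k$, and then convert the Euclidean deviation between the two hyperplanes into an additive slack $\delta$ valid over all of $\mathcal D+\mathbb R^n_{\geq 0}$. First I would recall the exact geometric fact (already used in the linear case, Lemma~\ref{lem:algo-steps-linear}, and valid here by Claim~\ref{claim:convex} since $\mathcal D+\mathbb R^n_{\geq 0}$ is convex): if $\bm d^k_*$ is the nearest point of the convex set $\mathcal D+\mathbb R^n_{\geq 0}$ to $\bm d^k$, then $\bm d^k_*-\bm d^k$ is the normal to an exact supporting hyperplane at $\bm d^k_*$, i.e.\ $\ip{\bm d^k_*-\bm d^k,\bm y}\ge \ip{\bm d^k_*-\bm d^k,\bm d^k_*}$ for all $\bm y\in\mathcal D+\mathbb R^n_{\geq 0}$. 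The computed quantity after step~\ref{alg2:pd} is instead $\bm d^k_+=\vd(\bm x^k_+)$, where $\bm x^k_+$ has been obtained by the $\varepsilon_1$-approximate nearest-point routine and then bumped up by $L\cdot(2\varepsilon_1)$ in every coordinate. By the routine's guarantee $\|\bm d^k_+ - \bm d^k_*\|_2$ is bounded by $\varepsilon_1$ plus the disutility-change caused by the coordinate bump; the bump changes each $x_{ij}$ by $2L\varepsilon_1$, so by the upper-Lipschitz half of Assumption~\ref{as:lip} each $D_i$ changes by at most $L\cdot\|2L\varepsilon_1\bm 1\|_2\le 2\sqrt{m}\,L^2\varepsilon_1$, giving $\|\bm d^k_+-\bm d^k_*\|_2 \le O(\sqrt{nm}\,L^2\varepsilon_1)$. (This bump is exactly what makes $\bm d^k_+$ Pareto-dominate $\bm d^k$, per the discussion around Observations~\ref{pareto-domination} and~\ref{appx-nearest-point-dist}, but for this lemma I only need the distance bound.)

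Next I would quantify the deviation of the \emph{direction}. Write $\bm u = \bm d^k_*-\bm d^k$ and $\bm v = \bm d^k_+-\bm d^k$, so $\|\bm u-\bm v\|_2 = \|\bm d^k_+-\bm d^k_*\|_2 \le O(\sqrt{nm}\,L^2\varepsilon_1)$. Since we have not stopped on the line preceding step~\ref{alg2:definehyperplane}, we know $\|\bm d^k_+-\bm d^k\|_2 > \varepsilon_2$, hence $\|\bm v\|_2 > \varepsilon_2$ and also $\|\bm u\|_2 \ge \|\bm v\|_2 - \|\bm u-\bm v\|_2 \ge \varepsilon_2/2$ (using $\varepsilon_2\gg n^3m^3L^3\varepsilon_1$). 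For an arbitrary $\bm y\in\mathcal D+\mathbb R^n_{\geq 0}$ we then estimate
\begin{align*}
\ip{\bm v,\bm y} - \ip{\bm v,\bm d^k_+}
&= \ip{\bm u,\bm y} - \ip{\bm u,\bm d^k_*} + \ip{\bm v-\bm u,\bm y} - \ip{\bm v-\bm u,\bm d^k_+} + \ip{\bm u,\bm d^k_*-\bm d^k_+}\\
&\ge 0 - \|\bm v-\bm u\|_2\bigl(\|\bm y\|_2+\|\bm d^k_+\|_2\bigr) - \|\bm u\|_2\|\bm d^k_*-\bm d^k_+\|_2,
\end{align*}
where the $0$ comes from the exact supporting inequality for $\bm u$. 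To turn this into the claimed bound I need an a~priori bound on $\|\bm y\|_2$ for the relevant $\bm y$ — here I would use that $\mathcal D+\mathbb R^n_{\geq 0}$ is effectively probed only at points with disutility $O(mL)$ (any excessively large $\bm y$ already satisfies the inequality trivially because $\bm v\ge 0$ componentwise after step~\ref{alg2:pd}, so $\ip{\bm v,\bm y}$ only grows), together with $\|\bm d^k_+\|_2\le O(\sqrt n\,mL)$ and $\|\bm u\|_2 \le \|\bm d^k_*\|_2 + \|\bm d^k\|_2 = O(\sqrt n\,mL)$. Substituting the distance bounds and the lower bound $\|\bm u\|_2\ge\varepsilon_2/2$ into the final rescaling $\bm a^k = \bigl(n/\ip{\bm v,\bm d^k_+}\bigr)\bm v$ (note $\ip{\bm v,\bm d^k_+}\ge \|\bm v\|_2^2 \ge \varepsilon_2^2$ up to constants, since $\bm d^k_+$ lies on the ``far side'' from $\bm d^k$ — this is where the factor $1/\varepsilon_2^2$ enters), one gets $\ip{\bm a^k,\bm y}\ge \ip{\bm a^k,\bm d^k_+} - \delta$ with $\delta = O(n^5 m L^2\,\varepsilon_1/\varepsilon_2^2)$, which after tracking the constants is the stated $9n^5mL^2\varepsilon_1/\varepsilon_2^2$.

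The main obstacle I expect is the bookkeeping in the last step: converting a small \emph{direction} error $\|\bm u-\bm v\|_2$ into a small \emph{additive} slack after the normalization $\ip{\bm a^k,\bm d^k_+}=n$. The division by $\ip{\bm v,\bm d^k_+}$ amplifies errors, and the only thing preventing blow-up is the running hypothesis $\|\bm d^k_+-\bm d^k\|_2>\varepsilon_2$ that holds precisely because the algorithm did not return on the preceding line; so the argument must be careful to invoke that hypothesis and to show $\ip{\bm v,\bm d^k_+}=\Omega(\varepsilon_2^2)$ rather than merely $\Omega(\varepsilon_2)$. A secondary subtlety is justifying the a~priori bound $\|\bm y\|_2 = O(\sqrt n\,mL)$ on the ``active'' part of $\mathcal D+\mathbb R^n_{\geq 0}$ — handled by the monotonicity remark that for very large $\bm y$ the inequality is free because $\bm a^k\ge \bm 0$ — and checking that the coordinate bump in step~\ref{alg2:pd}, of size $2L^2\sqrt m\,\varepsilon_1$ per disutility, is indeed dominated by $\varepsilon_2$ so that it does not by itself violate the not-yet-stopped hypothesis. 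Everything else is Cauchy–Schwarz together with Assumption~\ref{as:lip}.
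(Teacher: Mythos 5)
Your proposal is correct and follows essentially the same route as the paper: exact supporting hyperplane at the true nearest point $\bm d^k_*$, a Cauchy--Schwarz bound on the gap between $\ip{\bm d^k_+-\bm d^k,\bm d^k_+}$ and $\ip{\bm d^k_*-\bm d^k,\bm d^k_*}$ via Observation~\ref{appx-nearest-point-dist}, and the not-yet-stopped condition $\Vert\bm d^k_+-\bm d^k\Vert_2>\varepsilon_2$ to lower-bound the normalizer $\ip{\bm d^k_+-\bm d^k,\bm d^k_+}$ by $\Omega(\varepsilon_2^2/n^2)$, which is exactly where the $1/\varepsilon_2^2$ enters. The one place you diverge is the term $\ip{\bm v-\bm u,\bm y}$: the paper avoids it entirely by noting that $\bm d^k_+\geq\bm d^k_*$ componentwise (Observation~\ref{pareto-domination}) and $\bm y\geq\bm 0$, so $\ip{\bm d^k_+-\bm d^k,\bm y}\geq\ip{\bm d^k_*-\bm d^k,\bm y}$ with no need to bound $\Vert\bm y\Vert_2$, whereas your version must restrict to the bounded set $\mathcal D$ (justified, as you note, by $\bm a^k\geq\bm 0$); both work, and the constants differ only immaterially (the lemma's stated $L^2$ is in any case an $L^3$ in the paper's own derivation).
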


Before proving this results, we first need some technical claims.

\begin{observation}\label{cl:supphypcl}
	In each iteration $k$ of Algorithm~\ref{alg2:KKT}, after step~\ref{alg2:definehyperplane}, we have
	$\bm d^k_{*} \in \mathcal{D}$, $\bm x^k_{*} \in \mathcal{F}$, $||\bm d^k_*||_2 \leq nL$.
\end{observation}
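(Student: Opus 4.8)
This is the $1$-homogeneous counterpart of part~(2) of Lemma~\ref{lem:algo-steps-linear}, plus the norm bound. The plan is: (i) recall that $\bm d^k_*$ is by definition the Euclidean projection of $\bm d^k$ onto $\mathcal D+\mathbb R^n_{\ge 0}$, which is convex by Claim~\ref{claim:convex}, so the purely geometric content of Lemma~\ref{lem:algo-steps-linear} --- stated to hold ``regardless of the geometry of $\mathcal D$, so long as $\mathcal D+\mathbb R^n_{\ge 0}$ is convex'' --- applies verbatim once its hypotheses are verified; (ii) verify that whenever Algorithm~\ref{alg2:KKT} reaches step~\ref{alg2:definehyperplane} in some iteration $k$, the point $\bm d^k$ has strictly positive entries and lies outside $\mathcal D+\mathbb R^n_{\ge 0}$; (iii) deduce $\bm d^k_*\in\mathcal D$ and pick a pre-image $\bm x^k_*\in\mathcal F$; (iv) bound $\|\bm d^k_*\|_2$ using Assumption~\ref{as:lip}.

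\textbf{Verifying the hypotheses on $\bm d^k$.} For the ``outside'' part: reaching step~\ref{alg2:definehyperplane} means the early-termination test $\|\bm d^k_+-\bm d^k\|_2\le\varepsilon_2$ did not fire; since the subroutine \textsc{nearest-point} returns $\bm d^k_+$ within $\varepsilon_1$ of the true nearest point $\bm d^k_*$ (and step~\ref{alg2:pd} only moves it further away from $\bm d^k$ towards $\mathcal D+\mathbb R^n_{\ge0}$), we get $\|\bm d^k-\bm d^k_*\|_2\ge\varepsilon_2-\varepsilon_1>0$, hence $\bm d^k\ne\bm d^k_*$, i.e.\ $\bm d^k\notin\mathcal D+\mathbb R^n_{\ge0}$. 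For the ``strictly positive'' part, I would argue inductively: for $k\ge1$ we have $\bm d^k=\bm 1/\bm a^{k-1}$, and $\bm a^{k-1}=\bm d^{k-1}_+-\bm d^{k-1}$ (after rescaling) is coordinatewise strictly positive because step~\ref{alg2:pd} adds $2L\varepsilon_1$ to each allocation coordinate, which by the lower-Lipschitz bound in Assumption~\ref{as:lip} increases each $D_i$ by at least $2\varepsilon_1$, whereas the approximation error $\|\bm d^{k-1}_+-\bm d^{k-1}_*\|_2$ before that step is at most $\varepsilon_1$; combined with $\bm d^{k-1}_*\ge\bm d^{k-1}$ (projection onto an upward-closed set) this gives $\bm d^{k-1}_+>\bm d^{k-1}$ strictly, so $\bm a^{k-1}>\bm 0$ and $\bm d^k=\bm 1/\bm a^{k-1}>\bm 0$. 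For the base case $\bm d^0$ from \textsc{initialize}, it may have zero coordinates, but it is still strictly dominated coordinatewise by a feasible profile (e.g.\ $\vd(\tfrac1n\bm 1)$, again via the lower-Lipschitz bound, since every feasible $D_i$ is then bounded away from $0$ while $d^0_i$ is either $0$ or the tiny value $\ge 1/2nL^3$); this is all the argument of Lemma~\ref{lem:algo-steps-linear}(2) needs --- together with $\bm d^0\notin\mathcal D+\mathbb R^n_{\ge0}$, which holds because \textsc{initialize} gives a point of norm $O(1/nL^2)$ while every point of $\mathcal D+\mathbb R^n_{\ge0}$ has some coordinate $\ge m/L$ --- to conclude $\bm d^0_*\in\mathcal D$ with all coordinates positive. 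With the hypotheses in hand, Lemma~\ref{lem:algo-steps-linear}(2) gives $\bm a^k:=\bm d^k_*-\bm d^k>\bm 0$, the hyperplane $\{\bm y:\langle\bm a^k,\bm y\rangle\ge\langle\bm a^k,\bm d^k_*\rangle\}$ supports $\mathcal D+\mathbb R^n_{\ge0}$ at $\bm d^k_*$ with strictly positive normal, so its minimum over $\mathcal D+\mathbb R^n_{\ge0}$ is attained inside $\mathcal D$ (adding any nonnegative vector strictly increases $\langle\bm a^k,\cdot\rangle$) and every minimizer lies in $\mathcal D$; since $\bm d^k_*$ is one such minimizer, $\bm d^k_*\in\mathcal D$. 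By the definition of $\mathcal D$ there is then $\bm x^k_*\in\mathcal F$ with $\vd(\bm x^k_*)=\bm d^k_*$ (the pre-image returned by \textsc{nearest-point} can be taken to be this one after trimming any over-allocation, which is harmless as $\bm d^k_*\in\mathcal D$ is realized with all chores exactly allocated).

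\textbf{Norm bound.} Since $\d_i$ is $1$-homogeneous, $D_i(\bm 0)=0$, so by the upper-Lipschitz bound in Assumption~\ref{as:lip}, $D_i(\bm x^k_{*,i})=|D_i(\bm x^k_{*,i})-D_i(\bm 0)|\le L\,\|\bm x^k_{*,i}\|_2$. For $\bm x^k_*\in\mathcal F$ each coordinate lies in $[0,1]$ and $\sum_i\sum_j x^k_{*,ij}=m$, so $\sum_i\|\bm x^k_{*,i}\|_2^2\le\sum_i\sum_j x^k_{*,ij}=m$, giving $\|\bm d^k_*\|_2=\big(\sum_i D_i(\bm x^k_{*,i})^2\big)^{1/2}\le L\sqrt{m}$, which we record in the cruder form $\|\bm d^k_*\|_2\le nL$ used later.

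\textbf{Main obstacle.} The delicate point is not any single step above but keeping the invariant ``$\bm d^k$ is strictly positive and strictly below $\mathcal D$'' alive across iterations once the supporting hyperplanes are only approximate: an approximately-supporting hyperplane can have its $\mathcal L$-maximizer $\bm d^{k+1}$ fall \emph{inside} $\mathcal D+\mathbb R^n_{\ge0}$, which is precisely why the early-termination test $\|\bm d^k_+-\bm d^k\|_2\le\varepsilon_2$ and the coordinate-bump in step~\ref{alg2:pd} are needed, and why we only claim the observation ``after step~\ref{alg2:definehyperplane}'' (i.e.\ conditioned on not having already terminated). Making the constants work --- so that $\varepsilon_2-\varepsilon_1>0$ is genuinely large compared to the projection error, and the bump in step~\ref{alg2:pd} restores strict positivity without over-allocating --- is the part that interlocks with the later error analysis (Lemma~\ref{supporting-hyperplane} and the adjustment-of-coordinates lemmas); everything else is the same bookkeeping as in the linear case.
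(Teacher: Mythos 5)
Your proof follows the same route as the paper's: non‑termination at the early distance check forces $\bm d^k\notin\mathcal D+\mathbb R^n_{\ge0}$, Lemma~\ref{lem:algo-steps-linear} then yields $\bm d^k_*\in\mathcal D$ with a pre-image $\bm x^k_*\in\mathcal F$, and the norm bound comes from the upper-Lipschitz part of Assumption~\ref{as:lip}; you are in fact more careful than the paper in verifying the strict-positivity hypothesis of that lemma and in handling the base case $\bm d^0$. One small caveat: your bound $\Vert\bm d^k_*\Vert_2\le L\sqrt m$ is only "cruder than" $nL$ when $m\le n^2$ — but the paper's own derivation ($D_i(\bm 1)\le L\Vert\bm 1\Vert_2$ with $\bm 1\in\mathbb R^m$) carries the same cosmetic slack, and any fixed polynomial bound suffices for the downstream error analysis.
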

\begin{proof}
	 Note that since the algorithm has constructed the vector $\bm a^k$, in iteration $k$, we have $\bm d^k \notin \mathcal D + \mathbb{R}^n_{\geq 0}$.
  Otherwise, by the choice of $\varepsilon$'s, the algorithm would have terminated since $\bm d^k_+$ would have been too close to $\bm d^k=\bm d^k_*$.
  Therefore, we have $|| \bm d^k_+ - \bm d^k ||_2 \ll \varepsilon_2$ and our algorithm will terminate before step~\ref{alg2:definehyperplane}. 
  Therefore, $\bm d^k$ lies outside $\mathcal D + \mathbb{R}^n_{\geq 0}$, 
  \textit{i.e.}, below the lower envelop of $\mathcal D + \mathbb{R}^n_{\geq 0}$. 
  By Lemma~\ref{lem:algo-steps-linear}, $\bm d^k_*\in\mathcal{D}$. 
  This implies that $\bm d^k_*$ has a pre-image $\bm x^k_* \in \mathcal F$ under $\vd$.
  As a result, we can bound $||\bm d^k_*||_2 \leq \sqrt{n} \cdot \mathit{max}_{i \in [n]}  D_i(\bm 1)$. 
  Note that by Assumption~\ref{as:lip}, we have $D_i (\bm 1) = D_i(\bm 1) -D_i(\bm 0)  \leq L \cdot || \bm 1 - \bm 0||_2 \leq L \sqrt{n}$ for all $i \in [n]$. Therefore, we have $||\bm d^k_*||_2 \leq nL$. 
\end{proof}

\begin{observation}
	\label{pareto-domination}
	After step~\ref{alg2:pd} of Algorithm~\ref{alg2:KKT}, we have $(\bm x^k_+)_{ij} \geq (\bm x^k_*)_{ij}$ for all $i \in [n]$, $j \in [m]$ and that $(\bm d^k_+)_i \geq (\bm d^k_*)_i \geq (\bm d^k)_i$ for all $i \in [n]$, implying that $\bm x^k_+ \in \mathcal F'$ and $\bm d^k_+ \in \mathcal D + \mathbb{R}^n_{\geq 0}$
\end{observation}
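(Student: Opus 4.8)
The plan is to establish the three componentwise inequalities one at a time — first that the exact projection $\bm d^k_*$ Pareto-dominates $\bm d^k$, then that the modified allocation $\bm x^k_+$ dominates the exact preimage $\bm x^k_*$, and then that $\bm d^k_+$ dominates $\bm d^k_*$ — and finally read the two feasibility assertions directly off of these. The one spot where the argument genuinely differs from the linear Algorithm~\ref{alg:KKT} is the first inequality: in Step~\ref{alg:gradflow} of Algorithm~\ref{alg:KKT} the projection was explicitly restricted to dominating points $\bm y\ge\bm d^k$, whereas here \textsc{nearest-point} (Step~\ref{alg2:nearestpoint}) performs an unconstrained $\ell_2$ projection onto $\mathcal D+\mathbb R^n_{\geq 0}$, so domination of $\bm d^k$ by $\bm d^k_*$ must be proved, not assumed. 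This is the only step with real content; everything else is monotonicity of the $D_i$'s together with a routine check that the additive bump in Step~\ref{alg2:pd} outweighs the per-coordinate error produced by the approximate nearest-point oracle.

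For the first inequality I would use that $\mathcal D+\mathbb R^n_{\geq 0}$ is closed under raising coordinates. If $(\bm d^k_*)_i<(\bm d^k)_i$ for some $i$, form $\bm d'$ from $\bm d^k_*$ by raising its $i$-th coordinate to $(\bm d^k)_i$; then $\bm d'\in\mathcal D+\mathbb R^n_{\geq 0}$ since we only increased a coordinate, while $\Vert\bm d'-\bm d^k\Vert_2^2=\Vert\bm d^k_*-\bm d^k\Vert_2^2-((\bm d^k_*)_i-(\bm d^k)_i)^2<\Vert\bm d^k_*-\bm d^k\Vert_2^2$, contradicting that $\bm d^k_*$ is the nearest point of $\mathcal D+\mathbb R^n_{\geq 0}$ to $\bm d^k$. (If $\bm d^k$ already lies in $\mathcal D+\mathbb R^n_{\geq 0}$ then $\bm d^k_*=\bm d^k$ and there is nothing to prove.) Hence $(\bm d^k_*)_i\ge(\bm d^k)_i$ for every $i$.

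For the second inequality, let $\bm{x}^k_{\mathrm{pre}}$ denote the allocation returned by \textsc{nearest-point} in Step~\ref{alg2:nearestpoint}, before the modification in Step~\ref{alg2:pd}. By the specification of \textsc{nearest-point}, $\Vert\bm{x}^k_{\mathrm{pre}}-\bm x^k_*\Vert_2\le\varepsilon_1$, so in particular $(\bm{x}^k_{\mathrm{pre}})_{ij}\ge(\bm x^k_*)_{ij}-\varepsilon_1$ for all $i,j$. Step~\ref{alg2:pd} adds $2L\varepsilon_1$ to every coordinate, and since $L\ge 1$ — applying the two halves of Assumption~\ref{as:lip} with $\bm y=\bm x+\delta\bm e_j$ gives $\delta/L\le L\delta$ — we obtain $(\bm x^k_+)_{ij}=(\bm{x}^k_{\mathrm{pre}})_{ij}+2L\varepsilon_1\ge(\bm x^k_*)_{ij}+(2L-1)\varepsilon_1\ge(\bm x^k_*)_{ij}$. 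Because each $D_i$ is non-decreasing in every coordinate, and $\bm d^k_+=\vd(\bm x^k_+)$, $\bm d^k_*=\vd(\bm x^k_*)$, the componentwise bound $(\bm x^k_+)_i\ge(\bm x^k_*)_i$ pushes through to $(\bm d^k_+)_i\ge(\bm d^k_*)_i$; chaining with the previous paragraph yields $(\bm d^k_+)_i\ge(\bm d^k_*)_i\ge(\bm d^k)_i$.

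Finally, the feasibility conclusions drop out: since $\bm x^k_*\in\mathcal F'$ and $\bm x^k_+\ge\bm x^k_*\ge\bm 0$, we get $\bm x^k_+\ge\bm 0$ and $\sum_{i\in[n]}(\bm x^k_+)_{ij}\ge\sum_{i\in[n]}(\bm x^k_*)_{ij}\ge 1$ for every $j$, so $\bm x^k_+\in\mathcal F'$; and then $\bm d^k_+=\vd(\bm x^k_+)$ lies in $\mathcal D+\mathbb R^n_{\geq 0}$ by the identity $\mathcal D+\mathbb R^n_{\geq 0}=\{\vd(\bm y):\bm y\in\mathcal F'\}$ recorded just before Algorithm~\ref{alg2:KKT}. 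As noted above, the projection-domination step of the second paragraph is the only one requiring thought; the rest is bookkeeping and monotonicity of the disutilities.
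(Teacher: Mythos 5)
Your proof is correct and follows essentially the same route as the paper's: the componentwise domination of $\bm d^k$ by the exact projection $\bm d^k_*$ via the "raise a coordinate" contradiction (which the paper obtains by citing the argument in Lemma~\ref{lem:algo-steps-linear}), followed by the observation that the $2L\varepsilon_1$ bump absorbs the $\varepsilon_1$ oracle error. Your derivation of $(\bm d^k_+)_i\ge(\bm d^k_*)_i$ from monotonicity of the $D_i$'s is a slightly cleaner shortcut than the paper's appeal to the lower Lipschitz bound, and your justification that $L\ge 1$ fills in a detail the paper leaves implicit.
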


\begin{proof}
	 By construction, after step~\ref{alg2:nearestpoint}, we have $|(\bm x^k_*)_{ij} - (\bm x^k_+)_{ij} |\leq \varepsilon_1$ for all $i \in [n]$ and $j \in [m]$ and $|(\bm d^k_*)_i - (\bm d^k_+)_i| \leq \varepsilon_1$ for all $i \in [n]$. 
	 Then in step~\ref{alg2:pd}, we increase all $(\bm x^k_+)_{ij}$ by $2L\varepsilon_1$ and thus we have $(\bm x^k_+)_{ij} \geq (\bm x^k_*)_{ij}$. Since our disutility functions have lower-bounded partial derivatives, we have for each $i \in [n]$, the disutility of agent $i$ increases by at least $1/ L \cdot 2L \varepsilon_1 = 2\varepsilon_1$ and thus we have $(\bm d^k_+)_i \geq (\bm d^k_*)_i$ for all $i \in [n]$. 
	 We complete the proof by showing that $(\bm d^k_*)_i \geq (\bm d^k)_i$ for all $i \in [n]$. To this end, first observe that if $\bm d^k \in \mathcal D + \mathbb{R}^n_{\geq 0}$, then $\bm d^k_* = \bm d^k$ and the claim holds trivially. If $\bm d^k \notin \mathcal D + \mathbb{R}^n_{\geq 0}$, then by the same argument in Lemma~\ref{lem:algo-steps-linear}, we can prove $(\bm d^k_{*})_i \geq (\bm d^k)_i$ for all $i \in [n]$.  
\end{proof}

We also show that $\bm d^k_+$ and $\bm x^k_+$ are in the $O(nmL^2\varepsilon_1)$ neighbourhood of $\bm d^k_*$ and $\bm x^k_*$ respectively.

\begin{observation}
	\label{appx-nearest-point-dist}
    After step~\ref{alg2:pd} of Algorithm~\ref{alg2:KKT},  we have 
	 \begin{itemize}
	 	\item $||\bm x^k_{+} - \bm x^k_{*}||_2 \leq 3nmL \varepsilon_1$, and 
	 	\item  $||\bm d^k_{+} - \bm d^k_{*}||_2 \leq 3nmL^2 \varepsilon_1$.
	 \end{itemize}
\end{observation}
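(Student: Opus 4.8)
The plan is to bound $\|\bm x^k_+ - \bm x^k_*\|_2$ and $\|\bm d^k_+ - \bm d^k_*\|_2$ by splitting the displacement from $\bm x^k_*$ to $\bm x^k_+$ into two contributions: the error from the approximate nearest-point computation in step~\ref{alg2:nearestpoint}, and the coordinate-wise shift of $2L\varepsilon_1$ added in step~\ref{alg2:pd}. First I would recall that $\textsc{nearest-point}$ guarantees $\|\bm x^k_{\mathrm{old}} - \bm x^k_*\|_2 \le \varepsilon_1$, where $\bm x^k_{\mathrm{old}}$ denotes the value of $\bm x^k_+$ just after step~\ref{alg2:nearestpoint} and before step~\ref{alg2:pd}. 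Then step~\ref{alg2:pd} adds $2L\varepsilon_1$ to each of the $nm$ coordinates, so the shift vector has Euclidean norm $2L\varepsilon_1\sqrt{nm} \le 2nmL\varepsilon_1$. By the triangle inequality,
\[
\|\bm x^k_+ - \bm x^k_*\|_2 \le \|\bm x^k_{\mathrm{old}} - \bm x^k_*\|_2 + 2nmL\varepsilon_1 \le \varepsilon_1 + 2nmL\varepsilon_1 \le 3nmL\varepsilon_1,
\]
using $nmL \ge 1$ (which holds since $n,m \ge 1$ and $L$ is at least a constant, or can be assumed WLOG) to absorb the stray $\varepsilon_1$.

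For the disutility bound, I would invoke the upper-Lipschitz half of Assumption~\ref{as:lip}: each $D_i$ satisfies $|D_i(\bm x) - D_i(\bm y)| \le L\|\bm x - \bm y\|_2$. Applying this coordinate-wise to $\bm x^k_+$ and $\bm x^k_*$ (noting each agent's bundle change is bounded by the full allocation change), we get $|(\bm d^k_+)_i - (\bm d^k_*)_i| = |D_i((\bm x^k_+)_i) - D_i((\bm x^k_*)_i)| \le L\|(\bm x^k_+)_i - (\bm x^k_*)_i\|_2$ for each $i$. Summing the squares over $i$ and using $\sum_i \|(\bm x^k_+)_i - (\bm x^k_*)_i\|_2^2 = \|\bm x^k_+ - \bm x^k_*\|_2^2$ gives $\|\bm d^k_+ - \bm d^k_*\|_2 \le L\|\bm x^k_+ - \bm x^k_*\|_2 \le 3nmL^2\varepsilon_1$, as claimed.

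The only mild subtlety — and the step I would be most careful about — is making sure the Lipschitz constant is applied correctly across the block structure: $D_i$ depends only on agent $i$'s bundle, so the correct bound is $|(\bm d^k_+)_i - (\bm d^k_*)_i| \le L\|(\bm x^k_+)_i - (\bm x^k_*)_i\|_2$ rather than $L\|\bm x^k_+ - \bm x^k_*\|_2$; the sharper per-agent bound is exactly what lets the sum-of-squares collapse cleanly without picking up an extra factor of $\sqrt{n}$. I would also note that Observation~\ref{pareto-domination} has already established $(\bm x^k_+)_{ij} \ge (\bm x^k_*)_{ij}$ and $\bm d^k_+ \ge \bm d^k_*$ coordinate-wise, so all the quantities inside the norms are genuinely small and no cancellation or sign issue arises. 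The rest is the routine triangle-inequality and norm bookkeeping sketched above.
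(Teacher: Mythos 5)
Your proof is correct and follows essentially the same route as the paper's: triangle inequality combining the $\varepsilon_1$ nearest-point error with the $2L\varepsilon_1$ per-coordinate shift from step~\ref{alg2:pd}, then the per-agent Lipschitz bound from Assumption~\ref{as:lip} summed in squares to transfer the allocation bound to disutility space. Your explicit remarks about the block structure of the Lipschitz application and the $nmL\ge 1$ absorption are just slightly more careful versions of steps the paper performs implicitly.
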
 

\begin{proof}
	After step~\ref{alg2:nearestpoint}, we have $||\bm x^k_+ - \bm x^k_*||_2 \leq \varepsilon_1$. 
	Then, in step~\ref{alg2:pd}, we increased each $(\bm x^k_+)_{ij}$ by $2L \varepsilon_1$. 
	Therefore, after step~\ref{alg2:pd}, we have $||\bm x^k_+ - \bm x^k_*||_2 \leq \varepsilon_1 + 2nmL \varepsilon_1 \leq 3nmL \varepsilon_1$.	
	Observe that,
	\begin{align*}
	 ||\bm d^k_{+} - \bm d^k_{*}||^2_2 &= \sum_{i \in [n]} |D_i ((\bm x^k_+)_i) - D_{i}((\bm x^k_*)_i)|^2\\
   \text{(by Assumption~\ref{as:lip})}\qquad &\leq  L^2 \cdot \sum_{i \in [n]} |(\bm x^k_+)_i - (\bm x^k_*)_i|^2 
	                                   \ \leq\  L^2 \cdot ||\bm x^k_+ -\bm x^k_*||^2_2
	                                   \ \leq\  L^2 \cdot (3nmL \varepsilon)^2.
	\end{align*} 
    This implies that 	$||\bm d^k_{+} - \bm d^k_{*}||_2 \leq 3nmL^2 \varepsilon_1$.
\end{proof}

We can now prove the lemma.

\begin{proof}[Proof of Lemma~\ref{supporting-hyperplane}.]
	Recall, by Lemma~\ref{lem:algo-steps-linear}, the direction $\bm d^k_*-\bm d^k$ is normal to a supporting hyperplane for $\mathcal D+\mathbb R^n_{\geq 0}$ at $\bm d^k_*$.
	Thus, for any $\bm y \in \mathcal D + \mathbb{R}^n_{\geq 0}$, 
	we can conclude from Observation~\ref{pareto-domination} above, that
  \begin{align}
  \ip{\bm d^k_+ - \bm d^k, \bm y} &\geq \ip{\bm d^k_* - \bm d^k, \bm y} 
                                  \geq \ip{\bm d^k_* - \bm d^k, \bm d^k_*} \label{eq}
  \end{align}
    Now, we wish to bound $\ip{\bm d^k_* - \bm d^k, \bm d^k_*}$ so that we can control the hyperplane error.
    We have
	\begin{align*}
	\ip{\bm d^k_+ - \bm d^k, \bm d^k_+} - \ip{\bm d^k_* - \bm d^k, \bm d^k_*}  &=
	\ip{\bm d^k_+ - \bm d^k, \bm d^k_+} - \ip{\bm d^k_+ - \bm d^k, \bm d^k_*} + \ip{\bm d^k_+ - \bm d^k, \bm d^k_*} - \ip{\bm d^k_* - \bm d^k, \bm d^k_*} \\
	&= \ip{\bm d^k_+ - \bm d^k, \bm d^k_+ - \bm d^k_*} + \ip{\bm d^k_+ - \bm d^k_*, \bm d^k_*}\\
	&\leq \Vert \bm d^k_+ - \bm d^k_* \Vert_2\cdot\left( 
		\Vert \bm d^k_+ - \bm d^k \Vert_2	+ \Vert \bm d^k_*\Vert_2
	\right)\\
	&\leq 3nmL\varepsilon_1\cdot\left(
		(nL+3nmL\varepsilon_1) + 0 + nL 
	\right)\leq 9n^2mL^3\varepsilon_1\ ,
\end{align*}

  where the last row follows from the fact that $||\bm d^k_* - \bm d^k_+||_2 \leq 3nmL^2 \varepsilon_1$ (Observation~\ref{appx-nearest-point-dist}) and that $|| \bm d^k_*||_2 \leq nL$. 
Thus,
	\begin{align*}
		\ip{\bm d^k_+ - \bm d^k,\bm d^k_+}&\leq \ip{\bm d^k_*-\bm d^k,\bm d^k_*}+9n^2mL^3\epsilon_1\\
			&\leq \ip{\bm d^k_+ - \bm d^k,\bm y}+9n^2mL^3\epsilon_1\quad \forall\ \bm y\in \mathcal D+\mathbb R^n_{\geq 0}
			&\text{by~\eqref{eq}}
	\end{align*}

Since $\bm a^k = (n / \ip{\bm d^k_+ - \bm d^k, \bm d^k_+}) \cdot (\bm d^k_+ - \bm d^k)$, we have for for all $y \in \mathcal D  + \mathbb{R}^n_{\geq 0}$,
  \begin{align}
  \label{apprx-hyperplane}
  \ip{\bm a^k, \bm y} &\geq \ip{\bm a^k, \bm d^k_+} - \frac{n } {\ip{\bm d^k_+ - \bm d^k, \bm d^k_+}} \cdot 9n^2mL^3\varepsilon_1                   
  \end{align}
  Observe that since we are step~\ref{alg2:definehyperplane} of Algorithm~\ref{alg2:KKT}, our algorithm did not terminate in step~\ref{alg2:returnclosepoint}, and we have $||\bm d^k_+ - \bm d^k||_2 > \varepsilon_2$. Again, since $(\bm d^k_+)_i \geq (\bm d^k_*)_i \geq (\bm d^k)_i$ for all $i \in [n]$ (by Observation~\ref{pareto-domination}), there exists an $i' \in [n]$, such that $(\bm d^k_+)_i \geq (\bm d^k_+)_i -(\bm d^k)_i \geq \varepsilon_2/n$, implying that $\ip{\bm d^k_+ -\bm d^k, \bm d^k_+} \geq \varepsilon^2/n^2$. Substituting this lower bound in~\ref{apprx-hyperplane}, we have,
  \begin{align*}
  \ip{\bm a^k, \bm y} &\geq \ip{\bm a^k, \bm d^k_+} -  9n^5mL^3  (\varepsilon_1/\varepsilon_2^2). \qedhere                  
  \end{align*} 
\end{proof}

With the bound of Lemma~\ref{supporting-hyperplane}, we can now show that if the algorithm stops on Step~\ref{alg2:stopping-hyp}, \textit{i.e.}\ if $\bm d^{k+1}$ is too close to $\bm d^k_+$, then we have an approximate KKT point.

\begin{lemma}
	\label{appx-KKT-point-1}
	Let $(\bm a^k, \bm x^k_{+}, \bm d^k_{+})$ be the point returned by Algorithm~\ref{alg2:KKT} in Step~\ref{alg2:stopping-hyp}. Then $(\bm a^k, \bm x^k_{+}, \bm d^k_{+})$ is a $(\lambda, \gamma, \delta)$-KKT point with $\lambda = 1 + n^2L \varepsilon_1$, $\gamma = 1+\varepsilon_3 $ and $\delta = 9n^5mL^3  \frac{\varepsilon_1}{\varepsilon_2^2}$. 
\end{lemma}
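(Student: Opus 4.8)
The plan is to verify, one at a time, the three requirements of Definition~\ref{def:apx-KKT} for the triple $(\bm a^k, \bm d^k_+, \bm x^k_+)$ output on line~\ref{alg2:stopping-hyp}. The governing observation is that reaching line~\ref{alg2:stopping-hyp} in iteration $k$ means the algorithm ran lines~\ref{alg2:nearestpoint}, \ref{alg2:pd} and~\ref{alg2:definehyperplane} without returning on line~\ref{alg2:returnclosepoint}; hence Observations~\ref{cl:supphypcl}, \ref{pareto-domination} and~\ref{appx-nearest-point-dist}, and Lemma~\ref{supporting-hyperplane}, are all in force. I will also use that Assumption~\ref{as:lip} forces $L \ge 1$ (a unit-direction perturbation gives $\delta/L \le |D_i(\bm x + \delta \bm e_j) - D_i(\bm x)| \le L\delta$).

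First I would dispatch condition~(2), the multiplicative gradient bound. The guard that fired is $\logd(\bm d^{k+1}, \bm d^k_+) < \varepsilon_3$, and line~\ref{alg2:hyperplane-move} set $\bm d^{k+1} = \bm 1 / \bm a^k$; expanding $\logd$ gives $\sum_i \big|\log(a^k_i (\bm d^k_+)_i)\big| < \varepsilon_3$, so each summand is below $\varepsilon_3$, which — exactly as in the proof of Lemma~\ref{lem:kkt-lin} — yields $(1+\varepsilon_3)^{-1} \le a^k_i (\bm d^k_+)_i \le 1+\varepsilon_3$ for every $i$. Since $\bm d^k_+ = \overrightarrow{\d}(\bm x^k_+)$ by construction on line~\ref{alg2:pd}, condition~(2) holds with $\gamma = 1+\varepsilon_3$. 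Condition~(3) I would then read off directly from Lemma~\ref{supporting-hyperplane}: after line~\ref{alg2:definehyperplane} we have $\ip{\bm a^k, \bm y} \ge \ip{\bm a^k, \bm d^k_+} - \delta$ for all $\bm y \in \mathcal D + \mathbb R^n_{\geq 0}$ with $\delta = 9 n^5 m L^3 \varepsilon_1 / \varepsilon_2^2$, and the rescaling on line~\ref{alg2:definehyperplane} enforces $\ip{\bm a^k, \bm d^k_+} = n$, so indeed $\mathcal D + \mathbb R^n_{\geq 0} \subseteq \{ \bm y \mid \ip{\bm a^k, \bm y} \ge n - \delta \}$.

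The only requirement needing its own (short) argument is condition~(1). Non-negativity of $\bm x^k_+$ is immediate because lines~\ref{alg2:nearestpoint}--\ref{alg2:pd} only add non-negative quantities. For the near-feasibility bounds I would take a preimage $\bm x^k_* \in \mathcal F$ of the true nearest point, which exists by Observation~\ref{cl:supphypcl}, so $\sum_i (\bm x^k_*)_{ij} = 1$ for each $j$; then Observation~\ref{pareto-domination} gives $(\bm x^k_+)_{ij} \ge (\bm x^k_*)_{ij} \ge 0$, whence $\sum_i (\bm x^k_+)_{ij} \ge 1 \ge \lambda^{-1}$, and the $\varepsilon_1$-accuracy guaranteed by $\textsc{nearest-point}$ on line~\ref{alg2:nearestpoint} together with the coordinate-wise increase of $2L\varepsilon_1$ on line~\ref{alg2:pd} gives $(\bm x^k_+)_{ij} \le (\bm x^k_*)_{ij} + (1 + 2L)\varepsilon_1 \le (\bm x^k_*)_{ij} + 3L\varepsilon_1$ (using $L \ge 1$), hence $\sum_i (\bm x^k_+)_{ij} \le 1 + 3nL\varepsilon_1 \le 1 + n^2 L \varepsilon_1 = \lambda$ (for $n \ge 3$; the small cases are degenerate). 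Assembling the three parts gives the claimed $(\lambda, \gamma, \delta)$-KKT point.

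I do not expect a real obstacle in this particular lemma: everything reduces to bookkeeping plus a direct appeal to Lemma~\ref{supporting-hyperplane}. The genuinely delicate part of the general case lives in the \emph{other} return branch, line~\ref{alg2:returnclosepoint}, where one must rule out excessive over-allocation in the returned preimage even though $\bm a'$ is close to the exact supporting-hyperplane normal $\bm a^*$ — that is the content of Lemmas~\ref{appx-KKT-point-2-lambda-under} and~\ref{appx-KKT-point-2-lambda-over}, not of the present statement.
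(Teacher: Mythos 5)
Your proposal is correct and follows essentially the same route as the paper's proof: condition (2) is read off the stopping test $\logd(\bm d^{k+1},\bm d^k_+)<\varepsilon_3$ with $\bm d^{k+1}=\bm 1/\bm a^k$, condition (3) comes directly from Lemma~\ref{supporting-hyperplane} with the rescaling $\ip{\bm a^k,\bm d^k_+}=n$, and condition (1) from $\bm x^k_*\in\mathcal F$ (Observation~\ref{cl:supphypcl}) together with the coordinate-wise bounds $(\bm x^k_*)_{ij}\le(\bm x^k_+)_{ij}\le(\bm x^k_*)_{ij}+3L\varepsilon_1$. The only cosmetic quibble is that non-negativity of $\bm x^k_+$ is not because the steps "only add non-negative quantities" (the approximate nearest point could a priori have slightly negative coordinates) but because of the domination $(\bm x^k_+)_{ij}\ge(\bm x^k_*)_{ij}\ge0$ from Observation~\ref{pareto-domination}, which you do also invoke.
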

	\begin{proof}
	We first show that for each $i \in [n]$, we have $ \gamma^{-1} \leq (\bm a^k)_i \cdot (\bm d^k_{+})_i \leq \gamma$. 
	Since our algorithm returns this point in step~\ref{alg2:stopping-hyp}, 
	we have 
	$\logd (\bm d^k_{+}, \bm 1 /\bm a^{k}) \leq \varepsilon_3$. 
	This implies that for each $i \in [n]$, we have $(1+\varepsilon_3)^{-1} \leq (\bm d^k_{+})_i \cdot (\bm a^k)_i \leq (1 + \varepsilon_3)$. 
	Also, by Lemma~\ref{supporting-hyperplane}, we have $\ip{\bm a^k, \bm y} \geq \ip{\bm a^k, \bm d^k_+} -\delta = n -\delta $ for all $\bm y \in \mathcal D + \mathbb{R}^n_{\geq 0}$ and $\delta = 9n^5mL^3 \varepsilon_1 / (\varepsilon_2^2)$.   
	
	It remains to show that $(\bm x^k_+)_{ij} \geq 0$ for all $i \in [n]$, $j \in [m]$ and  $\lambda^{-1} \leq \sum_{i \in [n]} (\bm x^k_+)_{ij} \leq \lambda$. 
	We have that after step~\ref{alg2:pd} of Algorithm~\ref{alg2:KKT}, $(\bm x^k_+)_{ij} \geq (\bm x^k_*)_{ij} \geq 0$ for all $i,j$.
	This is because by construction, after step~\ref{alg2:nearestpoint} of Algorithm~\ref{alg2:KKT}, 
	$(\bm x^k_+)_{ij} \geq (\bm x^k_*)_{ij} - \varepsilon_1$. 
	Furthermore after step~\ref{alg2:pd}, $(\bm x^k_+)_{ij}$ is increased by an additive factor of $2L\varepsilon_1$ and thus it becomes larger than $(\bm x^k_*)_{ij}$. 
	Since $\bm d^k_* \in \mathcal D + \mathbb{R}^n_{\geq 0}$ and $\bm x^k_* \in \mathcal F'$, we have  
	$(\bm x^k_{*})_{ij}  \geq 0$, further implying that $(\bm x^k_+)_{ij} \geq 0$ as well.	
	
	Now, as our algorithm is returning in step~\ref{alg2:stopping-hyp} and not step~\ref{alg2:returnclosepoint}, we do not change $\bm x^k_+$ after step~\ref{alg2:pd}. Therefore, we have $(\bm x^k_+)_{ij} \geq 0$ for all $i \in [n]$ and $j \in [m]$ after step~\ref{alg2:stopping-hyp} also.

	We now show $\lambda^{-1} \leq \sum_{i \in [n]} (\bm x^k_+)_{ij} \leq \lambda$. To this end, first observe that after step~\ref{alg2:pd},   $(\bm x^k_*)_{ij} \leq (\bm x^k_+)_{ij} \leq (\bm x^k_*)_{ij} + \varepsilon_1 + 2L\varepsilon_1 \leq (\bm x^k_*)_{ij} + 3L\varepsilon_1$ for all $i \in [n]$ and $j \in [m]$. Furthermore, by Observation~\ref{cl:supphypcl}, we have that $\bm d^k_* \in \mathcal D$ and $\bm x^k_* \in \mathcal F$. Therefore, we have $\sum_{i \in [n]} (\bm x^k_*)_{ij} = 1$ for all $j \in [m]$. Since, $(\bm x^k_*)_{ij} \leq (\bm x^k_+)_{ij} \leq (\bm x^k_*)_{ij} + 3L\varepsilon_1$, we have $(\lambda)^{-1} \leq 1 - 3nL\varepsilon_1 \leq \sum_{i \in [n]}(\bm x^k_+)_{ij} \leq 1 + 3nL\varepsilon_1 \leq \lambda$.
\end{proof}

\subsubsection{Stopping on Line~\ref{alg2:returnclosepoint} of Algorithm~\ref{alg2:KKT}}
We have shown that if Algorithm~\ref{alg2:KKT} stops in step~\ref{alg2:stopping-hyp}, then we have an approximate KKT point.
It remains to show that this holds if we stop in step~\ref{alg2:returnclosepoint}. We start by addressing a small subtlety. The normal vector  $\bm a^{k-1}$ is only well defined from $k \geq 2$. Therefore, we need to show that Algorithm~\ref{alg2:KKT} never returns at Line~\ref{alg2:returnclosepoint} at $k=1$ (the first iteration). This follows from the fact that the distance between $\bm d^0$ and any point in $\mathcal D + \mathbb{R}^n_{\geq 0} \gg \varepsilon_2$. To see this, note that for any disutility vector $\bm d \in \mathcal D + \mathbb{R}^n_{\geq 0}$, there is one agent who gets at least $1/n$ fraction of some chore and as a result his disutility will be at least $1/ nL$ (by Assumption~\ref{as:lip}). However, the disutility of any agent in $\bm d^0$ is at most $L \cdot ||(\bm x^0)_i - \bm 0||_2 \leq 1/ 2nL$ (by Assumption~\ref{as:lip} and Algorithm~\ref{alg:initialize}). Therefore $|| \bm d - \bm d^0||_2 \geq || \bm d - \bm d^0||_{\infty} \geq 1 / 2nL \gg \varepsilon_2$. We now focus on the main proof.

Note that this stopping condition is an additive error, and we will need to be more careful. The bulk of the proof will lie in showing that the allocation $\bm x^k_+$ is neither an over-allocation nor an under-allocation of any of the chores. The remaining conditions will be relatively straightforward, as they are a consequence of Lemma~\ref{supporting-hyperplane} on the previous iteration.

\begin{lemma}
	\label{appx-KKT-point-2-gammadelta}
	Let $(\bm a^{k-1}, \bm x^k_{+}, \bm d^k_{+})$ be the point returned by Algorithm~\ref{alg2:KKT} in Step~\ref{alg2:returnclosepoint}. Then $(\bm a^{k-1}, \bm x^k_{+}, \bm d^k_{+})$ is a $(\lambda, \gamma, \delta)$-approximate KKT point with $\gamma = 1+\varepsilon_3 $, $\delta = 9n^5mL^3  \frac{\varepsilon_1}{\varepsilon_2^2}$, and some $\lambda$.
\end{lemma}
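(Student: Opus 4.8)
The plan is to obtain conditions (2) and (3) of Definition~\ref{def:apx-KKT} — the $\gamma$- and $\delta$-parts — by reading off the state of the algorithm from the \emph{previous} iteration, and to observe that condition (1) holds for \emph{some} $\lambda$ simply by non-negativity and finiteness, deferring the sharp value of $\lambda$ to Lemmas~\ref{appx-KKT-point-2-lambda-under} and~\ref{appx-KKT-point-2-lambda-over}. First I would record what the algorithm knows on entering iteration $k$. Since the algorithm did not halt in iteration $k-1$, it executed Steps~\ref{alg2:definehyperplane} and~\ref{alg2:hyperplane-move} there (in particular $k\ge 2$, as argued above, so $\bm a^{k-1}$ is well defined): Step~\ref{alg2:definehyperplane} rescaled $\bm a^{k-1}$ so that $\ip{\bm a^{k-1},\bm d^{k-1}_+}=n$, and Step~\ref{alg2:hyperplane-move} set $\bm d^k = \bm 1/\bm a^{k-1}$, whence $(\bm a^{k-1})_i\,(\bm d^k)_i = 1$ for every $i$ and $\ip{\bm a^{k-1},\bm d^k}=n$. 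Moreover, Lemma~\ref{supporting-hyperplane} applied in iteration $k-1$ gives
\[
\ip{\bm a^{k-1}, \bm y} \;\ge\; \ip{\bm a^{k-1}, \bm d^{k-1}_+} - \delta \;=\; n - \delta
\qquad\text{for all } \bm y \in \mathcal D + \mathbb R^n_{\ge 0},
\]
with $\delta = 9n^5 m L^3\,\varepsilon_1/\varepsilon_2^2$.

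Next I would analyze what Step~\ref{alg2:ac} (\textsc{adjust-coordinates}) does to the returned triple. It rescales agent $i$'s bundle by the positive factor $(\bm d^k)_i/(\bm d^k_+)_i$; by $1$-homogeneity of $\d_i$ (Equation~\eqref{eq:intro-hom}) this rescales $\d_i$ of that bundle by the same factor, and since immediately before Step~\ref{alg2:ac} we had $\d_i((\bm x^k_+)_i) = (\bm d^k_+)_i$ (Step~\ref{alg2:pd}), the adjusted allocation — which the algorithm then returns as $\bm x^k_+$ together with $\bm d^k_+ := \vd(\bm x^k_+)$ — satisfies $\vd(\bm x^k_+) = \bm d^k$; i.e. the returned disutility profile equals $\bm d^k$ exactly (the harmless polynomial-bit rounding does not affect this). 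Each coordinate $(\bm x^k_+)_{ij}$ remains non-negative, being a product of the non-negative pre-adjustment value (Observation~\ref{pareto-domination}) with a positive scalar. Conditions (2) and (3) then follow immediately: $(\bm a^{k-1})_i (\bm d^k_+)_i = (\bm a^{k-1})_i (\bm d^k)_i = 1 \in [(1+\varepsilon_3)^{-1},\,1+\varepsilon_3]$, so (2) holds with $\gamma = 1+\varepsilon_3$ (indeed with equality in the middle); and $\ip{\bm a^{k-1},\bm d^k_+} = \ip{\bm a^{k-1},\bm d^k}=n$ combined with the displayed inequality gives $\ip{\bm a^{k-1},\bm y} \ge \ip{\bm a^{k-1},\bm d^k_+} - \delta$ for all $\bm y \in \mathcal D + \mathbb R^n_{\ge 0}$, which is exactly (3) with the stated $\delta$.

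For condition (1) I would only claim the weak statement the lemma asks for: $(\bm x^k_+)_{ij}\ge 0$ as just noted, and for each chore $j$ the sum $\sum_i (\bm x^k_+)_{ij}$ is a finite, strictly positive number. Indeed, before Step~\ref{alg2:ac} we have $\sum_i (\bm x^k_+)_{ij}\ge 1$ since $\bm x^k_+\in\mathcal F'$ (Observation~\ref{pareto-domination}), and \textsc{adjust-coordinates} multiplies the $i$-th summand by the positive finite factor $(\bm d^k)_i/(\bm d^k_+)_i$ — both $\bm d^k$ and $\bm d^k_+$ being strictly positive and finite. Hence there exists some $\lambda\ge 1$ with $\lambda^{-1}\le \sum_i (\bm x^k_+)_{ij}\le \lambda$ for all $j$, and combining this with the above yields that $(\bm a^{k-1},\bm x^k_+,\bm d^k_+)$ is a $(\lambda,\gamma,\delta)$-KKT point with $\gamma = 1+\varepsilon_3$, $\delta = 9n^5 m L^3\,\varepsilon_1/\varepsilon_2^2$, and some $\lambda$.

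The main obstacle here is not any computation — there is essentially none — but the cross-iteration bookkeeping: keeping straight that the returned normal is $\bm a^{k-1}$ rather than $\bm a^k$, that the invariants $(\bm a^{k-1})_i(\bm d^k)_i = 1$ and the approximate-support inequality of Lemma~\ref{supporting-hyperplane} were established one iteration earlier, and that \textsc{adjust-coordinates} leaves the disutility profile at exactly $\bm d^k$ by $1$-homogeneity. The genuinely hard part of the analysis — that the $\lambda$ just produced can in fact be taken close to $1$, i.e. that the returned allocation is not badly over- or under-allocated — is deliberately not touched here and is handled separately in Lemmas~\ref{appx-KKT-point-2-lambda-under} and~\ref{appx-KKT-point-2-lambda-over}.
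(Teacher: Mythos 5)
Your proposal is correct and follows essentially the same route as the paper's proof: read off the $\gamma$-condition from the identity $\bm a^{k-1}=\bm 1/\bm d^k$ established in the previous iteration, get the $\delta$-condition from Lemma~\ref{supporting-hyperplane} applied at iteration $k-1$ together with the fact that \textsc{adjust-coordinates} returns a profile equal to $\bm d^k$ (the paper's Claim~\ref{adjustcordinatesclaim}), and reduce condition (1) to non-negativity of the positively rescaled allocation. Your write-up is in fact somewhat more explicit than the paper's about the cross-iteration bookkeeping and about why a finite $\lambda$ exists at all, but the argument is the same.
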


\begin{proof}
We have that (i) $\bm d^k$ lies on the hyperplane $\ip{\bm a^{k-1}, \bm y} = n$ where $\bm a^{k-1} = \bm 1/ \bm d^k$, by construction, and (ii) $\ip{\bm a^{k-1}, \bm y} = n$ is a $ \delta$-approximate supporting hyperplane of $\mathcal D + \mathbb{R}^n_{\geq 0}$, by Lemma~\ref{supporting-hyperplane}). 	
	Therefore, to show that $(\bm x^k_+, \bm d^k_+)$ is a $(\lambda, \gamma, \delta)$-approximate KKT point with no $\lambda$ bound, it suffices to show that $(\bm x^k_+)_{ij} \geq 0$ for all $i \in [n]$ and $j \in [m]$.		

	We have already shown in the proof of Lemma~\ref{appx-KKT-point-1} that after line~\ref{alg2:pd} of Algorithm~\ref{alg2:KKT}, $(\bm x^k_+)_{ij} \geq 0$ for all $i \in [n]$ and $j \in [m]$.
	The returned allocation, after the application of \textsc{adjust-coordinates}, is a positively re-scaling of this vector.
	Thus, we satisfy the conditions of a $(\lambda,\gamma,\delta)$-KKT point assuming no $\lambda$ bounds.
\end{proof}

It remains then to show that $\sum_{i=1}^n (\bm x^k_+)_{ij}$ is not too far from 1 in any direction.
Notice that our supporting hyperplane is approximately supporting the set $\mathcal D+\mathbb R^n_{\geq 0}$.
Thus, it is relatively straightforward to argue, as we do here, that under-allocations are unlikely, 
but any arbitrary over-allocation will need to be controlled. 
We begin by ruling out under-allocations.

Recall, if the algorithm stops on line~\ref{alg2:returnclosepoint}, then it will have applied \textsc{adjust-coordinates} to the allocation. 
In what follows, let $\bm z^k_+$ denote the value of $\bm x^k_+$ {\em before} the application of \textsc{adjust-coordinates}, \textit{i.e.} the value of $\bm x^k_+$ on line~\ref{alg2:pd}, and let $\bm x^k_+$ be the returned allocation.
Formally $(\bm x^k_+,\vd(\bm x^k_+)) = \textsc{adjust-coordinates}(\bm z^k_+, \vd ( \bm z^k_+), \bm d^k)$.

\begin{claim}
		\label{adjustcordinatesclaim}
		If we stop on line~\ref{alg2:returnclosepoint}, then $\bm d^k = \bm d^k_+$.
	\end{claim}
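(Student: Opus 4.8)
The plan is that this is essentially a definition-chasing argument whose only substantive ingredient is the $1$-homogeneity of the disutility functions. First I would fix the state of the algorithm at the moment line~\ref{alg2:returnclosepoint} is reached. By the bookkeeping convention stated just before the claim, the symbol $\bm z^k_+$ denotes the allocation held in $\bm x^k_+$ right after line~\ref{alg2:pd}, so that $\vd(\bm z^k_+) = \bm d^k_+$ \emph{exactly} at that point (line~\ref{alg2:pd} assigns $\bm d^k_+ \gets \vd(\bm x^k_+)$ with no rounding), and the step on line~\ref{alg2:ac} is precisely the call $(\bm x^k_+, \vd(\bm x^k_+)) \gets \textsc{adjust-coordinates}(\bm z^k_+, \vd(\bm z^k_+), \bm d^k)$. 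I would also note that, as established just before the claim, line~\ref{alg2:returnclosepoint} is reached only for $k \geq 2$, hence $\bm d^k = \bm 1/\bm a^{k-1}$ was produced on line~\ref{alg2:hyperplane-move} in the previous iteration and in particular $(\bm d^k)_i > 0$ for every $i$ (Lemma~\ref{lem:algo-steps-linear}); moreover every coordinate of $\bm z^k_+$ is at least $2L\varepsilon_1 > 0$, so $D_i\bigl((\bm z^k_+)_i\bigr) = (\bm d^k_+)_i > 0$ and the rescaling factor $(\bm d^k)_i / (\bm d^k_+)_i$ appearing in \textsc{adjust-coordinates} is well-defined and positive.

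The main step is then to evaluate the output of \textsc{adjust-coordinates}. It returns $(\bm y, \vd(\bm y))$ with $\bm y_i = \bigl((\bm d^k)_i/(\bm d^k_+)_i\bigr)(\bm z^k_+)_i$, a positive scalar rescaling of agent $i$'s bundle. Since $D_i$ is $1$-homogeneous and $D_i\bigl((\bm z^k_+)_i\bigr) = (\bm d^k_+)_i$, for each $i \in [n]$
\[
	(\vd(\bm y))_i = D_i(\bm y_i) = \frac{(\bm d^k)_i}{(\bm d^k_+)_i}\, D_i\bigl((\bm z^k_+)_i\bigr) = \frac{(\bm d^k)_i}{(\bm d^k_+)_i}\,(\bm d^k_+)_i = (\bm d^k)_i,
\]
so $\vd(\bm y) = \bm d^k$. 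After line~\ref{alg2:ac} the names $\bm x^k_+$ and $\bm d^k_+$ refer, respectively, to the returned allocation $\bm y$ and to $\vd(\bm x^k_+) = \vd(\bm y)$; hence $\bm d^k_+ = \bm d^k$, which is the claim.

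I do not expect a genuine obstacle here: the computation is one line. The only thing requiring care is the notational overloading inside \textsc{adjust-coordinates} --- one must read its body with $\bm x^k$ standing for the first input argument ($\bm z^k_+$) and $\bm d^k_+$ for $\vd(\bm z^k_+)$ --- together with the fact that line~\ref{alg2:pd} maintains $\bm d^k_+ = \vd(\bm x^k_+)$ as an exact identity (so that the cancellation $\bigl((\bm d^k)_i/(\bm d^k_+)_i\bigr)\,D_i((\bm z^k_+)_i) = (\bm d^k)_i$ is exact, not approximate). Once these are pinned down, $1$-homogeneity closes the argument.
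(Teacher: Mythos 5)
Your proof is correct and follows exactly the paper's argument: the rescaling in \textsc{adjust-coordinates} combined with $1$-homogeneity of each $D_i$ gives $D_i(\bm y_i) = \tfrac{(\bm d^k)_i}{D_i((\bm z^k_+)_i)}\,D_i((\bm z^k_+)_i) = (\bm d^k)_i$. The additional remarks on positivity of the denominators and the notational bookkeeping are sound but not needed beyond what the paper already records.
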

    
    \begin{proof}
    	Note that $(\bm x^k_+)_i = (\bm z^k_+)_i \cdot \tfrac{(\bm d^k)_i}{D_i((\bm z^k_+)_i)}$. As the disutility functions are $1$-homogeneous, we have for each $i \in [n]$,  $(\bm d^k_+)_i = D_i((\bm x^k_+)_i) = D_i ((\bm z^k_+)_i) \cdot \tfrac{(\bm d^k)_i}{D_i((\bm z^k_+)_i)} = (\bm d^k)_i$. 
    \end{proof}

\begin{lemma}\label{appx-KKT-point-2-lambda-under}
	Let $(\bm a^{k-1}, \bm x^k_{+}, \bm d^k_{+})$ be the point returned by Algorithm~\ref{alg2:KKT} in Step~\ref{alg2:returnclosepoint}. Then we have that $\sum_{i \in [n]} (\bm x^k_+)_{ij} \geq 1 - nL \varepsilon_2$ for all items $j\in [m]$.
\end{lemma}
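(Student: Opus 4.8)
The plan is to unwind the rescaling performed by \textsc{adjust-coordinates} and then invoke the lower‑Lipschitz half of Assumption~\ref{as:lip}. Recall from the paragraph preceding the lemma that, when the algorithm returns on line~\ref{alg2:returnclosepoint}, the returned allocation is $\bm x^k_+$ with $(\bm x^k_+)_i = (\bm z^k_+)_i \cdot (\bm d^k)_i / D_i((\bm z^k_+)_i)$, where $\bm z^k_+$ denotes the value of the allocation present on line~\ref{alg2:pd}. Three facts about $\bm z^k_+$ will be used. First, by Observation~\ref{pareto-domination}, $\bm z^k_+ \in \mathcal F'$, so $\sum_{i\in[n]}(\bm z^k_+)_{ij}\geq 1$ for every chore $j$. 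Second, line~\ref{alg2:pd} adds $2L\varepsilon_1>0$ to every coordinate of $\bm z^k_+$, so each entry is strictly positive; in particular $D_i((\bm z^k_+)_i)>0$ (using $1$‑homogeneity, $D_i(\bm 0)=0$, and monotonicity), and the rescaling above is well defined. Third, since the algorithm reached line~\ref{alg2:returnclosepoint}, the test $\Vert\bm d^k_+-\bm d^k\Vert_2\leq\varepsilon_2$ passed with $\bm d^k_+$ still equal to its pre‑adjustment value $\vd(\bm z^k_+)$ (by Claim~\ref{adjustcordinatesclaim} it becomes $\bm d^k$ only after adjustment); hence $\Vert \vd(\bm z^k_+)-\bm d^k\Vert_2\leq\varepsilon_2$, and so $\sum_{i\in[n]}\bigl|(\bm d^k)_i - D_i((\bm z^k_+)_i)\bigr|\leq \sqrt{n}\,\varepsilon_2$.

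Next I would establish the key inequality $(\bm z^k_+)_{ij}\leq L\cdot D_i((\bm z^k_+)_i)$ for all $i\in[n]$, $j\in[m]$. This is exactly the place where Assumption~\ref{as:lip} is needed: apply its lower‑Lipschitz bound at the point $(\bm z^k_+)_i - (\bm z^k_+)_{ij}\bm e_j$ with step $(\bm z^k_+)_{ij}$ in direction $\bm e_j$ and use monotonicity together with $D_i\geq 0$ to get $D_i((\bm z^k_+)_i) \geq D_i((\bm z^k_+)_i) - D_i\bigl((\bm z^k_+)_i - (\bm z^k_+)_{ij}\bm e_j\bigr) \geq (\bm z^k_+)_{ij}/L$, which rearranges to the claimed bound.

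Finally, expand the rescaled coordinates and bound termwise. Writing $(\bm x^k_+)_{ij} = (\bm z^k_+)_{ij} + (\bm z^k_+)_{ij}\cdot\dfrac{(\bm d^k)_i - D_i((\bm z^k_+)_i)}{D_i((\bm z^k_+)_i)}$ and summing over $i$,
\[
\sum_{i\in[n]}(\bm x^k_+)_{ij}\ \geq\ 1\ -\ \sum_{i\in[n]}\frac{(\bm z^k_+)_{ij}}{D_i((\bm z^k_+)_i)}\,\bigl|(\bm d^k)_i - D_i((\bm z^k_+)_i)\bigr|\ \geq\ 1\ -\ L\sum_{i\in[n]}\bigl|(\bm d^k)_i - D_i((\bm z^k_+)_i)\bigr|\ \geq\ 1 - nL\varepsilon_2 ,
\]
where the first inequality uses $\sum_i(\bm z^k_+)_{ij}\geq 1$, the second uses the key inequality from the previous paragraph, and the third uses $\sum_i|a_i|\leq\sqrt{n}\,\Vert a\Vert_2\leq n\Vert a\Vert_2$ with $a = \vd(\bm z^k_+)-\bm d^k$ and $\Vert a\Vert_2\leq\varepsilon_2$. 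This gives the claim.

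The argument is essentially routine; the two points requiring care are (a) correctly identifying that the stopping test is evaluated with the \emph{pre}‑\textsc{adjust-coordinates} disutility vector $\vd(\bm z^k_+)$ rather than $\bm d^k$, and (b) the use of the lower‑Lipschitz bound to replace the ratio $(\bm z^k_+)_{ij}/D_i((\bm z^k_+)_i)$ by the constant $L$ — which is the only genuinely nontrivial ingredient, and the reason the hypothesis on the derivatives is invoked here.
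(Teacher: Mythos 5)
Your proof is correct, and it reaches the same bound from the same three ingredients the paper uses (Observation~\ref{pareto-domination} for $\sum_i(\bm z^k_+)_{ij}\geq 1$, the lower-Lipschitz half of Assumption~\ref{as:lip}, and the $\varepsilon_2$ stopping test evaluated on the pre-adjustment disutility vector), but the intermediate step is genuinely different. The paper proves a per-coordinate \emph{additive} bound $(\bm x^k_+)_{ij}\geq(\bm z^k_+)_{ij}-L\varepsilon_2$ by contradiction: since the rescaling in \textsc{adjust-coordinates} only decreases consumption (as $(\bm d^k)_i\leq D_i((\bm z^k_+)_i)$), a drop of more than $L\varepsilon_2$ in a single coordinate would force agent $i$'s disutility to drop by more than $\varepsilon_2$, contradicting $\Vert\vd(\bm z^k_+)-\bm d^k\Vert_2\leq\varepsilon_2$; summing over $i$ then costs a full factor $n$. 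You instead use Assumption~\ref{as:lip} once, at the point $(\bm z^k_+)_i-(\bm z^k_+)_{ij}\bm e_j$, to get the global ratio bound $(\bm z^k_+)_{ij}/D_i((\bm z^k_+)_i)\leq L$, and then aggregate the exact multiplicative identity for the rescaling via Cauchy--Schwarz. This avoids the contradiction argument and actually yields the slightly sharper constant $1-\sqrt{n}L\varepsilon_2$ before you round up to $1-nL\varepsilon_2$; the paper's version, in exchange, produces a per-coordinate statement about how little any single $x_{ij}$ can move, which is occasionally the more reusable form. Your two flagged points of care — that the stopping test is checked against $\vd(\bm z^k_+)$ rather than $\bm d^k$ (Claim~\ref{adjustcordinatesclaim} only applies after the adjustment), and that line~\ref{alg2:pd} makes every coordinate strictly positive so the rescaling is well defined — are both correctly identified and correctly handled.
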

\begin{proof}
We begin by showing that for all $i \in [n]$ and $j \in [m]$, $(\bm x^k_+)_{ij} \geq (\bm z^k_+)_{ij} - L\varepsilon_2$.
Note that by Observation~\ref{pareto-domination}, 
we have $\d_i((\bm z^k_+)_i) \geq (\bm d^k_*)_i \geq (\bm d^k)_i = \d_i((\bm x^k_+)_i)$ for all $i \in [n]$.
Therefore, $\tfrac{(\bm d^k)_i}{\d_i((\bm z^k_+)_i)} \leq 1$ for all $i \in [n]$, 
implying that no agent increases their consumption of any chore, \textit{i.e.}, 
$(\bm x^k_+)_{ij} \leq (\bm z^k_+)_{ij}$ for all $i \in [n]$ and $j \in [m]$. 
Now, assume that there exist an $i \in [n]$ and $j \in [m]$ such that $(\bm x^k_+)_{ij} < (\bm z^k_+)_{ij} - L\varepsilon_2$.  
Since the disutility functions have lower-bounded partial derivatives (Assumption~\ref{as:lip}), 
and agent $i$ does not increase consumption of any other chore from $(\bm z^k_+)_i$ to $(\bm x^k_+)_i$, 
we have $ \d_i((\bm x^k_+)_i) < \d_i((\bm z^k_+)_i) - L\varepsilon_2 / L$, or equivalently $(\bm d^k)_i < \d_i((\bm z^k_+)_i) - \varepsilon_2$, contradicting the fact that $|| \bm d^k - \vd ( \bm z^k_+) ||_2 \leq \varepsilon_2$.  Therefore, $(\bm x^k_+)_{ij} \geq (\bm z^k_+)_{ij} - L\varepsilon_2$.   

Now,
by Observation~\ref{pareto-domination}, we have that $\bm z^k_+ \in \mathcal{F'}$, implying that $\sum_{i \in [n]} (\bm z^k_+)_{ij} \geq 1$ for all $j \in [m]$. 
Therefore, if we have $(\bm x^k_+)_{ij} \geq (\bm z^k_+)_{ij} - L\varepsilon_2$, then we have $\sum_{i \in [n]} (\bm x^k_+)_{ij} \geq 1 -nL\varepsilon_2$ for all $j \in [m]$. 
\end{proof}

	\paragraph{No chores are significantly over-allocated.} We now show $\sum_{i \in [n]} (\bm x^k_+)_{ij} \leq 1 + \beta$ for all $j \in [m]$, where $\beta = 2mn^2L^3(\alpha + \varepsilon_2)$ and $\alpha = 48n^7L^5 \varepsilon_1/ \varepsilon_2^3$. The reason behind the exact choice of the upper bound will become explicit by the end of Claims~\ref{disutilitydecrease} and~\ref{contradictiony}. 
	We start by making some observations on $\bm d^k$ and $\bm a^{k-1}$. 
	Recall, $\bm d^k$ lies on the hyperplane $\ip{\bm a^{k-1}, \bm y} = n$, where 
	\[\bm a^{k-1} = \frac{n}{\ip{\bm d^{k-1}_+ - \bm d^{k-1}, \bm d^{k-1}_+}} (\bm d^{k-1}_+ - \bm d^{k-1})\ .\] 
	We start by showing that there is at least one coordinate where $\bm a^{k-1}$ is not small, w.r.t~$\varepsilon_2$.
	
	\begin{claim}
		\label{lower-boundona}
		There exists an $i_0 \in [n]$ such that $(\bm a^{k-1})_{i_0} \geq \varepsilon_2 / (4n^2L^2)$.
	\end{claim}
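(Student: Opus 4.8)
The claim asks us to show that the normal vector $\bm a^{k-1}$ (constructed in the previous iteration $k-1$) has at least one coordinate bounded below by $\varepsilon_2 / (4n^2L^2)$. The plan is to exploit the normalization $\ip{\bm a^{k-1}, \bm d^{k-1}_+} = n$ together with an upper bound on $\Vert \bm d^{k-1}_+\Vert$. Indeed, since $\bm a^{k-1}$ and $\bm d^{k-1}_+$ are non-negative vectors and $\sum_i (\bm a^{k-1})_i (\bm d^{k-1}_+)_i = n$, there must be some coordinate $i_0$ with $(\bm a^{k-1})_{i_0} (\bm d^{k-1}_+)_{i_0} \geq 1$, hence $(\bm a^{k-1})_{i_0} \geq 1 / (\bm d^{k-1}_+)_{i_0}$. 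It therefore suffices to upper bound each coordinate $(\bm d^{k-1}_+)_{i_0}$ by $4n^2L^2 / \varepsilon_2$ — actually it suffices to bound $\Vert \bm d^{k-1}_+\Vert_\infty$, which is even easier.

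First I would establish that $\Vert \bm d^{k-1}_+ \Vert_2$ is $O(nL)$ using the machinery already available: by Observation~\ref{cl:supphypcl} we have $\Vert \bm d^{k-1}_* \Vert_2 \leq nL$, and by Observation~\ref{appx-nearest-point-dist} we have $\Vert \bm d^{k-1}_+ - \bm d^{k-1}_* \Vert_2 \leq 3nmL^2\varepsilon_1$, so by the triangle inequality $\Vert \bm d^{k-1}_+\Vert_2 \leq nL + 3nmL^2\varepsilon_1 \leq 2nL$, where the last step uses the assumed relation $\varepsilon_2 \gg n^3m^3L^3\varepsilon_1$ (so certainly $3nmL^2\varepsilon_1$ is negligible compared to $nL$; in fact any absolute constant bound suffices here). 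In particular $(\bm d^{k-1}_+)_{i_0} \leq 2nL$ for every coordinate. Combining with the previous paragraph, $(\bm a^{k-1})_{i_0} \geq 1/(2nL)$, which is far larger than the claimed $\varepsilon_2/(4n^2L^2)$ as long as $\varepsilon_2 \leq 2nL$ — and this holds since $\varepsilon_2$ is an inverse-polynomial error parameter while $nL \geq 1$. So the claim goes through comfortably; the factor $\varepsilon_2/(4n^2L^2)$ in the statement is a very loose lower bound, presumably chosen for convenience in the subsequent over-allocation analysis.

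The main thing to be careful about is making sure we invoke Observations~\ref{cl:supphypcl} and~\ref{appx-nearest-point-dist} at iteration $k-1$ rather than $k$ — these observations are stated "in each iteration," so this is legitimate, but one should note that $\bm a^{k-1}$ being well-defined presupposes $k \geq 2$, which is exactly the regime we are in (the paragraph immediately preceding the claim established that the algorithm never returns at Line~\ref{alg2:returnclosepoint} when $k=1$). A secondary subtlety: the rounding step in \textsc{nearest-point} rounds coordinates to multiples of $\varepsilon_1/2^{\textup{poly}(n,m)}$, which changes $\bm d^{k-1}_+$ by a negligible amount and does not affect any of these bounds. No real obstacle is expected here; this is a short bookkeeping argument, and I would write it as: pick $i_0$ maximizing $(\bm a^{k-1})_{i_0}(\bm d^{k-1}_+)_{i_0}$, note this product is at least $1$ by the normalization and $n$ coordinates, then divide by the $O(nL)$ bound on $(\bm d^{k-1}_+)_{i_0}$.
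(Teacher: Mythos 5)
Your proof is correct, but it takes a genuinely different route from the paper's. The paper exploits the fact that the algorithm did not return on Line~\ref{alg2:returnclosepoint} in iteration $k-1$, so $\Vert\bm d^{k-1}_+-\bm d^{k-1}\Vert_2>\varepsilon_2$; hence some coordinate of the (componentwise non-negative, by Observation~\ref{pareto-domination}) difference $\bm d^{k-1}_+-\bm d^{k-1}$ is at least $\varepsilon_2/n$, and combining this with the bound $\ip{\bm d^{k-1}_+-\bm d^{k-1},\bm d^{k-1}_+}\le\Vert\bm d^{k-1}_+\Vert_2^2\le(2nL)^2$ on the normalizing denominator gives exactly $(\bm a^{k-1})_i\ge \tfrac{n}{4n^2L^2}\cdot\tfrac{\varepsilon_2}{n}$. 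You instead average the normalization $\ip{\bm a^{k-1},\bm d^{k-1}_+}=n$ over its $n$ non-negative terms and divide by $\Vert\bm d^{k-1}_+\Vert_\infty\le 2nL$, obtaining the stronger, $\varepsilon_2$-free bound $(\bm a^{k-1})_{i_0}\ge 1/(2nL)$. Both arguments rest on the same ingredients --- non-negativity of $\bm a^{k-1}$ via Observation~\ref{pareto-domination}, and $\Vert\bm d^{k-1}_+\Vert_2\le 2nL$ via Observations~\ref{cl:supphypcl} and~\ref{appx-nearest-point-dist} applied at iteration $k-1$, which you correctly note is legitimate since the algorithm reached step~\ref{alg2:definehyperplane} there --- and your stronger bound only helps in the downstream use (Claim~\ref{contradictiony}), where a larger $(\bm a^{k-1})_{i_0}$ makes the required inequality easier. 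Your argument is shorter and cleaner; the paper's has the minor advantage of producing the exact constant in the statement without the final comparison $\varepsilon_2\le 2nL$ (which does hold, since $L\ge 1$ and $\varepsilon_2$ is chosen inverse-polynomially small).
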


    \begin{proof} 
     In the $(k-1)$-st iteration, 
     Algorithm~\ref{alg2:KKT} did not return at step~\ref{alg2:returnclosepoint}, 
     and so $\Vert \bm d^{k-1}_+ - \bm d^{k-1}\Vert_2 > \varepsilon_2$. 
     There must be some $i \in [n]$ such that $|(\bm d^{k-1}_+)_i - (\bm d^{k-1})_i| > \varepsilon_2/n$. 
     By Observation~\ref{pareto-domination}, $(\bm d^{k-1}_+)_i \geq (\bm d^{k-1}_*)_i \geq (\bm d^{k-1})_i$, 
     and thus $(\bm d^{k-1}_+)_i - (\bm d^{k-1})_i \geq \varepsilon_2 / n$.
    Now,
    \begin{align*}
     \ip{\bm d^{k-1}_+ - \bm d^{k-1}, \bm d^{k-1}_+} &\leq ||\bm d^{k-1}_+||_2^2\\ 
                                                 & \leq (||\bm d^{k-1}_*||_2 + 3nmL^2\varepsilon_1)^2 &\text{(by Observation~\ref{appx-nearest-point-dist})} 
    \end{align*}
     Again, since the algorithm did not return at step~\ref{alg2:returnclosepoint} in the $(k-1)$-st iteration, we have $\Vert\bm d^{k-1}_{*}\Vert_2 \leq nL$ by Observation~\ref{supporting-hyperplane}. 
     This implies that $\ip{\bm d^{k-1}_+ - \bm d^{k-1}, \bm d^{k-1}_+} \leq (2nL)^2 = 4n^2L^2$. 
          
     Now note that $(\bm a^{k-1})_i = \tfrac{n}{\ip{\bm d^{k-1}_+ - \bm d^{k-1}, \bm d^{k-1}_+}} ((\bm d^{k-1}_+)_i - (\bm d^k)_i)  \geq \tfrac{n}{4n^2L^2} \cdot \tfrac{\varepsilon_2}{n} = \varepsilon_2 / (4n^2L^2)$.   	
    \end{proof}

	We now define a new allocation $\bm y$ from $\bm x^k_+$ such that $\bm y \in \mathcal F'$ and consequently $\vd (\bm y) \in \mathcal D + \mathbb{R}^n_{\geq 0}$. 
	This would imply that $\ip{\bm a^{k-1}, \vd(\bm y)} \geq \ip{\bm a^{k-1}, \bm d^{k}_+} -\delta$.
	By Claim~\ref{adjustcordinatesclaim}, this equals $\ip{\bm a^{k-1}, \bm d^{k}} -\delta  = n -\delta$.

	We will show in the following that if any chore is significantly over-allocated in $\bm x^k_+$, then $\ip{\bm a^{k-1}, \vd(\bm y)} < n -\delta$ which is a contradiction. 
    Let us therefore assume that there is a $j' \in [m]$ which is over-allocated, \textit{i.e}, 
    $\sum_{\ell \in [n]} (\bm x^k_+)_{\ell j'} \geq 1 + \beta$. 
    This implies that there is some $i' \in [n]$ such that $(\bm x^k_+)_{i'j'} > 1/n + \beta / n$. 
    Furthermore, for all $j$, we denote by $r_{j}$, the excess amount of chore $j$ left undone in $\bm x^k_+$, 
    \textit{i.e.}, $r_{j} = \max\{0,\ 1 - \sum_{\ell \in [n]} (\bm x^k_+)_{\ell j}\}$.

	As a corollary of Lemma~\ref{appx-KKT-point-2-lambda-under}, we can bound $r_{j'}$.
    \begin{claim}
    	\label{boundonr}
    	For all $j \in [m]$, we have $r_j \leq nL\varepsilon_2$.
    \end{claim}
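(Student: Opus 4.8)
\textbf{Proof plan for Claim~\ref{boundonr}.} The plan is to observe that this is an immediate corollary of Lemma~\ref{appx-KKT-point-2-lambda-under}, which has already been established for the allocation returned on line~\ref{alg2:returnclosepoint}. Recall that $r_j$ is defined as $\max\{0,\ 1-\sum_{\ell\in[n]}(\bm x^k_+)_{\ell j}\}$, so there is nothing to prove when $\sum_{\ell}(\bm x^k_+)_{\ell j}\geq 1$; in that case $r_j=0\leq nL\varepsilon_2$ trivially, since $n,L,\varepsilon_2>0$.

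The only remaining case is when $\sum_{\ell\in[n]}(\bm x^k_+)_{\ell j}<1$, so that $r_j = 1-\sum_{\ell\in[n]}(\bm x^k_+)_{\ell j}$. Here I would simply invoke Lemma~\ref{appx-KKT-point-2-lambda-under}, which gives $\sum_{i\in[n]}(\bm x^k_+)_{ij}\geq 1-nL\varepsilon_2$ for every $j\in[m]$. Substituting this lower bound yields
\[
	r_j \;=\; 1-\sum_{\ell\in[n]}(\bm x^k_+)_{\ell j}\;\leq\;1-(1-nL\varepsilon_2)\;=\;nL\varepsilon_2,
\]
as desired. Combining the two cases gives $r_j\leq nL\varepsilon_2$ for all $j\in[m]$.

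There is no real obstacle here: the content is entirely contained in Lemma~\ref{appx-KKT-point-2-lambda-under}, and this claim just records the consequence in the notation $r_j$ that will be convenient for the subsequent over-allocation argument (Claims~\ref{disutilitydecrease} and~\ref{contradictiony}). The one point worth stating explicitly in the write-up is that $\bm x^k_+$ here refers to the allocation \emph{after} \textsc{adjust-coordinates}, which is exactly the allocation to which Lemma~\ref{appx-KKT-point-2-lambda-under} applies, so the citation is valid.
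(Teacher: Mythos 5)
Your proof is correct and is exactly what the paper intends: the paper presents Claim~\ref{boundonr} as an immediate corollary of Lemma~\ref{appx-KKT-point-2-lambda-under} without spelling out the one-line computation, which you have written out correctly (including the trivial $r_j=0$ case and the observation that the lemma applies to the post-\textsc{adjust-coordinates} allocation).
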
	
	
	To define the allocation $\bm y$, we distinguish two cases.
	Recall that the agent $i_0$ has $(\bm a^{k-1})_{i_0} \geq \varepsilon_2 / (4n^2L^2)$, by Claim~\ref{lower-boundona}.
	Furthermore, recall that we have chosen $\alpha:= 48n^7L^5 \varepsilon_1/ \varepsilon_2^3$ for our desired $\lambda$ bound.
	
	\begin{description}
		\item[Case 1.] For some item $j$, $(\bm x^k_+)_{i_0j} \geq \alpha $.
		In this case, agent $i_0$ consumes a non-negligent amount of some chore $j$ w.r.t. the $\varepsilon$'s.
		Define $\bm y=(\bm y_1,\,\dotsc,\,\bm y_n)$ as follows:
		\[
			\bm y_\ell :=\begin{cases}
				(\bm x^k_+)_{i_0} - \alpha \cdot \bm e_j &\text{ if $\ell=i_0$,}\\
				(\bm x^k_+)_{i'}  + \alpha \cdot \bm e_j + \sum_{q \in [m] \setminus \{j,j'\}} r_{q} \cdot \bm e_{q}  - (\beta / n) \cdot \bm e_{j'} &\text{ if $\ell=i'$,}\\
				(\bm x^k_+)_\ell &\text{ otherwise.}
			\end{cases}
		\]		
		This has the following effects:
		(i) we decrease $i_0$'s consumption of $j$ by $\alpha$ units and increase $i'$'s consumption of $j$ by $\alpha$ units (so the total consumption of $j$ remains unchanged), 
		then (ii) increase the consumption of every under-consumed chore for agent $i'$ until their total consumption becomes 1 and finally 
		(iii) decrease the $i'$'s consumption of $j'$ (the overallocated chore) by $\beta /n$ units. 
		  
		\item[Case 2.] If instead, $(\bm x^k_+)_{i_0j} < \alpha $ for all items $j$,
		define $\bm y=(\bm y_1,\,\dotsc,\,\bm y_n)$ as follows:
		\[
			\bm y_\ell :=\begin{cases}
				\bm 0 &\text{ if $\ell=i_0$,}\\
				(\bm x^k_+)_{i'} + (\bm x^k_+)_{i_0} + \sum_{q \in [m] \setminus \{j'\}} r_{q} \cdot \bm e_{q} - (\beta / n) \cdot \bm e_{j'} &\text{ if $\ell=i'$,}\\
				(\bm x^k_+)_\ell &\text{ otherwise.}
			\end{cases}
		\]		
		(i) We decrease $i_0$'s consumption of each chore to zero and increase $i'$'s consumption of item $q$ by $(\bm x^k_+)_{iq}$ units (so the total consumption of each chore $q$ remains unchanged), 
		then (ii) we increase the consumption of every under-consumed chore for agent $i'$ until their total consumption becomes 1 and finally 
		(iii) we decrease $i'$'s consumption of $j'$ (the overallocated chore) by $\beta /n$ units. 
	\end{description}

	 We first show that $\bm y \in \mathcal F'$.
	 
	 \begin{claim}
	 	\label{yinF'}
	 	We have $\bm y \in \mathcal F'$.
	 \end{claim}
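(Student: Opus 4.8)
The plan is to verify the two defining conditions of $\mathcal F'$: non-negativity of every coordinate, and $\sum_{\ell} y_{\ell j} \geq 1$ for every chore $j$. I would treat Case 1 and Case 2 separately but the arguments are parallel. First I would check non-negativity. In both cases only agents $i_0$ and $i'$ have their bundles modified, and the modification to $i'$ is a sum of non-negative additions (the $\alpha\cdot\bm e_j$ term in Case~1 or the $(\bm x^k_+)_{i_0}$ term in Case~2, plus the $\sum_q r_q\bm e_q$ terms, all non-negative) minus a single $(\beta/n)\cdot\bm e_{j'}$ term. So I must check $(\bm x^k_+)_{i'j'} - \beta/n \geq 0$: this is exactly where we use the standing assumption that $j'$ is over-allocated with $(\bm x^k_+)_{i'j'} > 1/n + \beta/n$, which gives $(\bm x^k_+)_{i'j'} - \beta/n > 1/n > 0$. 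For agent $i_0$: in Case~1 the bundle becomes $(\bm x^k_+)_{i_0} - \alpha\cdot\bm e_j$, and the case hypothesis is precisely $(\bm x^k_+)_{i_0 j}\geq \alpha$, so coordinate $j$ stays non-negative while all others are unchanged; in Case~2 the bundle becomes $\bm 0$, trivially non-negative. All other agents keep $(\bm x^k_+)_\ell \geq 0$ by Lemma~\ref{appx-KKT-point-2-gammadelta}.

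Next I would check the covering constraint $\sum_\ell y_{\ell j} \geq 1$ for each chore. The key observation is that the construction was designed to be \emph{consumption-preserving} on each chore except $j'$: the $\alpha$ units removed from $i_0$ on chore $j$ (Case~1), or the whole bundle $(\bm x^k_+)_{i_0}$ removed from $i_0$ (Case~2), are exactly re-added to $i'$, so for every chore $q \notin \{j'\}$ the total $\sum_\ell y_{\ell q}$ equals $\sum_\ell (\bm x^k_+)_{\ell q} + r_q$. By the definition $r_q = \max\{0, 1 - \sum_\ell (\bm x^k_+)_{\ell q}\}$, this sum is $\geq 1$. For chore $j'$ itself, we have $\sum_\ell y_{\ell j'} = \sum_\ell (\bm x^k_+)_{\ell j'} - \beta/n \geq (1+\beta) - \beta/n \geq 1$ using the over-allocation hypothesis $\sum_\ell (\bm x^k_+)_{\ell j'} \geq 1+\beta$ and $n\geq 1$. (One small caveat to handle cleanly: in Case~1, chore $j$ also needs $\sum_\ell y_{\ell j}\geq 1$; since the $\alpha$ transfer is internal and we additionally add $r_j\cdot\bm e_j$ to $i'$ only if $j\notin\{j',j\}$, i.e. we do \emph{not} re-add $r_j$ in Case~1 — so I should note that if $j$ itself was under-allocated we must argue $r_j$ is added; rechecking the construction, in Case~1 the sum $\sum_{q\in[m]\setminus\{j,j'\}}$ excludes $j$, so one must instead observe that the $\alpha$ transfer keeps $\sum_\ell y_{\ell j} = \sum_\ell (\bm x^k_+)_{\ell j}$, and argue this is $\geq 1$; if not, one notes $i_0$'s coordinate on $j$ was $\geq \alpha$ and so $\sum_\ell (\bm x^k_+)_{\ell j}\geq \alpha$, which together with $r_j\leq nL\varepsilon_2$ from Claim~\ref{boundonr} and the choice of constants forces... — actually the cleanest fix is that chore $j$ in Case~1 is handled because $j \ne j'$ means we \emph{do} include $r_j\bm e_j$ when $j\notin\{j',j\}$ is vacuous; the correct reading is that the construction should be taken to also cover $j$, so I would simply cite the construction as written and note $\sum_q$ ranges over all under-allocated chores needed.)

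The main obstacle I anticipate is purely bookkeeping: making sure the index sets in the two piecewise definitions of $\bm y$ genuinely restore every chore to total consumption $\geq 1$ without accidentally double-counting chore $j$ (Case~1) or chore $j'$, and confirming the single subtraction of $\beta/n$ from $i'$'s $j'$-coordinate is small enough (controlled via $(\bm x^k_+)_{i'j'} > 1/n+\beta/n$) while large enough to later derive the contradiction $\ip{\bm a^{k-1},\vd(\bm y)} < n-\delta$ in the subsequent claims. Since this claim only asserts membership in $\mathcal F'$, none of the quantitative smallness of $\alpha,\beta,\varepsilon_2$ is needed here — only the two case hypotheses and the over-allocation assumption — so the proof is short: non-negativity from the case/over-allocation bounds, and the covering constraint from the consumption-preserving design plus the definition of $r_q$.
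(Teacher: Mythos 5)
Your proof is correct and follows essentially the same route as the paper's: check the two defining conditions of $\mathcal F'$ directly from the construction --- non-negativity coordinate by coordinate (the only nontrivial checks being $(\bm x^k_+)_{i'j'}-\beta/n>1/n>0$ from the over-allocation hypothesis and $(\bm x^k_+)_{i_0j}-\alpha\geq 0$ from the Case-1 hypothesis), and the covering constraint chore by chore, using that the transfers from $i_0$ to $i'$ preserve totals, that $r_q$ is added for the under-allocated chores, and that $\sum_{\ell}(\bm x^k_+)_{\ell j'}\geq 1+\beta\geq 1+\beta/n$. You are in fact more careful than the paper, whose proof of this claim does not explicitly verify non-negativity. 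The caveat you raise about chore $j$ in Case 1 points at a genuine imprecision in the construction as written: because the index set $[m]\setminus\{j,j'\}$ omits $j$, the total consumption of $j$ in $\bm y$ equals its total in $\bm x^k_+$, which is $1-r_j<1$ whenever $j$ happens to be under-allocated, so the paper's ``Therefore, $\sum_i (\bm y)_{ij}\geq 1$'' does not literally follow for that chore. Your proposed repair --- also add the $r_j\cdot\bm e_j$ term to agent $i'$'s bundle in Case 1 --- is the right one and is harmless downstream: the subsequent disutility estimate already charges at most $nL^2\varepsilon_2$ per $r_\ell$-term and tolerates up to $m$ of them, so Claims~\ref{disutilitydecrease} and~\ref{contradictiony} go through unchanged.
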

	 
	 \begin{proof}
	  In both Case 1 and 2, we have increased agent $i'$'s consumption of each the under-allocated chores $\ell$ ($r_\ell>0$) by $r_\ell$, with the exception of $j'$ and $j$ if applicable.
	  Furthermore, the decrease in consumption of any chore for agent $i_0$ is matched by an increase for $i'$, before subtracting the $j'$ term.  
	   Therefore, $\sum_{i \in [n]} (\bm y)_{i \ell} \geq 1$ for all $\ell \in [m] \setminus j'$.  
	   Finally, the consumption of chore $j'$ is decreased by $\beta / n$, but since the total consumption of $j'$ in $\bm x^k_+$ is at least $1 + \beta$, 
	   the total consumption of $j'$ in $\bm y$ is at least $1$. 
	   Thus $\bm y \in \mathcal F'$ and $ \vd(\bm y) \in \mathcal{D} + \mathbb{R}^n_{\geq 0}$. 
   \end{proof}

	We next argue that the disutility values of all agents have not increased. 
	
	\begin{claim}
		\label{disutilitydecrease}
		For all $\ell \in [n]$, we have $D_{\ell} ((\bm y)_{\ell}) \leq D_{\ell}((\bm x^k_+)_{\ell})$.
	\end{claim}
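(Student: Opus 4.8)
\emph{Proof plan.}  I would verify the inequality agent by agent, splitting $[n]$ into the three groups $[n]\setminus\{i_0,i'\}$, $\{i_0\}$, and $\{i'\}$.  For $\ell\notin\{i_0,i'\}$ the bundle is unchanged, $\bm y_\ell=(\bm x^k_+)_\ell$, so there is nothing to prove.  For agent $i_0$, in both cases $\bm y_{i_0}\le(\bm x^k_+)_{i_0}$ coordinatewise: in Case~1 the hypothesis $(\bm x^k_+)_{i_0 j}\ge\alpha$ makes $(\bm x^k_+)_{i_0}-\alpha\bm e_j$ non-negative and no larger than $(\bm x^k_+)_{i_0}$, and in Case~2 we simply have $\bm y_{i_0}=\bm 0$.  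Monotonicity of $D_{i_0}$ then gives $D_{i_0}(\bm y_{i_0})\le D_{i_0}((\bm x^k_+)_{i_0})$ (in Case~2 one also uses $D_{i_0}(\bm 0)=0$, which is immediate from $1$-homogeneity).

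The only substantive case is agent $i'$.  Here the plan is to route through the intermediate bundle $\bm u:=(\bm x^k_+)_{i'}-(\beta/n)\,\bm e_{j'}$, which is non-negative because $(\bm x^k_+)_{i'j'}>\tfrac1n+\tfrac\beta n$.  In either case one checks straight from the definition of $\bm y_{i'}$ that $\bm y_{i'}=\bm u+\bm w$ for a non-negative vector $\bm w$ (namely $\bm w=\alpha\bm e_j+\sum_{q\in[m]\setminus\{j,j'\}}r_q\bm e_q$ in Case~1, and $\bm w=(\bm x^k_+)_{i_0}+\sum_{q\in[m]\setminus\{j'\}}r_q\bm e_q$ in Case~2), whose Euclidean norm I would bound uniformly by $\Vert\bm w\Vert_2\le\sqrt m\,(\alpha+nL\varepsilon_2)$, using $r_q\le nL\varepsilon_2$ from Claim~\ref{boundonr}, the Case~2 hypothesis $(\bm x^k_+)_{i_0 q}<\alpha$, and $\sqrt{a^2+b^2}\le a+b$.

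With this decomposition, two invocations of Assumption~\ref{as:lip} finish the argument.  The lower-Lipschitz bound applied at $\bm u$ with a step of $\beta/n$ along $\bm e_{j'}$, together with monotonicity, yields $D_{i'}((\bm x^k_+)_{i'})-D_{i'}(\bm u)\ge\beta/(nL)$; the upper-Lipschitz bound yields $D_{i'}(\bm y_{i'})=D_{i'}(\bm u+\bm w)\le D_{i'}(\bm u)+L\Vert\bm w\Vert_2$.  Adding these gives $D_{i'}(\bm y_{i'})\le D_{i'}((\bm x^k_+)_{i'})-\beta/(nL)+L\sqrt m\,(\alpha+nL\varepsilon_2)$, so it remains only to check that the chosen $\beta=2mn^2L^3(\alpha+\varepsilon_2)$ is large enough that $\beta/(nL)=2mnL^2(\alpha+\varepsilon_2)\ge L\sqrt m\,(\alpha+nL\varepsilon_2)$, which holds termwise for $L,n,m\ge1$.

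I expect the balancing step in the last paragraph to be the only delicate point: one must make sure that every intermediate bundle ($\bm u$ and $\bm u+\bm w$) lies in $\mathbb R^m_{\ge0}$ so that the Lipschitz-type estimates of Assumption~\ref{as:lip} are legitimately applicable, and that the single guaranteed decrease $\beta/(nL)$ on the over-allocated coordinate $j'$ really does dominate the coarse $\ell_2$ bound on $\bm w$ — which lumps together the $\alpha$-sized transfer and all the small leftover amounts $r_q$.  This is exactly the inequality that pins down the choice of $\beta$, and it is what will in turn constrain the choice of $\alpha$ and of the error parameters $\varepsilon_1,\varepsilon_2,\varepsilon_3$ in the sequel.
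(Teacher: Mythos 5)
Your proposal is correct and follows essentially the same route as the paper: the same agent-by-agent split, the same decomposition of $\bm y_{i'}$ into the bundle with $(\beta/n)\bm e_{j'}$ removed plus a non-negative increment, the lower-Lipschitz bound of Assumption~\ref{as:lip} for the guaranteed decrease of $\beta/(nL)$, and the same balancing against the choice of $\beta$. The only (cosmetic) difference is that you bound the increment via a single $\ell_2$ upper-Lipschitz estimate $L\Vert\bm w\Vert_2$, whereas the paper first splits it by sub-additivity of $1$-homogeneous convex functions and then bounds each piece; both yield the required margin.
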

    
    \begin{proof}
	Note that we have only reduced the consumption of chores for all agents except $i'$. Therefore, the disutility values for all agents in $[n] \setminus \{ i' \}$ decreases. It suffices to show that agent $i'$'s disutility also decreases. We now argue that $D_{i'} ((\bm y)_{i'}) < D_{i'} ( (\bm x^k_+)_{i'})$.
	
	Let $\bm \Delta$ be defined to equal $\alpha\bm e_j$ in Case 1, and $(\bm x^k_+)_{i'}$ in Case 2. 
	In both cases, we are subtracting $\bm \Delta$ from $(\bm x^k_+)_{i0}$ and adding it to $(\bm x^k_+)_{i'}$.
	We recall that 1-homogeneous and convex functions are sub-additive: 
	$\d(\bm p+\bm q) = 2\d(\tfrac12\bm p+\tfrac 12\bm q)\leq 2\tfrac12(\d(\bm p)+\d(\bm q)) = \d(\bm p)+\d(\bm q)$.
	Therefore, 
		\begin{align*}
	\d_{i'}&(\bm y_{i'})\\
		&=  \d_{i'} \left((\bm x^k_+)_{i'} + \bm \Delta + \textstyle\sum_{\ell \in [m] \setminus \{j,j'\}} r_{\ell} \cdot \bm e_{\ell} - (\beta / n) \cdot \bm e_{j'}\right) \\
		&\leq  \d_{i'} \left((\bm x^k_+)_{i'} - (\beta / n) \cdot \bm e_{j'}\right) + 
				\d_{i'}(\bm \Delta) + \sum_{\ell \in [m] \setminus \{j,j'\}} \d_{i'}(r_{\ell} \cdot \bm e_{\ell})\\
		&\leq  \underbrace{\d_{i'} \left((\bm x^k_+)_{i'}\right) - \tfrac 1L (\beta/n)}_{\text{Assumption~\ref{as:lip}}} + \d_{i'}(\bm \Delta) + \sum_{\ell \in [m] \setminus \{j,j'\}} \underbrace{\d_{i'}(nL\varepsilon_2 \cdot \bm e_{\ell})}_{\text{Claim~\ref{boundonr}}}
		\end{align*}
		
		Now, in both cases, $\Vert\bm \Delta\Vert_\infty \leq \alpha$, and so by Assumption~\ref{as:lip}, $\d_{i'}(\bm \Delta)\leq m\cdot\alpha\cdot L$, and $\d_{i'}(\varepsilon_2\cdot \bm e_{\ell})\leq L\varepsilon_{2}$. 
		Finally, recall $\beta := 2mn^2L^3(\alpha + \varepsilon_2)$, and so we have
		\begin{align*}
	\d_{i'}(\bm y_{i'})&\leq \d_{i'} \left((\bm x^k_+)_{i'}\right) - 2mnL^2(\alpha+\varepsilon_2) + mL\cdot \alpha + mnL^2(\varepsilon_2) < \d_{i'} \left((\bm x^k_+)_{i'}\right)\qedhere
		\end{align*}
   \end{proof}
   
   With these two claims, we can now show that $\ip{\bm a^{k-1}, \vd(\bm y)} < n -\delta$ despite $\bm y \in \mathcal F'$, a contradiction.
   	\begin{claim}
   		\label{contradictiony}
   		We have $\ip{\bm a^{k-1}, \vd(\bm y)} < n -\delta$.
   	\end{claim}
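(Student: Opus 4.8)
The plan is to express $\ip{\bm a^{k-1},\vd(\bm y)}$ as $n$ minus a sum of non-negative quantities and then show that one of those already exceeds $\delta$. Two facts are at our disposal: $\vd(\bm y)\le\vd(\bm x^k_+)$ coordinatewise (Claim~\ref{disutilitydecrease}), and $\ip{\bm a^{k-1},\vd(\bm x^k_+)}=n$, the latter because $\bm d^k_+=\bm d^k$ after \textsc{adjust-coordinates} (Claim~\ref{adjustcordinatesclaim}) while $\bm d^k=\bm 1/\bm a^{k-1}$ lies on the hyperplane $\ip{\bm a^{k-1},\bm y}=n$ by construction. Since every entry of $\bm a^{k-1}$ is non-negative,
\begin{align*}
\ip{\bm a^{k-1},\vd(\bm y)}\;=\;n-\sum_{\ell\in[n]}(\bm a^{k-1})_\ell\bigl(D_\ell((\bm x^k_+)_\ell)-D_\ell(\bm y_\ell)\bigr)\;\le\;n-(\bm a^{k-1})_{i'}\bigl(D_{i'}((\bm x^k_+)_{i'})-D_{i'}(\bm y_{i'})\bigr),
\end{align*}
where $i'$ is the agent that over-consumes the over-allocated chore $j'$ in the construction of $\bm y$; it then suffices to bound this retained summand below by more than $\delta$.

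For the weight: $\bm a^{k-1}=\bm 1/\bm d^k$ together with $D_{i'}((\bm x^k_+)_{i'})=(\bm d^k)_{i'}$ gives $(\bm a^{k-1})_{i'}=1/(\bm d^k)_{i'}$, and I would bound $(\bm d^k)_{i'}\le D_{i'}((\bm z^k_+)_{i'})\le 2mL$, using that $\bm z^k_+$ lies within $O(nmL\varepsilon_1)$ of $\bm x^k_*\in\mathcal F$ (Observations~\ref{cl:supphypcl} and~\ref{appx-nearest-point-dist}), so that every chore's total consumption in $\bm z^k_+$ is $1+o(1)$, whence $\|(\bm z^k_+)_{i'}\|_1\le 2m$, and then applying the upper-Lipschitz bound of Assumption~\ref{as:lip}. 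For the disutility drop I would reuse the quantitative computation already inside the proof of Claim~\ref{disutilitydecrease} — sub-additivity of the $1$-homogeneous $D_{i'}$, the lower-Lipschitz reduction of the $\beta/n$ units of $j'$, Claim~\ref{boundonr}'s bound $r_\ell\le nL\varepsilon_2$, and $\|\bm\Delta\|_\infty\le\alpha$ in both cases — which yields $D_{i'}((\bm x^k_+)_{i'})-D_{i'}(\bm y_{i'})\ge 2mnL^2(\alpha+\varepsilon_2)-mL\alpha-mnL^2\varepsilon_2\ge mnL^2\alpha$.

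Combining, the retained summand is at least $mnL^2\alpha/(2mL)=nL\alpha/2$, and substituting $\alpha=48n^7L^5\varepsilon_1/\varepsilon_2^3$ makes this $24n^8L^6\varepsilon_1/\varepsilon_2^3$, which dominates $\delta=O(n^5mL^3\varepsilon_1/\varepsilon_2^2)$ under the intended separation of the parameters (it is enough that $\varepsilon_2=O(n^3L^3/m)$, which is forced by how $\varepsilon_1,\varepsilon_2,\varepsilon_3$ are eventually chosen). Hence $\ip{\bm a^{k-1},\vd(\bm y)}<n-\delta$, which — together with $\vd(\bm y)\in\mathcal D+\mathbb R^n_{\ge0}$ from Claim~\ref{yinF'} — contradicts Lemma~\ref{supporting-hyperplane} and rules out the over-allocation. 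The split into Case~1 and Case~2 is needed only so that the auxiliary allocation $\bm y$ is well-defined; in Case~1 one could instead retain the $i_0$-summand and appeal to Claim~\ref{lower-boundona}. The main obstacle is the additive-versus-multiplicative bookkeeping: the disutility decreases one can control are additive in $\alpha$ and $\varepsilon_2$, whereas $\delta$ and the reweighting $\bm 1/\bm d^k$ are measured against the a priori possibly large $(\bm d^k)_{i'}$, so the entire argument hinges on the crude bound $(\bm d^k)_{i'}=O(mL)$ and on $\alpha$ having been fixed a large enough polynomial multiple of $\delta$; once these are pinned down, the displayed chain of inequalities closes.
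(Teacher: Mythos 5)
Your overall frame — write $\ip{\bm a^{k-1},\vd(\bm y)}=n-\sum_\ell (\bm a^{k-1})_\ell\bigl(D_\ell((\bm x^k_+)_\ell)-D_\ell(\bm y_\ell)\bigr)$, note all summands are non-negative by Claim~\ref{disutilitydecrease}, keep one summand, and show it exceeds $\delta$ — is exactly the paper's (its inequality~\eqref{eqhere} is this decomposition with one term retained). The difference is \emph{which} summand you keep, and that is where the argument breaks. You retain the $i'$-summand and need the weight $(\bm a^{k-1})_{i'}=1/(\bm d^k)_{i'}$ to be at least $1/(2mL)$, which you justify by claiming $(\bm d^k)_{i'}\le D_{i'}((\bm z^k_+)_{i'})\le 2mL$ via Observations~\ref{cl:supphypcl} and~\ref{appx-nearest-point-dist}. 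But Observation~\ref{cl:supphypcl} (which gives $\bm x^k_*\in\mathcal F$ and hence bounded consumption) is proved only for iterations that reach step~\ref{alg2:definehyperplane}; in the branch at hand the algorithm stops at step~\ref{alg2:returnclosepoint}, where $\bm d^k$ may lie in the interior of $\mathcal D+\mathbb R^n_{\ge 0}$, so $\bm d^k_*=\bm d^k$ and $\bm x^k_*$ is only guaranteed to lie in $\mathcal F'$, with no a priori bound on any agent's consumption. Indeed there is no upper bound on $(\bm d^k)_{i'}=1/(\bm a^{k-1})_{i'}$ at all: the coordinates of $\bm a^{k-1}$ can be arbitrarily small (this is precisely the ``$1/a'_i$ and $1/a^*_i$ can be far apart'' pathology the paper flags in Section~\ref{gen:overview}), and the over-consuming agent $i'$ is exactly one whose consumption, hence $(\bm d^k)_{i'}$, may be huge. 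Quantitatively: if the true over-allocation of $j'$ is $T\gg\beta$, then $(\bm a^{k-1})_{i'}$ can be of order $nL/T$, while the disutility drop you extract for $i'$ stays of order $mnL^2\alpha$ independent of $T$; the retained summand then tends to $0$ as $T\to\infty$ and does not beat $\delta$.

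The missing idea is Claim~\ref{lower-boundona}: one must retain the summand of an agent $i_0$ for which $(\bm a^{k-1})_{i_0}\ge \varepsilon_2/(4n^2L^2)$ is guaranteed, and the whole two-case construction of $\bm y$ (transferring either $\alpha$ units of some chore, or $i_0$'s entire bundle, \emph{from} $i_0$ \emph{to} $i'$) exists precisely so that $i_0$'s disutility drops by a quantifiable amount — at least $\alpha/L$ in Case~1, and exactly $(\bm d^k)_{i_0}=1/(\bm a^{k-1})_{i_0}$ in Case~2, where the product is identically $1$. You mention in passing that ``in Case~1 one could instead retain the $i_0$-summand,'' but that is not an optional variant: it is the only route that closes, and in Case~2 your $i'$-based bound has no substitute. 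The remainder of your computation (the identity $\ip{\bm a^{k-1},\vd(\bm x^k_+)}=n$ via Claim~\ref{adjustcordinatesclaim}, the lower bound on $i'$'s disutility drop extracted from the proof of Claim~\ref{disutilitydecrease}, and the final comparison against $\delta$) is fine, but it cannot be assembled into a proof without a valid lower bound on the retained weight.
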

   \begin{proof}
	The disutility of all agents decreases from $\bm x^k_+$ to $\bm y$ and since every entry in $\bm a^{k-1}$ is strictly positive (Lemma~\ref{lem:algo-steps-linear} and Observation~\ref{pareto-domination}),
	 we have \begin{equation}\label{eqhere}
	 \ip{ \bm a^{k-1}, \vd(\bm y)} - \ip{\bm a^{k-1}, \vd(\bm x^k_+)}\ \leq\ (\bm a^{k-1})_{i_0} \cdot \left(\d_{i_0}(\bm y_{i_0}) - \d_{i_0}((\bm x^k_+)_{i_0})\right)\ .	 
	 \end{equation}
	 We wish to show that this difference is smaller than $-\delta$.
	We will distinguish the two cases.
	
	\begin{description}
		\item[Case 1.] In this case, we have 
		 \begin{align*}
		  \d_{i_0} (\bm y_{i_0}) - \d_{i_0}((\bm x^k_+)_{i_0}) &= \d_{i_0}((\bm x^k_+)_{i_0} - \alpha \bm e_{j}) - D_{i_0}((\bm x^k_+)_{i_0})\leq -\alpha / L\ ,
		 \end{align*} 
		 by Assumption~\ref{as:lip}.
		 By Claim~\ref{lower-boundona}, $(\bm a^{k-1})_{i_0} \geq \varepsilon_2/ (4n^2L^2)$. 
  		Therefore, the right hand side of~\eqref{eqhere} is at most 		 
		 $-\alpha \varepsilon_2/ 4n^2L^3$. 
		\item[Case 2.] In this case, we have
		  \begin{align*}
		     \d_{i_0} (\bm y_{i_0}) - \d_{i_0}((\bm x^k_+)_{i_0}) &= 0 -\d_{i_0}((\bm x^k_+)_{i_0}) 
		                                          = -(\bm d^k_+)_{i_0}
		                                          = - (\bm d^k)_{i_0}\ ,
		  \end{align*} 
		  by Claim~\ref{adjustcordinatesclaim}.
		  Note that $(\bm d^k)_{i_0} = 1 / (\bm a^{k-1})_{i_0}$, and so 
		  the right hand side of~\eqref{eqhere} is equal to $-1$, which is at most $-\alpha\varepsilon_2/4n^2L^3$.
	\end{description}
	Substituting $\alpha=48n^7mL^6 \varepsilon_1/ \varepsilon_2^3$, we have $\ip{ \bm a^{k-1}, \vd(\bm y)} - \ip{\bm a^{k-1}, \vd(\bm x^k_+)} \leq -12n^5mL^3 (\varepsilon_1/ \varepsilon_2^2) < - 9n^5mL^3 (\varepsilon_1/ \varepsilon_2^2) = -\delta$. 
\end{proof}
Therefore, we have $\bm y \in \mathcal F'$, but $\ip{\bm a^{k-1}, \vd(\bm y)} < n -\delta$, which is a contradiction. This implies that no chores are significantly over-allocated.
Thus, we have proven the following:

\begin{lemma}\label{appx-KKT-point-2-lambda-over}
For all $j \in [m]$, we have $\sum_{i \in [n]} (\bm x^k_{+})_{ij} \leq 1 + \beta$, where $\beta = 2mn^2L^3 (\alpha + \varepsilon_2)$ and $\alpha = 48n^7mL^6 \varepsilon_1/ \varepsilon_2^3$.
\end{lemma}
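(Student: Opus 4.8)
The plan is to prove this by contradiction, assembling the pieces established in Claims~\ref{lower-boundona}--\ref{contradictiony}. Suppose some chore $j'$ is over-allocated in the returned allocation $\bm x^k_+$ by more than $\beta$, i.e.\ $\sum_{\ell \in [n]} (\bm x^k_+)_{\ell j'} > 1 + \beta$. I would then exhibit an allocation $\bm y \in \mathcal F'$ whose disutility profile satisfies $\langle \bm a^{k-1}, \vd(\bm y)\rangle < n - \delta$; since $\vd(\bm y)\in \mathcal D + \mathbb R^n_{\geq 0}$, this contradicts Lemma~\ref{supporting-hyperplane} applied in iteration $k-1$ (which says $\langle \bm a^{k-1}, \bm y\rangle = n$ is a $\delta$-approximately-supporting hyperplane of $\mathcal D + \mathbb R^n_{\geq 0}$), together with the fact that $\bm d^k = \bm d^k_+ = \vd(\bm x^k_+)$ lies exactly on that hyperplane (Claim~\ref{adjustcordinatesclaim}). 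This gives the desired bound once $\beta$ and $\alpha$ are fixed as in the statement.

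Constructing $\bm y$. From the over-allocation of $j'$ there is an agent $i'$ with $(\bm x^k_+)_{i'j'} > 1/n + \beta/n$, and from Lemma~\ref{appx-KKT-point-2-lambda-under} every chore is short by at most $r_j \leq nL\varepsilon_2$ (Claim~\ref{boundonr}). I would fix an agent $i_0$ with $(\bm a^{k-1})_{i_0}$ not too small; such $i_0$ exists by Claim~\ref{lower-boundona}, which is where one uses that iteration $k-1$ did not return at Step~\ref{alg2:returnclosepoint} (so $\|\bm d^{k-1}_+ - \bm d^{k-1}\|_2 > \varepsilon_2$) and that $\langle \bm d^{k-1}_+ - \bm d^{k-1}, \bm d^{k-1}_+\rangle$ is polynomially bounded via $\|\bm d^{k-1}_*\|_2 \leq nL$. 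Then $\bm y$ is obtained from $\bm x^k_+$ by (i) moving a block $\bm\Delta$ of chores from $i_0$ to $i'$ — either $\alpha$ units of one chore $i_0$ holds in amount $\geq \alpha$ (Case~1), or all of $i_0$'s bundle (Case~2) — (ii) topping up $i'$'s consumption of every under-done chore $q$ by $r_q$, and (iii) removing $\beta/n$ units of $j'$ from $i'$. By design every chore ends up fully allocated, so $\bm y \in \mathcal F'$ and $\vd(\bm y)\in\mathcal D + \mathbb R^n_{\geq 0}$ (Claim~\ref{yinF'}).

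Bounding the disutilities. Only $i'$'s bundle changes in a direction that might increase its disutility, so I would control $\d_{i'}(\bm y_{i'})$ via sub-additivity of $1$-homogeneous convex functions, splitting it as $\d_{i'}\big((\bm x^k_+)_{i'} - (\beta/n)\bm e_{j'}\big) + \d_{i'}(\bm\Delta) + \sum_q \d_{i'}(r_q\bm e_q)$. The two added terms are bounded above by the upper-Lipschitz constant ($\d_{i'}(\bm\Delta)\leq mL\alpha$ since $\|\bm\Delta\|_\infty\leq\alpha$, and each $\d_{i'}(r_q\bm e_q)\leq nL^2\varepsilon_2$), while removing $\beta/n$ units of $j'$ lowers the disutility by at least $(\beta/n)/L$ by the lower-Lipschitz bound of Assumption~\ref{as:lip}; the choice $\beta = 2mn^2L^3(\alpha+\varepsilon_2)$ is exactly what makes the net change negative, and every other agent's disutility only drops (Claim~\ref{disutilitydecrease}). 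Since all coordinates of $\bm a^{k-1}$ are strictly positive, $\langle\bm a^{k-1},\vd(\bm y)\rangle - \langle\bm a^{k-1},\vd(\bm x^k_+)\rangle \leq (\bm a^{k-1})_{i_0}\big(\d_{i_0}(\bm y_{i_0}) - \d_{i_0}((\bm x^k_+)_{i_0})\big)$, and the drop in $i_0$'s disutility is at least $\alpha/L$ in Case~1 (Assumption~\ref{as:lip}) or $(\bm d^k)_{i_0} = 1/(\bm a^{k-1})_{i_0}$ in Case~2 (Claim~\ref{adjustcordinatesclaim}); combined with $(\bm a^{k-1})_{i_0}\geq \varepsilon_2/(4n^2L^2)$ this forces a decrease of order $\alpha\varepsilon_2/(n^2L^3)$, which beats $\delta = 9n^5mL^3\varepsilon_1/\varepsilon_2^2$ once $\alpha = 48n^7mL^6\varepsilon_1/\varepsilon_2^3$ (Claim~\ref{contradictiony}), yielding the contradiction.

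The step I expect to be the main obstacle is the disutility accounting for agent $i'$: one must simultaneously juggle sub-additivity, both directions of the Lipschitz assumption, and the size bounds $r_q\leq nL\varepsilon_2$ and $\|\bm\Delta\|_\infty\leq\alpha$ so that the single ``credit'' of $(\beta/n)/L$ from giving up $j'$-units dominates all the ``debits'' from the blocks moved onto $i'$, while $\alpha$ (hence $\beta$) must at the same time be large enough that the resulting decrease survives passage through the possibly tiny entries of $\bm a^{k-1}$ — which is precisely why the two-case split on $i_0$ and the lower bound of Claim~\ref{lower-boundona} are unavoidable. The rest is routine bookkeeping with the constants.
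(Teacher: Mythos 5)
Your proposal is correct and follows essentially the same route as the paper: assume a chore $j'$ is over-allocated by more than $\beta$, fix $i_0$ with $(\bm a^{k-1})_{i_0}\geq \varepsilon_2/(4n^2L^2)$ via Claim~\ref{lower-boundona}, build $\bm y\in\mathcal F'$ by the same two-case transfer of $\bm\Delta$ from $i_0$ to $i'$ plus topping up shortfalls and removing $\beta/n$ units of $j'$, control $\d_{i'}(\bm y_{i'})$ by sub-additivity and both Lipschitz bounds, and contradict the $\delta$-approximate supporting hyperplane from iteration $k-1$ using Claim~\ref{adjustcordinatesclaim}. The decomposition, constants, and the roles of Claims~\ref{yinF'}, \ref{disutilitydecrease}, and~\ref{contradictiony} all match the paper's argument.
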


We now have everything we need to prove the following:
\begin{theorem}
	\label{correctness-main-lemma}
	Algorithm~\ref{alg2:KKT} returns a $(\lambda, \gamma, \delta)$-KKT point with $\lambda =1 + 3mn^2L^3(\alpha + \varepsilon_2)$, where $\alpha = 48n^7mL^6 \varepsilon_1 / \varepsilon_2^3$, $\gamma = 1+\varepsilon_3 $ and $\delta = 9n^5mL^3  \frac{\varepsilon_1}{\varepsilon_2^2}$.
\end{theorem}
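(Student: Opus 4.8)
The plan is to obtain this theorem as a purely assembly-level consequence of the case-analysis lemmas already proved in this subsection. First I would record the structural fact that Algorithm~\ref{alg2:KKT} can only return at one of two lines, Step~\ref{alg2:returnclosepoint} or Step~\ref{alg2:stopping-hyp}, and that (by the discussion preceding Lemma~\ref{appx-KKT-point-2-lambda-under}) it never returns at Step~\ref{alg2:returnclosepoint} during the first iteration, so the output triple is always well-defined. Since termination within a polynomial number of iterations is deferred to Section~\ref{sec:gen-alg-bounds}, it remains only to verify that in each of the two termination cases the returned triple meets all three conditions of Definition~\ref{def:apx-KKT} with $\lambda = 1 + 3mn^2L^3(\alpha+\varepsilon_2)$, $\gamma = 1+\varepsilon_3$, and $\delta = 9n^5mL^3\,\varepsilon_1/\varepsilon_2^2$.

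\textbf{Case: termination at Step~\ref{alg2:stopping-hyp}.} Lemma~\ref{appx-KKT-point-1} already shows the returned point is a $(\lambda',\gamma,\delta)$-KKT point with $\lambda' = 1 + n^2L\varepsilon_1$ and exactly the claimed $\gamma$ and $\delta$. I would then observe that a $(\lambda',\gamma,\delta)$-KKT point is automatically a $(\lambda,\gamma,\delta)$-KKT point for every $\lambda \ge \lambda'$ (weakening condition (1) only), and that in the operating regime $\varepsilon_2 \gg n^3m^3L^3\varepsilon_1$ we have $1 + 3mn^2L^3(\alpha+\varepsilon_2) \ge 1 + 3mn^2L^3\varepsilon_2 \ge 1 + n^2L\varepsilon_1 = \lambda'$; hence the stronger $\lambda$ in the theorem statement is valid here.

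\textbf{Case: termination at Step~\ref{alg2:returnclosepoint}.} Here Lemma~\ref{appx-KKT-point-2-gammadelta} gives conditions (2) and (3) of Definition~\ref{def:apx-KKT} with the claimed $\gamma$ and $\delta$, together with $(\bm x^k_+)_{ij}\ge 0$, so the only remaining point is the allocation bound $\lambda^{-1}\le \sum_{i\in[n]}(\bm x^k_+)_{ij}\le\lambda$. The upper inequality is immediate from Lemma~\ref{appx-KKT-point-2-lambda-over}: $\sum_i(\bm x^k_+)_{ij}\le 1+\beta = 1 + 2mn^2L^3(\alpha+\varepsilon_2) \le 1 + 3mn^2L^3(\alpha+\varepsilon_2) = \lambda$. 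For the lower inequality I would combine Lemma~\ref{appx-KKT-point-2-lambda-under} ($\sum_i(\bm x^k_+)_{ij}\ge 1-nL\varepsilon_2$) with the elementary estimate $(1-nL\varepsilon_2)\bigl(1+3mn^2L^3(\alpha+\varepsilon_2)\bigr)\ge 1$, which holds in the regime since $3mnL^2(\alpha+\varepsilon_2)(1-nL\varepsilon_2) \ge \tfrac32\varepsilon_2 > \varepsilon_2$; this yields $\sum_i(\bm x^k_+)_{ij}\ge \lambda^{-1}$. Combining the two cases proves the theorem.

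I do not expect a genuine obstacle inside this proof: the difficult work — in particular controlling arbitrary over-allocation via the re-allocation argument behind Lemma~\ref{appx-KKT-point-2-lambda-over}, and propagating the additive nearest-point error through Assumption~\ref{as:lip} in Lemma~\ref{supporting-hyperplane} and Lemma~\ref{appx-KKT-point-1} — has already been done. The only step needing any care is the bookkeeping that reconciles the \emph{additive} under-allocation slack $nL\varepsilon_2$ from Lemma~\ref{appx-KKT-point-2-lambda-under} with the \emph{multiplicative} bound $\lambda^{-1}$; this is precisely why $\lambda$ and $\beta$ were defined with the extra factors of $m$, $n$, and $L$, and the check only goes through under the assumed polynomial separations $\varepsilon_3 \ge 2^{\textup{poly}(n,m)}\varepsilon_2 \gg \textup{poly}(n,m,L)\,\varepsilon_1$, which should be flagged explicitly.
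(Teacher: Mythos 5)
Your proposal is correct and follows essentially the same route as the paper: a case split on the two return lines, invoking Lemma~\ref{appx-KKT-point-1} for Step~\ref{alg2:stopping-hyp} and Lemmas~\ref{appx-KKT-point-2-gammadelta}, \ref{appx-KKT-point-2-lambda-under}, and~\ref{appx-KKT-point-2-lambda-over} for Step~\ref{alg2:returnclosepoint}, then taking the weaker $\lambda$. Your explicit check that the additive slack $1-nL\varepsilon_2$ is at least $\lambda^{-1}$ is a detail the paper leaves implicit, and it is verified correctly under the stated parameter separations.
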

\begin{proof}
	If the algorithm returns at step~\ref{alg2:stopping-hyp}, then $(\bm a^k, \bm x^k_{+}, \bm d^k_{+})$ is a $(1 + n^2L\varepsilon_1, 1 + \varepsilon_3, 6n^2L^2 \varepsilon_1 / \varepsilon_2^2)$-KKT point, by Lemma~\ref{appx-KKT-point-1}. 
	If the algorithm returns at step~\ref{alg2:returnclosepoint}, then $(\bm a^{k-1}, \bm x^k_{+}, \bm d^k_{+})$ is a $(1 + 3mn^2L^3(\alpha + \varepsilon_2), 1 + \varepsilon_3, 9n^5mL^3 \varepsilon_1 / \varepsilon_2^2)$-KKT point where $\alpha = 48n^7mL^6 \varepsilon_1 / \varepsilon_2^3$, by Lemmas~\ref{appx-KKT-point-2-gammadelta}, \ref{appx-KKT-point-2-lambda-under}, and~\ref{appx-KKT-point-2-lambda-over}. 
	As $n^3 m^3 L^3 \varepsilon_1 \ll \varepsilon_2$, 
	we have $ 1 + 3mn^2L^3(\alpha + \varepsilon_2) >1 + n^2L \varepsilon_1$  and thus the point returned by Algorithm~\ref{alg2:KKT} is $(1 + 3mn^2L^3(\alpha + \varepsilon_2), 1 + \varepsilon_3, 9n^5mL^3 \varepsilon_1 / \varepsilon_2^2)$-KKT.   
\end{proof}

\subsection{Polynomially Bounding the Number and Running Time of Iterations}\label{sec:gen-alg-bounds}

In this section, we show that in polynomially many iterations, Algorithm~\ref{alg2:KKT} finds the $(1 + 3mn^2L^3(\alpha + \varepsilon_2), 1 + \varepsilon_3, 9n^5mL^3 \varepsilon_1 / \varepsilon_2^2)$-KKT point of Lemma~\ref{correctness-main-lemma}. 
The proof follows exactly the proof in the setting with linear disutilities (the proof of Lemma~\ref{lem:inc2}). 
One can argue that (a) the log-NSW $\mathcal L$ is always increasing throughout Algorithm~\ref{alg:KKT}, and (b) it increases additively by $\varepsilon_3 / 16n^2$ every time $\logd (\bm d^{k+1}, \bm d^{k}_{+}) \geq \varepsilon_3$. Therefore, the total number of iterations of the algorithm is bounded by $\textup{poly}(n,m, 1/ \varepsilon_3)$.

\begin{lemma}
 \label{polynomialmanyiterations}
  After $\textup{poly}(n,m, 1/ \varepsilon_3)$ iterations, Algorithm~\ref{alg2:KKT} returns a $(1 + 3mn^2L^3(\alpha + \varepsilon_2), 1 + \varepsilon_3, 9n^5mL^3 \varepsilon_1 / \varepsilon_2^2)$-KKT point, where $\alpha = 48n^7mL^6 \varepsilon_1 / \varepsilon_2^3$. 
\end{lemma}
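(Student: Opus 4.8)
The plan is to recycle the potential-function argument of the linear case essentially verbatim, with $\mathcal L(\bm d)=\sum_i\log d_i$ as the potential and the \emph{two} stopping tests of Algorithm~\ref{alg2:KKT} in place of the single one of Algorithm~\ref{alg:KKT}. Note first that the $(\lambda,\gamma,\delta)$-KKT conclusion for whatever point is returned is already supplied by Theorem~\ref{correctness-main-lemma}, regardless of whether the algorithm halts at Step~\ref{alg2:returnclosepoint} or Step~\ref{alg2:stopping-hyp}; so the entire content of this lemma is to show the algorithm must halt within $\textup{poly}(n,m,1/\varepsilon_3)$ iterations. For that I would prove three things: (a) $\mathcal L$ never decreases along the sequence $\bm d^0,\bm d^0_+,\bm d^1,\bm d^1_+,\dotsc$; (b) every iteration that fails to return raises $\mathcal L$ by at least $\tfrac1{16}(\varepsilon_3/n)^2$; and (c) over the visited iterates $\mathcal L$ lies in an interval of width $\textup{poly}(n,m,\log L)$. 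Dividing (c) by the per-step gain in (b), and adding the one terminating iteration, then yields the bound.

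For (a): in iteration $k$, Observation~\ref{pareto-domination} gives $\bm d^k_+\ge\bm d^k_*\ge\bm d^k$ coordinatewise --- this already incorporates the $2L\varepsilon_1$ bump of Step~\ref{alg2:pd}, and the subsequent round-up only increases coordinates and keeps the point in the upward-closed set $\mathcal D+\mathbb R^n_{\ge 0}$ --- so monotonicity of $\mathcal L$ in each coordinate gives $\mathcal L(\bm d^k_+)\ge\mathcal L(\bm d^k)$. Next, $\bm a^k$ is rescaled so that $\ip{\bm a^k,\bm d^k_+}=n$ and $\bm d^{k+1}=\bm 1/\bm a^k$, which is exactly the maximizer of the concave function $\mathcal L$ on the hyperplane $\{\,\bm y:\ip{\bm a^k,\bm y}=n\,\}$ (there $\nabla\mathcal L(\bm d)=\bm 1/\bm d$ is parallel to $\bm a^k$); since $\bm d^k_+$ lies on that hyperplane, $\mathcal L(\bm d^{k+1})\ge\mathcal L(\bm d^k_+)$. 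Chaining over $k$ proves (a). Positivity of every iterate, so that $\mathcal L>-\infty$ throughout, propagates as in Lemma~\ref{lem:algo-steps-linear} from the strictly positive $\bm d^0$ returned by \textsc{initialize} (strictly positive by Assumption~\ref{as:lip}).

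For (b): if iteration $k$ does not return, both stopping tests fail, so in particular $\logd(\bm d^{k+1},\bm d^k_+)\ge\varepsilon_3$. I would then rerun the proof of Lemma~\ref{lem:inc2} with $\bm d^k_+$ in the role of $\bm d^k_*$ and $\varepsilon_3$ in the role of $\varepsilon$: with $A=\operatorname{diag}(\bm a^k)$ and $\bm\Delta=A\bm d^k_+-\bm 1$ one has $\ip{\bm 1,\bm\Delta}=\ip{\bm a^k,\bm d^k_+}-n=0$, $\bm d^k_+=(\bm 1+\bm\Delta)/\bm a^k$, $\bm d^{k+1}=\bm 1/\bm a^k$, hence $\mathcal L(\bm d^{k+1})-\mathcal L(\bm d^k_+)=-\sum_i\log(1+\Delta_i)$ with $\sum_i|\log(1+\Delta_i)|>\varepsilon_3$, and the same convex comparison against $F$ together with $\ip{\bm 1,\bm\Delta}=0$ gives $\mathcal L(\bm d^{k+1})-\mathcal L(\bm d^k_+)\ge\tfrac1{16}(\varepsilon_3/n)^2$. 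With (a) this is $\mathcal L(\bm d^{k+1})-\mathcal L(\bm d^k)\ge\tfrac1{16}(\varepsilon_3/n)^2$.

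For (c): $\mathcal L(\bm d^0)\ge-\textup{poly}(n,m,\log L)$, since by Assumption~\ref{as:lip} and the \textsc{initialize} construction every coordinate of $\bm d^0$ is at least $1/\textup{poly}(n,L)$. For the upper side, in any iteration that does not return, the algorithm reaches Step~\ref{alg2:definehyperplane}, so Observations~\ref{cl:supphypcl} and~\ref{appx-nearest-point-dist} apply and give $\|\bm d^k_+\|_2\le\|\bm d^k_*\|_2+3nmL^2\varepsilon_1\le 2nL$ for the (small) $\varepsilon_1$ to be fixed later; hence each coordinate of $\bm d^k_+$ is at most $2nL$ and $\mathcal L(\bm d^k)\le\mathcal L(\bm d^k_+)\le n\log(2nL)$. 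Thus over all non-terminating iterates $\mathcal L$ stays in $[-\textup{poly}(n,m,\log L),\,n\log(2nL)]$, so by (b) the number of such iterations is at most $16(n^2/\varepsilon_3^2)\cdot\bigl(n\log(2nL)+\textup{poly}(n,m,\log L)\bigr)=\textup{poly}(n,m,1/\varepsilon_3,\log L)$; adding the single terminating iteration and invoking Theorem~\ref{correctness-main-lemma} completes the proof. The two delicate points I anticipate are: (i) checking that approximate nearest-point computation together with the round-up in Step~\ref{alg2:pd} does not break the Pareto-domination that underlies monotonicity (this is exactly what the $2L\varepsilon_1$ bump plus Observation~\ref{pareto-domination} buy us), and (ii) keeping the $\mathcal L$-upper bound valid everywhere it is used, which forces the count to be phrased purely over non-terminating iterations so that Observation~\ref{cl:supphypcl} always applies; treating $\log L$ as polynomially bounded, as throughout the paper, then gives the stated $\textup{poly}(n,m,1/\varepsilon_3)$ bound.
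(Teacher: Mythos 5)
Your proof follows exactly the route the paper itself takes (the paper's own proof of this lemma is only a two-sentence pointer back to the linear-case argument): monotonicity of $\mathcal L$ via Pareto-domination (Observation~\ref{pareto-domination}) plus the hyperplane-maximizer step, a per-iteration gain of $\tfrac{1}{16}(\varepsilon_3/n)^2$ by rerunning Lemma~\ref{lem:inc2} with $\bm d^k_+$ in place of $\bm d^k_*$, and a $\textup{poly}(n,m,\log L)$ bound on the range of $\mathcal L$ from the initialization and Observation~\ref{cl:supphypcl}. Your write-up is correct and in fact more carefully argued than the paper's, including the honest caveat that the iteration count carries a $\log L$ factor that the lemma statement suppresses.
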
	

It remains therefore to show that each iteration can be efficiently computed. 
The main difficulty lies in finding the approximate nearest point $\bm d^k_+$, or rather $\bm x^k_+$, \textit{i.e.}\ to explain the algorithm for \textsc{nearest-point}, from line~\ref{alg2:nearestpoint} of Algorithm~\ref{alg2:KKT}.
The remaining steps of the algorithm and of the \textsc{adjust-coordinates} procedure are polynomial.

Recall that given a scalar $\varepsilon_1$ and a point $\bm d^k$, the subroutine $\textsc{nearest-point}(\cdot , \cdot)$ returns a point $\bm x^k_+$ such that $|| \bm x^k_+ - \bm x^k_{*}||_2 \leq \varepsilon_1$ and $||\vd (\bm x^k_+) -  \bm d^k_* ||_2 \leq \varepsilon_1$, where $\bm d^k_*$  is the nearest point in $\mathcal D + \mathbb{R}^n_{\geq 0}$ to $\bm d^k$ and $\bm x^k_* \in \mathcal F'$ is a pre-image of 	$\bm d^k_*$. Also note that we have an additional requirement on the nearest point that each coordinate should be an integral multiple of $\varepsilon_1/ (\textup{poly}(n,m))$. However, this can be implemented by rounding the nearest approximate point that we find in polynomial time without increasing $|| \bm d^k_+ - \bm d^k ||_2$ and $||\bm x^k_+ - \bm x^k||_2$ significantly (there will only be additive errors of $\varepsilon_1/ (\textup{poly}(n,m))$). Therefore, the main bottleneck is in finding the approximate nearest point. We focus mainly on this now.

We will implement this with the following convex program, which returns simultaneously $\bm x^k_*$ and $\bm d^k_*$ as its solution.

\begin{equation}
\label{convex-pgm-nearest-point}
\begin{array}{ll@{}ll}
\text{minimize}  & \displaystyle \sum\limits_{i \in [n]}^{ } (\beta_i)^2 \\
\text{subject to}& \displaystyle  \sum_{i \in [n]} (\bm z)_{ij} \geq 1, & &\forall j \in [m]\\ 
&   (\bm  z)_{ij} \geq  0,  & &\forall i \in [n], \forall j \in [m]\\
& D_i(\bm z_i) - (\bm d^k)_i - \beta_i \leq 0, & &\forall i \in [n],
\end{array}
\end{equation}

The objective function is clearly convex, 
and we show in Claim~\ref{convexityofprogram} in Appendix~\ref{app:tech-results} that the constraints are convex as well.

We now prove that Program~\ref{convex-pgm-nearest-point} is correct, \textit{i.e.}\ its solution gives the nearest point in $\mathcal D + \mathbb{R}^n_{\geq 0}$ to $\bm d^k$.

\begin{lemma}
	\label{convex-program-correctness}
	Let $(\bm z^*, \bm \beta^*)$ be an exact solution to the convex program~\ref{convex-pgm-nearest-point}. Then $\bm z^* \in \mathcal F'$ and $ \vd (\bm z^*)$ is a nearest point in $\mathcal D +\mathbb{R}^n_{\geq 0}$ to $\bm d^k$.
\end{lemma}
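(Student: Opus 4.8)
The plan is to verify the two claims separately: (i) that $\bm z^\ast \in \mathcal F'$, and (ii) that $\vd(\bm z^\ast)$ is a nearest point in $\mathcal D + \mathbb R^n_{\geq 0}$ to $\bm d^k$. Part (i) is essentially immediate from the first two families of constraints of Program~\eqref{convex-pgm-nearest-point}: any feasible point satisfies $\sum_{i \in [n]} (\bm z)_{ij} \geq 1$ for all $j$ and $(\bm z)_{ij} \geq 0$ for all $i,j$, which is exactly the definition of $\mathcal F'$. So the only work is in part (ii).

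For part (ii), first I would establish that $\vd(\bm z^\ast)$ actually lies in $\mathcal D + \mathbb R^n_{\geq 0}$. Since $\bm z^\ast \in \mathcal F'$ and $\mathcal F' = \mathcal F + \mathbb R^{nm}_{\geq 0}$, there is a genuinely feasible allocation $\bm z' \in \mathcal F$ with $\bm z' \leq \bm z^\ast$ coordinatewise; monotonicity of each $D_i$ then gives $\vd(\bm z') \leq \vd(\bm z^\ast)$, and $\vd(\bm z') \in \mathcal D$, so $\vd(\bm z^\ast) \in \mathcal D + \mathbb R^n_{\geq 0}$. Next I would set up the comparison: let $\bm w \in \mathcal D + \mathbb R^n_{\geq 0}$ be arbitrary; I want $\Vert \vd(\bm z^\ast) - \bm d^k \Vert_2 \leq \Vert \bm w - \bm d^k \Vert_2$. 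The key structural observation, already highlighted in the overview preceding the statement, is that for a point $\bm d^k$ outside (or on the lower boundary of) $\mathcal D + \mathbb R^n_{\geq 0}$, a point $\bm d'$ of that set is Pareto-dominated by $\bm d^k + \bm \beta$ if and only if $\bm d^k + \bm \beta$ itself lies in $\mathcal D + \mathbb R^n_{\geq 0}$ (because the set is upward-closed under $\mathbb R^n_{\geq 0}$, so if some of its points sits in the negative orthant anchored at $\bm d^k + \bm \beta$, then $\bm d^k + \bm \beta$ is itself in the set). I would use this to translate ``$\bm w$ is a candidate nearest point'' into ``$\bm \beta := \bm w - \bm d^k$ is feasible for the program with witness $\bm z$ a pre-image of $\bm w$ in $\mathcal F'$, provided $\bm w \geq \bm d^k$'' — and I would reduce to this case by noting that the nearest point to $\bm d^k$ necessarily Pareto-dominates $\bm d^k$ (any coordinate of $\bm w$ strictly below $(\bm d^k)_i$ can be raised to $(\bm d^k)_i$, staying in the upward-closed set and strictly decreasing distance).

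With this reduction in place, the argument closes as follows. Given the true nearest point $\bm w^\ast$ with $\bm w^\ast \geq \bm d^k$, write $\bm w^\ast = \vd(\bm z)$ for some $\bm z \in \mathcal F'$ (pre-image exists by definition of $\mathcal D + \mathbb R^n_{\geq 0}$), and set $\bm \beta := \bm w^\ast - \bm d^k \geq \bm 0$. Then $(\bm z, \bm \beta)$ is feasible for Program~\eqref{convex-pgm-nearest-point}: the $\mathcal F'$ constraints hold since $\bm z \in \mathcal F'$, and the third constraint $D_i(\bm z_i) - (\bm d^k)_i - \beta_i \leq 0$ holds with equality. Hence the optimal objective value satisfies $\sum_i (\beta^\ast_i)^2 \leq \sum_i \beta_i^2 = \Vert \bm w^\ast - \bm d^k \Vert_2^2$. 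Conversely, at the optimal $(\bm z^\ast, \bm \beta^\ast)$ the third constraint together with monotonicity gives $(\bm d^k)_i + \beta^\ast_i \geq D_i(\bm z^\ast_i) = (\vd(\bm z^\ast))_i$, so the point $\bm d^k + \bm \beta^\ast$ Pareto-dominates $\vd(\bm z^\ast) \in \mathcal D + \mathbb R^n_{\geq 0}$; by the upward-closedness observation, $\bm d^k + \bm \beta^\ast \in \mathcal D + \mathbb R^n_{\geq 0}$ as well. I would then argue that at optimality $\beta^\ast_i = (\vd(\bm z^\ast))_i - (\bm d^k)_i$ for every $i$ (otherwise one could decrease some $\beta^\ast_i$ — still keeping the third constraint and the domination — and strictly reduce the objective, using that $\bm w^\ast \geq \bm d^k$ forces the relevant coordinates nonnegative; here I would lean on the preliminary reduction that $(\bm d^k)_i \leq (\vd(\bm z^\ast))_i$, which itself follows because $\vd(\bm z^\ast) \in \mathcal D + \mathbb R^n_{\geq 0}$ and $\bm d^k$ is below it). Consequently $\Vert \vd(\bm z^\ast) - \bm d^k \Vert_2^2 = \sum_i (\beta^\ast_i)^2 \leq \Vert \bm w^\ast - \bm d^k \Vert_2^2$, and since $\bm w^\ast$ was the nearest point this must be equality, so $\vd(\bm z^\ast)$ is a nearest point. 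The main obstacle I anticipate is the bookkeeping around signs: showing that at optimality $\bm \beta^\ast$ exactly equals $\vd(\bm z^\ast) - \bm d^k$ coordinatewise (not merely dominates it) and that this vector is nonnegative, which is where the upward-closedness of $\mathcal D + \mathbb R^n_{\geq 0}$ and the ``nearest point dominates $\bm d^k$'' reduction do the real work; everything else is routine.
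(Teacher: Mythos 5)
Your overall architecture matches the paper's: the upper bound $\sum_i(\beta^*_i)^2\leq\Vert\bm d^k_*-\bm d^k\Vert_2^2$ comes from plugging in the true nearest point as a feasible solution, and the lower bound comes from showing that $\bm d^k+\bm\beta^*$ itself lands in $\mathcal D+\mathbb R^n_{\geq 0}$. Where you genuinely diverge is in how you get the lower bound: the paper assumes for contradiction that $\bm d^k+\bm\beta$ lies outside $\mathcal D+\mathbb R^n_{\geq 0}$, takes a separating hyperplane $\ip{\bm a,\bm y}=n$, and derives a sign contradiction (the last constraint forces $\vd(\bm z)-\bm d^k-\bm\beta\leq \bm 0$ coordinatewise, so $\bm a$ would need a negative entry, impossible for a supporting hyperplane of an upward-unbounded set). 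You instead observe directly that the constraints give $\bm d^k+\bm\beta^*\geq\vd(\bm z^*)$ with $\vd(\bm z^*)\in\mathcal D+\mathbb R^n_{\geq 0}$, so upward-closedness immediately puts $\bm d^k+\bm\beta^*$ in the set and hence $\Vert\bm\beta^*\Vert_2$ at least the minimum distance. Your route is shorter and makes the role of upward-closedness explicit; the two are otherwise equivalent.

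One step of your write-up is justified incorrectly, however. You assert that $(\bm d^k)_i\leq(\vd(\bm z^*))_i$ for all $i$ ``because $\vd(\bm z^*)\in\mathcal D+\mathbb R^n_{\geq 0}$ and $\bm d^k$ is below it.'' That inference is false in general: $\bm d^k$ lying outside an upward-closed set does not mean every point of the set coordinatewise dominates $\bm d^k$ (e.g., for $n=2$ with lower boundary $d_1+d_2=2$ and $\bm d^k=(0.5,0.5)$, the point $(0,2)$ is in the set but does not dominate $\bm d^k$). What is true is that the \emph{nearest} point dominates $\bm d^k$ (Lemma~\ref{lem:algo-steps-linear}), not an arbitrary point of the set, and the program's optimality conditions pin down $\bm\beta^*$, not $\vd(\bm z^*)$: a priori the optimal witness $\bm z^*$ only satisfies $\vd(\bm z^*)\leq\bm d^k+\bm\beta^*$, and in a coordinate where $\beta^*_i=0$ the witness could in principle dip below $(\bm d^k)_i$. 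To be fair, the paper's own proof makes the same silent leap when it passes from $D_i((\bm z^*)_i)-(\bm d^k)_i\leq(\bm\beta^*)_i$ to $\Vert\vd(\bm z^*)-\bm d^k\Vert_2^2\leq\Vert\bm\beta^*\Vert_2^2$ (which needs the left-hand differences to be nonnegative), so this is a shared imprecision rather than a defect unique to your argument; but the specific reason you give for the nonnegativity is not a valid one and should be replaced, e.g., by arguing about the structure of the optimal $(\bm z^*,\bm\beta^*)$ rather than about arbitrary points of $\mathcal D+\mathbb R^n_{\geq 0}$.
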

\begin{proof}
	Since $(\bm z^*, \bm \beta^*)$ satisfies the feasibility constraints, we have $(\bm z^*)_{ij} \geq 0$ for all $i \in [n]$, $j \in [m]$ and $\sum_{i \in [n]} (\bm z^*)_{ij} \geq 1$ for all $j \in [m]$, implying that $\bm z^* \in \mathcal F'$. Now, it remains to show that $\vd (\bm z^*) = \bm d^k_*$.
	
	Let $\OPT$ be the minimum value of the objective function in~\eqref{convex-pgm-nearest-point} achieved by any feasible solution. We wish to show that $\OPT = ||\bm d^k_* - \bm d^k||_2^2$. First note that $\OPT \leq ||\bm d^k_* - \bm d^k ||_2^2$, since it is feasible to set $\bm z \gets \bm x^k_*$ and 
	\begin{align*}
	&\beta_i \gets D_i(\bm z_i) - (\bm d^k)_i = D_i((\bm x^k_*)_i) - (\bm d^k)_i = (\bm d^k_*)_i - (\bm d^k)_i\\
	&\implies \sum_{i\in[n]}(\beta_i)^2 = \Vert\bm d^k_*-\bm d^k\Vert_2^2\ .
	\end{align*}
Now, we show that $\OPT \geq  || \bm d^k_* - \bm d^k ||_2^2$.  
Note that if $\bm d^k \in \mathcal D + \mathbb{R}^n_{\geq 0}$, then $\bm d^k_* =\bm d^k$ and thus $\OPT$ will be trivially larger than $|| \bm d^k_* - \bm d^k ||_2^2 = 0$. 
So we only focus on the case when $\bm d^k \notin \mathcal D + \mathbb{R}^n_{\geq 0}$. 
Suppose for a contradiction that $\OPT < ||\bm d^k_* - \bm d^k||_2^2 $ and there exists a feasible $(\bm z, \bm \beta)$ such that $\sum_{i \in [n]}(\beta_i)^2 < ||\bm d^k_* - \bm d^k||_2^2$.  
This implies that $\bm d^k + \bm \beta \notin \mathcal{D} + \mathbb{R}^n_{\geq 0}$. 
Therefore, there exists a hyperplane $\ip{\bm a, \bm y} = n$ that separates $\mathcal D + \mathbb{R}^n_{\geq 0}$ and the point $\bm d^k + \bm \beta$.
Since $\bm z$ lies in $\mathcal D+\mathbb R^n_{\geq 0}$, we have then by definition, 
\begin{align*}
	\ip{\bm a, (\bm d^k+\bm \beta)} &< n \leq \ip{\bm a, \vd(\bm z)}\\
	\implies 0&<\ip{\bm a, \vd(\bm z)-\bm d^k-\bm\beta}
	\ .
\end{align*}
By the last constraint in program~\eqref{convex-pgm-nearest-point}, 
each coordinate of $\vd(\bm z)-\bm d^k-\bm\beta$ is non-positive, and so for this inequality to hold, $\bm a$ must have a strictly negative entry. 
However, since $\mathcal D + \mathbb R^n_{\geq 0}$ is unbounded in the positive directions, 
no hyperplane of the form $\ip{\bm a,\cdot}\geq n$ can be supporting unless $\bm a\geq \bm 0$, a contradiction.

	Therefore, at the optimum, $||\bm \beta^*||_2^2 = ||\bm d^k_* - \bm d^k||_2^2$. 
	Since $(\bm z^*, \bm \beta^*)$ is a feasible point we have $D_i ((\bm z^*)_i) - (\bm d^k)_i \leq (\bm \beta^*)_i$ for all $i \in [n]$, implying that $||\vd (\bm z^*) - \bm d^k ||_2^2 \leq ||\bm d^k_* - \bm d^k||_2^2$. Since $\bm d^k_*$ is a closest point in $\mathcal D + \mathbb{R}^n_{\geq 0}$ to $\bm d^k$, and $\bm z^* \in \mathcal D + \mathbb{R}^n_{\geq 0}$, we have $||\vd (\bm z^*) - \bm d^k ||_2^2 =  ||\bm d^k_* - \bm d^k||_2^2$, implying that $\vd (\bm z^*)$ is also a closest point in $\mathcal D + \mathbb{R}^n_{\geq 0}$ to $\bm d^k$. 
\end{proof}

Let $(\bm z^*, \bm \beta^*)$ be the optimum solution to the program~\ref{convex-pgm-nearest-point}. We just need to find an approximate solution to the convex program $(\bm z^+, \bm \beta^+)$ such that $||(\bm z^+, \bm \beta^+) - (\bm z^*, \bm \beta^*)||_2 \leq \varepsilon_1/L$, as this would give us the desired bounds in the allocation as well as the resulting disutility vector (follows from the Lipschitz condition of the disutility functions mentioned in Assumption~\ref{as:lip}). This can be determined in polynomial time by interior point algorithms. This brings us to the main lemma of this section.

\begin{lemma}
	\label{polynomialtimeiteration}
	Given a scalar $\varepsilon_1$ and a point $\bm d^k$, the subroutine $\textup{NEAREST-POINT}(\cdot , \cdot)$ returns a point $\bm x^k_+$ such that $|| \bm x^k_+ - \bm x^k_{*}||_2 \leq \varepsilon_1$ and $||\vd (\bm x^k_+) -  \bm d^k_* ||_2 \leq \varepsilon_1$, where $\bm d^k_*$  is the nearest point in $\mathcal D + \mathbb{R}^n_{\geq 0}$ to $\bm d^k$ and $\bm x^k_* \in \mathcal F'$ is a pre-image of 	$\bm d^k_*$, in time $\textup{poly}(n,m, \log (L / \varepsilon_1))$. Additionally, each coordinate of $\bm d^k_+$ and $\bm x^k_+$ is an integral multiple of $\varepsilon_1/ (\textup{poly}(n,m))$. 
\end{lemma}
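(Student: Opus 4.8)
The starting point is Lemma~\ref{convex-program-correctness}, which tells us that an exact optimum $(\bm z^*,\bm\beta^*)$ of the convex program~\eqref{convex-pgm-nearest-point} satisfies $\bm z^*\in\mathcal F'$ and $\vd(\bm z^*)=\bm d^k_*$, and Claim~\ref{convexityofprogram}, which tells us the program is convex. So the task reduces to three things: (i) solving~\eqref{convex-pgm-nearest-point} to an inverse-polynomial (indeed as small as $\varepsilon_1/\mathrm{poly}(n,m,L)$) objective-value accuracy in time $\mathrm{poly}(n,m,\log(L/\varepsilon_1))$ using only the assumed oracles; (ii) converting this objective-value accuracy into the stated Euclidean bounds $\|\bm x^k_+-\bm x^k_*\|_2\le\varepsilon_1$ and $\|\vd(\bm x^k_+)-\bm d^k_*\|_2\le\varepsilon_1$; and (iii) rounding coordinates to controlled bit length.

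For step (i), I would first observe that one may add, without loss of generality, the box constraints $0\le z_{ij}\le 2nL^2$ and $0\le\beta_i\le 2nL$: by Observation~\ref{cl:supphypcl} the true nearest point has $\|\bm d^k_*\|_2\le nL$, so $\beta^*_i=(\bm d^k_*)_i-(\bm d^k)_i\in[0,nL]$ (nonnegativity by the Pareto-domination of Observation~\ref{pareto-domination} and Lemma~\ref{lem:algo-steps-linear}), and telescoping the lower-Lipschitz bound of Assumption~\ref{as:lip} along the coordinate axes gives $D_i(\bm x_i)\ge\|\bm x_i\|_1/L$, hence every pre-image of $\bm d^k_*$ has $\|(\bm x^k_*)_i\|_1\le L(\bm d^k_*)_i\le nL^2$. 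With these box constraints the feasible region lies in a ball of radius $\mathrm{poly}(n,m,L)$, and it contains a ball of inverse-polynomial radius around the strictly feasible point obtained by taking all $z_{ij}$ large and $\beta_i=D_i(\bm z_i)-(\bm d^k)_i+1$, so Slater's condition holds with an inverse-polynomially bounded margin. The black-box value and partial-derivative oracles for the $D_i$ yield a subgradient oracle for each constraint function $D_i(\bm z_i)-(\bm d^k)_i-\beta_i$, hence a separation oracle for the feasible set, and the quadratic objective has an explicit gradient. Feeding this into the ellipsoid method (or a self-concordant-barrier interior-point method) produces, in $\mathrm{poly}(n,m,\log L,\log(1/\eta))$ time, a feasible point whose objective value is within $\eta$ of the optimum.

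For step (ii), the key structural fact is that the $\bm\beta$-coordinate of the minimizer is \emph{unique}: any feasible $(\bm z,\bm\beta)$ has $\bm d^k+\bm\beta$ Pareto-dominating $\vd(\bm z)\in\mathcal D+\mathbb R^n_{\geq 0}$, hence $\bm d^k+\bm\beta\in\mathcal D+\mathbb R^n_{\geq 0}$, so $\|\bm\beta\|_2^2=\|(\bm d^k+\bm\beta)-\bm d^k\|_2^2\ge\|\bm d^k_*-\bm d^k\|_2^2$ with equality only at $\bm\beta=\bm d^k_*-\bm d^k=:\bm\beta^*$ (uniqueness of Euclidean projection onto the convex set $\mathcal D+\mathbb R^n_{\geq 0}$). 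Since $\|\bm\beta\|_2^2$ is $2$-strongly convex, an $\eta$-optimal point $(\bm z^+,\bm\beta^+)$ has $\|\bm\beta^+-\bm\beta^*\|_2\le\sqrt\eta$, so $\bm d^k+\bm\beta^+$ is within $\sqrt\eta$ of $\bm d^k_*$; combining this with the constraint $\vd(\bm z^+)\le\bm d^k+\bm\beta^+$, the obtuse-angle property of the projection onto $\mathcal D+\mathbb R^n_{\geq 0}$, and the Lipschitz bounds of Assumption~\ref{as:lip} to pass between allocation- and disutility-space, I would derive that $\|\vd(\bm z^+)-\bm d^k_*\|_2$ and $\|\bm z^+-\bm x^k_*\|_2$ are both $\mathrm{poly}(n,m,L)\cdot\sqrt\eta$ for an appropriate pre-image $\bm x^k_*$ of $\bm d^k_*$; choosing $\eta=(\varepsilon_1/\mathrm{poly}(n,m,L))^2$ then makes the targets hold with slack. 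Finally (step (iii)), I would round each coordinate of $\bm z^+$ and of $\vd(\bm z^+)$ to the nearest integral multiple of $\varepsilon_1/\mathrm{poly}(n,m)$; this adds Euclidean error at most $\sqrt{nm}\cdot\varepsilon_1/\mathrm{poly}(n,m)\ll\varepsilon_1$, preserves nonnegativity, runs in polynomial time, and delivers the claimed bit complexity, while the total running time is $\mathrm{poly}(n,m,\log(L/\varepsilon_1))$ since $\log(1/\eta)$ is polynomial in these parameters.

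The main obstacle is the conversion in step (ii): a polynomial-time convex solver only guarantees near-optimality in \emph{objective value}, whereas the lemma asks for near-optimality in the \emph{decision variables}. The objective $\sum_i\beta_i^2$ is strongly convex in $\bm\beta$ but constant along the directions of $\bm z$ that preserve $\vd(\bm z)$, so the optimal allocation is non-unique and, a priori, a near-optimal $\bm z^+$ could have a disutility coordinate $D_i((\bm z^+)_i)$ far below $(\bm d^k_*)_i$ (an agent given almost no work). Ruling this out requires exploiting the chore-coverage constraints $\sum_i z_{ij}\ge 1$ together with the two-sided Lipschitz control of Assumption~\ref{as:lip}; alternatively, one can sidestep it either by adding an inverse-exponentially small regularizer that makes the whole program strongly convex without moving the optimum by more than $\varepsilon_1/(2L)$, or by post-processing $\bm z^+$ using the $1$-homogeneity of the $D_i$ to rescale each agent's bundle so that the corresponding constraint becomes tight. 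Fixing this step carefully, together with verifying that the Slater margin and the radii bounds above are only inverse-polynomially small so that the ellipsoid/interior-point running time stays polynomial in $\log(L/\varepsilon_1)$, is where the real work lies.
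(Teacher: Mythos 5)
Your proposal follows essentially the same route as the paper: solve the convex program~\eqref{convex-pgm-nearest-point} via the ellipsoid/interior-point method, using the value and partial-derivative oracles to build separation oracles for the constraints $D_i(\bm z_i)-(\bm d^k)_i-\beta_i\le 0$, bounding the feasible region, and rounding at the end. The one place you go beyond the paper is the obstacle you flag in step (ii): the paper simply asserts that interior-point methods return $(\bm z^+,\bm\beta^+)$ with $\|(\bm z^+,\bm\beta^+)-(\bm z^*,\bm\beta^*)\|_2\le\varepsilon_1/L$ and does not address the fact that generic solvers certify objective-value accuracy only, nor that the optimizer is non-unique in the $\bm z$-coordinates; your strong-convexity argument for $\bm\beta$ (via uniqueness of the Euclidean projection onto $\mathcal D+\mathbb R^n_{\ge 0}$) and your proposed fixes for the $\bm z$-degeneracy are a genuine tightening of a step the paper leaves implicit.
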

\begin{proof}
	Many interior-point methods exist to solve this program, including the ellipsoid method, which can efficiently find a near-optimal point in $\textup{poly}(n,m,\log(1/\varepsilon_1))$ time, given efficient separation oracles~\cite{conv-opt}.
	The constraint region needs to be bounded for these methods to work, but we can use
	the correctness of Algorithm~\ref{alg2:KKT} (Theorem~\ref{correctness-main-lemma}) to upper-bound the allocations, and Assumption~\ref{as:lip} to bound the feasible $\beta$ values.
	
	We can easily determine which constraint is violated in the program~\eqref{convex-pgm-nearest-point}, 
	but we need to return a hyperplane if the violated constraint is one of the $\d_i(\bm z_i)-(\bm d^k)_i-\beta_i\geq 0$ constraints.
	We have assumed access to the partial derivatives of the $\d_i$'s, and we therefore have access to the gradient of these constraints.
	A ``good enough'' candidate point at which to take the supporting hyperplane can be found by approximating the nearest feasible point using unconstrained optimization with barrier functions~\cite{conv-opt}, 
	which gives separation oracles to implement a step of the ellipsoid method.
\end{proof}

 To finish arguing that each iteration of Algorithm~\ref{alg2:KKT} can be implemented in polynomial time, we need to show that bit size of $\bm d^{k+1}$ computed at the end of each iteration does not grow too much. In particular, there should be at most an additive polynomial increase in the bit-size of $\bm d^{k+1}$ from $\bm d^k$. To this end, note that $\bm d^{k+1} = \frac{\bm 1}{\bm a^k} = \frac{\ip{\bm d^k_+ - \bm d^k, \bm d^k_+}} {n} \cdot (\bm d^k_+ - \bm d^k)$. Note that such an operation can only cause a polynomial additive increase the bit size of $\bm d^{k+1}$ from $\bm d^k$, as the bit-size of $\bm d^k_+$ is polynomially bounded (recall that we enforce every coordinate of $\bm d^k_+$ to be an integral multiple of $\varepsilon_1/ (\textup{poly}(n,m))$) . Since the total number of iterations is $\textup{poly}(n,m,1 /\varepsilon_3, \log (L))$, we have that the bit-size of $\bm d^{k+1}$ at the end of each iteration is $\textup{poly}(n, m , 1/ \varepsilon_3, \log( L / \varepsilon_1))$. Therefore, the running time of each iteration is polynomial. We can now bound the running time of the algorithm.

\begin{theorem}
	\label{apprx-KKT-generalmain}
	 In $\textup{poly}(n,m,1 / \varepsilon_3, \log (L / \varepsilon_1))$ time, we can determine a $(1 + 3mn^2L^3(\alpha + \varepsilon_2), 1 + \varepsilon_3, 9n^5mL^3 \varepsilon_1 / \varepsilon_2^2)$-KKT point where $\alpha = 48n^7mL^6 \varepsilon_1 / \varepsilon_2^3$.
\end{theorem}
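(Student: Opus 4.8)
The plan is to assemble the three ingredients already established in Sections~\ref{sec:gen-KKT-alg} and~\ref{sec:gen-alg-bounds} and to do the bit-complexity bookkeeping that ties them together. Correctness of the returned object is nothing more than Theorem~\ref{correctness-main-lemma}: whichever of the two exit points of Algorithm~\ref{alg2:KKT} fires (Step~\ref{alg2:stopping-hyp} via Lemma~\ref{appx-KKT-point-1}, or Step~\ref{alg2:returnclosepoint} via Lemmas~\ref{appx-KKT-point-2-gammadelta}, \ref{appx-KKT-point-2-lambda-under}, \ref{appx-KKT-point-2-lambda-over}), the triple $(\bm a,\bm d_+,\bm x_+)$ it outputs is a $\big(1+3mn^2L^3(\alpha+\varepsilon_2),\,1+\varepsilon_3,\,9n^5mL^3\varepsilon_1/\varepsilon_2^2\big)$-KKT point with $\alpha = 48n^7mL^6\varepsilon_1/\varepsilon_2^3$. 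So the entire content of the theorem is the running-time bound, which I would factor as (number of iterations) $\times$ (cost of one iteration), and then argue each factor is polynomial in $n,m,1/\varepsilon_3,\log(L/\varepsilon_1)$.

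For the iteration count I would simply invoke Lemma~\ref{polynomialmanyiterations}. The argument there mirrors the linear case (Lemma~\ref{lem:inc2}): the potential $\mathcal L = \sum_i\log d_i$ never decreases across an iteration (the adjusted nearest point Pareto-dominates $\bm d^k$, and $\bm d^{k+1}=\bm 1/\bm a^k$ maximizes $\mathcal L$ on the hyperplane through $\bm d^k_+$), and whenever the stopping test $\logd(\bm d^{k+1},\bm d^k_+)<\varepsilon_3$ fails it rises by at least $\Omega(\varepsilon_3^2/n^2)$; since $\mathcal L$ is bounded above over $\mathcal D$ (every agent's disutility is at most $L\sqrt n$ by Assumption~\ref{as:lip}) and bounded below at the starting profile produced by \textsc{initialize}, only $\textup{poly}(n,m,1/\varepsilon_3)$ iterations are possible. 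For the cost of a single iteration, the only nontrivial step is \textsc{nearest-point}; by Lemma~\ref{polynomialtimeiteration} this runs in $\textup{poly}(n,m,\log(L/\varepsilon_1))$ time, solving the convex program~\eqref{convex-pgm-nearest-point} by the ellipsoid method with the separation oracles described there, the region being bounded using the output guarantee of Theorem~\ref{correctness-main-lemma} and Assumption~\ref{as:lip}. Everything else in the loop — \textsc{adjust-coordinates}, forming $\bm a^k$, rescaling so that $\ip{\bm a^k,\bm d^k_+}=n$, setting $\bm d^{k+1}=\bm 1/\bm a^k$, and evaluating the tests $\Vert\bm d^k_+-\bm d^k\Vert_2\le\varepsilon_2$ and $\logd(\cdot,\cdot)<\varepsilon_3$ — is elementary arithmetic on the current vectors.

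The step I expect to be the main obstacle, and the reason the theorem is stated only at the very end, is making ``elementary arithmetic'' genuinely polynomial, i.e.\ controlling the bit-size of the iterates. Here the rounding built into \textsc{nearest-point} is what I would lean on: each coordinate of $\bm x^k_+$ and $\bm d^k_+$ is forced to be an integer multiple of $\varepsilon_1/2^{\textup{poly}(n,m)}$, so $\bm d^k_+$ always has $\textup{poly}(n,m,\log(1/\varepsilon_1))$ bit-length, independent of $k$. Since $\bm d^{k+1}=\bm 1/\bm a^k=\big(\ip{\bm d^k_+-\bm d^k,\bm d^k_+}/n\big)\cdot(\bm d^k_+-\bm d^k)$, passing from $\bm d^k$ to $\bm d^{k+1}$ adds only an additive polynomial number of bits, so after $\textup{poly}(n,m,1/\varepsilon_3,\log L)$ iterations the bit-size of $\bm d^k$ is still $\textup{poly}(n,m,1/\varepsilon_3,\log(L/\varepsilon_1))$ and every arithmetic step costs polynomial time; without the enforced rounding this closed-form update could in principle roughly double bit-lengths each round. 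Multiplying the $\textup{poly}(n,m,1/\varepsilon_3)$ iteration bound by the $\textup{poly}(n,m,\log(L/\varepsilon_1))$ per-iteration bound (plus the polynomial bit-size overhead) yields the claimed $\textup{poly}(n,m,1/\varepsilon_3,\log(L/\varepsilon_1))$ total running time, and by Theorem~\ref{correctness-main-lemma} the object returned is exactly the stated $(\lambda,\gamma,\delta)$-KKT point, completing the proof.
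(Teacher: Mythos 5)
Your proposal is correct and follows essentially the same route as the paper: the paper's proof of Theorem~\ref{apprx-KKT-generalmain} is exactly the combination of Lemma~\ref{polynomialmanyiterations} (iteration count via the potential $\mathcal L$) with Lemma~\ref{polynomialtimeiteration} (per-iteration cost of \textsc{nearest-point}), preceded by the same bit-size argument you give, namely that the enforced rounding of $\bm d^k_+$ to integer multiples of $\varepsilon_1/2^{\textup{poly}(n,m)}$ keeps the closed-form update $\bm d^{k+1} = \frac{\ip{\bm d^k_+-\bm d^k,\bm d^k_+}}{n}(\bm d^k_+-\bm d^k)$ to an additive polynomial growth in bit-length per iteration. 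No gaps.
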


\begin{proof}
	Lemma~\ref{polynomialmanyiterations} states that in $\textup{poly}(n,m, 1 / \varepsilon_3)$ many iterations, Algorithm~\ref{alg2:KKT} determines a $(1 + 3mn^2L^3(\alpha + \varepsilon_2), 1 + \varepsilon_3, 9n^5mL^3 \varepsilon_1 / \varepsilon_2^2)$-KKT point where $\alpha = 48n^7mL^6 \varepsilon_1 / \varepsilon_2^3$. Lemma~\ref{polynomialtimeiteration}, states that each iteration of Algorithm~\ref{alg2:KKT} can be implemented in $\textup{poly}(n,m, \log (L / \varepsilon_1))$ time. Thus, Algorithm~\ref{alg2:KKT} finds the approximate KKT point in $\textup{poly}(n,m, 1/ \varepsilon_3, \log (L / \varepsilon_1))$ time. 
\end{proof}

By setting appropriate values for $\varepsilon_1$, $\varepsilon_2$ and $\varepsilon_3$, we have the following corollary.

\begin{corollary}
	\label{approx-KKT-generalcorollary}
	In $\textup{poly}(n,m, 1/\varepsilon, \log (L))$ time, we can determine a $(1 + \frac{\varepsilon}{2^{\textup{poly}(n,m)}}, 1 + \varepsilon, \frac{\varepsilon}{2^{\textup{poly}(n,m)}})$-KKT point.
\end{corollary}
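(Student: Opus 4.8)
The plan is to instantiate Theorem~\ref{apprx-KKT-generalmain} with a careful choice of the three error parameters $\varepsilon_1,\varepsilon_2,\varepsilon_3$ so that the resulting $(\lambda,\gamma,\delta)$-KKT point has $\gamma = 1+\varepsilon$ while both $\lambda-1$ and $\delta$ are bounded by $\varepsilon/2^{\textup{poly}(n,m)}$, and at the same time $\log(1/\varepsilon_1)$ and $\log(1/\varepsilon_2)$ remain small enough that the running time of Theorem~\ref{apprx-KKT-generalmain} stays $\textup{poly}(n,m,1/\varepsilon,\log L)$. This is essentially a bookkeeping exercise: propagate the polynomial and exponential dependencies and check everything closes.

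First I would set $\varepsilon_3 := \varepsilon$, which immediately gives $\gamma = 1+\varepsilon_3 = 1+\varepsilon$ and makes the iteration count of Lemma~\ref{polynomialmanyiterations} equal to $\textup{poly}(n,m,1/\varepsilon)$. Recall from Theorem~\ref{apprx-KKT-generalmain} that $\lambda = 1 + 3mn^2L^3(\alpha+\varepsilon_2)$ with $\alpha = 48n^7mL^6\varepsilon_1/\varepsilon_2^3$ and $\delta = 9n^5mL^3\varepsilon_1/\varepsilon_2^2$. I would then choose $\varepsilon_2 := \varepsilon/(L^3\cdot 2^{p(n,m)})$ for a sufficiently large fixed polynomial $p$, so that the $\varepsilon_2$-term of $\lambda-1$, namely $3mn^2L^3\varepsilon_2$, is at most $\varepsilon/2^{\textup{poly}(n,m)}$. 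Finally I would take $\varepsilon_1$ extremely small, e.g.\ $\varepsilon_1 := \varepsilon\cdot\varepsilon_2^3/(L^{9}\cdot 2^{q(n,m)})$ for a large enough polynomial $q$; substituting into the formulas shows that the $\alpha$-term $3mn^2L^3\alpha = 144\,n^9m^2L^9\varepsilon_1/\varepsilon_2^3$ and the quantity $\delta = 9n^5mL^3\varepsilon_1/\varepsilon_2^2$ are each at most $\varepsilon/2^{\textup{poly}(n,m)}$, and also that the ordering $\varepsilon_3 \gg \varepsilon_2 \gg n^3m^3L^3\varepsilon_1$ required for Theorem~\ref{correctness-main-lemma} holds. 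Combining the two contributions to $\lambda-1$ then gives $\lambda = 1 + \varepsilon/2^{\textup{poly}(n,m)}$, together with $\delta = \varepsilon/2^{\textup{poly}(n,m)}$ and $\gamma = 1+\varepsilon$, as required.

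It remains to verify the running time. With the above choices, $\log(1/\varepsilon_2) = \log(1/\varepsilon) + p(n,m) + 3\log L$ and $\log(1/\varepsilon_1) = \log(1/\varepsilon) + 3\log(1/\varepsilon_2) + q(n,m) + 9\log L$, so that $\log(L/\varepsilon_1)$ is bounded by a polynomial in $n$, $m$, $\log L$, and $\log(1/\varepsilon)$. Plugging this, together with $\varepsilon_3 = \varepsilon$, into the bound $\textup{poly}(n,m,1/\varepsilon_3,\log(L/\varepsilon_1))$ of Theorem~\ref{apprx-KKT-generalmain} yields a total running time of $\textup{poly}(n,m,1/\varepsilon,\log L)$, completing the proof.

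I do not expect any genuine obstacle here. The only mild subtlety is to keep the powers of $L$ in the denominators of $\varepsilon_1,\varepsilon_2$ large enough that the $L^3$ (and $L^9$) prefactors appearing in $\lambda$ and $\delta$ are absorbed into the $2^{\textup{poly}(n,m)}$ slack, while still contributing only a $\textup{poly}(n,m)\log L$ term to $\log(1/\varepsilon_1)$ and hence keeping the running time polynomial in $\log L$.
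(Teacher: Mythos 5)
Your proposal is correct and follows essentially the same route as the paper: fix $\varepsilon_3=\varepsilon$, then choose $\varepsilon_2$ and $\varepsilon_1$ inverse-exponentially small in $n,m$ (with enough powers of $L$ in the denominators) and substitute into Theorem~\ref{apprx-KKT-generalmain}, checking the ordering constraints and the $\log(L/\varepsilon_1)$ contribution to the running time. The only difference is cosmetic: the paper couples $\varepsilon_2 = n^4m^3L^3\varepsilon_1^{1/6}$ and ends up with $\varepsilon^{1/6}$ and $\varepsilon^{2/3}$ in the $\lambda$ and $\delta$ bounds, whereas your direct choice of $\varepsilon_2$ and $\varepsilon_1$ in terms of $\varepsilon$ yields the $\varepsilon/2^{\textup{poly}(n,m)}$ form of the corollary statement exactly.
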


\begin{proof}
	We set $\varepsilon_3 = \varepsilon$. Thereafter, we set $\varepsilon_1 = \varepsilon / (2^{f(n) \cdot f(m)} \cdot f(L))$ where $f(\cdot)$ is a polynomial of sufficiently large constant degree. 
	The lower-bound on the degree will become clear by the end of this proof. 
	Then, set $\varepsilon_2 = n^4m^3L^3 (\varepsilon_1)^{\tfrac{1}{6}}$. 
	Note that we have $\varepsilon_2 \gg n^3m^3L^3 \varepsilon_1$ and $\varepsilon_3 \geq \frac{\varepsilon_2}{2^{\textup{poly}(n,m)}}$ as required by all the proofs so far. 
	
	We can show the following bounds in terms of $\varepsilon$:
	\begin{itemize}
		\item  $ 1 + 3mn^2L^3 \cdot (\alpha + \varepsilon_2) \leq 1 + (\varepsilon^{{1}/{6}})/ (2^{\textup{poly}(n,m)})$: We first note that $\alpha \leq \varepsilon_1^{1/2}$. We have 
		\begin{align*}
		 \alpha = 48n^7mL^6 \cdot \frac{\varepsilon_1}{\varepsilon_2^3}&= 48n^7mL^6 \cdot \frac{\varepsilon_1}{n^{12} \cdot m^9 L^9 \varepsilon_1^{{1}/{2}}}\\
		        &=\frac{48}{n^5m^8L^3} \cdot \varepsilon_1^{{1}/{2}}\leq \varepsilon_1^{1/2}\ .
		\end{align*}
		Substituting the upper-bound on $\alpha$ in $1 + 3mn^2L^3 \cdot (\alpha + \varepsilon_2)$, we have   
		\begin{align*}
		 1 + 3mn^2L^3 \cdot (\alpha + \varepsilon_2) &\leq 1 + 3mn^2L^3( (\varepsilon_1)^{1/2} + n^4m^3L^3 (\varepsilon_1)^{1/6})\\
		                                             &\leq 1 + 3mn^2L^3 \cdot (2n^4m^3L^3 (\varepsilon_1)^{1/6})\\
		                                             &\leq 1 + 6n^6m^4L^6 \cdot (\varepsilon_1)^{1/6}\\
		                                             &\leq 1 + \frac{\varepsilon^{{1}/{6}}}{2^{\textup{poly}(n,m)}} &\text{(for a sufficiently large $f(\cdot)$)}. 
		\end{align*} 
		\item $1 + \varepsilon_3 = 1 + \varepsilon$.
		\item $9n^5mL^3 \cdot \varepsilon_1/ \varepsilon_2^2 \leq (\varepsilon^{2/3}) / (2^{\textup{poly}(n,m)})$: We have 
		 \begin{align*}
		 9n^5mL^3 \cdot \frac{\varepsilon_1}{\varepsilon_2^2} = 9 n^5mL^3 \cdot \frac{\varepsilon_1}{n^8m^6L^6 \varepsilon_1^{\tfrac{1}{3}}}
		    &= \frac{9}{n^3m^5L^3} \cdot \varepsilon_1^{2/3}\leq \frac{\varepsilon^{{2}/{3}}}{2^{\textup{poly}(n,m)}}.
		 \end{align*}
	\end{itemize}
   Substituting these bounds in the statement of Theorem~\ref{apprx-KKT-generalmain}, we can conclude that Algorithm~\ref{alg2:KKT} returns a $(1 + \frac{\varepsilon^{{1}/{6}}}{2^{\textup{poly}(n,m)}}, 1 + \varepsilon, \frac{\varepsilon^{{2}/{3}}}{2^{\textup{poly}(n,m)}})$-KKT point for our choice of $\varepsilon_1$, $\varepsilon_2$ and $\varepsilon_3$. Note that the running time now is 
   \begin{align*}
     \textup{poly}(n,m,1/ \varepsilon_3, \log (L / \varepsilon_1)) &= \textup{poly}\bigg(n,m, 1 / \varepsilon, \log \big(L \cdot f(L) \cdot 2^{(f(n) \cdot f(m))} \cdot  1/ \varepsilon\big) \bigg)\\
                                                                   &= \textup{poly}(n,m, 1/ \varepsilon, f(n) \cdot f(m), \log (L \cdot  f(L)))\\
                                                                   &= \textup{poly}(n,m, 1/ \varepsilon, \log (L)) &\!\!\!\!\!\!\!\!\!\!\!\!\!\!\!\!\!\!\!\!\!\!\!\!\!\!\!\!\!\!\!\!\text{(as $f(\cdot)$ is polynomial)}
   \end{align*}
\end{proof}

\subsection{Finding Approximately Competitive Equilibria in Polynomial Time}

We can now combine the results of the previous sections to show that \eCEEI can be found in polynomial time.
\begin{theorem}
	\label{gen:mainthm}
	Given black-box access to 1-homogeneous and convex disutilities $\d_{1},\,\dotsc,\,\d_{n}$ 
	all satisfying Assumption~\ref{as:lip} with constant $L$,
	and also to their partial derivatives, 
	Algorithm~\ref{alg2:KKT}, finds an $\varepsilon^{1 / 6}$-CEEI in time polynomial in $n$, $m$, $1/\varepsilon$, and $\log(L)$.
\end{theorem}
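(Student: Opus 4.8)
This statement is essentially the composition of two results already in hand: Corollary~\ref{approx-KKT-generalcorollary} (equivalently Theorem~\ref{apprx-KKT-generalmain} with a careful choice of $\varepsilon_1,\varepsilon_2,\varepsilon_3$), which produces an approximate KKT point in the claimed time, and Theorem~\ref{thm:gen-approx}, which turns any such point into an approximate CEEI. The plan is therefore to (i)~run Algorithm~\ref{alg2:KKT} with the parameter setting from the proof of Corollary~\ref{approx-KKT-generalcorollary}, (ii)~read off the returned $(\lambda,\gamma,\delta)$-KKT point $(\bm a,\bm d,\bm x)$, (iii)~feed it to Theorem~\ref{thm:gen-approx} to obtain prices, and (iv)~verify that the resulting CEEI error is at most $\varepsilon^{1/6}$ and that the whole procedure runs in $\textup{poly}(n,m,1/\varepsilon,\log L)$ time.

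For steps (i)--(ii): as in the proof of Corollary~\ref{approx-KKT-generalcorollary}, set $\varepsilon_3=\varepsilon$, $\varepsilon_1=\varepsilon/(2^{f(n)f(m)}f(L))$ for a sufficiently large polynomial $f$, and $\varepsilon_2=n^4m^3L^3\varepsilon_1^{1/6}$. With these choices Lemma~\ref{polynomialmanyiterations} bounds the number of iterations by $\textup{poly}(n,m,1/\varepsilon)$, Lemma~\ref{polynomialtimeiteration} bounds the per-iteration cost (the \textsc{nearest-point} convex program~\eqref{convex-pgm-nearest-point} is solved approximately by interior-point methods using the value/derivative oracles for the $\d_i$'s), the bit-size control on $\bm d^{k+1}$ keeps arithmetic polynomial, and Theorem~\ref{correctness-main-lemma} guarantees that the output is a $(\lambda,\gamma,\delta)$-KKT point for minimizing $\mathcal L$ on $\mathcal D+\mathbb R^n_{\geq 0}$ with, by the three bullet bounds in that proof, $\lambda-1\le \varepsilon^{1/6}/2^{\textup{poly}(n,m)}$, $\gamma=1+\varepsilon$, and $\delta\le \varepsilon^{2/3}/2^{\textup{poly}(n,m)}$.

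For steps (iii)--(iv): applying Theorem~\ref{thm:gen-approx} to $(\bm a,\bm d,\bm x)$ gives a price vector $\bm p$ such that $(\bm x,\bm p)$ is an $\varepsilon'$-CEEI as in Definition~\ref{def:CE} with $\varepsilon'=\max\{3(\gamma-1)+5\delta,\ \lambda-1\}=\max\{3\varepsilon+5\varepsilon^{2/3}/2^{\textup{poly}(n,m)},\ \varepsilon^{1/6}/2^{\textup{poly}(n,m)}\}$. We may assume without loss of generality that $\varepsilon$ is below a small universal constant: if $\varepsilon^{1/6}\ge 1$ the guarantee is vacuous (conditions~\ref{P1},\ref{P2} hold since prices and consumptions are non-negative, and~\ref{P3} holds for the almost-feasible allocation the algorithm outputs), and otherwise running the algorithm with a smaller target only strengthens the output at no asymptotic cost, since a $\delta$-CEEI is also a $\delta'$-CEEI for $\delta'\ge\delta$. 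Under this assumption $3\varepsilon=3\varepsilon^{5/6}\cdot\varepsilon^{1/6}\le\varepsilon^{1/6}$, and the $2^{\textup{poly}(n,m)}$-suppressed terms are each $\le\varepsilon^{1/6}$, so $\varepsilon'\le\varepsilon^{1/6}$, i.e.\ $(\bm x,\bm p)$ is an $\varepsilon^{1/6}$-CEEI. Here $\bm x$ is produced directly by Algorithm~\ref{alg2:KKT}, and the prices $p_j=\min_i c_{ij}$ of Theorem~\ref{thm:gen-approx}, where $\bm c$ is a separating hyperplane between $\mathcal F'$ and the convex set $S^*=S_{\ip{\bm a,\bm d^*}}$, are computable in polynomial time from the oracles via the same convex-optimization machinery used for the nearest-point subroutine; hence the total running time is $\textup{poly}(n,m,1/\varepsilon,\log L)$.

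\textbf{Main obstacle.} None of the real work lives in this theorem; it is a bookkeeping composition of Corollary~\ref{approx-KKT-generalcorollary} and Theorem~\ref{thm:gen-approx}. The two points that deserve genuine care are (a)~propagating the exponent $1/6$ through the parameter choices so that all three error terms $\lambda-1$, $\gamma-1$, and $\delta$ collapse into a single $\varepsilon^{1/6}$ bound, and (b)~confirming that extracting the price vector from the KKT point is itself polynomial-time implementable under mere oracle access to the disutilities and their partial derivatives (rather than an explicit description of $\mathcal D$), which is handled by the same interior-point/ellipsoid argument as in Lemma~\ref{polynomialtimeiteration}.
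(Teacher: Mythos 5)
Your proposal is correct and follows essentially the same route as the paper: compose Corollary~\ref{approx-KKT-generalcorollary} (the $(\lambda,\gamma,\delta)$-KKT point produced by Algorithm~\ref{alg2:KKT} with the stated parameter choices) with Theorem~\ref{thm:gen-approx}, and check that $\max\{3(\gamma-1)+5\delta,\ \lambda-1\}\le\varepsilon^{1/6}$. Your added care about why $3\varepsilon\le\varepsilon^{1/6}$ for small $\varepsilon$ and about the polynomial-time extraction of the prices under oracle access only makes explicit details the paper leaves implicit.
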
	

\begin{proof}
	From Theorem~\ref{thm:gen-approx} we know that a $(\lambda, \gamma, \delta)$-KKT point gives a $\mathit{max}(3 ( \gamma-1) + 5 \delta, \lambda -1)$- CEEI. Substituting the values of $\lambda$, $\gamma$ and $\delta$ as in Corollary~\ref{approx-KKT-generalcorollary}, we have that in $\textup{poly}(n,m,1/ \varepsilon, \log(L))$-time,   Algorithm~\ref{alg2:KKT}, returns a $\mathit{max}(3 \varepsilon + \frac{5\varepsilon^{2 / 3}}{2^{\textup{poly}(n,m)}}, \frac{\varepsilon^{1/ 6}}{2^{\textup{poly}(n,m)}})$-CEEI, which is a $\varepsilon^{1/6}$-CEEI. 
\end{proof}

We remark that although our overall approximation of CEEI is inverse-polynomial, we satisfy the condition~\ref{P2} and~\ref{P3} in Definition~\ref{def:CE} with an inverse-exponential $\varepsilon$.

\begin{remark}
	\label{extendingtoCES}
	We briefly remark how to make the algorithm work with a weaker assumption than the one in Assumption~\ref{as:lip}, i.e., for all $i \in [n]$, we have $|D_i(\bm x+ \delta \cdot \bm e_j) - D_i( \bm x)| \geq \frac{1}{L} \cdot \mathit{min}\big( \delta, \delta^k \big)$ for $k \in \textup{poly}(n,m)$. In this case, the re-adjustments done in steps~\ref{alg2:pd} such that $\bm d^k_+$ Pareto-dominates $\bm d^k$ should increase $\bm x^k_+$ by $2L \varepsilon_1^{1 / k}$ instead of $2L\varepsilon_1$. Similarly, all the bounds for $\lambda$ and $\delta$ in Theorem~\ref{apprx-KKT-generalmain} that are functions of $\varepsilon_1$ and $\varepsilon_2$ will be functions of $\varepsilon_1^{1 / k}$ and $\varepsilon_2^{1 / k}$. Since, we have the flexibility to choose inverse exponential values for both $\varepsilon_1$ and $\varepsilon_2$ to get an $1 / (\textup{poly}(n,m))$-CEEI, we can tolerate $k \in \textup{poly}(n,m)$.    
\end{remark}

\section{Extending to Mixed Linear Disutilities and Un-equal Incomes.}\label{sec:extensions}
We briefly argue in this section that our algorithms extend to the setting of {\em mixed manna} with linear disutilities, where some agents likes some items and dislike others, and to un-equal income where agents have different importance/weights modeled as different income requirements. In the latter case the resulting allocation is known as {\em competitive equilibrium (CE)} allocation.

\subsection{Mixed Manna with Linear Disutilities}
We say the instance is of {\em mixed manna} if each agent may value some items positively (goods), and other items negatively (chores). 
Formally, linear {\em utility} functions in this setting are represented by $U_i(\bm x_i)=\sum_{j} U_{ij} x_{ij}$ for agent $i$, where $U_{ij}$ is positive if $j$ is a good for agent $i$, and $U_{ij}$ is negative if $j$ is a chore for $i$, interpreted $U_{ij} = -\d_{ij}$.

In this case, \cite{BogomolnaiaMSY17} characterized the instanced into three categories, intuitively goods-heavy (``positive''), chores-heavy (''negative''), and null. To formalize this, let us first divide the agents into two categories:
\[
N_+ = \{i\ |\ \exists j, U_{ij}>0\}\ ,\quad \quad N_-=\{i\ |\ \forall j, U_{ij}\le 0\}\ .
\]

We have the following three cases:

\begin{description}
\item[(positive)] There exists a feasible allocation $\bm x\in \mathcal F$ such that all agents in $N_+$ get strictly positive utilities, and those in $N_-$ get zero utility.
Formally, $U_i(\bm x_i)>0,\ \forall i\in N_+$ and $U_i(\bm x_i)=0,\ \forall i \in N_-$. 
\item[(null)] The positive case is not possible, but there exists a feasible allocation $\bm x\in \mathcal F$ such that all agents get zero utility, {\em i.e.,} $U_i(\bm x_i)=0,\ \forall i$.
\item[(negative)] Every feasible allocation gives strictly negative utility to some agent.
\end{description}

Given an instance it is easy to check which category it belongs to using linear programming. 
Bogomolnaia et al.~\cite{BogomolnaiaMSY17} characterize the set of CEEI as follows:

\begin{theorem}\cite{BogomolnaiaMSY17}\label{thm:bmsy-mixed}
Given an instance $I=(U_1,\dots,U_n)$ with mixed manna,\begin{enumerate}\itemsep0pt
	\item If $I$ is a positive instance, then every agent {\em spends} at most one unit of money, and an allocation is a CEEI if and only if it maximizes the product $\prod_{i\in N_+} U_i(\bm x_i)$, and agents in $N_-$ get zero utility.
	\item If $I$ is a null instance, then an allocation is a CEEI if and only if all agents get zero utility.
	\item If $I$ is a negative instance, then an allocation is a CEEI if and only if it is a local minimum (KKT point) of the $\prod_{i} |U_i(\bm x_i)|$ with $U_i(\bm x_i)<0$ for all $i$.
\end{enumerate}
\end{theorem}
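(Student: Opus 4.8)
The plan is to argue directly from the CEEI conditions $(E1)$--$(E3)$ with all incomes normalized to $1$, after first recording a few structural facts about equilibrium prices that are forced by optimal-bundle choice. Concretely: if agent $i$ consumes a positive amount of item $j$, then $p_j>0$ when $U_{ij}>0$, $p_j<0$ when $U_{ij}<0$, and $p_j=0$ when $U_{ij}=0$ (a strictly negative price on a good or neutral item makes the demand for it unbounded; a strictly positive price on a chore or neutral item the agent consumes is suboptimal). Using these, I would also pin down the money flow: the common payment $\langle \bm x_i,\bm p\rangle$ (equal across $i$ by $(E1)$) is non-negative in the positive/null cases and strictly negative in the negative case. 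Separately, I would observe that the three cases are exhaustive and mutually exclusive: "positive" and "null" are each linear-feasibility questions over $\mathcal F$, and by construction a positive instance is not null and a null instance is not negative, so exactly one case holds.

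For the positive case, using the hypothesis that some feasible allocation gives every $i\in N_+$ strictly positive utility while every $i\in N_-$ gets zero, I would show that in any CEEI the $N_-$ agents receive only items they value at $0$, priced at $0$; hence they spend nothing and obtain utility $0$. Deleting $N_-$ and those residual items leaves a classical linear Fisher market with equal budgets over the remaining goods, whose competitive equilibria are exactly the Nash-welfare maximizers by Eisenberg--Gale~\cite{EG}. Conversely, from an allocation maximizing $\prod_{i\in N_+}U_i(\bm x_i)$ I would read off supporting prices from the EG dual, then assign the leftover zero-valued items to $N_-$ agents --- feasible precisely because the instance is positive --- and check $(E1)$--$(E3)$ directly.

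The null case is immediate both ways: if every feasible allocation can give all agents utility $0$, pick such an allocation with $\bm p=\bm 0$; then every bundle is optimal, budgets hold trivially, and the market clears, so it is a CEEI. Conversely, a CEEI giving some agent strictly positive utility would, by convexity of $\mathcal F$ together with a small transfer away from agents currently at $0$, witness the positive-case LP, contradicting that we are in the null case. For the negative case I would mirror the pure-chores characterization (Theorem~\ref{bogomolnaiathm}) and reuse the proof technique of Theorem~\ref{corr:approx}. Writing $\bm d=(\lvert U_1(\bm x_1)\rvert,\dotsc,\lvert U_n(\bm x_n)\rvert)$, the forward direction shows that at a CEEI every agent is a net earner with income exactly $1$, and that the price vector, lifted to disutility space, certifies that $\nabla\prod_i d_i\propto\bm 1/\bm d$ is parallel to a supporting-hyperplane normal of the attainable disutility region at $\bm d$, i.e.\ $\bm d$ is a KKT point of $\min\prod_i\lvert U_i\rvert$. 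For the reverse direction, given such a KKT point with $U_i(\bm x_i)<0$ for all $i$, I would take the supporting normal $\bm a\propto\bm 1/\bm d$, lift it to $c_{ij}=a_iU_{ij}$ in allocation space (now of mixed sign), set $p_j$ by the appropriate selection over agents, separate $\mathcal F$ from the sublevel set $\{\bm x:\langle\bm a,(\,U_i(\bm x_i))_i\rangle\le\langle\bm a,(\,U_i(\bm z_i))_i\rangle\}$ by convexity, and verify $(E1)$--$(E3)$ exactly as in the proof of Theorem~\ref{corr:approx}.

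I expect the negative case to be the main obstacle. The bookkeeping there is genuinely more delicate than in the chores-only setting: the prices $p_j$ and the lifted coefficients $c_{ij}$ have mixed signs because a single item can be a good for some agents and a chore for others; items that are unconsumed or consumed only by neutral agents need their price read off consistently; and the "no profitable deviation" step in the reverse direction needs the full utility-per-dollar / disutility-per-dollar argument rather than the clean monotone version available for pure chores. Verifying the trichotomy claims rigorously via their associated LPs, and checking that in the positive case the residual bads can always be absorbed by $N_-$ without damaging $N_+$'s utilities, are the remaining technical points I would grind through carefully; everything else follows the templates already established in the excerpt.
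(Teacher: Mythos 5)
First, a point of comparison: the paper does not prove Theorem~\ref{thm:bmsy-mixed} at all --- it is imported verbatim from Bogomolnaia et al.~\cite{BogomolnaiaMSY17}, and the surrounding text only explains how to \emph{use} each of the three cases algorithmically. So there is no in-paper proof to match your argument against; what can be judged is whether your sketch would actually establish the cited result.

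There is a genuine gap in your null case. You propose to take a feasible allocation giving everyone utility $0$ together with $\bm p=\bm 0$ and claim ``every bundle is optimal, budgets hold trivially.'' With $\bm p=\bm 0$ the budget set in the optimal-bundle condition is all of $\mathbb R^m_{\geq 0}$, so any agent $i\in N_+$ (and null instances with $N_+\neq\emptyset$ exist, e.g.\ one agent with two items valued $+1$ and $-1$) has unbounded demand for her positively valued item, violating $(E2)$. The correct equilibrium prices in the null case are nonzero; they come from a hyperplane separating the utility possibility set from the open positive orthant at the origin (in the example, $\bm p=(1,-1)$ works while $\bm p=\bm 0$ does not). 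Relatedly, your positive-case reduction to ``a classical linear Fisher market over the remaining goods'' after deleting $N_-$ is too quick: agents in $N_+$ may still negatively value some items that must nevertheless be fully allocated, so the residual problem is itself a mixed manna whose EG program needs the sign conventions the paper spells out in its footnote ($\sum_i x_{ij}\le 1$ for unanimous goods, $\ge 1$ for unanimous bads, with bads picking up negative dual prices); and the claim ``every agent spends at most one unit'' is part of what must be proved, since the budget normalization itself differs across the three cases rather than being the uniform $(E1)$--$(E3)$ you start from. Your negative-case outline (mirroring Theorem~\ref{bogomolnaiathm} and the machinery of Theorem~\ref{corr:approx} with signed coefficients $c_{ij}=a_iU_{ij}$) is the right template and is consistent with how the paper extends its algorithm in Section~\ref{sec:extensions}, but as you acknowledge, it is the part where the real work of~\cite{BogomolnaiaMSY17} lives and is not yet a proof.
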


Using the above characterization, we note that the positive case can be solved using the Eisenberg-Gale~\cite{EG} convex program, 
by maximizing $\sum_{i\in N_+} \log(U_i(\bm x_i))$, subject to feasibility and assigning zero to $N_-$.
		\footnote{Since CEEI must be Pareto-optimal, 
			if for some item $j$, $U_{ij}>0$ and $U_{i'j}\le 0$, 
			then item $j$ will never be allocated to agent $i'$. 
			Therefore, without loss of generality we may eliminate the $x_{i'j}$ variable. 
			After all such modifications, we can divide items into two sets: 
			goods $G$ that are non-negatively valued by all the agents, 
			and chores (bads) $B$ that are non-positively valued by all the agents. 
			Then the feasibility constraints should be $\sum_i x_{ij} \le 1,\ \forall j\in G$ and $\sum_i x_{ij} \ge 1,\ \forall j\in B$ 
			to ensure correct sign for the prices that comes from the dual variables.}
In the null case, any allocation that gives zero utility to all the agent is a CEEI and can be computed through a linear feasibility program. 
The negative case is similar to the purely-chores setting that we establish in this paper. 
The rest of this section explains how to extend our algorithms to handle this case.

To follow the notation used in the previous sections, 
we will return to using disutility functions $\d_i(\bm x_i)  = - U_i(\bm x_i)$, by setting $\d_{ij}=-U_{ij}$ for all $i$ and $j$. 
Now as per Theorem \ref{thm:bmsy-mixed}, in the negative case we want to find a KKT point of $\prod_{i} \bm d_i$ subject to $\bm d \in \mathcal D$ and $d_i>0$, $\forall i$. 
This is exactly what we compute in Section~\ref{sec:CE}, 
but we must now allow for the feasible region $\mathcal D$ to contain disutility profiles with negative entries. 

We show here that our theorems do not require that all feasible disutility profiles be positive, but instead need only the returned allocation to have positive disutilities.
We handle this without any modification by noting that Lemma~\ref{lem:algo-steps-linear} was written to not require $\mathcal D$ to lie in $\mathbb R_{\geq 0}^n$. 
Thus, so long as we can find an initial (infeasible) disutility profile $\bm d^0$ whose entries are all positive and sufficiently large, every other detail will go through with positive disutility.

When disutilities are linear, we must modify Claim~\ref{claim:d0} to account for the fact that some $\d_{ij}$'s may be negative. 
Unfortunately, na\"ive constructions like that of Claim~\ref{claim:d0} will not work anymore.
However, $\bm d^0=\bm 0$ is a valid starting point, but it simply means that we cannot use the objective function at the starting point to bound the number of iterations.

We show that one can always find an infeasible $\bm d^0$ with polynomial bit complexity as follows: 
We have that $\mathcal D$ is a linear polytope and $\bm 0 \notin \mathcal D$. 
For a $\delta>0$, check if $\delta \bm 1 \in \mathcal D$ or not. 
If not then set $\bm d^0 = \delta \bm 1$ and then $- \mathcal L(\bm d^0) =n \log(1/\delta)$. 
Otherwise, the line joining $\bm 0$ and $\delta \bm 1$ must intersect $\mathcal D$, 
and this intersection has to be a point with polynomial bit complexity, so long as $\delta$ is sufficiently large. 
Therefore, we can do a binary search on the line between $\bm 0$ and $\delta \bm 1$ to find an infeasible $\bm d^0$ of polynomial bit complexity.

The upper bound on $\mathcal L(d)$ from Lemma \ref{lem:noiter-lin} works as is, and thereby we get that the number of iterations remain polynomial.

\subsection{Un-equal Incomes: Competitive Equilibrium (CE)}
The model with unequal incomes is formally defined as follows: 
each agent has a disutility function $\d_i$ as before, and an income level $\eta_i>0$. 
We will show that our algorithm extends to this setting. And the running time remains polynomial as far as $\nicefrac{\max_i \eta_i}{\min_i \eta_i}$ is polynomially bounded.

A competitive equilibrium is an allocation $\bm x=(\bm x_1,\,\dotsc,\,\bm x_n)\in \mathcal F$, and a payment vector $\bm p$ such that $\ip{\bm p,\bm x_i}=\eta_i$ for all $i$, 
and every agent minimizes their disutility subject to 
$\ip{\bm p,\bm x_i}\geq \eta_i$. 
Accordingly the only change in Definition \ref{def:CE} of \eCEEI is in Condition \ref{P1}: we now want that all agents' $\eta$-rescaled incomes are approximately same, {\em i.e.,} $(1-\varepsilon) \cdot  \langle \bm z_i, \bm p \rangle/\eta_i \leq \langle \bm z_{i'}, \bm p \rangle/\eta_{i'}$ for any $i,i'$.

To take the weights into account the objective functions changes to minimizing $\prod_i d_i^{\eta_i}$, and accordingly define $\mathcal L(\bm d;\bm \eta):=\sum_{i=1}^n \eta_i\log(d_i)$. 
And then the definition of approximate KKT point will need to be modified: 
instead of requiring that $\gamma^{-1}\leq a_i d_i\leq \gamma$, we instead ask that 
$\gamma^{-1}\leq a_i d_i/\eta_i\leq \gamma$.

Now, in the proof of Theorems~\ref{corr:approx} and~\ref{thm:gen-approx}, we make the following changes. 
When showing condition~\ref{P1} of Definition~\ref{def:CE}, 
we now want to show that $\gamma^{-2} \cdot  \langle \bm z_i, \bm p \rangle/\eta_i \leq \langle \bm z_{i'}, \bm p \rangle/\eta_{i'}$ for any $i,i'$. 
	  Assume otherwise and say we have  $\gamma^{-2} \cdot  \langle \bm z_i, \bm p \rangle/\eta_i > \langle \bm z_{i'}, \bm p \rangle/\eta_{i'}$. 
	  As before, replace the allocation as follows: Construct $\hat{\bm z}$ by setting $\hat{\bm z}_{i'}=\tfrac12\bm z_{i'}$, and $\hat{\bm z}_{i}=\left(1+\tfrac{\ip{\bm z_{i'}, \bm p}}{2\ip{\bm z_{i}, \bm p}}\right)\bm z_{i}$.
	  The same proofs extend, but the contradiction is attained as follows:
\begin{align*}
			\ip{\bm a,\overrightarrow{\d}(\hat{\bm z})} - \ip{\bm a,\overrightarrow{\d}(\bm z)}
			& =-\frac{1}{2} a_{i'} \d_{i'}(\bm z_{i'}) + 
				\frac{\ip{\bm z_{i'}, \bm p}}{2\ip{\bm z_{i}, \bm p}} 
				a_{i}\d_{i}(\bm z_{i}) \\
			& =\eta_{i'}\left(-\frac{1}{2\eta_{i'}} a_{i'} \d_{i'}(\bm z_{i'}) + \frac{\eta_{i}\ip{\bm z_{i'}, \bm p}}{2\eta_i\eta_{i'}\ip{\bm z_{i}, \bm p}} a_{i}\d_{i}(\bm z_{i})\right) \\
			& \leq \eta_{i'}\left(-\frac 12 \gamma^{-1} + \frac{\eta_{i}\ip{\bm z_{i'}, \bm p}}{2\eta_{i'}\ip{\bm z_{i}, \bm p}}\cdot \gamma\right)\\
			& <\eta_{i'}\left(-\tfrac 12 \gamma^{-1}+\tfrac 12 \gamma^{-1}\right)=0\ ,
		\end{align*}

Conditions~\ref{P2} and~\ref{P3} work without modification. 

The stopping conditions and update moves of Algorithm~\ref{alg:KKT} must be changed to ensure that the modified definition of KKT points can be met.
The first modification is to replace the definition of $\bm d^{k+1}$ on line~\ref{alg:hyperplane-move} with $(\eta_1 / a_1,\,\dotsc,\, \eta_n /a_n)$, if we rescale $\bm a$ so that $\ip{\bm a^k,\bm d^k_*}=\sum_i\eta_i$.  
This choice will ensure that small changes between $\bm d^k_*$ and $\bm d^{k+1}$ imply the right $\eta$-rescaled KKT conditions.
Note that, now the stopping condition on line \ref{alg:stopping-hyp-cond} will indeed ensure that the algorithm returns $(1+\epsilon)$-KKT point with respect to the new definition. 

Finally, it remains to show that the number of iterations is still polynomial, though it will depend on the $\eta_i$'s.
We can use Lemma~\ref{lem:inc2} to lower-bound the improvement at each step:
We have $\logd(\bm d^k_*,\bm d^{k+1})>\varepsilon$, 
and therefore $\mathcal L(\bm d^{k+1},\bm 1)-\mathcal L(\bm d^k_*,\bm 1)>\Omega(\epsilon^2/n^2)$.
However, we need to use $\mathcal L(\bm d,\bm \eta)$, not $\mathcal L(\bm d,\bm 1)$, since the latter is not a potential function for the modified algorithm. 
We have, however, 
\[
    \mathcal L(\bm d,\bm 1)\cdot \min_i\eta_i\ \leq \ 
    \mathcal L(\bm d,\bm \eta)\ \leq \ 
    \mathcal L(\bm d,\bm 1)\cdot \max_i\eta_i\ .
\]
This allows us to conclude that
\begin{equation}
    \mathcal L(\bm d^{k+1},\bm \eta) - \mathcal L(\bm d^k_*,\bm \eta) \geq 
    \frac{\varepsilon^2 \cdot \min_{i}\eta_i}{16n^2}
\end{equation}

Now, a proof identical to that of Lemma~\ref{lem:noiter-lin} bounds the number of iterations by 
\begin{align*}
	&\frac{\textup{poly}(n,1/\varepsilon)}{\min_i\eta_i}\cdot \left(\max_{\bm d\in \mathcal D}\mathcal L(\bm d;\bm \eta) - \mathcal L(\bm d^0;\bm \eta)\right)\\
	&\leq \frac{\textup{poly}(n,1/\varepsilon)}{\min_i\eta_i}\cdot  \left(\sum_i\eta_i\log(m\cdot\max_{i,j}\d_{ij}) - \sum_i\eta_i\log(\tfrac m{2n}\min_{i,j}\d_{ij}) \right)\\
	&= \frac{\textup{poly}(n,1/\varepsilon)}{\min_i\eta_i}\cdot  \left(\textstyle\sum_{i=1}^n\eta_i\right)\cdot \log\left(
		\frac{2n\cdot \max_{i,j}\d_{ij}}{\min_{i,j}\d_{ij}}
	\right)\ ,
\end{align*}
which is the same running time as the equal-income setting, up to the $\min \eta_i\cdot \sum \eta_i$ term. 
But the $\eta_i$'s are dimensionless, and therefore without loss of generality, we can assume $\max_i\eta_i=1$, which implies $\sum \eta_i\leq n$, and so we can bound the number of iterations by
\begin{equation}
    \nonumber \textup{poly}(n,1/\varepsilon)\cdot  \frac{\max_i \eta_i}{\min_i\eta_i}
    \cdot \log\left(
		\frac{2n\cdot \max_{i,j}\d_{ij}}{\min_{i,j}\d_{ij}}
	\right)\ .
\end{equation}

Thus, we have shown that when disutilities are linear, the results in this paper extend to CE without too much modification. And the running time guarantees are the same as far as the max to min income ratio is polynomially bounded. 
It remains to argue that the same is true for general disutilities. 
The above modifications are in fact the only ones needed for this more general setting: the proofs of Conditions~\ref{P1}, \ref{P2}, and~\ref{P3} in the definition of approximate KKT all go through the same as they did in the linear case, and the analysis of approximately supporting hyperplane does not play a role in the unequal income.

\bibliographystyle{alpha}
\bibliography{compeq}	
	
\appendix
\clearpage
\section{Technical Proofs}\label{app:tech-results}
\paragraph{Claim~\ref{claim:convex}.}	\textit{\ClaimDPlusIsConvex}
	\begin{proof}
		Let $\bm y,\,\bm y'\in \mathcal D+\mathbb R^n_{\geq 0}$, such that $\bm y=\bm d+\bm \Delta$ for some $\bm d\in \mathcal D$ and $\bm \Delta\in \mathbb R^n_{\geq 0}$. Define $\bm d'$ and $\bm \Delta'$ similarly for $\bm y'$.
		Let $0<\lambda<1$, and let $\bar\lambda:=1-\lambda$. 
		Letting $\overrightarrow{\d}(\bm x):=\left(\d_1(\bm x_1),\,\dotsc,\,\d_n(\bm x_n)\right)$,
		let $\bm x,\, \bm x'\in \mathcal F$ be such that $\overrightarrow{\d}(\bm x) = \bm d$, and $\overrightarrow{\d}(\bm x')=\bm d'$.
		Since $\mathcal F$ is a linear polytope, and is therefore convex, 
		$\lambda \bm x + \bar\lambda \bm x'\in \mathcal F$, and so $\overrightarrow{\d}(\lambda \bm x + \bar\lambda \bm x')\in \mathcal D$.
		Since the $\d_i$'s are convex, we have that $\d_i(\lambda \bm x_i+\bar \lambda\bm x'_i)\leq \lambda \d_i(\bm x_i)+\bar\lambda \d_i(\bm x'_i)$ for all $i$.
		Thus, component-wise, we have 
		\[	
			\overrightarrow{\d}(\lambda \bm x+\bar \lambda\bm x')\leq \lambda \overrightarrow{\d}(\bm x)+\bar\lambda \overrightarrow{\d}(\bm x') = 
			\lambda \bm d+\lambda \bm d'
		\]
		Thus, there must exist a $\bm \Delta''\in \mathbb R_{\geq 0}^n$ such that
		\[
			\lambda \bm y+\bar\lambda\bm y' = \lambda \bm \Delta + \bar\lambda\bm \Delta' + \overrightarrow{\d}(\lambda \bm x+\bar\lambda\bm x')+\bm\Delta'' \in \mathcal D+\mathbb R_{\geq 0}^n.\qquad \qedhere
		\]
	\end{proof}
	
	\paragraph{Lemma~\ref{lem:algo-steps-linear}.} {\em\lemmaAboutAlgoStepsLinear}
	\begin{proof}
		Informally, 1.~holds by the KKT conditions of the minimization problem, and the geometry of the $\ell_2$ norm, though we give here a more direct proof. 
		The idea is to assume that some feasible point lies on the wrong side of the supporting hyperplane, and contradict the minimality of the distance of $\bm d^k_*$.
		
		Let $\bm a:= \bm d^k_*-\bm d^k$ (\textit{i.e.}\ $\bm a^k$ before rescaling), and let $\bm y$ be any feasible point in $\mathcal D+\mathbb R^n_{\geq 0}$.
		Define $\bm u:= \bm y-\bm d^k$, which can be decomposed as $\bm u = (1-\alpha) \bm a + \beta \bm v$, with $\ip{\bm v,\bm a} = 0$.
Assume without loss of generality $\Vert \bm a\Vert_2 = \Vert \bm v\Vert_2$.

For any $0\leq \lambda \leq 1$, the vector $\bm d^k + \lambda \bm u + (1-\lambda)\bm a$ is feasible, by convexity. Furthermore, its squared distance from $\bm d^k$ is given by 
\begin{align*}
    \Vert (\lambda \bm y+(1-\lambda)\bm d^k_*)-\bm d^k\Vert^2_2&= \Vert \lambda \bm u + (1-\lambda)\bm a \Vert_2^2 \\&= 
    \Vert (\lambda (1-\alpha) + 1 - \lambda)\bm a + \lambda\beta \bm v \Vert_2^2\\
    &= (1-\lambda\alpha)^2 \Vert \bm a\Vert_2^2 + (\lambda\beta)^2\Vert \bm v\Vert_2^2\\
    &= 1 - 2\lambda\alpha + \lambda^2(\alpha^2+\beta^2) \Vert \bm a\Vert_2^2
\end{align*}

Note that $\ip{\bm y,\bm a} - \ip{\bm d^k_*,\bm a} = ((1-\alpha)-1)\Vert \bm a\Vert_2^2 = -\alpha\Vert \bm a\Vert_2^2$, 
which is negative if and only if $\alpha>0$. 
Thus, assume $\alpha>0$ for a contradiction, and set $\lambda := \alpha / (\alpha^2 + \beta^2)$, a positive number.
Furthermore, note that $\Vert \bm d^k_*-\bm d^k\Vert^2_2 = \Vert \bm a\Vert^2_2$.
Therefore, 
\begin{align*}
    \frac{\Vert (\lambda \bm y+(1-\lambda)\bm d^k_*)-\bm d^k\Vert^2_2}{\Vert \bm d^k_*-\bm d^k\Vert^2_2} &= 
    1 - \frac{2\alpha^2}{\alpha^2+\beta^2} + \frac{\alpha^2}{\alpha^2+\beta^2} < 1
\end{align*}

If $\lambda\leq 1$, this is a contradiction of the minimality of $\bm d^k_*$, since $\lambda \bm y+(1-\lambda)\bm d^k_*$ must be a feasible point.
If $\lambda\geq 1$, we observe that distance from $\bm d^k$ is a convex function, and therefore it must be smaller at $\bm y$, which is a convex combination of $\lambda \bm y+(1-\lambda)\bm d^k_*$ and $\bm d^k_*$, also a contradiction.

		Part 2.\ of the statement holds by the geometry of the region, and part 1. 
		Suppose for a contradiction that for some agent $i$, $(\bm d^k_*)_i<(\bm d^k)_i$.
		Then setting the $i$-th coordinate of $\bm d^k_*$ to $(\bm d^k)_i$ will move it strictly closer to $\bm d^k$. 
		Furthermore, this move is in the $\mathbb R^n_{\geq 0}$ direction, and thus the new point will lie in the set, contradicting the minimality of $\bm d^k_*$.
		Therefore, coordinate-wise, $\bm d^k_*\geq \bm d^k$, ensuring that the former has strictly positive entries. 
		This also proves that the entries of $\bm a^k$ are non-negative. 
		It remains to show they are positive.
		
		Suppose $\bm d^k_*$ agrees with $\bm d^k$ in the $i$-th component, \textit{i.e.}\ $(\bm a^k)_i=0$.
		This cannot happen in every entry, as otherwise $\bm d^k$ would have been feasible.
		By part 1., we have that $\bm a^k$ is normal to a supporting hyperplane for $\mathcal D+\mathbb R^n_{\geq 0}$ at $\bm d^k_*$.
		Furthermore, $\ip{\bm a^k,\bm d^k_*}>0$, since $\bm a^k$ has non-negative entries with at least one positive entry, and the entries of $\bm d^k_*$ are positive. 
		However, letting $\bm e_i$ be the $i$-th standard basis vector, we have that $\d_i(\bm 1)\bm e_i$ lies in $\mathcal D$, but $\ip{\bm a^k,\d_i(\bm 1)\bm e_i}=0<\ip{\bm a^k,\bm d^k_*}$, a contradiction.
		Thus the entries of $\bm a^k$ must all be positive as well. 
		
		This allows us to show that $\bm d^k_*\in \mathcal D$.
		Suppose not, then $\bm d^k_*\in\mathcal D+(\mathbb R^n_{\geq 0} \setminus\{\bm 0\} )$. 
		We have established that coordinate-wise, $\bm d^k_* > \bm d^k$.
		Therefore, there must exist some coordinate $i$ such that we can reduce $(\bm d^k_*)_i$ and remain feasible, contradicting the minimality.
		
		Finally, since we simply define $\bm d^{k+1}$ as being the coordinate-wise inverse of $\bm a^k$, it must also have positive entries.	
	\end{proof}

\begin{claim}
	\label{convexityofprogram}
	Program~\ref{convex-pgm-nearest-point} is a convex program, i.e. it minimizes a convex function over a convex domain.
\end{claim}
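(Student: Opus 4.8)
The plan is to check convexity of the objective and of each constraint function separately, then invoke the standard facts that sublevel sets of convex functions are convex and that intersections of convex sets are convex. The decision variables are the allocation $\bm z\in\mathbb R^{nm}$ together with the slack vector $\bm\beta\in\mathbb R^{n}$, and $\bm d^k$ is a fixed point, so everything is a function of $(\bm z,\bm\beta)$ only.

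First I would handle the objective: $\sum_{i\in[n]}(\beta_i)^2$ is a sum of the one‑dimensional convex functions $\beta_i\mapsto\beta_i^2$, each extended to a function of $(\bm z,\bm\beta)$ that ignores all other coordinates, hence it is convex on $\mathbb R^{nm+n}$. Next, the constraints $\sum_{i\in[n]}z_{ij}\ge 1$ (for each $j\in[m]$) and $z_{ij}\ge 0$ (for each $i,j$) are linear inequalities in $(\bm z,\bm\beta)$, so each defines a closed half‑space, which is convex; the non‑negativity constraints together carve out the non‑negative orthant in the $\bm z$ coordinates, again convex.

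The only constraint requiring a word of argument is $D_i(\bm z_i)-(\bm d^k)_i-\beta_i\le 0$ for each $i\in[n]$. Define $g_i(\bm z,\bm\beta):=D_i(\bm z_i)-(\bm d^k)_i-\beta_i$. Since $D_i$ is a convex function of its argument (by assumption), the map $\bm z\mapsto D_i(\bm z_i)$ is convex on $\mathbb R^{nm}$ — it is the composition of the coordinate projection $\bm z\mapsto\bm z_i$, which is linear, with the convex function $D_i$ — and it extends to a convex function of $(\bm z,\bm\beta)$. Adding the constant $-(\bm d^k)_i$ and the linear term $-\beta_i$ preserves convexity, so $g_i$ is convex; consequently the set $\{(\bm z,\bm\beta)\mid g_i(\bm z,\bm\beta)\le 0\}$ is a convex sublevel set. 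The feasible region of Program~\eqref{convex-pgm-nearest-point} is the intersection of the finitely many convex sets just described, hence convex, and the program minimizes a convex objective over it. I do not anticipate a genuine obstacle here; the only subtlety — and the single place the hypothesis on the $D_i$'s is actually used — is noting that the "outer" composition with $D_i$ is on the \emph{inside} of a $\le$ inequality (so we need $D_i$ convex, not concave), which is exactly why the naive formulation $\min_{\bm x\in\mathcal F}\|\vd(\bm x)-\bm d^k\|_2^2$ failed and this reformulation was introduced.
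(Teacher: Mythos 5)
Your proposal is correct and follows essentially the same route as the paper: convex objective, linear constraints as half-spaces, and convexity of each set $\{D_i(\bm z_i)-(\bm d^k)_i-\beta_i\le 0\}$ via convexity of $D_i$ (the paper just verifies the sublevel-set fact directly on a $\lambda$-combination rather than citing it). No gaps.
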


\begin{proof}
    Since $\sum_{i \in [n]} \beta_i^2$ is a convex function, our program indeed minimizes a convex function. We now argue that the domain is convex. To this end, first note that the constraints $\sum_{i \in [n]} (\bm z)_{ij} \geq 1$ for all $j \in [m]$,  and $(\bm z)_{ij} \geq 0$ for all $i \in [n]$, $j \in [m]$ are linear constraints and the intersections of all these half-spaces is convex. Now, we show that each constraint $D_i(\bm z_i) - (\bm d^k)_i - (\bm \beta)_i \leq 0$ is also convex. Let $(\bm z^1, \bm \beta^1)$ and $(\bm z^2, \bm {\beta^2})$ be two points such that $D_i(\bm z^1_i) - (\bm d^k)_i - (\bm \beta^1)_i \leq 0$ and $D_i(\bm z^2_i) - (\bm d^k)_i - (\bm \beta^2)_i \leq 0$. Fix any $\lambda \in (0,1)$ and let $ (\bm z', \bm \beta') = \lambda \cdot (\bm z^1, \bm \beta^1) + (1-\lambda) \cdot (\bm z^2, \bm \beta^2)$. To prove convexity of the domain, it suffices to show that $D_i(\bm z'_i) - (\bm d^k)_i - (\bm \beta')_i \leq 0$. To this end, note that 
     \begin{align*}
		D_i(\bm z'_i) - &(\bm d^k)_i - (\bm \beta')_i\\ 
		&= D_i(\lambda \cdot \bm z^1_i + (1-\lambda) \cdot \bm z^2_i) - (\bm d^k)_i - \lambda \cdot (\bm \beta^1)_i - (1-\lambda) \cdot (\bm \beta^2)_i\\
		&\leq  \lambda \cdot D_i(\bm z^1_i) + (1-\lambda) \cdot D_i( \bm z^2_i) - (\bm d^k)_i - \lambda \cdot (\bm \beta^1)_i - (1-\lambda) \cdot (\bm \beta^2)_i &\text{($D_i(\cdot)$ convex)}\\
		&= \lambda \cdot (D_i(\bm z^1_i) - (\bm d^k)_i  - (\bm \beta^1)_i)  + (1-\lambda) \cdot (D_i(\bm z^2_i) - (\bm d^k)_i  - (\bm \beta^2)_i)\\
		&\leq 0 + 0 = 0.   \qedhere                             
\end{align*}
\end{proof}

\section{Approximate CEEI to Approximate EF and PO}\label{app:ef-po}
CEEI allocations are known ensure envy-freeness and Pareto-optimality. 
In this section we show that we can determine an allocation that is approximately envy-free and approximately Pareto-optimal from an \eCEEI.
To this end, let $\bm x_1, \bm x_2, \dots , \bm x_n$ and $\bm p$ be the allocation and the price vector at a $(1-\varepsilon)$-CEEI. Let $\alpha_j = \sum_{i \in [n]}x_{ij}$. 
Note that if $\alpha_j > 1$, then $j$ is over allocated and if $\alpha_j < 1$, then $j$ is under allocated. 
Let $\alpha = \mathit{max}_j \alpha_j$ and $\alpha' = \mathit{min}_j \alpha_j$. 
Note that $1-\varepsilon \leq {\alpha}'\leq \alpha\leq 1 + \varepsilon$. 
We define a new allocation $\bm y = (\bm y_1, \bm y_2, \dots , \bm y_n)$ such that $y_{ij} = x_{ij} / (\alpha_j)$.
Note that for all chores $j \in [m]$, we have $\sum_{i \in [n]}y_{ij} = 1$, \textit{i.e.}\ no chore is over-allocated or under-allocated. 
Observe that the total earning for each agent has not changed significantly: for each $i \in [n]$, we have $(1 / \alpha) \cdot \langle \bm x_i, \bm p \rangle \leq \langle \bm y_i, \bm p \rangle \leq (1 / \alpha') \cdot \langle \bm x_i, \bm p \rangle$. 
Similarly, by 1-homogeneity, the disutilities for the agents have also not changed significantly: for each $i \in [n]$, we have $(1/ \alpha) \cdot \d_i(\bm x_i) \leq \d_i( \bm y_i) \leq (1/\alpha') \cdot \d_i(\bm x_i)$.

\begin{claim}
	Allocation $\bm y$ is $(1-4\varepsilon)$-envy-free, {\em i.e.}\ for all pairs of agents  $i$ and $i'$, we have $(1-4\varepsilon) \cdot \d_i(\bm y_i) \leq  \d_i (\bm y_{i'})$.
\end{claim}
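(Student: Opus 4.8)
The plan is to chain together three facts: (i) the original $(1-\varepsilon)$-CEEI allocation $\bm x$ is envy-free in the exact sense because every agent gets her disutility-minimizing bundle among bundles paying at least $\langle\bm x_i,\bm p\rangle$, and all agents' payments are within a $(1-\varepsilon)$ multiplicative factor of each other (condition \ref{P1}); (ii) rescaling to $\bm y$ changes each agent's own disutility and each agent's earning only by a factor in $[1/\alpha, 1/\alpha']$, which lies in $[\tfrac{1}{1+\varepsilon}, \tfrac{1}{1-\varepsilon}]$; and (iii) these multiplicative distortions compound into a single $(1-4\varepsilon)$ (or $(1-O(\varepsilon))$) factor via elementary inequalities like $(1-\varepsilon)/(1+\varepsilon) \ge 1-2\varepsilon$.

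Concretely, first I would fix two agents $i$ and $i'$ and write down what envy-freeness of the \emph{original} CEEI gives us. By condition $(E2)$-approximate, i.e.\ \ref{P2} in Definition~\ref{def:CE}, agent $i$'s bundle $\bm x_i$ satisfies $(1-\varepsilon)\d_i(\bm x_i) \le \d_i(\bm w)$ for every $\bm w$ with $\langle \bm w,\bm p\rangle \ge \langle \bm x_i,\bm p\rangle$. Now $\bm x_{i'}$ is a valid such $\bm w$ provided $\langle \bm x_{i'},\bm p\rangle \ge \langle \bm x_i,\bm p\rangle$; if not, condition \ref{P1} tells us $\langle \bm x_{i'},\bm p\rangle \ge (1-\varepsilon)\langle \bm x_i,\bm p\rangle$, so we can scale $\bm x_{i'}$ up by a factor $\le 1/(1-\varepsilon)$ to make it affordable, incurring (by 1-homogeneity) at most a $1/(1-\varepsilon)$ blow-up in $\d_i$ of that bundle. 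This yields $(1-\varepsilon)\d_i(\bm x_i) \le \tfrac{1}{1-\varepsilon}\d_i(\bm x_{i'})$, i.e.\ $(1-\varepsilon)^2 \d_i(\bm x_i) \le \d_i(\bm x_{i'})$.

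Next I would transfer this inequality from $\bm x$ to $\bm y$ using the rescaling bounds already recorded in the excerpt: $\d_i(\bm y_i) \le \tfrac{1}{\alpha'}\d_i(\bm x_i) \le \tfrac{1}{1-\varepsilon}\d_i(\bm x_i)$ and $\d_i(\bm y_{i'}) \ge \tfrac{1}{\alpha}\d_i(\bm x_{i'}) \ge \tfrac{1}{1+\varepsilon}\d_i(\bm x_{i'})$. Combining,
\[
\d_i(\bm y_{i'}) \ \ge\ \frac{1}{1+\varepsilon}\d_i(\bm x_{i'}) \ \ge\ \frac{(1-\varepsilon)^2}{1+\varepsilon}\,\d_i(\bm x_i) \ \ge\ \frac{(1-\varepsilon)^3}{1+\varepsilon}\,\d_i(\bm y_i).
\]
It then remains to check the elementary numeric bound $\tfrac{(1-\varepsilon)^3}{1+\varepsilon} \ge 1-4\varepsilon$ for $\varepsilon \in [0,1]$ (equivalently $(1-\varepsilon)^3 \ge (1-4\varepsilon)(1+\varepsilon) = 1-3\varepsilon-4\varepsilon^2$, and expanding the left side gives $1-3\varepsilon+3\varepsilon^2-\varepsilon^3$, so the claim reduces to $3\varepsilon^2-\varepsilon^3 \ge -4\varepsilon^2$, which is immediate). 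This gives the desired $(1-4\varepsilon)\d_i(\bm y_i)\le \d_i(\bm y_{i'})$ for all $i,i'$.

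The main obstacle, such as it is, is bookkeeping the direction of each inequality correctly — in particular the affordability step where $\bm x_{i'}$ may not satisfy $\langle\bm x_{i'},\bm p\rangle\ge\langle\bm x_i,\bm p\rangle$ and must be scaled, and being careful that the approximate optimality in \ref{P2} is stated with the $(1-\varepsilon)$ on the left side. No genuinely hard step arises; the proof is a short composition of multiplicative slacks, and I would present it in the two or three lines sketched above, possibly absorbing one factor of $(1-\varepsilon)$ if a tighter constant than $4\varepsilon$ is not needed.
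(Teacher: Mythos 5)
Your proposal is correct and follows essentially the same route as the paper's proof: establish $(1-\varepsilon)^2\,\d_i(\bm x_i)\le \d_i(\bm x_{i'})$ for the original allocation by scaling $\bm x_{i'}$ up by $(1-\varepsilon)^{-1}$ and invoking conditions \ref{P1} and \ref{P2} with 1-homogeneity, then transfer to $\bm y$ via the bounds $\alpha\le 1+\varepsilon$, $\alpha'\ge 1-\varepsilon$, arriving at the same factor $\tfrac{(1-\varepsilon)^3}{1+\varepsilon}\ge 1-4\varepsilon$. The bookkeeping and the final numeric check are both accurate.
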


\begin{proof}
	We first show that allocation $\bm x$ is approximately-envy-free. Thereafter, since the disutilities of the agents in allocation $\bm y$ are not significantly different from their disutilities in $\bm x$, the approximate-envy-freeness for $\bm y$ will follow easily. Consider two agents $i$ and $i'$. Let $\bm x'_{i'} = (1-\varepsilon)^{-1} \bm x_{i'}$.  Since $\bm x$ is at a  $(1-\varepsilon)$-CEEI, by condition~\ref{P1} in Definition~\ref{def:CE}, we have $(1-\varepsilon)\langle \bm x_i, \bm p \rangle \leq \langle \bm x_{i'} , \bm p \rangle$, implying that $\langle \bm x_i, \bm p \rangle \leq \langle \bm x'_{i'} , \bm p \rangle$. Therefore, $(1-\varepsilon) \cdot \d_i(\bm x_i) \leq  \d_i(\bm x'_{i'}) = (1-\varepsilon)^{-1} \cdot \d_i(\bm x_{i'})$ (by 1-homogeneity), further implying that $(1-\varepsilon)^2 \cdot \d_i(\bm x_i) \leq  \d_i(\bm x_{i'})$.
	
	Now we show approximate-envy-freeness for $\bm y$: Since  $(1/ \alpha) \cdot \d_i(\bm x_i) \leq \d_i( \bm y_i) \leq (1/\alpha') \cdot \d_i(\bm x_i)$, we have that
	\begin{align*}
	 \d_i(\bm y_{i'}) \ \geq\  \frac{1}{\alpha} \cdot \d_i(\bm x_{i'})
	                 \ \geq\  \frac{(1-\varepsilon)^2}{\alpha} \cdot \d_i(\bm x_i)
	                 \ \geq\  \frac{(1-\varepsilon)^2 \cdot \alpha'}{\alpha} \cdot \d_i(\bm y_i).
	\end{align*}
   Using the fact that $\alpha \leq 1+\varepsilon$ and $\alpha' \geq 1-\varepsilon$ we have that
	\begin{align*}
	\d_i(\bm y_{i'}) \ \geq\ \frac{(1-\varepsilon)^3} { 1+\varepsilon} \cdot \d_i(\bm y_i)
	                \ \geq\ (1-4\varepsilon)  \cdot \d_i(\bm y_i). &\qedhere
	\end{align*} 
\end{proof}

\begin{claim}
	Allocation $\bm y$ is $(1-2\varepsilon)$-Pareto-optimal, i.e., there exists no allocation $\bm y'$ such that $\d_i(\bm y'_i) \leq (1-2\varepsilon) \cdot \d_i(\bm y_i)$ for all $i \in [n]$ with at least one strict inequality, 
	and such that $\bm y'$ does not under- or over-allocate any item.
\end{claim}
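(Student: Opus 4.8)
The plan is to run the classical ``a competitive equilibrium is Pareto optimal'' argument through the price vector $\bm p$ of the underlying $(1-\varepsilon)$-CEEI $\bm x$, while carefully tracking the multiplicative slack introduced by two sources: the approximation in the optimal-bundle condition~\ref{P2}, and the coordinate-wise rescaling $\bm x\mapsto\bm y$ that makes the allocation exactly feasible. So suppose for contradiction that there is an allocation $\bm y'$ with $\sum_{i\in[n]} y'_{ij}=1$ for all $j\in[m]$ and $\d_i(\bm y'_i)\le (1-2\varepsilon)\,\d_i(\bm y_i)$ for every $i$, with strict inequality for some agent. We may assume $\d_i(\bm x_i)>0$ for all $i$: the $\varepsilon$-CEEI under consideration is obtained from a KKT point with finite $\mathcal L$, so all equilibrium disutilities are positive, and an agent with zero disutility cannot be strictly improved anyway.

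The first step is to show that, at prices $\bm p$, the bundle $\bm y'_i$ is strictly cheaper for agent $i$ than $\bm x_i$. Combining $\d_i(\bm y_i)\le \tfrac1{\alpha'}\d_i(\bm x_i)$ with $\alpha'\ge 1-\varepsilon$ and the elementary bound $1-2\varepsilon<(1-\varepsilon)^2$ (valid for $\varepsilon>0$) gives
\[
  \d_i(\bm y'_i)\ \le\ \frac{1-2\varepsilon}{\alpha'}\,\d_i(\bm x_i)\ \le\ \frac{1-2\varepsilon}{1-\varepsilon}\,\d_i(\bm x_i)\ <\ (1-\varepsilon)\,\d_i(\bm x_i).
\]
Now the optimal-bundle condition~\ref{P2} for $\bm x$, read in contrapositive form — any bundle $\bm z$ with $\d_i(\bm z)<(1-\varepsilon)\d_i(\bm x_i)$ must violate $\langle\bm z,\bm p\rangle\ge\langle\bm x_i,\bm p\rangle$ — yields $\langle\bm y'_i,\bm p\rangle<\langle\bm x_i,\bm p\rangle$. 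Applying the same reasoning to the scaled bundle $t\bm y'_i$ and using $1$-homogeneity of $\d_i$ upgrades this to the quantitative estimate $\langle\bm y'_i,\bm p\rangle\le \tfrac{1-2\varepsilon}{(1-\varepsilon)\alpha'}\,\langle\bm x_i,\bm p\rangle$.

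The second step sums over $i$ and uses feasibility on both sides. Since $\bm y'$ allocates every item exactly once, $\sum_i\langle\bm y'_i,\bm p\rangle=\sum_j p_j$; since $\sum_i x_{ij}=\alpha_j$ and $p_j\ge 0$, $\sum_i\langle\bm x_i,\bm p\rangle=\sum_j p_j\alpha_j\le \alpha\sum_j p_j$. Combining with the per-agent bound,
\[
  \sum_j p_j\ \le\ \frac{1-2\varepsilon}{(1-\varepsilon)\alpha'}\sum_j p_j\alpha_j\ \le\ \frac{(1-2\varepsilon)\,\alpha}{(1-\varepsilon)\,\alpha'}\sum_j p_j,
\]
which is a contradiction as soon as $(1-2\varepsilon)\alpha<(1-\varepsilon)\alpha'$ (assuming $\sum_j p_j>0$, which holds at a non-degenerate equilibrium). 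This is exactly where the over/under-allocation of $\bm x$ enters: when the $\alpha_j$ are $O(\varepsilon)$-coherent — in particular for the outputs of Algorithms~\ref{alg:KKT} and~\ref{alg2:KKT}, where $\alpha=\alpha'=1$ in the linear case and $\alpha/\alpha'-1$ is inverse-exponentially small in the general case — the ratio above is $(1-2\varepsilon)\bigl(1+O(\varepsilon^2)\bigr)<1$ and the argument closes; in full generality one simply records the guarantee as $\bm y$ being $\bigl(\tfrac{(1-\varepsilon)\alpha'}{\alpha}\bigr)$-Pareto optimal.

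I expect the main obstacle to be precisely this last bookkeeping step: the rescaling $\bm x\mapsto\bm y$ perturbs each agent's disutility and payment by the per-item factors $\alpha_j\in[\alpha',\alpha]$, and stacking that factor on top of the $(1-\varepsilon)$ slack of condition~\ref{P2} is what pins down the final constant, so one must be careful that the two errors are not counted in a way that leaves the ratio $\ge 1$. A secondary technical point is the boundary behaviour of condition~\ref{P2} — an agent whose improvement lands exactly at the threshold $(1-\varepsilon)\d_i(\bm x_i)$, or whose equilibrium bundle sits on the budget frontier — which is handled by the standard device of perturbing a bundle by a factor $1\pm\eta$ and letting $\eta\to0$, once more invoking $1$-homogeneity of the $\d_i$'s.
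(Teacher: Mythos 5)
There is a genuine gap, and you have in fact already put your finger on it: your argument routes the payment comparison through the un-rescaled allocation $\bm x$, and the resulting inequality $\sum_j p_j \le \tfrac{(1-2\varepsilon)\alpha}{(1-\varepsilon)\alpha'}\sum_j p_j$ is only a contradiction when $(1-2\varepsilon)\alpha < (1-\varepsilon)\alpha'$. The definition of \eCEEI only guarantees $\alpha\le 1+\varepsilon$ and $\alpha'\ge 1-\varepsilon$, and in that worst case $(1-2\varepsilon)(1+\varepsilon) - (1-\varepsilon)^2 = \varepsilon-3\varepsilon^2 > 0$ for $\varepsilon<1/3$, so the required inequality fails precisely in the regime of interest. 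The $O(\varepsilon^2)$ slack you gain from $1-2\varepsilon<(1-\varepsilon)^2$ is overwhelmed by the $O(\varepsilon)$ mismatch between $\sum_j p_j$ and $\sum_j p_j\alpha_j$. What you actually prove is $\bigl(\tfrac{(1-\varepsilon)\alpha'}{\alpha}\bigr)$-PO, which is roughly $(1-3\varepsilon)$-PO in general — qualitatively the right statement but not the claimed constant, and your fallback of assuming the $\alpha_j$'s are $O(\varepsilon^2)$-coherent is an extra hypothesis not present in the claim.

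The paper avoids this loss by never comparing against $\bm x$ in the aggregation step. It compares $\bm y'$ directly with the rescaled allocation $\bm y$: since both are exactly feasible, $\sum_i\ip{\bm y'_i,\bm p}=\sum_j p_j=\sum_i\ip{\bm y_i,\bm p}$ with no $\alpha_j$'s appearing. The strictly dominated agent $i'$ has $\d_{i'}(\bm y'_{i'})<(1-\varepsilon)\d_{i'}(\bm y_{i'})$, so by the contrapositive of condition~\ref{P2} her payment strictly decreases; by the exact payment balance some agent $\ell$ must then have $\ip{\bm y'_\ell,\bm p}>\ip{\bm y_\ell,\bm p}$, and condition~\ref{P2} applied in the forward direction to that single agent gives $\d_\ell(\bm y'_\ell)\ge(1-\varepsilon)\d_\ell(\bm y_\ell)>(1-2\varepsilon)\d_\ell(\bm y_\ell)$, contradicting the domination hypothesis. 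No summation of approximate per-agent payment bounds is needed, so no ratio $\alpha/\alpha'$ enters. (One caveat worth noting: the paper invokes condition~\ref{P2} for $\bm y$ rather than for $\bm x$, which itself deserves a word of justification since the rescaling perturbs disutilities and payments by factors in $[1/\alpha,1/\alpha']$; your instinct to work from the condition as stated for $\bm x$ is sound, but the fix is to transfer condition~\ref{P2} to $\bm y$ once, via the disutility-per-dollar formulation and $1$-homogeneity, rather than to aggregate payment bounds across agents.)
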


\begin{proof}
	Assume otherwise. Let $\bm y'$ be an allocation such that $\d_i(\bm y'_i) \leq (1-\varepsilon) \cdot \d_i(\bm y_i)$ for all $i \in [n]$ and let $i'$ be an agent such that $\d_{i'}(\bm y'_{i'}) < (1-2\varepsilon) \cdot \d_{i'}(\bm y_{i'}) \leq (1-\varepsilon) \cdot \d_{i'}(\bm y_{i'}) $.  Since $\bm y$ is at a $(1-\varepsilon)$-CEEI, by condition~\ref{P2}, in Definition~\ref{def:CE}, we can conclude that $\langle  \bm y'_{i'}, \bm p \rangle < \langle \bm y_{i'}. \bm p \rangle$.  Note that since both $\bm y$ and $\bm y'$ are feasible allocations we have 
	\begin{align*}
	 \sum_{i \in [n]} \langle \bm y'_i, \bm p \rangle 
	                                                  \ =\sum_{j \in [m]} p_j \cdot \sum_{i \in [n]} y'_{ij}
	                                                  \ = \sum_{j \in [m]} p_j
	                                                  \ =\sum_{j \in [m]} p_j \cdot \sum_{i \in [n]} y_{ij}
	                                                  \ =\sum_{i \in [n]} \langle \bm y_i, \bm p \rangle. 
	\end{align*}
	The last two equalities require that $\bm y'$ does not under- or over-allocate any item, as we have assumed.
	The original allocation $\bm y$ satisfies these conditions by construction.
These equalities imply $\sum_{i \in [n]} \langle \bm y'_i - \bm y_i, \bm p \rangle = 0$. Since we have $\langle  \bm y'_{i'}, \bm p \rangle < \langle \bm y_{i'}, \bm p \rangle$, there must be an agent $\ell$ such that $\langle  \bm y'_{\ell}, \bm p \rangle > \langle \bm y_{\ell}, \bm p \rangle$. Again, by condition~\ref{P2} in Definition~\ref{def:CE}, we can conclude that $\d_{\ell}(\bm y'_{\ell}) \geq (1-\varepsilon) \cdot \d_{\ell}(\bm y_{\ell}) > (1-2\varepsilon) \cdot \d_{\ell}(\bm y_{\ell})$, which is a contradiction.	
\end{proof}

\end{document}